\newcommand{\notf}{\lnot f}
\newcommand{\notP}{\lnot P}
\newcommand{\cut}{cut}
\newcommand{\calP}{\mathcal{P}}
\DeclareMathOperator{\wdist}{wdist}
\newcommand{\plssize}{\mathsf{pls}\text{-}\mathsf{size}}
\DeclareMathOperator{\false}{{\scriptstyle{FALSE}}}
\DeclareMathOperator{\true}{{\scriptstyle{TRUE}}}
\DeclareMathOperator{\disj}{\mathrm{DISJ}}
\DeclareMathOperator{\eq}{\mathrm{EQ}}
\newcommand{\cgst}{\textsc{congest}}
\newcommand{\local}{\textsc{local}}
\newcommand{\term}{\texttt{Term}}
\DeclareFontShape{OT1}{cmr}{bx}{sc}{<-> cmbcsc10}{}
\newcommand{\remove}[1]{}
\newcommand{\C}{{\cal{C}}}
\newcommand{\ip}[1]{\left}
\newtheorem*{rep@theorem}{\rep@title}
\newcommand{\newreptheorem}[2]{%
	\newenvironment{rep#1}[1]{%
		\def\rep@title{#2 \ref{##1}}%
		\begin{rep@theorem}}%
		{\end{rep@theorem}}}
\newenvironment{definition-repeat}[1]{\begin{trivlist}
		\item[\hspace{\labelsep}{\bf\noindent Definition \ref{#1} }]\em }%
	{\end{trivlist}}
\newenvironment{lemma-repeat}[1]{\begin{trivlist}
		\item[\hspace{\labelsep}{\bf\noindent Lemma \ref{#1} }]\em }%
	{\end{trivlist}}
\newenvironment{theorem-repeat}[1]{\begin{trivlist}
		\item[\hspace{\labelsep}{\bf\noindent Theorem \ref{#1} }]\em }%
	{\end{trivlist}}
\newcommand{\qedsymb}{\qed}
\newenvironment{proofof}[1]{\begin{trivlist}
		\item[\hspace{\labelsep}{\bf\noindent Proof of #1: }]
	}{\qedsymb\end{trivlist}}
\newtheorem{theorem}{Theorem}[section]
\newtheorem{claim}{Claim}[section]
\newtheorem{lemma}{Lemma}[section]
\newtheorem{definition}{Definition}[section]
\newtheorem{corollary}{Corollary}[section]
\DeclareMathOperator{\bin}{bin}
\newcommand{\size}[1]{\ensuremath{\left|#1\right|}}
\newcommand{\set}[1]{\left\{ #1 \right\}}
\theoremstyle{remark}
\theoremstyle{definition}
\title{Hardness of Distributed Optimization}
\author[1]{Nir Bachrach} 
\author[1]{Keren Censor-Hillel} 
\author[1]{Michal Dory} 
\author[1]{Yuval Efron} 
\author[1]{Dean Leitersdorf} 
\author[2]{Ami Paz}
\affil[1]{Technion-Israel Institute of Technology}
\affil[2]{IRIF, CNRS, Paris Diderot University}
\date{}
\begin{document}

\begin{titlepage}

\maketitle

\begin{abstract}
 This paper studies lower bounds for fundamental optimization problems in the \cgst{} model.
We show that solving problems \emph{exactly} in this model can be a hard task, by providing $\tilde{\Omega}(n^2)$ lower bounds for cornerstone problems, such as minimum dominating set (MDS), Hamiltonian path, Steiner tree and max-cut. These are almost tight, since all of these problems can be solved optimally in $O(n^2)$ rounds. Moreover, we show that even in bounded-degree graphs and even in simple graphs with maximum degree 5 and logarithmic diameter, it holds that various tasks, such as finding a maximum independent set (MaxIS) or a minimum vertex cover, are still difficult, requiring a near-tight number of $\tilde{\Omega}(n)$ rounds.

Furthermore, we show that in some cases even \emph{approximations} are difficult, by providing an $\tilde{\Omega}(n^2)$ lower bound for a $(7/8+\epsilon)$-approximation for MaxIS, and a nearly-linear lower bound for an $O(\log{n})$-approximation for the $k$-MDS problem for any constant $k \geq 2$, as well as for several variants of the Steiner tree problem.

Our lower bounds are based on a rich variety of constructions that leverage novel observations, and reductions among problems that are specialized for the \cgst{} model.
However, for several additional approximation problems, as well as for exact computation of some central problems in $P$, such as maximum matching and max flow, we show that such constructions \emph{cannot} be designed, by which we exemplify some limitations of this framework.
\end{abstract}

\thispagestyle{empty}
\end{titlepage}

\newpage 
\tableofcontents

\newpage
\section{Introduction}

Optimization problems are cornerstone problems in computer science, for which finding exact and approximate solutions is extensively studied in various computational settings. Since optimization problems are fundamental for a variety of computational tasks, mapping their trade-offs between time complexity and approximation ratio is a holy-grail, especially for those that are NP-hard.

Distributed settings share this necessity of resolving the complexity of exact and approximate solutions for optimization problems, and a rich landscape of complexities is constantly being explored. However, distributed settings exhibit very different behavior, compared with their sequential counterpart, in terms of what is efficient and what is hard. Here, we focus on the \cgst{} model in which $n$ vertices communicate synchronously over the underlying network graph, using an $O(\log{n})$-bit bandwidth~\cite{Peleg00}.
Since the local computation of vertices is not polynomially bounded, hardness results in the sequential setting do not translate to hardness results in the distributed one. In particular, any natural graph problem can be solved in the \cgst{} model in $O(m)$ rounds, $m$ being the number of edges, by letting the vertices learn the whole graph. For some problems, such as MaxIS and finding the chromatic number~\cite{DBLP:conf/wdag/Censor-HillelKP17}, this na\"{\i}ve solution is known to be nearly-optimal, whereas for other problems more efficient solutions exist. On the other hand, there do exist problems with a polynomial sequential complexity, which require $\tilde\Omega(m)$ rounds in the \cgst{} model, such as deciding whether the graph contains a cycle of a certain length and weight~\cite{DBLP:conf/wdag/Censor-HillelKP17}.

In the sequential setting, finding an exact solution for some problems, such as minimum dominating set (MDS) or a maximum independent set (MaxIS), is known to be NP-hard \cite{DBLP:books/daglib/p/Karp10}.
In such cases, it is sometimes possible to obtain efficient approximations, such as an $O(\log{\Delta})$-approximation for MDS, where $\Delta$ is the maximum degree in the graph~\cite{DBLP:books/daglib/0004338}. However, in some cases, even obtaining an approximation is hard.
For MDS, any approximation better than logarithmic is hard to obtain~\cite{lund1994hardness}. MaxIS does not admit even an $O(n^{1-\epsilon})$ approximation \cite{Hastad96}.
In the \cgst{} model, there are polylogarithmic $O(\log \Delta)$-approximations for MDS~\cite{jia2002efficient, kuhn2005constant, DBLP:journals/jacm/KuhnMW16}, and it is also known that obtaining such an approximation requires at least a polylogarithmic time. More specifically, there are lower bounds of $\Omega(\sqrt{\log n/\log\log n})$ and $\Omega(\log\Delta/\log\log\Delta)$ \cite{DBLP:journals/jacm/KuhnMW16}. However, currently nothing else is known with respect to better approximations or exact solutions. For the MaxIS problem, there are efficent $(\Delta +1)/2$ and $0.529\Delta$ approximations for the unweighted and weighted cases, respectively \cite{DBLP:conf/sirocco/BoppanaHR18}. However, the complexity of achieving any better approximations is not known. Solving it \emph{exactly} requires $\tilde{\Omega}(n^2)$ rounds \cite{DBLP:conf/wdag/Censor-HillelKP17}. In the closely related {\local}  model, where the size of messages is not bounded, $(1+\epsilon)$-approximations for both problems can be obtained in polylogarithmic time~\cite{ghaffari2017complexity}.

The curious aspect of the huge gaps that are present in our current understanding of various optimization and approximation complexities in the \cgst{} model is that \emph{we do not have any hardness conjectures to blame these gaps on}. This raises the natural question: can we obtain better approximations efficiently in the \cgst{} model?

For problems in P, in many cornerstone cases, such as min cut, max flow and maximum matching, we have efficient $(1+\epsilon)$-approximations \cite{LotkerPP15,DBLP:journals/siamcomp/GhaffariKKLP18,DBLP:conf/wdag/NanongkaiS14}, but the complexity of \emph{exact} computation is still open. Many additional questions are open with respect to various optimization problems.

The contributions of this paper are three-fold, providing (i) novel techniques for nearly-tight lower bounds for exact optimizations, (ii) advanced approaches for nearly-tight lower bounds for approximations, and (iii) new methods for showing limitations of the main lower-bound framework.

\subsection{Our contributions, the challenges, and our techniques}

\paragraph{Lower bounds for exact computation.}
We show that in many cases, solving problems \emph{exactly} in the \cgst{} model is hard, by providing many new $\tilde{\Omega}(n^2)$ lower bounds for fundamental optimization problems, such as MDS, max-cut, Hamiltonian path, Steiner tree and minimum 2-edge-connected spanning subgraph (2-ECSS).
Such results were previously known only for the minimum vertex cover (MVC), MaxIS and minimum chromatic number problems \cite{DBLP:conf/wdag/Censor-HillelKP17}. Our results are inspired by \cite{DBLP:conf/wdag/Censor-HillelKP17}, but combine many new technical ingredients. In particular, one of the key components in our lower bounds are reductions between problems. After having a lower bound for MDS, a cleverly designed reduction allows us to build a new lower bound construction for Hamiltonian path. These constructions serve as a basis for our constructions for the Steiner tree and minimum 2-ECSS. We emphasize that we \emph{cannot} use directly known reductions from the sequential setting, but rather we must create reductions that can be applied efficiently on lower bound constructions.

To demonstrate the challenge, we now give more details about the general framework.
We use the well-known framework of reductions from 2-party communication complexity, as originated in~\cite{DBLP:journals/siamcomp/PelegR00} and used in many additional works, e.g.,~\cite{Dassarmaetal12, FrischknechtHW12, DBLP:conf/spaa/FischerGKO18, DBLP:conf/podc/Censor-HillelD18, AbboudCHK16, DBLP:conf/wdag/CzumajK18}.
In communication complexity, two players, Alice and Bob, receive private input strings and their goal is to solve some problem related to their inputs, for example, decide whether their inputs are disjoint, by communicating the minimum number of bits possible. To show a lower bound for the \cgst{} model, the high-level idea is to create a graph that satisfies some required property, for example have an MDS of a certain size, iff the input strings satisfy some property.
If a fast algorithm in the \cgst{} exists, Alice and Bob can simulate it and solve the communication problem.
Then, lower bounds from communication complexity translate to lower bounds in the \cgst{} model. The exact lower bound we can show depends on certain parameters such as the size of the graph, the size of the inputs and the size of the cut between the parts of the graph that the players simulate. An attempt to use the known reduction from MVC to MDS together with the $\tilde{\Omega}(n^2)$ lower bound for MVC from \cite{DBLP:conf/wdag/Censor-HillelKP17} faces a complication: This reduction requires adding a new vertex for each edge in the original graph, which blows up the size of the graph with respect to the inputs, and allows showing only a nearly-linear lower bound. For similar reasons, known reductions from MVC to Hamiltonian cycle and Steiner tree cannot show any super-linear lower bound. Nevertheless, we show that in some cases reductions can be a powerful tool in providing new $\tilde{\Omega}(n^2)$ bounds, but they need to be designed carefully in order to preserve certain parameters of the graph, such as the number of vertices and the size of the cut between the two players.

A succinct summary of our results in this section is the following.
\begin{itemize}
    \item We show an $\tilde{\Omega} (n^2)$ lower bound for solving MDS, weighted max-cut, Hamiltonian path and cycle, minimum Steiner tree, and unweighted 2-ECSS on general graphs. For unweighted max-cut, we show an $\tilde{O} (n)$ algorithm for computing a $(1-\epsilon)$-approximation for max-cut in general graphs, for all $\epsilon>0$.
\end{itemize}

\paragraph{Lower bounds in bounded-degree graphs.} 
In many cases, the graph over which one needs to solve a certain problem is not a worst-case instance, but rather is drawn from a specific graph family that does allow efficient solutions. For example, while we show that finding a Hamiltonian path requires $\tilde{\Omega}(n^2)$ rounds in the worst case, in random graphs $G_{n,p}$, there exist fast algorithms, with the exact complexity depending on the probability $p$~\cite{Turau18, ChatterjeeFPP18, DBLP:conf/wdag/GhaffariL18}.
When we focus on bounded-degree graphs, there exist efficient \emph{constant} approximations for many optimization problems, such as MaxIS and MDS, whereas in general graphs such results are currently not known in the \cgst{} model.  We show that when it comes to \emph{exact} computation, even in bounded-degree graphs many problems are still difficult. Specifically, to solve MaxIS or MVC in bounded-degree graphs one would need $\tilde{\Omega}(n)$ rounds, and this holds even in graphs with logarithmic diameter and maximum degree 5. A similar result is shown for MDS. This is nearly-optimal, since all these problems can be solved in $O(m)=O(n)$ rounds in these graphs. Our lower bound here is again based on reductions, but this time not necessarily between graph problems. To show a lower bound for MaxIS we use a sequence of reductions between MaxIS and max 2SAT instances. Replacing a graph by a CNF formula $\phi$ is useful since it allows us to use the power of expander graphs and replace $\phi$ by a new equivalent CNF formula $\phi'$ where each variable appears only a constant number of times. This reduction is inspired by \cite{papadimitriou1991optimization, lecture} and is the main ingredient that allows us eventually to convert our graph to a bounded-degree graph.
Once we have a lower bound for MaxIS, a lower bound for MVC and MDS is obtained using standard reductions between the problems.

A succinct summary of our results in this section is the following.
\begin{itemize}
    \item We show an $\tilde{\Omega} (n)$ lower bound for solving MVC, MDS, MaxIS, and weighted 2-spanner on bounded degree graphs. 
\end{itemize}

\paragraph{Hardness of approximation.}
While solving problems \emph{exactly} seems to be a difficult task, one can hope to find fast and efficient approximation algorithms. In the \cgst{} model, currently the best efficient approximation algorithms known for many problems achieve the same approximation factors as the best approximations known for polynomial sequential algorithms. An intriguing question is whether better approximations can be obtained efficiently. As a first step towards answering this question, we show that in some cases even just \emph{approximating} the optimal solution is hard.

The challenge in showing such a lower bound is that we need to create a \emph{gap}. It is no longer enough that the graph satisfies some predicate iff the inputs are, for example,  disjoint, but rather we want that the size of the optimal solution would \emph{change dramatically} according to the inputs. Creating gaps is also crucial in showing inapproxiambility results in the sequential setting, a prime example for this is the PCP theorem which is a key tool for creating such gaps. In the distributed setting, we may need a more direct approach. Several approaches to create such gaps are shown in previous work. In weighted problems, sometimes we can use the weights to create a gap, as done in the constructions of Das Sarma et al. \cite{Dassarmaetal12}.
In some cases, the construction itself allows showing a gap. For example, if the chromatic number of a graph is either at most $3c$ or at least $4c$ depending on the inputs, it shows a lower bound for a $(4/3-\epsilon)$-approximation \cite{DBLP:conf/wdag/Censor-HillelKP17}. Similar ideas are used in lower bounds for approximating the diameter \cite{AbboudCHK16, HolzerP14, FrischknechtHW12, holzer2012optimal}. Another option is to reduce from a problem in communication complexity that already embeds a gap in it, such as the \emph{gap disjointness} problem \cite{DBLP:conf/podc/Censor-HillelD18}, or to design a specific construction that produces a gap, as in the lower bound for directed $k$-spanners \cite{DBLP:conf/podc/Censor-HillelD18}.

We contribute two new techniques for this toolbox. Namely, we show that \emph{error-correcting codes} and \emph{probabilistic methods} are useful for creating gaps also in the \cgst{} model.
Based on these, we show that obtaining a $(7/8+\epsilon)$-approximation for MaxIS requires $\tilde{\Omega}(n^2)$ rounds. As we explain in Section \ref{section:limitations(b)}, although MaxIS may be very difficult to approximate, we cannot use the Alice-Bob framework to show any lower bound for approximation better than $1/2$. For other problems, however, we are able to show a lower bound for a stronger approximation. Specifically, we show a near-linear lower bound for obtaining an $O(\log{n})$-approximation for the $k$-MDS problem for $k \geq 2$, and for several variants of the Steiner tree problem. Here, $k$-MDS is the problem of finding in a given vertex weighted graph $G=(V,E,w)$, a minimum weight set $S\subseteq V$ such that for all $v\in V$, either $v\in S$, or $d(v,S)\leq k$.  We also show such a result for MDS, but with some restrictions on the algorithm. For general algorithms, we show in Section \ref{section:limitations(b)} that we cannot use the Alice-Bob framework to show any hardness result for approximation above 2. This simply follows from the fact that if each one of Alice and Bob solves the problem optimally on its part, the union of the solutions gives a 2-approximation.

Our results demonstrate a clear separation between the \cgst{} and \local{} models, since in the latter there are efficient $(1+\epsilon)$-approximations for MaxIS and $k$-MDS~\cite{ghaffari2017complexity}.\footnote{The algorithm for $k$-MDS follows from algorithm for MDS, since in the \local{} model we can simulate an MDS algorithm in the graph $G^k$.} Such a separation for a \emph{local approximation problem}, a problem whose approximate solution does not require diameter many rounds when the message size is unbounded, was previously known only for approximating spanners \cite{DBLP:conf/podc/Censor-HillelD18}.

A succinct summary of our results in this section is the following.
\begin{itemize}
    \item  We show an $\tilde{\Omega} (n^2)$ and an $\tilde{\Omega} (n)$ lower bounds for computing a $(\frac{7}{8}+\epsilon )$-approximation for MaxIS and a  $(\frac{5}{6}+\epsilon )$-approximation for MaxIS, respectively, for any constant $\epsilon>0$.  We also show a nearly-linear lower bound for computing an $O(\log n)$-approximation for weighted $k$-MDS for all $k\geq 2$. We show similar results also for several variants of the Steiner tree problem. In addition, we show such results for weighted MDS, assuming some restrictions on the algorithm. 
\end{itemize}

\paragraph{Limitations.} Finally, we study the limitations of this general lower bound framework. While it is capable of providing many near-quadratic lower bounds for \emph{exact and approximate} computations, we show that sometimes it is limited in showing hardness of approximation. In addition, we prove impossibility of using this framework for providing lower bounds for exact computation for several central problems in P, such as maximum matching, max flow, min $s$-$t$ cut and weighted $s$-$t$ distance. Interestingly, we also show it cannot provide strong lower bounds for several \emph{verification} problems, which stands in sharp contrast to known lower bounds for these problems~\cite{Dassarmaetal12}. This implies that using a fixed cut as in our paper, is provably weaker than allowing a changing cut as in~\cite{Dassarmaetal12}.

One tool for showing such results is providing a protocol that allows Alice and Bob solve the problem by communicating only a small number of bits. Such ideas are used in showing the limitation of this framework for obtaining any lower bound for triangle detection~\cite{DruckerKO13}, any super-linear lower bound for weighted APSP~\cite{DBLP:conf/wdag/Censor-HillelKP17} (recently proven to have a linear solution~\cite{BernsteinN18}), and any lower bound larger than $\Omega(\sqrt{n})$ for detecting 4-cliques~\cite{DBLP:conf/wdag/CzumajK18}.

We push this idea further, showing that a \emph{non-deterministic} protocol for Alice and Bob, which may be much easier to establish, can imply the limitations of the technique. We also show how to obtain such protocols using a connection to \emph{proof labeling schemes} (PLS).

\subsection{Preliminaries}
\label{sec:prelim(b)}

We denote by $[n]$ the set $\set{0,\ldots,n-1}$. To prove lower bounds on the number of rounds necessary in order to solve a distributed problem in the \cgst{} model, we use reductions from two-party communication complexity problems. In what follows, we give the required definitions and the main reduction mechanism.
\subsection{Communication Complexity}\label{cc}

In the two-party communication complexity setting~\cite{KushilevitzN:book96}, there is a function  $f:\{0,1\}^K\times\{0,1\}^K\to\{\true ,\false \}$, and two players, Alice and Bob, who are given two input strings, $x,y\in\{0,1\}^K$, respectively, and need to compute $f(x,y)$ by exchanging bits according to some protocol $\pi$. The \emph{communication complexity} $CC(\pi)$ of $\pi$ is the maximal number of bits Alice and Bob exchange in $\pi$, taken over all possible pairs of $K$-bit strings $(x,y)$.

For our lower bounds, we consider deterministic and randomized protocols. To show limitations of obtaining lower bounds we also consider nondeterministic protocols, whose discussion we defer to Section~\ref{section:limitations(b)}. In a randomized protocol, Alice and Bob can generate truly random bits of their own, and the final output of Alice and Bob needs to be correct (according to $f$) with probability at least $2/3$ over the random bits generated by both players.

The \emph{deterministic communication complexity} $CC(f)$ of $f$ is the minimum $CC(\pi)$, taken over all deterministic protocols $\pi$ for computing $f$. The \emph{randomized communication complexity} $CC^R(f)$ is defined analogously.

The main communication complexity problem that we use for our lower bounds is \emph{set disjointness}, $\disj_K$, which is defined as $\disj_K(x,y)=\false$ if and only if there is an index $i\in\{0,\ldots,K-1\}$ such that $x_i=y_i=1$. It is known that $CC(\disj_K)=\Omega(K)$ and $CC^R(\disj_K)=\Theta(K)$~\cite[Example 3.22]{KushilevitzN:book96}. The latter holds even if Alice and Bob are allowed to generate shared truly random bits.

\subsection{Family of Lower Bound Graphs}\label{lowerboundsgraphs}
To formalize the reductions, we use the following definition which is taken from \cite{DBLP:conf/wdag/Censor-HillelKP17}.

\begin{definition}
\label{def: family of lb graphs}
	\textsf{Family of Lower Bound Graphs}\\
	Given integers $K$ and $n$, a Boolean function $f:\{0,1\}^{K} \times \{0,1\}^{K} \to \{\true ,\false \}$ and some Boolean graph property or predicate denoted $P$,
	a set of graphs $\set{G_{x,y}=(V,E_{x,y}) \mid x,y\in \{0,1\}^K}$ is called a \emph{family of lower bound graphs} with respect to $f$ and $P$ if the following hold:
	\begin{enumerate}
		\item The set of vertices $V$ is the same for all the graphs in the family, and we denote by $V_A,V_B$ a fixed partition of the vertices.
		\item Given $x,y\in \{0,1\}^{K}$, the only part of the graph which is allowed to be dependent on $x$ (by adding edges or weights, no adding vertices) is $G[V_A]$.
		\item Given $x,y\in \{0,1\}^{K}$, the only part of the graph which is allowed to be dependent on $y$ (by adding edges or weights, no adding vertices) is $G[V_B]$.
		\item $G_{x,y}$ satisfies $P$ if and only if $f(x,y)=\true$.
	\end{enumerate}
    The set of edges $E(V_A,V_B)$ is denoted by $E_{\cut}$, and is the same for all graphs in the family.
\end{definition}
We use the following theorem whose proof can be found in \cite{DBLP:conf/wdag/Censor-HillelKP17}.

\begin{theorem}
\label{generallowerboundtheorem}
	Fix a function $f:\{0,1\}^{K} \times \{0,1\}^{K} \to \{\true ,\false \}$ and a predicate $P$. If there exists a family of lower bound graphs $\{G_{x,y} \}$ w.r.t $f$ and $P$ then any deterministic algorithm for deciding $P$ in the \cgst{} model takes $\Omega(CC(f)/(\size{E_{\cut}}\log n))$  rounds, and every randomized algorithm for deciding $P$ in the \cgst{} model takes $\Omega(CC^R (f)/(\size{E_{\cut}} \log n))$ rounds.
\end{theorem}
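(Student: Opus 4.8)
The plan is to show that a fast \cgst{} algorithm $\mathcal{A}$ for deciding $P$ yields a communication protocol $\pi$ for $f$ whose cost is controlled by the round complexity of $\mathcal{A}$, the size of $E_{\cut}$, and $\log n$; then the stated bound follows by contraposition against $CC(f)$ (resp. $CC^R(f)$). First I would have Alice and Bob, on inputs $x,y$, locally construct their respective induced subgraphs: by properties (2) and (3) of Definition \ref{def: family of lb graphs}, Alice can build $G_{x,y}[V_A]$ knowing only $x$, Bob can build $G_{x,y}[V_B]$ knowing only $y$, and the cut edges $E_{\cut}$ are fixed and known to both. Thus each player knows its half of $G_{x,y}$ together with the edges crossing the cut, which is exactly the information needed to simulate $\mathcal{A}$ on the vertices they own.

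The core of the argument is the round-by-round simulation. I would maintain the invariant that after simulating round $r$, Alice knows the entire internal state of every vertex in $V_A$ and Bob knows the state of every vertex in $V_B$. To advance one round, each player locally computes, for every vertex it owns, the messages that vertex sends on each incident edge — this is possible since a vertex's outgoing messages depend only on its own state and the algorithm. Messages sent along edges internal to $V_A$ (resp. $V_B$) are delivered locally at no communication cost. For each edge in $E_{\cut}$, the owning endpoint's player sends the $O(\log n)$-bit message across the communication channel to the other player; this costs $O(|E_{\cut}|\log n)$ bits per simulated round. Each player then updates the states of its vertices using the received messages. After $R$ rounds of $\mathcal{A}$, every vertex has produced its output; since $\mathcal{A}$ decides $P$, the outputs determine whether $G_{x,y}$ satisfies $P$, and by property (4) this equals $f(x,y)$. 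The players can agree on the answer with one extra bit (e.g., Alice announces whether the vertices of $V_A$ observed a violation, Bob does likewise; if $P$ is not simply a conjunction of local conditions one instead has the vertex producing the global decision announce it). Hence $CC(\pi) = O(R\cdot|E_{\cut}|\log n)$, giving $R = \Omega(CC(f)/(|E_{\cut}|\log n))$; for randomized $\mathcal{A}$ the players use private randomness inside the simulation and the same count yields $R = \Omega(CC^R(f)/(|E_{\cut}|\log n))$.

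The step needing the most care is making precise how the players extract a single bit $f(x,y)$ from the distributed output and which player gets to announce it — this is where the predicate $P$ matters (local versus global), and one wants the resulting protocol to add only $O(1)$ or $O(\log n)$ bits so the bound is not weakened. I would handle this by noting that for the problems of interest $P$ can be rephrased so that some designated vertex (or each vertex checking a local condition that is then OR-combined across the cut with a single bit from each side) holds the verdict. A secondary subtlety is the bit-accounting of the last round and ensuring the simulation is well-defined even if $\mathcal{A}$'s round count depends on $n$ but not on $x,y$ — which holds since $V$ and hence $n$ are fixed across the family. Since this theorem is quoted verbatim from \cite{DBLP:conf/wdag/Censor-HillelKP17}, the detailed verification is deferred to that reference, and the sketch above records the mechanism.
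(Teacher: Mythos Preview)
Your proposal is correct and is precisely the standard simulation argument that underlies this theorem: Alice and Bob each build their side of $G_{x,y}$ from their private input, simulate the \cgst{} algorithm locally, and exchange only the $O(\log n)$-bit messages crossing $E_{\cut}$ in each round. The paper itself does not supply a proof but defers to \cite{DBLP:conf/wdag/Censor-HillelKP17}, where exactly this mechanism is spelled out; your sketch matches it.
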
	

\section{Near Quadratic Exact Lower Bounds}
Here we show near-quadratic lower bounds for Minimum Dominating Set, max-cut, Minimum Steiner Tree, Directed and Undirected Hamiltonian Path or Cycle, and Minimum 2-edge-connected spanning subgraph.


\subsection{Minimum Dominating Set}
\label{subsec:mds}
In the minimum dominating set (MDS) problem we are given a graph $G$,
and our goal is to find a minimum cardinality set of vertices $D$ such that each vertex is \emph{dominated} by a vertex in $D$: it is either in $D$ (thus dominates itself), or has a neighbor in $D$.
The MDS problem is a central problem, with many efficient $O(\log{\Delta})$-approximation algorithms in the \cgst{} model~\cite{jia2002efficient, kuhn2005constant, DBLP:journals/jacm/KuhnMW16}, where $\Delta$ is the maximum degree in the graph.
In this section, we show that solving the problem \emph{exactly} requires nearly quadratic number of rounds, proving the following.

\begin{theorem}
	\label{thm: mds lb}
	Any distributed algorithm in the \cgst{} model for computing a minimum dominating set or for deciding whether there is a dominating set of a given size $M$ requires $\Omega (\frac{n^2}{\log ^2 n} )$ rounds.
\end{theorem}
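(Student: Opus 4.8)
The plan is to instantiate Theorem~\ref{generallowerboundtheorem} with $f=\disj_K$ for $K=\Theta(n^{2})$ and with the predicate $P=$ ``$G$ has a dominating set of size at most $M$'', for a suitably chosen threshold $M$. Since $CC^{R}(\disj_K)=\Theta(K)$, it suffices to construct a family of lower bound graphs (in the spirit of \cite{DBLP:conf/wdag/Censor-HillelKP17}) whose vertex set has size $\Theta(n)$ and whose cut satisfies $\size{E_{\cut}}=O(\log n)$: the theorem then yields $\Omega\!\big(K/(\size{E_{\cut}}\log n)\big)=\Omega\!\big(n^{2}/\log^{2}n\big)$ rounds for randomized (hence also deterministic) algorithms, and the hardness of the decision version immediately gives hardness of computing an MDS.

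For the construction, I would index the $K$ disjointness bits by pairs, writing $K=N^{2}$ with $N=\Theta(n)$, so that $x=(x_{ij})_{i,j\in[N]}$ and $y=(y_{ij})_{i,j\in[N]}$. Alice's part $G[V_A]$ consists of two vertex groups $A_{1}=\{a^{1}_{i}\}_{i\in[N]}$ and $A_{2}=\{a^{2}_{j}\}_{j\in[N]}$, a constant number of fixed anchor vertices, and $O(\log N)$ \emph{bit-gadget} vertices $g^{A}_{k,b}$ ($k\in[\log N]$, $b\in\{0,1\}$); the only $x$-dependent edges are the bipartite edges $(a^{1}_{i},a^{2}_{j})$, present exactly when $x_{ij}=1$. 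Bob's part $G[V_B]$ is the mirror image ($B_{1},B_{2}$, anchors, $g^{B}_{k,b}$), with the only $y$-dependent edges being $(b^{1}_{i},b^{2}_{j})$ present exactly when $y_{ij}=1$. Crucially, $E_{\cut}$ consists \emph{only} of the $2\log N$ edges $(g^{A}_{k,b},g^{B}_{k,b})$, so $\size{E_{\cut}}=O(\log n)$; all remaining edges lie strictly inside $G[V_A]$ or inside $G[V_B]$ and are fixed — the anchor edges, and the bit-gadget wiring linking each of $a^{1}_{i},a^{2}_{j},b^{1}_{i},b^{2}_{j}$ to the gadget vertices that encode the bits of its index — so Definition~\ref{def: family of lb graphs} is respected. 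The fixed wiring is designed so that a common index, i.e.\ a pair $(i,j)$ with $x_{ij}=y_{ij}=1$, together with the matching index encodings, produces a short obstruction structure joining $a^{1}_{i},a^{2}_{j},b^{1}_{i},b^{2}_{j}$ and routed through the cut, whose domination costs one extra vertex compared to the obstruction-free situation.

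Taking $M$ to be the obstruction-free domination cost, correctness splits into two directions. Soundness: if $x_{ij}=y_{ij}=1$ for some $(i,j)$, the obstruction structure forces every dominating set to spend an additional vertex, so the minimum dominating set has size $\ge M+1$ and $P$ fails. Completeness: if the inputs are disjoint, one exhibits an explicit dominating set of size exactly $M$, built from the anchors together with a globally consistent selection inside the bit-gadget; here the content is that disjointness is precisely what guarantees that a single consistent selection handles all the index encodings at once, and one must additionally verify that the wiring creates no spurious obstruction between index pairs that merely resemble a common index (for instance, pairs that agree, or disagree, on many bits). Together these establish $P\Leftrightarrow\disj_K(x,y)=\true$, and Theorem~\ref{generallowerboundtheorem} then completes the proof.

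The main obstacle is exactly the joint design of the domination gadget and the bit-gadget so that, simultaneously, ``minimum dominating set size $>M$'' becomes equivalent to ``the inputs share an index'' with no spurious obstructions, and the cut stays at $O(\log n)$ edges. This is where the textbook MVC$\to$MDS reduction breaks: attaching a fresh vertex to every edge of a dense MVC lower-bound instance would inflate both the vertex count and $\size{E_{\cut}}$ to $\Theta(K)=\Theta(n^{2})$, degrading the bound to nearly linear, so the dependence of the dominating-set size on the presence of a common index has to be engineered directly into the low-cut gadget rather than inherited from a known reduction. Within this, the completeness direction is the most delicate part of the bookkeeping, since the size-$M$ dominating set must be produced uniformly for every one of the exponentially many disjoint input pairs.
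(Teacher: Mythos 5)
Your framework, parameter choices ($K=\Theta(n^2)$, $\size{E_{\cut}}=O(\log n)$), and diagnosis of why the textbook MVC$\to$MDS reduction fails all match the paper. But the proposal has a genuine gap, in two respects.

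First, the entire mathematical content of the argument --- the concrete bit-gadget and the case analysis establishing the ``iff'' --- is deferred rather than supplied. You write that the fixed wiring ``is designed so that'' a common index changes the domination cost by one, and then name ``the joint design of the domination gadget and the bit-gadget'' as the main obstacle; but that design \emph{is} the theorem. The paper attaches to each row $S\in\{A_1,A_2,B_1,B_2\}$ three blocks $T_S,F_S,U_S$ of $\log k$ vertices each, wires each sextuple $(f^h_{A_\ell},t^h_{A_\ell},u^h_{A_\ell},f^h_{B_\ell},t^h_{B_\ell},u^h_{B_\ell})$ into a $6$-cycle across the cut, connects each row vertex to the gadget by its binary encoding, and then proves Lemma~\ref{lemma: mds disj} by an exhaustive analysis of how a candidate set of size $4\log k+2$ can split between gadget and row vertices. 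That analysis is exactly where the ``spurious obstruction'' danger you flag actually lives, and none of it appears in your write-up.

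Second, the polarity you propose is not merely unverified but impossible under your stated edge convention. You insert the edge $(a^1_i,a^2_j)$ exactly when $x_{ij}=1$ and claim that a common index ($x_{ij}=y_{ij}=1$) \emph{forces} every dominating set to spend an extra vertex. The domination number is monotone non-increasing under edge addition: any dominating set of $G$ remains a dominating set of $G+e$. Turning bits on only adds edges, so a common index can only make domination easier, never harder; your soundness direction cannot hold. The paper uses the opposite direction: a common index supplies the two edges $\{a_1^i,a_2^j\}$ and $\{b_1^i,b_2^j\}$ that make a dominating set of size $4\log k+2$ \emph{possible}, and disjointness is what forces every dominating set to be strictly larger. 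To salvage your polarity you would have to flip the convention (edge present iff the bit is $0$, as the paper does in its max-cut construction) and then carry out the completeness and soundness analyses from scratch.
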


Note that our bound immediately applies to the vertex-weighted version of the problem.
Also, note that a super-linear lower bound for \emph{deciding} whether there is a dominating set of a given size $M$
also implies the same lower bound for \emph{computing} a minimum dominating set,
since computing the size of a given set of vertices takes $O(D)$ rounds in the \cgst{} model.
Thus, it suffices to prove the second part of the theorem,
which we do by presenting a family of lower bound graphs.

Our construction is inspired by the lower bound graph construction for vertex cover from \cite{DBLP:conf/wdag/Censor-HillelKP17}.
A first attempt to obtain this lower bound could be by using the standard
NP-hardness reduction from vertex cover to MDS~\cite{papadimitriou1991optimization}
(see also Section \ref{sec:implications} for more details).
However, this would require adding a vertex on each edge in the original graph, blowing up the size of the graph, and consequently showing only a near-linear lower bound.
Instead, we show how to extend the construction from~\cite{DBLP:conf/wdag/Censor-HillelKP17} to obtain a family of lower bound graphs for the MDS problem.
Since MDS is a very basic problem, showing a lower bound for it allows us to later show lower bounds for additional problems such as Steiner Tree and Hamiltonian cycle. While there are standard reductions to both problems also from MVC, using them together with the $\tilde{\Omega}(n^2)$ lower bound from \cite{DBLP:conf/wdag/Censor-HillelKP17} can only show a near-linear lower bound. Roughly speaking, this follows since for each edge in the lower bound graph for MVC we must add at least one vertex, which blows up the number of vertices with respect to the inputs. In MDS, we cover \emph{vertices} and not \emph{edges} which allows showing an $\tilde{\Omega}(n^2)$ lower bound.
We next describe our graph construction for MDS.

\textbf{The family of lower bound graphs:} Let $k$ be a power of $2$,
and build a family of graphs $G_{x,y}$ with respect to $f=\disj_{k^2}$ and the following predicate $P$:
the graph $G_{x,y}$ contains a dominating set of size $4\log k +2$.

\textbf{The fixed graph construction:}
Start with a fixed graph $G$ (see Figure~\ref{fig: mds})
consisting of four sets of $k$ vertices each, denoted
$A_1=\{a_1^i\mid 0\leq i\leq k-1\}$,
$A_2=\{a_2^i\mid 0\leq i\leq k-1\}$,
$B_1=\{b_1^i\mid 0\leq i\leq k-1\}$,
$B_2=\{b_2^i\mid 0\leq i\leq k-1\}$;
we refer to each set as a \emph{row}, and to their vertices as \emph{row vertices}.
For each set $S\in \{A_1,A_2,B_1,B_2\},\ell \in \set{1,2}$
add three additional sets of vertices
$T_S=\{t_S^h\mid 0\leq h\leq \log k-1\},
F_S=\{f_S^h\mid 0\leq h\leq \log k-1\}$ and
$U_S=\{u_S^h\mid 0\leq h\leq \log k-1\}$;
we refer to these vertices as \emph{bit-gadget vertices}.
For each $0\leq h\leq \log k-1$ and each $\ell\in \{1,2\}$,
connect the $6$-cycle
$(f^h_{A_\ell},t^h_{A_\ell},u^h_{A_\ell},f^h_{B_\ell},t^h_{B_\ell},u^h_{B_\ell})$.
The cycles are connected to the sets $A_1,A_2,B_1,B_2$
by binary representation in the following way.
Given a row vertex $s_\ell^i\in S$,
i.e., $s\in \{a,b\},\ell \in \{1,2\},i\in [n]$,
let $i_h$ denote the $h$-th bit in the binary representation of $i$.
Connect $s_\ell^i$ to the set $\bin(s_\ell^i)\subset (F_S\cup T_S)$, defined as
$\bin(s_\ell^i)=\{f_S^h \mid i_h=0\} \cup \{t_S^h \mid i_h =1\}$.
Similarly, define $\overline\bin(s_\ell^i)=\{f_S^h\mid i_h=1\}\cup \{t_S^j\mid i_h=0\}$.
\begin{figure}[t]
	\begin{center}
		\includegraphics[scale=1,
		trim=2.5cm 16cm 3cm 1.5cm,clip]{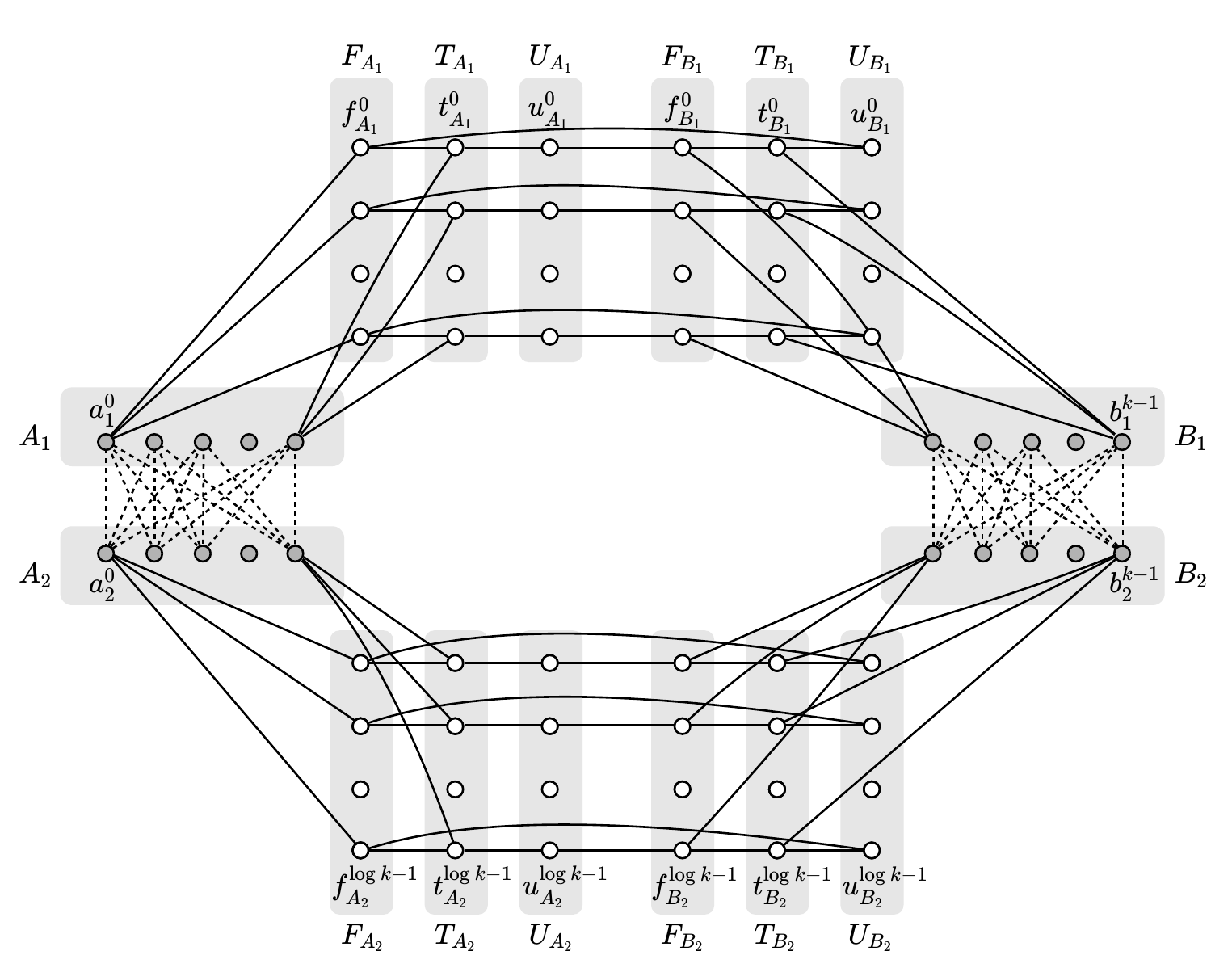}
		\caption{
			The family of lower bound graphs for MDS}
		\label{fig: mds}
	\end{center}
\end{figure}


\paragraph{Constructing $G_{x,y}$ from $G$ given $x, y \in \{0, 1\}^{k^2}$:} We index the strings $x, y \in \{0, 1\}^{k^2}$ by pairs of the form $(i, j)$ such that $0 \leq i, j \leq k-1$. Now we augment $G$ in the following way: For all pairs $(i, j)$, we add the edge $(a^i_1, a^j_2)$ if and only if $x_{i,j} = 1$, and we add the edge $(b^i_1, b^j_2)$ if and only if $y_{i,j} = 1$.

\begin{lemma} \label{lemma: mds disj}
	Given $x,y\in \set{0,1}^{k^2}$, the graph $G_{x,y}$ has a dominating set of size $4\log k+2$ if and only if $\disj_{k^2} (x,y)=\false$.
\end{lemma}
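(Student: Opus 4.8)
The strategy is to show both directions by explicitly understanding how a small dominating set must "select" rows via the bit-gadgets. First I would establish the backward direction ($\disj_{k^2}(x,y)=\false \Rightarrow$ small dominating set exists). If $x_{i,j}=1$ and $y_{i,j}=1$ for some pair $(i,j)$, then the edges $(a^i_1,a^j_2)$ and $(b^i_1,b^j_2)$ are present. The natural candidate dominating set $D$ of size $4\log k + 2$ has two "types" of vertices: for each row $S\in\{A_1,A_2,B_1,B_2\}$, pick the $\log k$ bit-gadget vertices corresponding to the binary encoding of the chosen index in that row — namely $\bin(a_1^i)$, $\bin(a_2^j)$, $\bin(b_1^i)$, $\bin(b_2^j)$ — giving $4\log k$ vertices; and then add $2$ more vertices (one presumably in an $A$-row and one in a $B$-row, e.g. $a_1^i$ and $b_1^i$, or two of the chosen row vertices) to patch up whatever the bit-gadgets fail to dominate. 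I would check carefully that: (i) every $U_S$ vertex $u_S^h$ is dominated — it sits on the $6$-cycle $(f^h_{A_\ell}, t^h_{A_\ell}, u^h_{A_\ell}, f^h_{B_\ell}, t^h_{B_\ell}, u^h_{B_\ell})$, so it is dominated iff one of its two cycle-neighbors among $\{t^h_{A_\ell}, f^h_{B_\ell}\}$ (or the other pair) is chosen, which is guaranteed because for each $h$ exactly one of $f_S^h,t_S^h$ lies in $\bin(\cdot)$ for each row; (ii) every row vertex $s_\ell^{i'}$ with $i'\ne$ the chosen index is dominated — it is adjacent to $\bin(s_\ell^{i'})$, and since $i'$ differs from the chosen index in at least one bit $h$, the chosen set contains $\bin$-vertex $f_S^h$ or $t_S^h$ with the opposite bit, which is adjacent to $s_\ell^{i'}$; and (iii) the chosen row vertices themselves and the cross edges dominate the remaining row vertices $a_1^i, a_2^j, b_1^i, b_2^j$ that are not covered by (ii) — this is exactly where the $x_{i,j}=y_{i,j}=1$ edges and the $+2$ extra vertices are used, and where I would place the two extra vertices.

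For the forward direction I would assume $G_{x,y}$ has a dominating set $D$ with $|D| = 4\log k + 2$ and argue $\disj_{k^2}(x,y)=\false$. The key structural claim is a counting/pigeonhole argument: the $6\log k$ bit-gadget vertices of each row $S$, together with $U_S$, form $\log k$ disjoint $6$-cycles (after pairing $A_\ell$ with $B_\ell$), and dominating a $6$-cycle requires at least $2$ vertices, so dominating all of $\bigcup_S (T_S\cup F_S\cup U_S)$ costs at least $2 \cdot 2 \log k = 4\log k$ vertices drawn essentially from the cycles (I need to be careful that row vertices are not adjacent to $U_S$ vertices, so they cannot help dominate the $U$'s — this forces the $4\log k$ vertices to be bit-gadget vertices, not row vertices). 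That leaves a budget of only $2$ extra vertices. Then I would argue that within this tight budget, the $\log k$ bit-gadget vertices chosen in each row must "spell out" a single index: if the chosen $t/f$ vertices in row $S$ are not of the form $\bin(s)$ for a unique $s$ (i.e. they encode an inconsistent or ambiguous pattern), then some row vertex $s_\ell^{i'}$ fails to be dominated by bit-gadgets and must be dominated by a row vertex or cross edge, which eats into the budget of $2$ and leads to a contradiction unless the inputs are non-disjoint. Denote the indices spelled out by $A_1, A_2, B_1, B_2$ as $i_1, j_1, i_2, j_2$. The row vertices $a_1^{i_1}$ and $a_2^{j_1}$ are then *not* dominated by the bit-gadgets of their own rows (their neighborhood $\bin(\cdot)$ is disjoint from the chosen $\overline{\bin}$-type vertices), so each must be dominated either by one of the $2$ spare vertices or by a cross edge $(a_1^{i_1}, a_2^{j_1})$, i.e. by $x_{i_1, j_1}=1$; similarly for the $B$ side with $y_{i_2,j_2}=1$. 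Arranging so that the two spares cannot cover everything unless $i_1=i_2$, $j_1=j_2$ and $x_{i_1,j_1}=y_{i_1,j_1}=1$ gives a common index, hence non-disjointness.

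The main obstacle I expect is the forward direction's tightness argument — specifically, ruling out "clever" dominating sets that save a vertex somewhere in the bit-gadgets (e.g. using a $u_S^h$ vertex, or a row vertex, to dominate a cycle more efficiently than the generic lower bound of $2$ per $6$-cycle suggests) and then spend the savings elsewhere. I would handle this by a careful local analysis of each $6$-cycle $(f^h_{A_\ell},t^h_{A_\ell},u^h_{A_\ell},f^h_{B_\ell},t^h_{B_\ell},u^h_{B_\ell})$: enumerate which pairs of its vertices form a dominating set of the cycle, note that row vertices attach only to the $t$ and $f$ vertices (never to $u$), so any globally-dominating configuration that is "cheap" on the cycles must in fact use exactly $2$ vertices per cycle and those must include enough $t/f$ vertices to dominate the row vertices — forcing the consistent-encoding structure. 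The bookkeeping of exactly how the final $+2$ vertices interact with the four "selected" row vertices $a_1^{i_1}, a_2^{j_1}, b_1^{i_2}, b_2^{j_2}$, and confirming that the only way to close the gap is via a shared coordinate with $x_{i,j}=y_{i,j}=1$, is the delicate combinatorial core; everything else is routine verification.
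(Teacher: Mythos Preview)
Your overall approach matches the paper's: construct $D$ from bit-gadget vertices plus two row vertices in the backward direction, and in the forward direction use the fact that the $u$-vertices force at least $4\log k$ bit-gadget vertices into $D$, then analyze how the remaining budget of $2$ pins down consistent indices and forces the edges $\{a_1^i,a_2^j\}$ and $\{b_1^i,b_2^j\}$.

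There is, however, a concrete error in your backward-direction construction. You propose to take $\bin(a_1^i)\cup\bin(a_2^j)\cup\bin(b_1^i)\cup\bin(b_2^j)$ as the bit-gadget part of $D$. This does not dominate all row vertices: consider $a_1^{\bar i}$ where $\bar i$ is the bitwise complement of $i$. Its neighbor set among bit-gadget vertices is $\bin(a_1^{\bar i})=\overline\bin(a_1^i)$, which is disjoint from $\bin(a_1^i)$, so $a_1^{\bar i}$ is not dominated by your bit-gadget choices (and it is not adjacent to $a_1^i$ either, since $A_1$ is not a clique). The fix is exactly what the paper does: take $\overline\bin(a_1^i),\overline\bin(a_2^j),\overline\bin(b_1^i),\overline\bin(b_2^j)$ together with $a_1^i$ and $b_1^i$. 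Then any $a_1^{i'}$ with $i'\neq i$ agrees with $i'$ and disagrees with $i$ in some bit $h$, so the $\overline\bin(a_1^i)$-vertex at position $h$ lies in $\bin(a_1^{i'})$ and dominates it. Your own verification step (ii) in fact describes precisely this $\overline\bin$ behavior (``the chosen set contains \ldots the opposite bit, which is adjacent to $s_\ell^{i'}$''), so the inconsistency is localized to your explicit naming of the set; swap $\bin$ for $\overline\bin$ and the argument goes through.

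For the forward direction, your outline is correct but the paper's proof is more structured than ``bookkeeping'': it explicitly rules out the cases of $4\log k+2$ and $4\log k+1$ bit-gadget vertices in $D$, and in the remaining case proves three sub-claims (the two row vertices lie in different sets; exactly one of each pair $\{f_S^h,t_S^h\}$ is in $D$; the $A_\ell$ and $B_\ell$ choices are synchronized across the $6$-cycle). The synchronization claim is what forces the same index $i$ on both $A_1$ and $B_1$ (and the same $j$ on $A_2$ and $B_2$), which you flagged as the delicate point but did not spell out.
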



\begin{proof}
	Assume first that $\disj_{k^2} (x,y)=\false$,
	and let $(i,j)$ be the index s.t. $x_{(i,j)}=y_{(i,j)}=1$, i.e., the edges $\{a_1^i,a_2^j\},\{b_1^i,b_2^j\}$ are both in the graph $G_{x,y}$.
	Construct a dominating set $D$ containing
	$a_1^i,b^i_1$, and all the vertices in $\overline\bin(a_1^i),\overline\bin(a_2^j),
	\overline\bin(b_1^i),\overline\bin(b_2^j)$.
	This set is of size $4\log k +2$.
	By taking all of $\overline\bin(a_1^i)$ and $\overline\bin(b_1^i)$,
	we made sure all the bit-gadget vertices of
	$F_{A_1},T_{A_1},U_{A_1},F_{B_1},T_{B_1},U_{B_1}$
	are dominated,
	and similarly, all the other bit-gadget vertices are dominated as well.
	
	For the row vertices,
	first observe that $a_1^i,b_1^i,a_2^j,b_2^j$ are dominated by $D$
	using the edges $\{a_1^i,a_2^j\}$ and $\{b_1^i,b_2^j\}$.
	For the other row vertices,
	consider w.l.o.g.\ some vertex $a_1^{i'}$ where $i'\neq i$,
	so there exists $0\leq h\leq \log k-1$ s.t. $i'_h\neq i_h$.
	If $f_{A_1}^h \in \bin(a_1^i)$,
	then $t_{A_1}^h \in \overline\bin(a_1^i)\subset D$,
	and $a_1^{i'}$ is dominated;
	similarly, if $t_{A_1}^h \in \bin(a_1^i)$ then $a_1^{i'}$ is dominated as well.
	Thus, $D$ is a dominating set of size $4\log k+2$.
	
	For the other direction, assume $G_{x,y}$ has a dominating set $D$ of size $4\log k+2$.
	For each $0\leq h\leq\log k-1$ and $\ell\in\set{1,2}$,
	the only way to dominate two vertices $u_{A_\ell}^h$ and $u_{B_\ell}^h$
	is by two different bit-gadget vertices,
	so $D$ must contain at least $4\log k$ bit-gadget vertices.
	We show that this is the exact number of bit-gadget vertices in $D$
	by eliminating all other options.
	
	If $D$ has $4\log k+2$ bit-gadget vertices
	and no row vertices,
	then there is a set $S\in\set{A_1,A_2,B_1,B_2}$ such that
	$F_S\cup T_S\cup U_S$ contains at most $\log k$ vertices of $D$.
	In this set, $D$ contains at most one vertex out of any consecutive triple
	$\set{f_S^h,t_S^h,u_S^h}$:
	If it contains more than one vertex in such a triplet,
	it contains no vertices of some other triple $\{f_S^{h'},t_S^{h'},u_S^{h'}\}$,
	and as $t_S^{h'}$ cannot be dominated by row vertices,
	it is not dominated at all.
	So, for some $s^i_\ell\in S$,
	there is a set of the form $\bin(s^i_\ell)$ in $F_S\cup T_S\cup U_S$,
	which does not intersect $D$.
	As $D$ does not contain any row vertices,
	it does not contain $s^i_\ell$ or any of its neighbors,
	so this vertices is not dominated.
	
	If $D$ has $4\log k+1$ bit-gadget vertices,
	and exactly one row-vertex,
	assume w.l.o.g. that this row vertex is in $A_1$.
	For each other set $S\in\set{A_2,B_1,B_2}$,
	$D$ must contain at least one vertex from each triplet
	$\set{f_S^h,t_S^h,u_S^h}$ in order to dominate $t_S^h$,
	and thus $F_S\cup T_S\cup U_S$ contains at least $\log k$ vertices of $D$.
	Hence, for at least one of the set $S\in\set{B_1,B_2}$,
	$F_S\cup T_S\cup U_S$ contains exactly $\log k$ vertices of $D$.
	The argument now follows the same lines as the previous:
	There is a set of the form $\bin(b^i_\ell)$ that does not intersect $D$,
	and as $D$ does not contain vertices from $B_1$ and $B_2$,
	$b^i_\ell$ is not dominated.
	
	We are left with the case where
	$D$ has exactly $4\log k$ bit-gadget vertices and $2$ row vertices.
	This requires a sequence of claims, as follows.
	
	\emph{The two row vertices are in different sets.}
	If the two row vertices are in the same set, w.l.o.g.~$A_1$,
	then $D$ must contain at least one vertex from each triplet $f^h_{A_2},t^h_{A_2},u^h_{A_2}$,
	in order to dominate $t^h_{A_2}$.
	Since $D$ contains exactly $2$ vetrices of every $6$-cycle,
	it can contain most one vertex of each triplet $f^h_{B_2},t^h_{B_2},u^h_{B_2}$.
	Hence, there is a vertex $b_2^i\in B_2$ such that $\bin(b_2^i)$
	does not intersect $D$.
	Since no vertex of $B_1$ or $B_2$ is in $D$,
	the vertex $b_2^i$ is not dominated, a contradiction.
	
	\emph{$D$ contains exactly one of every pair of neighbors $f_S^h,t_S^h$.}
	Assume w.l.o.g. that $f_{A_1}^h,t_{A_1}^h$ are both in $D$,
	so in the same $6$-cycle,
	$f_{B_1}^h,t_{B_1}^h$ must both be dominated by vertices of $B_1$.
	But no single vertex in $b_1^i\in B_1$ has both $f_{B_1}^h,t_{B_1}^h$
	in $\bin(b_1^i)$,
	and $D$ cannot contain two vertices of $B_1$.
	
	\emph{For each pair $\{f^h_{A_\ell},f^h_{B_\ell}\}$,
		and for each pair $\{t^h_{A_\ell},t^h_{B_\ell}\}$,
		either both vertices are in $D$ or both are not in $D$.}
	If $f^h_{A_\ell}$ is in $D$ then by the previous claim $t^h_{A_\ell}$ is not,
	so $f^h_{B_\ell}$ must be in $D$ in order to dominate $u^h_{A_\ell}$;
	similarly, if $f^h_{B_\ell}$ is in $D$ then $f^h_{A_\ell}$ must be in $D$ as well.
	If $t^h_{A_\ell}$ is in $D$ then $f^h_{A_\ell}$ is not,
	then $t^h_{B_\ell}$ must be in $D$ in order to dominate $u^h_{B_\ell}$;
	similarly,
	if $t^h_{B_\ell}$ is in $D$ then $t^h_{A_\ell}$ must be in $D$.
	
	As exactly $2$ of every $6$-cycle vertices are in $D$,
	from each pair $\{f_S^h,t_S^h\}$ at most one vertex is in $D$.
	Choose a vertex not in $D$ from each pair $\{f_{A_1}^h,t_{A_1}^h\}$,
	and let $a_1^i$ be the vertex such that $\bin(a_1^i)$ is the set of these vertices.
	Hence, $\bin(a_1^i)$ does not intersect $D$,
	and $a_1^i$ is not dominated by bit-gadget vertices.
	For each vertex $f^h_{A_1}$ that is not in $D$,
	the corresponding vertex $f^h_{B_1}$ is also not in $D$,
	and similarly for vertices of the form $t^h_S$.
	Thus, $\bin(b_1^i)$ also does not intersect $D$,
	and $b_1^i$ is not dominated by bit-gadget vertices.
	Similarly, let $a_2^j$ be a vertex such that $\bin(a_2^j)$ does not intersect $D$,
	and conclude $\bin(b_2^j)$ also does not intersect $D$.
	Thus, we have four vertices, $a_1^i,b_1^i,a_2^j,b_2^j$
	that are not dominated by bit-gadget vertices,
	and instead, must be dominated by row vertices.
	It is impossible that both $A_1$ and $A_2$ have vertices from $D$,
	as then there are no vertices from $D$ in $B_1$ and $B_2$
	and $b_1^i$ and $b_2^j$ are not dominated;
	similarly, it is impossible that both $B_1$ and $B_2$ have vertices from $D$.
	So, one of $a_1^i$ and $a_2^j$ must be in $D$, and dominate the other,
	and either way the edge $\{a_1^i,a_2^j\}$ must exist in $G_{x,y}$;
	similarly, the edge $\{b_1^i,b_2^j\}$ must exist in $G_{x,y}$.
	We thus conclude that $x_{(i,j)}=y_{(i,j)}=1$, as claimed.
\end{proof}

Now all that is left in order to prove Theorem \ref{thm: mds lb} is to use
Theorem~\ref{generallowerboundtheorem}.

\begin{proof}
	We use Theorem~\ref{generallowerboundtheorem}.
	For that purpose we divide the graph $G_{x,y}$ into $V_A$ and $V_B$ in the following way. Define $V_A=A_1 \cup A_2 \cup F_{A_1} \cup T_{A_1} \cup U_{A_1} \cup F_{A_2} \cup T_{A_2}\cup U_{A_2}$, and define $V_B=V\setminus V_A$.
	Since $|E_{\cut}|=O(\log k)$ we get by using
	Theorem~\ref{generallowerboundtheorem} a lower bound of $\Omega \left( \frac{k^2}{\log k\log n} \right)$ rounds for the problem of deciding  whether there is a dominating set of size $4\log k+2$.
	As $n=\Theta(k)$, the proof is completed.
\end{proof}

\subsection{Hamiltonian Path and Cycle Lower Bounds and applications}
\label{subsec:Hamiltonian}

In this Section, we show near-quadratic lower bounds for Hamiltonian path and cycle in directed and undirected graphs, and for the minimum 2-edge-connected spanning subgraph problem (2-ECSS).

While the Hamiltonian path problem is not an optimization problem, it is known to be NP-hard,
e.g., through a reduction from minimum vertex cover~\cite{DBLP:books/daglib/p/Karp10}. A lower bound of $\Tilde{\Omega} (\sqrt{n}+D)$ is known for the \emph{verification} version of the problem \cite{Dassarmaetal12}, and Hamiltonian paths can be found efficiently in random graphs \cite{DBLP:conf/wdag/GhaffariL18, ChatterjeeFPP18, Turau18}.
However, in general graphs we show an $\Tilde{\Omega}(n^2)$ lower bound.

\remove{
\begin{theorem}
\label{thrm:Hamiltonian}
Any distributed algorithm in the \cgst{} model for finding a directed Hamiltonian path or deciding whether there is a directed Hamiltonian path in the input graph requires $\Omega ({n^2}/{\log ^4 n} )$ rounds.
\end{theorem}

We extend Theorem~\ref{thrm:Hamiltonian} to the case of a cycle and to the undirected variants, as follows.
\begin{theorem}
\label{thrm:Hamiltoniancycle}
Any distributed algorithm in the \cgst{} model for finding a directed Hamiltonian cycle or for deciding whether these such in the input graph requires $\Omega ({n^2}/{\log ^4 n} )$ rounds.
\end{theorem}

\begin{theorem}
\label{thrm:Hamiltonianundirected}
Any distributed algorithm in the \cgst{} model for finding an undirected Hamiltonian path or cycle or deciding whether there is such in the input graph requires $\Omega ({n^2}/{\log ^4 n} )$ rounds.
\end{theorem}
}


We also show that this directly shows hardness of \emph{unweighted} 2-ECSS. While approximation algorithms to the problem include an $O(n)$-round $3/2$-approximation \cite{krumke2007distributed}, an $O(D)$-round 2-approximation \cite{DBLP:conf/opodis/Censor-HillelD17} and an $O(\log^{1+o(1)}{n})$-round $O(1)$-approximation \cite{un_kECSS}, we show that solving the 2-ECSS problem \emph{exactly} requires near-quadratic number of rounds.

\remove{  
\begin{theorem}
\label{thrm:twoedgeconnected}
Any distributed algorithm in the \cgst{} model for finding a minimum size 2-edge-connected spanning subgraph or deciding whether there is a 2-edge-connected sub-graph with a given number of edges requires $\Omega ({n^2}/{\log ^4 n} )$ rounds.
\end{theorem}
}

\subsubsection{Directed Hamiltonian Path}

\paragraph{Intuition for the construction:}
Imagine traversing our lower bound graph for MDS in search for a Hamiltonian path, to which we add $start$ and $end$ vertices, where the path begins and ends, respectively. We need the existence of such a path to determine if the input strings $x,y$ are disjoint, so our approach is to traverse the bit-gadget vertices and the row vertices such that after a certain prefix, it remains to use a single edge from $A_1$ to $A_2$ followed by a single edge from $B_1$ to $B_2$ in order to each $end$. The crux is to guarantee that these two edges have corresponding indexes. Since row vertices in different sides are not connected, we add a special vertex that is reachable from all $A_2$ vertices, leading to another special vertex that is connected to all $B_1$ vertices, which adds a single edge to $E_{\cut}$.

The key challenge is guaranteeing that we can use such two edges iff they have the same indexes, implying that we need the prefix of the path to exclude exactly such 4 vertices.
The second challenge is that there are more row vertices than bit-gadget vertices, so we cannot simply walk back and forth between these two types of vertices without reaching the bit-gadget vertices multiple times.

Our high-level approach for addressing these two issues is as follows (see Figure~\ref{fig:hamiltonian}).
For each row vertex in $S\in\{A_1,A_2,B_1,B_2\}$ and each of its corresponding $2\log k$ bit-gadget vertices, we plug in a \emph{gadget} of constant size that replaces the bit-gadget vertices altogether. Further, we traverse the gadgets in an order that corresponds to some choice of $\true$ and $\false$ for each index of row vertices. The latter is what promises that we reach $end$ iff we use a single edge from $A_1$ to $A_2$ and a single edge from $B_1$ to $B_2$ with corresponding indexes.
However, since the number of gadget-vertices is proportional to $2\log k$ times the number of row vertices $k$, we now face an opposite problem: there are more gadget vertices than row vertices, and the latter play a role in multiple gadgets. Moreover, if we chose to traverse the gadgets that correspond to, say, a $\true$ choice for some index, we still need to visit the gadget vertices of the $\false$ choice and we need to do so without visiting its respective row vertex. The power of the gadget is that it is designed to allow us to choose whether the path \emph{visits} the row vertex of that gadget, or \emph{skips} it and continues to the next gadget.
This has two main strengths: First, it nullifies the issue that is caused by having row vertices appear in multiple gadgets. Second, it allows us to traverse all the gadget vertices that do not correspond to the $\true/\false$ choices that we make.
This eventually allows us to obtain our desired claim, that a directed Hamiltonian path exists if and only if $\disj_{K}(x,y)=\false$ for an appropriate $K$ that allows a near-quadratic lower bound. \\[-7pt]

\paragraph{The family of lower bound graphs:} Given $k$ (we assume for simplicity that $k$ is a power of 2) we build our family of graphs $G_{x,y}$ with respect to $f = \disj_{k^2}$ and $P$ being the predicate that says that there exists in $G_{x,y}$ a directed Hamiltonian Path.

\paragraph{The fixed graph construction:}
See Figure~\ref{fig:hamiltonian} for an illustration.
\begin{figure}[t]

	\begin{center}
		\includegraphics[scale=1.1,
		trim=2.5cm 16cm 3.5cm 3cm,clip]{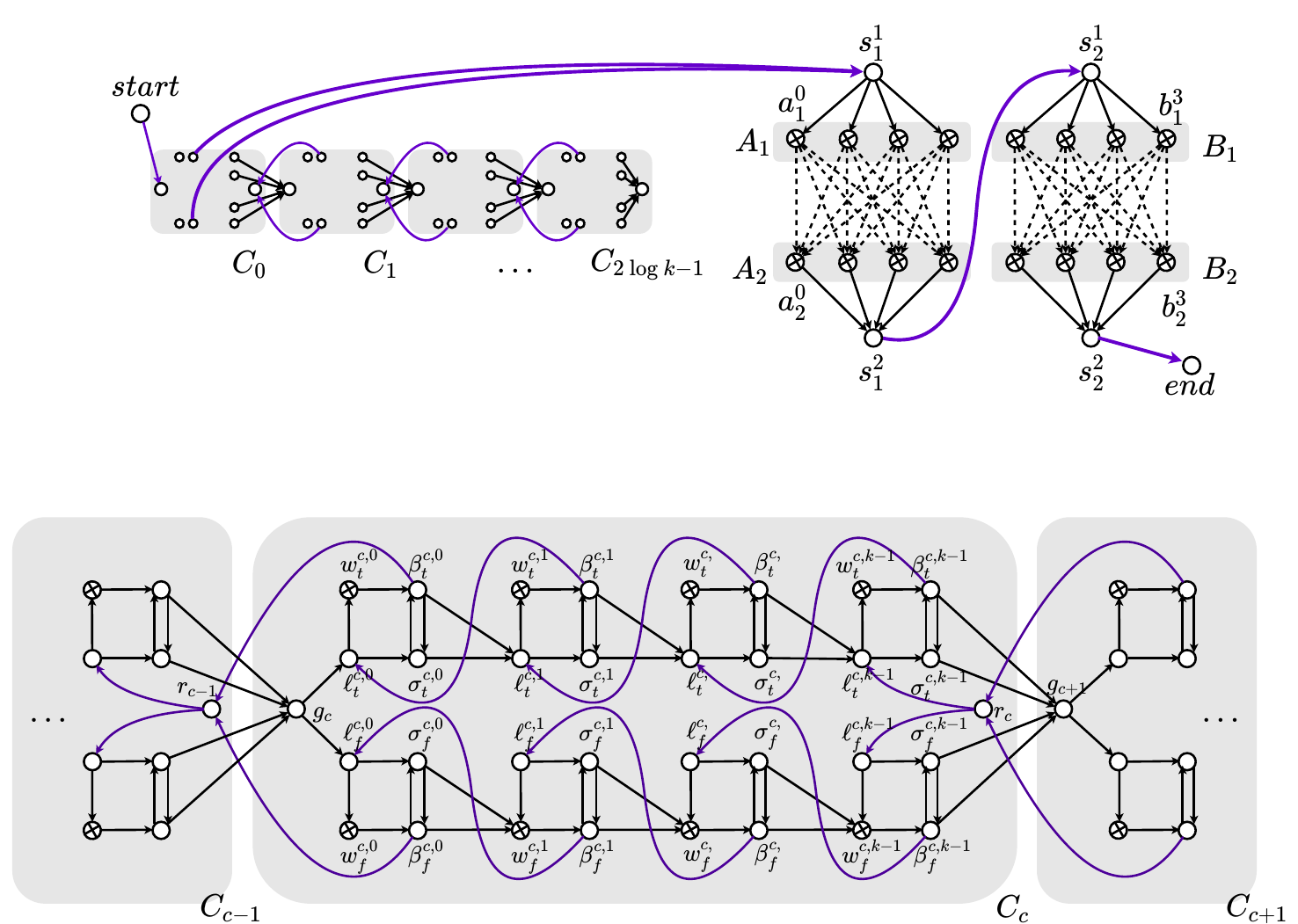}
		\caption{
			Hamiltonian path graph construction for $k=4$.}
		\label{fig:hamiltonian}
	\end{center}

\end{figure}
We define 2 vertices, $start$ and $end$, and 4 vertices $s^1_1, s^2_1, s^1_2, s^2_2$. Further, there are $4k$ vertices $a^i_1, a^i_2, b^i_1, b^i_2$ for every $0 \leq i \leq k-1$ (row vertices).

The vertex $s^1_1$ has outgoing edges to all vertices $a^i_1$ for $0 \leq i \leq k-1$.
The vertex $s^2_1$ has incoming edges from all vertices $a^i_2$ for $0 \leq i \leq k-1$, and has an outgoing edge to $s^1_2$.
The vertex $s^1_2$ has outgoing edges to all vertices $b^i_1$ for $0 \leq i \leq k-1$.
The vertex $s^2_2$ has incoming edges from all vertices $b^i_2$ for $0 \leq i \leq k-1$, and has an outgoing edge to $end$.

For every $0 \leq c \leq 2\log(k)-1$ we define a \emph{box} $C_c$. The box contains vertices $g_c$ and $r_c$, and for every $q \in \{t,f\}$ and every $0 \leq d\leq k-1$ it contains 4 vertices: $\ell^{c,d}_{q}$ (launch vertices), $wheel^{c,d}_{q}$ (wheel vertices), $\sigma^{c,d}_{q}$ (skip vertices), and $\beta^{c,d}_{q}$ (burn vertices).

A crucial detail here is that the wheel vertices are not additional vertices, but they are simply reoccurrences of the row vertices, as follows. For every $0 \leq i \leq k-1$, it holds that
\begin{itemize}
	\item $a^{i}_1=wheel^{c,d}_{t}$, for all $c,d$ such that $i$ is the $d$-th index whose $c$-th bit is $1$, and $a^{i}_1=wheel^{c,d}_{f}$, for all $c,d$ such that $i$ is the $d$-th index whose $c$-th bit is $0$.
	
	\item $b^{i}_1=wheel^{c,d}_{t}$, for all $c,d$ such that $i$ is the $(d-(k/2))$-th index whose $c$-th bit is $1$, and $b^{i}_1=wheel^{c,d}_{f}$, for all $c,d$ such that $i$ is the $(d-(k/2))$-th index whose $c$-th bit is $0$.
	
	\item $a^{i}_2=wheel^{c,d}_{t}$, for all $c,d$ such that $i$ is the $d$-th index whose $(c-\log(k))$-th bit is $1$, and $a^{i}_2=wheel^{c,d}_{f}$, for all $c,d$ such that $i$ is the $d$-th index whose $(c-\log(k))$-th bit is $0$.
	
	\item $b^{i}_2=wheel^{c,d}_{t}$, for all $c,d$ such that $i$ is the $(d-(k/2))$-th index whose $(c-\log(k))$-th bit is $1$, and $b^{i}_2=wheel^{c,d}_{f}$, for all $c,d$ such that $i$ is the $(d-(k/2))$-th index whose $(c-\log(k))$-th bit is $0$.
\end{itemize}

Within each box $C_c$, the vertices $g_c$ has outgoing edges to $\ell^{c,0}_{q}$ for both possible values of $q$. For each $0 \leq d \leq k-1$, the vertex $\ell^{c,d}_{q}$ has outgoing edges to $\sigma^{c,d}_{q}$ and to $wheel^{c,d}_{q}$. The vertex $wheel^{c,d}_{q}$ has an outgoing edge to $\beta^{c,d}_{q}$. The vertices $\sigma^{c,d}_{q}$ and $\beta^{c,d}_{q}$ have outgoing edges between each other in both directions. In addition, $\sigma^{c,d}_{q}$ and $\beta^{c,d}_{q}$ have outgoing edges to a vertex $v$, such that $v=\ell^{c,d+1}_{q}$ if $d \neq k-1$, $v=g_{c+1}$ if $d = k-1$ and $c\neq 2\log(k)-1$, and $v=r_{2\log(k)-1}$ if $d = k-1$ and $c = 2\log(k)-1$. Finally, the vertex $\beta^{c,d}_{q}$ has an outgoing edge to a vertex $u$, such that $u=\ell^{c,d-1}_{q}$ if $d \neq 0$, and $u=r_{c-1}$ if $d=0, c \neq 0$.

Finally, the vertex $start$ a single outgoing edge into $g_0$, and the vertex $s^1_1$ has two incoming edges, from $\beta^{0,0}_{q}$ for $q \in \{t,f\}$.

\paragraph{Constructing $G_{x,y}$ from $G$ given $x,y \in \{0,1\}^{k^2}$:}
We index the strings $x,y \in \{0,1\}^{k^2}$ by pairs of the form $(i,j)$ such that $0 \leq i,j \leq k-1$. Now we augment $G$ in the following way: For all pairs $(i, j)$, we add the edge
$(a^i_1,a^j_2)$ if and only if $x_{i,j} = 1$, and we add the edge $(b^i_1,b^j_2)$ if and only if $y_{i,j} = 1$.

\paragraph{Proving that $G_{x,y}$ is a family of lower bound graphs:}
For every $q \in \{t,f\}$, $0\leq c\leq 2\log(k)-1$, and $0\leq d\leq k-1$, we define the following. We first define three types of \emph{$(q,c,d)$-forward-steps}:
A \emph{$(q,c,d)$-wheel-forward-step} is a subpath $(\ell^{c,d}_{q},wheel^{c,d}_{q},\beta^{c,d}_{q},\sigma^{c,d}_{q}, v)$, such that if $d \neq k-1$ then $v=\ell^{c,d+1}_{q}$, if $d=k-1$ and $c \neq 2\log(k)-1$ then $v=g_{c+1}$, and otherwise, $v=r_{2\log(k)-1}$.
A \emph{$(q,c,d)$-sigma-forward-step} is a subpath $(\ell^{c,d}_{q},\sigma^{c,d}_{q}, v)$, such that if $d \neq k-1$ then $v=\ell^{c,d+1}_{q}$, if $d=k-1$ and $c \neq 2\log(k)-1$ then $v=g_{c+1}$, and otherwise, $v=r_{2\log(k)-1}$ (we will show that a Hamiltonian path cannot contain sigma-forward-steps).
A \emph{$(q,c,d)$-beta-forward-step} is a subpath $(\ell^{c,d}_{q},\sigma^{c,d}_{q},\beta^{c,d}_{q}, v)$, such that if $d \neq k-1$ then $v=\ell^{c,d+1}_{q}$, if $d=k-1$ and $c\neq 2\log(k)-1$ then $v=g_{c+1}$, and otherwise, $v=r_{2\log(k)-1}$.

Finally, we define:
A \emph{$(q,c,d)$-backward-step} is a subpath $(\beta^{c,d}_{q}, v)$, such that if $d\neq 0$ then $v=\ell^{c,d-1}_{q}$, if $d=0,c \neq 0$ then $v=r_{c-1}$, and otherwise, $v=s^1_1$.

\begin{claim}
	\label{claim:HP}
	If $\disj(x,y)=\false$ then $G_{x,y}$ has a Hamiltonian path.
\end{claim}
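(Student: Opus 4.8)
The plan is to exhibit an explicit Hamiltonian path in $G_{x,y}$ given the index $(i,j)$ with $x_{i,j}=y_{i,j}=1$. The guiding principle is that for each box $C_c$ we traverse \emph{both} the $q=t$ and $q=f$ sub-wheels, so that every launch, skip, burn and wheel vertex inside the box gets visited exactly once; the only freedom we exercise is \emph{which of the two sub-wheels actually consumes the shared row vertex} $wheel^{c,d}_q$, i.e.\ whether a given $d$-slot uses a wheel-forward-step (entering $wheel^{c,d}_q$) or a beta-forward-step (skipping it). Since each row vertex $a_1^{i'}$ (resp.\ $b_1^{i'}$, $a_2^{j'}$, $b_2^{j'}$) occurs once as a $t$-wheel vertex in the box corresponding to a $1$-bit of $i'$ and once as an $f$-wheel vertex in the box corresponding to a $0$-bit, I must consume each such vertex in exactly one of its two occurrences. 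The natural choice: in box $C_c$, on the sub-wheel whose label $q$ equals the $c$-th bit of the ``selected'' index, use a \emph{sigma/beta pattern that skips} exactly the slot of the selected index and uses wheel-forward-steps everywhere else; on the opposite sub-wheel, use wheel-forward-steps everywhere. For the $A_1$-part, the selected index is $i$; for the $B_1$-part (the $d\ge k/2$ half of the first $\log k$ boxes) also $i$; for the $A_2$- and $B_2$-parts (the last $\log k$ boxes) the selected index is $j$. Thus exactly the four row vertices $a_1^i,b_1^i,a_2^j,b_2^j$ are \emph{not} consumed inside any box.

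Next I describe the macroscopic route. Start at $start\to g_0$. Inside $C_0$, traverse the $q_0$-sub-wheel (where $q_0$ is the $0$-th bit of $i$) with the ``skip the $i$-slot'' pattern and the other sub-wheel fully, using the $\sigma$–$\beta$ back-and-forth edges to glue the two sub-wheels and the $\beta^{c,d}_q$-to-$\ell^{c,d-1}_q$ backward edges to sweep the skipped-slot's leftover vertices; the box is entered at $g_c$ and exited toward $g_{c+1}$ (or, for the last box, toward $r_{2\log k -1}$). Proceeding through $C_0,\dots,C_{2\log k-1}$ in order handles every box/launch/skip/burn vertex and every row vertex except $a_1^i,b_1^i,a_2^j,b_2^j$. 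Then from $r_{2\log k-1}$ I route backward through the $r$-chain $r_{2\log k-1},\dots,r_0$ — wait, more precisely, the backward-steps and the $r_{c}$ vertices are exactly what let me re-enter the wheels to collect the skipped slots; I will organize the traversal so that after the forward sweep of all boxes we land (via a $(q,0,0)$-backward-step) on $s^1_1$. From $s^1_1$, go to $a_1^i$, then along the edge $(a_1^i,a_2^j)$ (present since $x_{i,j}=1$), then $a_2^j\to s^2_1\to s^1_2\to b_1^i$, then $(b_1^i,b_2^j)$ (present since $y_{i,j}=1$), then $b_2^j\to s^2_2\to end$. This consumes exactly the four remaining row vertices and terminates correctly.

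The main obstacle — and the step I expect to require the most care — is the bookkeeping inside a single box: showing that the two sub-wheels, one fully traversed and one with a single skipped slot, can be stitched together into \emph{one} path that enters at $g_c$, exits at $g_{c+1}$, and covers \emph{every} vertex of the box exactly once, using only the available edges ($\ell\to\sigma$, $\ell\to wheel$, $wheel\to\beta$, $\sigma\leftrightarrow\beta$, $\sigma\to v$, $\beta\to v$, $\beta\to u$). The delicate point is that $\beta^{c,d}_q$ has a backward edge to $\ell^{c,d-1}_q$ (or to $r_{c-1}$, or to $s^1_1$), and these backward edges are the only mechanism for returning to pick up the $\sigma,\beta$ (and, on the non-skipped side, $wheel$) vertices of the slot we chose to skip on the forward pass; I must check that exactly one skipped slot per box is consumable this way and that the $r_c$ vertices chain up consistently across boxes without being visited twice or stranded. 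I will verify this by writing down the in-box subpath explicitly as a concatenation of forward-steps for $d=0,1,\dots$ on one sub-wheel, a turn at the top into a descending run of backward-steps that harvests the skipped slot, and the analogous pattern on the second sub-wheel, and then checking vertex-incidence and edge-availability slot by slot.
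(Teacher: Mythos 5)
Your high-level plan (visit all box vertices, leave exactly $a_1^i,a_2^j,b_1^i,b_2^j$ for the final segment through $s^1_1,\dots,end$) is the right target, but the mechanism you propose for the boxes rests on a misreading of the construction and would not produce a Hamiltonian path.

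First, $wheel^{c,d}_{t}$ and $wheel^{c,d}_{f}$ are \emph{not} two occurrences of one shared row vertex: they are distinct row vertices (their $c$-th bits differ). What is shared is across \emph{boxes}: each vertex of $A_1\cup B_1$ reappears as a wheel vertex in \emph{every} one of the boxes $C_0,\dots,C_{\log k-1}$ (on the $t$- or $f$-sub-wheel of $C_c$ according to its $c$-th bit), so it has $\log k$ occurrences, not two. Consequently a \emph{single} box already contains all $2k$ vertices of $A_1\cup B_1$ as wheel vertices, and your rule ``wheel-forward-steps everywhere on the opposite sub-wheel, and everywhere but one slot on the selected sub-wheel'' would visit almost every row vertex once \emph{per box}, i.e.\ $\Theta(\log k)$ times. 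Second, the ``harvest the skipped slots on the way back'' device cannot pick up wheel vertices: when a sub-wheel is entered from the top via $r_c\to\ell^{c,k-1}_{q'}$ and descended by backward-steps, the only viable local pattern is $\ell^{c,d}_{q'}\to\sigma^{c,d}_{q'}\to\beta^{c,d}_{q'}\to\ell^{c,d-1}_{q'}$; detouring through $wheel^{c,d}_{q'}$ forces $\beta\to\sigma$, after which $\sigma$'s remaining out-neighbors ($\beta$ and the already-visited forward target $v$) are dead, stranding the path. So the descending pass can only collect launch/skip/burn vertices, never wheel vertices.

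The argument that actually works (and is what the paper does) is: in box $C_c$ go forward along the \emph{one} sub-wheel that does \emph{not} contain the target row vertex (i.e.\ $q=f$ when the $c$-th bit of the target is $1$, and vice versa), taking a wheel-forward-step on a slot iff its row vertex has not been visited in an earlier box and a beta-forward-step otherwise; after reaching $r_{2\log k-1}$, descend through the \emph{opposite} sub-wheel of each box collecting only $\ell,\sigma,\beta$. Every row vertex other than the four targets lies on a forward sub-wheel of some box (namely any $c$ where its bit differs from the target's), so it is consumed exactly once there; the wheel vertices encountered on the descending sub-wheels are exactly those whose $c$-th bit agrees with the target's, all of which were already consumed elsewhere except the four targets, which are deliberately left for the tail of the path. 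Your proposal is missing this ``visit only if not yet visited'' bookkeeping across boxes and the asymmetry between the ascending and descending sub-wheels, and as written it cannot be patched into a correct proof without adopting essentially this structure.
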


\begin{proof}
	If $\disj(x,y)=\false$ then there are row vertices $a_1^i, a_2^j, b_1^i, b_2^j$ such that the edges $(a_1^i, a_2^j), (b_1^i, b_2^j)$ are in $G_{x,y}$. Denote by $bin(i)$ the set of indexes $c$ in which the $c-th$ bit in the binary representation of $i$ is $1$.
	
	The Hamiltonian path is as follows. From $start$ it goes into $g_0$ which is in the first box $C_0$. From $g_0$, if $0\in bin(i)$ then we go into the launch vertex $\ell _{f}^{0,0}$ and otherwise we go into the launch vertex $\ell _t ^{0,0}$. In general, when the path enters $g_s,\ 0\leq s\leq \log(k)-1$, if $s\in bin(i)$ then it goes to the launch vertex $\ell _f^{s,0}$ and otherwise to the launch vertex $\ell _t^{s,0}$). When the path enters $g_s,\ \log(k)\leq s\leq \log(k)-1$, if $s\in bin(j)$ then it goes to the launch vertex $\ell _f^{s,0}$ and otherwise to the launch vertex $\ell _t^{s,0}$). We refer to these choices of edges as \emph{$choose _s$}.
	
	For every $0 \leq s \leq 2\log(k)-1$, from the launch vertex $\ell^{s,0}_{q}$ that was reached by $choose_s$, the path proceeds by $(q,s,d)$-forward-steps until it reaches $g_{s+1}$ if $s\neq 2\log(k)-1$, or $r_{\log(k)-1}$ otherwise. These forward steps are either $(q,s,d)$-wheel-forward-steps or $(q,s,d)$-beta-forward-steps, such that for every $0 \leq d \leq k-1$, the step is a $(q,s,d)$-wheel-forward-step if and only if $wheel^{s,d}_{q}$ is not yet visited.
	
	At the end of the above, the path reaches $r_{\log(k)-1}$. For every $0 \leq s \leq 2\log(k)-1$, from $r_s$ the path goes into $\ell^{s,k-1}_{q'}$ such that $\ell^{s,0}_{q}$ is the edge $choose_s$ and $q'\neq q$. For every $0 \leq d \leq k-1$, the path then takes $(q',s,d)$-backward-steps until it reaches $r_{s-1}$ if $s \neq 0$, or $s^1_1$ otherwise.
	
	The above traversal has the property that it visits all row vertices through one of their wheel copies, except for exactly 4 row vertices: $a_1^i, a_2^j, b_1^i$, and $b_2^j$. This is because of our choice of $choose_s$ according to the binary representations of $i$ and $j$.
	
	Now, from $s^1_1$ the path continues to $a_1^i$, from which it continues to $a_2^j$ through the edge that exists given the input $x$. From there the path continues to $s_1^2$ and to $s_2^1$, from which it goes into $b_1^i$. From there it continues to $b_2^j$ through the edge that exists given the input $y$. Finally, the path goes into $s_2^2$ and ends in $end$.
\end{proof}

We now claim that the opposite also holds.

\begin{claim}
	\label{claim:noHP}
	If $\disj(x,y)=\true$ then $G_{x,y}$ does not have a Hamiltonian path.
\end{claim}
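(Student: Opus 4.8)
The plan is to assume $G_{x,y}$ has a Hamiltonian path $\calP$ and derive that $x,y$ intersect. The overall strategy is to show that $\calP$ is forced into essentially the canonical structure described in the proof of Claim~\ref{claim:HP}, so that it must traverse each box $C_c$ via forward-steps along one of the two ``tracks'' $q\in\{t,f\}$ and backward-steps along the other, and then to extract from this the four uncovered row vertices $a_1^i,a_2^j,b_1^i,b_2^j$ and the two cut edges $(a_1^i,a_2^j),(b_1^i,b_2^j)$ witnessing intersection at coordinate $(i,j)$. The argument is a long sequence of local structural claims about which edges of the gadgets $\calP$ is forced to use, driven almost entirely by degree-$1$ and degree-$2$ vertices and by the ``burn'' vertices $\beta^{c,d}_q$ whose only purpose is to force a choice.

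First I would pin down the endpoints: $start$ has in-degree $0$ and $end$ has out-degree $0$, so $\calP$ runs from $start$ to $end$, and since $start$'s only out-edge goes to $g_0$, the path begins $start\to g_0$. Next I would analyze a single box $C_c$ in isolation. The key forced-move lemmas are: (i) each $wheel^{c,d}_q$ vertex, when visited, must be entered from $\ell^{c,d}_q$ and left to $\beta^{c,d}_q$ (that is its only out-neighbor), so a visit to a wheel vertex forces the subpath $\ell^{c,d}_q\to wheel^{c,d}_q\to\beta^{c,d}_q$; (ii) from $\beta^{c,d}_q$ the path either continues the forward-step to $\sigma^{c,d}_q$ then out, or takes a backward-step; (iii) $\sigma^{c,d}_q$ has to be visited for every $c,d,q$ since nothing else can cover it, and it can only be reached from $\ell^{c,d}_q$ or $\beta^{c,d}_q$. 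Combining these, within each ``lane'' $(c,q)$ the only Hamiltonian-consistent behaviors are: the forward pass consisting of a run of wheel-forward-steps and beta-forward-steps from $\ell^{c,0}_q$ up to the box exit, and the backward pass entering from $r_c$ at $\ell^{c,k-1}_{q'}$ and running down via backward-steps to $r_{c-1}$ (or to $s^1_1$ when $c=0$). I would rule out sigma-forward-steps exactly as the text flags: a sigma-forward-step skips $\beta$ via the $\sigma$–$\beta$ digon but then $\beta^{c,d}_q$ can only be left backward, and a short parity/reachability argument shows this strands either $\beta$ or a neighbor. Then I would show $g_c$ is entered (from $r_{c-1}$ on the backward sweep of box $c-1$, or from the box-$(c-1)$ forward exit for the first visit) and has its two out-edges to $\ell^{c,0}_t,\ell^{c,0}_f$, exactly one of which is used on the forward pass; the other track of box $c$ is the one reached later from $r_c$ for the backward pass. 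This yields that in each box $\calP$ makes a binary choice $choose_c\in\{t,f\}$, the forward pass covers all wheel vertices of the chosen track and all $\sigma,\beta$ of that track, and the backward pass covers the $\ell,\sigma,\beta$ vertices of the other track but \emph{not} its wheel vertices.

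With the box structure established, I would count wheel-vertex coverage. Each row vertex $a_1^i$ occurs as $wheel^{c,d}_q$ in exactly $\log k$ boxes $c\in\{0,\dots,\log k-1\}$, once per $c$, on track $t$ when the $c$-th bit of $i$ is $1$ and track $f$ otherwise (and analogously $b_1^i$ in the same first $\log k$ boxes, $a_2^i,b_2^i$ in the last $\log k$ boxes). The forward pass of box $c$ covers $wheel^{c,d}_q$ iff $q=choose_c$; so the Hamiltonian path covers $a_1^i$ through some wheel copy iff there is a box $c\le\log k-1$ whose choice $choose_c$ matches the $c$-th bit pattern of $i$ in the way that places $a_1^i$ on the chosen track — i.e., $a_1^i$ is covered as a wheel vertex unless, for \emph{every} $c\le\log k-1$, $choose_c$ is the ``wrong'' track for $i$. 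Writing $I\in\{0,\dots,k-1\}$ for the index determined bit-by-bit by the forward choices $choose_0,\dots,choose_{\log k-1}$ (mapping $t/f$ to the appropriate bit), exactly one row vertex $a_1^{I}$ of $A_1$ is left uncovered by wheel copies, and likewise one $b_1^{I}$ in $B_1$; the last $\log k$ choices determine an index $J$ with $a_2^{J},b_2^{J}$ the only uncovered vertices of $A_2,B_2$. Since $\calP$ is Hamiltonian, these four vertices must be covered some other way, and the only remaining edges touching them are: the $s^1_1\to a_1^i$ edges, the $a_2^j\to s^2_1$ edges, the $s^1_2\to b_1^i$ edges, the $b_2^j\to s^2_2$ edges, and the input-dependent cut edges $(a_1^i,a_2^j)$, $(b_1^i,b_2^j)$. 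Tracing the forced tail of $\calP$ after the backward sweep of box $0$ delivers it to $s^1_1$, from which it must go $s^1_1\to a_1^{I}$; the only way out of $a_1^{I}$ not already used is a cut edge $(a_1^{I},a_2^{j})$, which must therefore exist, forcing $x_{I,j}=1$; continuing, the only unused way into $s^2_1$ is from $a_2^{j}$, and the only unused way out of $a_2^{j}$ (other than to $s^2_1$) would create a contradiction unless $j=J$, so $x_{I,J}=1$; then $s^2_1\to s^1_2\to b_1^{I}$ is forced, the edge $(b_1^{I},b_2^{J})$ must exist giving $y_{I,J}=1$, and finally $b_2^{J}\to s^2_2\to end$. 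Hence $x_{I,J}=y_{I,J}=1$, so $\disj(x,y)=\false$, contradicting the hypothesis.

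The main obstacle is the second paragraph: rigorously proving that a Hamiltonian path is \emph{forced} into the clean forward/backward box structure, with no ``cross-box'' shortcuts and no sigma-forward-steps, and that exactly one track per box is the forward one. This requires carefully enumerating, for each gadget vertex, its in- and out-neighbors and showing that almost all of them are degree-$\le 2$ in the relevant direction or otherwise have their edges forced by their neighbors being degree-constrained; the $\sigma$–$\beta$ digon and the backward edges from $\beta^{c,0}_q$ to $r_{c-1}$ (and from $\beta^{0,0}_q$ to $s^1_1$) are the places where genuine case analysis is unavoidable, and one must be disciplined about the difference between ``the forward pass visits the box'' and ``the backward pass visits the box'' so as not to double-count or miss the unique entry/exit points $g_c$ and $r_c$. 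Once that rigidity lemma is in hand, the counting of uncovered row vertices and the extraction of the intersecting coordinate are essentially bookkeeping, mirroring the construction in Claim~\ref{claim:HP} run in reverse.
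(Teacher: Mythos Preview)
Your plan is essentially the paper's proof: the paper factors the rigidity argument into three auxiliary claims (entering a wheel from its launch forces the full wheel-forward-step; an inductive argument on $c$ establishes the forward sweep from $g_c$ and backward sweep from $r_c$; sigma-forward-steps are ruled out), and then extracts the four uncovered row vertices $a_1^I,b_1^I,a_2^J,b_2^J$ and the two cut edges exactly as you describe. One correction to your lemma (i): $\beta^{c,d}_q$ is \emph{not} the only out-neighbor of $wheel^{c,d}_q$, since a wheel vertex is a row vertex with $\log k$ gadget out-edges and possibly input-dependent cut edges; the correct forcing (which the paper uses, and which you implicitly rely on later) is that $\sigma^{c,d}_q$ and $\beta^{c,d}_q$ would otherwise be unreachable, as their only in-neighbors are $\ell^{c,d}_q$, $wheel^{c,d}_q$, and each other.
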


To prove Clai~\ref{claim:noHP}, we show that the following hold for any Hamiltonian path $\cal{P}$ in $G_{x,y}$.

\begin{claim}\label{claim:Pwheel}
	For any Hamiltonian path $\cal{P}$ in $G_{x,y}$, for every $q \in \{t,f\}$, $0\leq c\leq 2\log(k)-1$, and $0\leq d\leq k-1$, if $\cal{P}$ goes from a launch vertex $\ell^{c,d}_{q}$ to $wheel^{c,d}_{q}$, then it contains the $(q,c,d)$-wheel-forward-step.
\end{claim}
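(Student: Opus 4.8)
\emph{Overall approach.} The plan is to prove the claim by a purely local argument inside the single box $C_c$, relying only on the elementary fact that along a simple directed path every vertex has at most one predecessor and at most one successor, combined with the precise in- and out-neighbourhoods of the four gadget vertices $\ell^{c,d}_{q}, wheel^{c,d}_{q}, \sigma^{c,d}_{q}, \beta^{c,d}_{q}$. First I would record the two structural facts that hold in $G_{x,y}$ for all $x,y$: (i) the in-neighbours of $\sigma^{c,d}_{q}$ are exactly $\ell^{c,d}_{q}$ and $\beta^{c,d}_{q}$, and its out-neighbours are exactly $\beta^{c,d}_{q}$ and the relevant forward target $v$ (one of $\ell^{c,d+1}_q$, $g_{c+1}$, $r_{2\log(k)-1}$, according to $d$ and $c$); and (ii) the in-neighbours of $\beta^{c,d}_{q}$ are exactly $wheel^{c,d}_{q}$ and $\sigma^{c,d}_{q}$. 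Both hold because $\sigma$- and $\beta$-vertices are genuinely fresh vertices of their box (unlike wheel-vertices, which are reoccurrences of row vertices) and the input-dependent edges are placed only between row vertices. I would also use that every Hamiltonian path $\mathcal{P}$ in $G_{x,y}$ runs from $start$ to $end$ — since $start$ has in-degree $0$ and $end$ has out-degree $0$ — so that no vertex other than $start$ is the first on $\mathcal{P}$ and none other than $end$ is the last.

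\emph{The deduction.} Assuming $\mathcal{P}$ uses the edge $\ell^{c,d}_{q}\to wheel^{c,d}_{q}$, I would then chain four one-line observations. (1) Since $wheel^{c,d}_{q}$ is the $\mathcal{P}$-successor of $\ell^{c,d}_{q}$, the edge $\ell^{c,d}_{q}\to\sigma^{c,d}_{q}$ is not used by $\mathcal{P}$. (2) $\sigma^{c,d}_{q}$ is visited by $\mathcal{P}$ and is not $start$, so it has a $\mathcal{P}$-predecessor, which by (i) and (1) must be $\beta^{c,d}_{q}$; hence $\beta^{c,d}_{q}\to\sigma^{c,d}_{q}\in\mathcal{P}$, and therefore $\sigma^{c,d}_{q}\to\beta^{c,d}_{q}\notin\mathcal{P}$ (a simple path cannot use both anti-parallel edges between $\sigma^{c,d}_{q}$ and $\beta^{c,d}_{q}$). (3) $\beta^{c,d}_{q}$ is visited and is not $start$, so it has a $\mathcal{P}$-predecessor, which by (ii) and (2) must be $wheel^{c,d}_{q}$; hence $wheel^{c,d}_{q}\to\beta^{c,d}_{q}\in\mathcal{P}$. (4) $\sigma^{c,d}_{q}$ is not $end$, so it has a $\mathcal{P}$-successor, which by (i) and (2) must be $v$; hence $\sigma^{c,d}_{q}\to v\in\mathcal{P}$. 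Concatenating these consecutive edges of $\mathcal{P}$ yields the subpath $\ell^{c,d}_{q}\to wheel^{c,d}_{q}\to\beta^{c,d}_{q}\to\sigma^{c,d}_{q}\to v$, which is precisely the $(q,c,d)$-wheel-forward-step.

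\emph{Main obstacle.} Note that the argument never reasons about the successor of $wheel^{c,d}_{q}$ directly — which is essential, because $wheel^{c,d}_{q}$ is a shared row vertex and may have many out-edges (to $\beta$-vertices of other boxes and to row vertices of the opposite side) — but instead pins down the predecessor of the fresh vertex $\beta^{c,d}_{q}$, whose in-neighbourhood is tiny. So the only genuinely delicate point is establishing (i) and (ii): one must scan the construction's edge list and verify that no forward target $v$, no backward target $u$, and no edge $g_{c'}\to\ell^{c',0}_{q}$ ever lands on a $\sigma$- or $\beta$-vertex, and that identifying wheel-vertices with row vertices does not smuggle in an extra edge into $\sigma^{c,d}_{q}$ or $\beta^{c,d}_{q}$ (it does not: a wheel-vertex only receives the launch edge and, among box edges, only emits the edge to $\beta^{c,d}_{q}$). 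I would also check that the boundary cases $d\in\{0,k-1\}$ and $c\in\{0,2\log(k)-1\}$ change only the identity of $v$ and $u$, not the internal wiring of the gadget, so the four-step deduction goes through verbatim.
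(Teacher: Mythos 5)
Your proof is correct and follows essentially the same approach as the paper: both arguments rest on the observation that the fresh vertices $\sigma^{c,d}_{q}$ and $\beta^{c,d}_{q}$ have only two in-neighbours each, so once $\ell^{c,d}_{q}$ commits its successor to $wheel^{c,d}_{q}$, the remaining edges of the gadget are forced. The only (cosmetic) difference is that you pin down the unique $\mathcal{P}$-predecessor of each fresh vertex directly, whereas the paper phrases the same forcing as a reachability/contradiction argument ("otherwise $\sigma,\beta$ can never be visited").
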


\begin{proof}
	Fix $q \in \{t,f\}$, $0\leq c\leq 2\log(k)-1$, and $0\leq d \leq k-1$. Assume that $\cal{P}$ goes into $\ell^{c,d}_{q}$ and continues to $wheel^{c,d}_q$. Assume towards a contradiction that the next vertex that $\cal{P}$ goes into is not $\beta^{c,d}_q$. Then $\cal{P}$ is not Hamiltonian because it never reaches either of $\sigma^{c,d}_q, \beta^{c,d}_q$. This is because the only incoming edges to these two vertices are from each other, or from $\ell^{c,d}_{q}$ and $wheel^{c,d}_q$, which have already been visited by $\cal{P}$. This means that from $wheel^{c,d}_q$, the path goes to $\beta^{c,d}_q$. The same argument gives that $\cal{P}$ must then continue to $\sigma^{c,d}_{q}$, as otherwise it never reaches it, because it already visited all the vertices at its incoming edges. From $\sigma^{c,d}_{q}$ the path must then continue to $\ell^{c,d+1}_{q}$ if $d \neq k-1$, or to $g_{c+1}$ otherwise, unless $c=2\log(k)-1$ in which case the path must continue to $r_{2\log(k)-1}$.
\end{proof}

\begin{claim}\label{claim:P}
	For any Hamiltonian path $\cal{P}$ in $G_{x,y}$, for every $q \in \{t,f\}$, $0\leq c\leq 2\log(k)-1$, the following hold:
	\begin{enumerate}
		\item \label{case:forward} If $\cal{P}$ goes from $g_c$ to $\ell^{c,0}_{q}$, then it must continue by a sequence of $(q,c,d)$-forward-steps until it reaches $g_{c+1}$ if $c\neq 2\log(k)-1$, or $r_{2\log(k)-1}$, otherwise.
		\item \label{case:backward} If $\cal{P}$ goes from $r_c$ to $\ell^{c,k-1}_{q}$, then it must continue by a sequence of $(q,c,d)$-backward-steps until it reaches $r_{c-1}$ if $c \neq 0$, or $s^1_1$, otherwise.
	\end{enumerate}
\end{claim}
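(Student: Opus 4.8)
Both parts of the claim are mirror images of one another — interchanging $g_c$ with $r_c$, the forward successor $g_{c+1}$/$r_{2\log(k)-1}$ with the backward successor $r_{c-1}$/$s^1_1$, the three kinds of forward-step with the backward-step, and the index $d$ with $k-1-d$ — so it is enough to prove part~\ref{case:forward} and to observe that part~\ref{case:backward} follows by the symmetric argument. So assume $\cal{P}$ contains the edge $g_c\to\ell^{c,0}_{q}$. The plan is to show by induction on $d\in\{0,1,\dots,k-1\}$ that $\cal{P}$ contains a $(q,c,d)$-forward-step starting at the vertex $\ell^{c,d}_{q}$: for $d=0$ the path reaches $\ell^{c,0}_{q}$ by the hypothesised edge, and for $d\ge 1$ it reaches $\ell^{c,d}_{q}$ as the last vertex of the $(q,c,d-1)$-forward-step, so these steps chain together into the asserted sequence, which terminates at $g_{c+1}$ (or $r_{2\log(k)-1}$) once $d=k-1$ is processed.

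For the inductive step, the path has just arrived at $\ell^{c,d}_{q}$, which has exactly two outgoing edges in $G_{x,y}$: to $wheel^{c,d}_{q}$ and to $\sigma^{c,d}_{q}$. If $\cal{P}$ uses the edge to $wheel^{c,d}_{q}$, Claim~\ref{claim:Pwheel} immediately yields the $(q,c,d)$-wheel-forward-step. Otherwise $\cal{P}$ moves to $\sigma^{c,d}_{q}$, whose only outgoing edges go to the forward successor $v$ — giving a $(q,c,d)$-sigma-forward-step — and to $\beta^{c,d}_{q}$; in the latter sub-case $\beta^{c,d}_{q}$ cannot return to the already-visited $\sigma^{c,d}_{q}$, so it continues either to $v$ (a $(q,c,d)$-beta-forward-step, as desired) or to the backward successor $u$. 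Hence the only way the induction can break at step $d$ is the \emph{retreat} $\ell^{c,d}_{q}\to\sigma^{c,d}_{q}\to\beta^{c,d}_{q}\to u$. When $d\ge 1$ we have $u=\ell^{c,d-1}_{q}$, which is already on $\cal{P}$ as the first vertex of the $(q,c,d-1)$-forward-step, so $\cal{P}$ cannot revisit it and the retreat is impossible. Thus the induction goes through except possibly at $d=0$.

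The case $d=0$ — where $u=r_{c-1}$ if $c\ge 1$ and $u=s^1_1$ if $c=0$ — is the heart of the argument and the step I expect to be the main obstacle, since $u$ need not have been visited yet and so no purely local contradiction is available. The plan here is to trace the global consequences of the retreat edge. After $\beta^{c,0}_{q}\to u$ the path is at $r_{c-1}$ (or already at $s^1_1$); the only out-edges of $r_{c-1}$ lead to its two launch vertices $\ell^{c-1,k-1}_{t}$ and $\ell^{c-1,k-1}_{f}$, and since box $C_{c-1}$ was already traversed forward on the way to $g_c$, one of these is visited, forcing the path into the other and, by part~\ref{case:backward} applied to $C_{c-1}$, through all of $C_{c-1}$ by backward-steps down to $r_{c-2}$. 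Iterating, $\cal{P}$ is driven down to $r_0$ and thence to $s^1_1$, at which point it has entered only the vertices $\ell^{c,0}_{q},\sigma^{c,0}_{q},\beta^{c,0}_{q}$ of box $C_c$ on side $q$, none of $C_c$ on side $\bar q$, and none of $g_{c+1},\dots,g_{2\log(k)-1}$ or $r_c,\dots,r_{2\log(k)-1}$ or $end$; from $s^1_1$ the only continuations are to a row vertex. One then derives a contradiction by accounting for the in-edges of the still-unvisited launch vertices of $C_c$: each $\ell^{c,d'}_{q}$ with $d'\ge 1$ and each $\ell^{c,d'}_{\bar q}$ can be entered only from within the forward/backward chains of $C_c$ itself or from $g_c$/$r_c$, but $g_c$ has spent its unique out-edge on the retreat and $r_c$ supplies only one such edge, so some launch vertex of $C_c$ is left unreachable, contradicting that $\cal{P}$ is Hamiltonian.

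Finally, to make the appeal to part~\ref{case:backward} inside this cascade legitimate, parts~\ref{case:forward} and~\ref{case:backward} should be run together as one induction — either on the order in which $\cal{P}$ traverses the boxes, started from the forced initial edge $start\to g_0$, or with a strengthened invariant recording which of the $g_{c'}$ and $r_{c'}$ have already been visited — so that whenever the argument above invokes part~\ref{case:backward} at a box, that instance has already been established. Everything outside the $d=0$ retreat is mechanical: it reduces to Claim~\ref{claim:Pwheel} together with the bookkeeping observation that $\ell^{c,d-1}_{q}$ already lies on $\cal{P}$.
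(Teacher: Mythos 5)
Your overall skeleton (outer induction on $c$, inner induction on $d$, and the cascade of backward-steps down to $s^1_1$ after a $d=0$ retreat) matches the paper's proof, and your handling of $d\ge 1$ is exactly right. The problem is the final contradiction you propose for the $d=0$ retreat — the step you yourself flag as the heart of the matter. Your in-edge accounting concludes that some launch vertex of $C_c$ becomes unreachable because the box can only be entered through $g_c$ and $r_c$. That premise is false: the chains of $C_c$ have a third kind of external entry point, namely the edges $wheel^{c,d}_{q'}\to\beta^{c,d}_{q'}$. The wheel vertices are row vertices, and after the path reaches $s^1_1$ it continues to some row vertex $a^i_1$, from which it can step directly into $\beta^{c,d}_{q'}$ for any $(c,d,q')$ with $a^i_1=wheel^{c,d}_{q'}$, and from there reach $\sigma^{c,d}_{q'}$ and $\ell^{c,d+1}_{q'}$, or go backward to $\ell^{c,d-1}_{q'}$. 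So the launch vertices of $C_c$ are not cut off, and your counting argument does not close.

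The paper's contradiction is of a different nature: it concerns reaching $end$, not reaching the launch vertices. To visit the still-unvisited vertices of $C_c$, the path must at some point after $s^1_1$ leave the row part of the graph and enter the box structure through a wheel vertex. But once it does so, every later arrival at a row vertex is via an edge $\ell^{c',d'}_{q'}\to wheel^{c',d'}_{q'}$, and Claim~\ref{claim:Pwheel} then forces the path to continue into $\beta^{c',d'}_{q'}$ rather than use that row vertex's row-edges (the input edges, or the edges into $s^2_1$ and $s^2_2$). Since $s^1_1$ is already spent, the path can never again land on a row vertex with the freedom to head toward $end$, so $end$ is unreachable; this is the contradiction you need in place of the in-edge count. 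Separately, part~\ref{case:backward} is not a literal mirror image of part~\ref{case:forward}: the paper derives it from part~\ref{case:forward}, observing that the forward traversal of box $c$ with $q'\neq q$ has already consumed $g_{c+1}$, which blocks every forward-step out of $\ell^{c,k-1}_{q}$ and forces the backward-step. Your closing remark about running the two parts as one induction points in the right direction, but the actual derivation of part~\ref{case:backward} still needs to be supplied rather than asserted by symmetry.
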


\begin{proof}
	The proof is by induction on $c$. Let $c=0$, and assume that $\cal{P}$ goes from $g_0$ to $\ell^{0,0}_{q}$. If $\cal{P}$ does not continue with a $(q,0,0)$-forward-step, it means that it goes to $\beta^{0,0}_{q}$, either through $wheel^{0,0}_{q}$ or through $\sigma^{0,0}_{q}$, and from there to $s^1_1$. Since other vertices, e.g., $\ell^{0,1}_{q}$ must be visited, at some point after visiting $s^1_1$, the path must move into some $wheel^{c',d'}_{q'}$ from its row copy. But combining the fact that $s^1_1$ is already visited with Claim~\ref{claim:Pwheel}, gives that $\cal{P}$ can never go back into a row copy, and hence can never reach the end node. This means that $\cal{P}$ continues from $\ell^{0,0}_{q}$ with a $(q,0,0)$-forward-step. Now, we proceed with an induction over $d$, and let $0 < d\leq k-1$ be the minimal such that $\cal{P}$ does not take a $(q,0,d)$-forward-step. This means that from $\ell^{0,d}_{q}$, the path $\cal{P}$ goes to $\beta^{0,d}_{q}$, either through $wheel^{0,d}_{q}$ or through $\sigma^{0,d}_{q}$, and then continues to $\ell^{0,d-1}_{q}$. But by the minimality of $d$, $\ell^{0,d-1}_{q}$ is already visited, which is a contradiction. This gives Part~\ref{case:forward}.
	
	Next, assume that $\cal{P}$ goes from $r_0$ to $\ell^{0,k-1}_{q}$. By Part~\ref{case:forward} for $c=0$, it must be the case that from $g_0$ the path continued by a sequence of $(q',0,d)$-forward-steps until it reached $g_1$, for $q' \neq q$, as otherwise, $\ell^{0,k-1}_{q}$ would have already been visited. But then from $\ell^{0,k-1}_{q}$, the path $\cal{P}$ cannot continue with a $(q,0,k-1)$-forward-step because $g_1$ is already visited, and hence it continues with a $(q,0,k-1)$-backward-step. Now, we proceed with a backward induction over $d$, and let $0 \leq d < k-1$ be the maximal such that $\cal{P}$ does not take a $(q,0,d)$-backward-step. This means that from $\beta^{0,d}_{q}$, the path $\cal{P}$ goes to $\ell^{0,d+1}_{q}$, either directly or through $\sigma^{0,d}_{q}$. But by the maximality of $d$, $\ell^{0,d+1}_{q}$ is already visited, which is a contradiction. This gives Part~\ref{case:backward}.
	
	This proves the claim for $c=0$, and we now continue by assuming towards a contradiction that $0 <  c \leq 2\log(k)-1$ is the minimal for which the claim does not hold. First, assume that $\cal{P}$ goes from $g_c$ to $\ell^{c,0}_{q}$. If $\cal{P}$ does not continue with a $(q,c,0)$-forward-step, it means that it goes to $\beta^{c,0}_{q}$, either through $wheel^{c,0}_{q}$ or through $\sigma^{c,0}_{q}$, and from there to $r_{c-1}$. By the minimality of $c$, the path $\cal{P}$ then continues by backward-steps all the way to $s^1_1$. Since other vertices, e.g., $\ell^{c,1}_{q}$ must be visited, at some point after visiting $s^1_1$, the path must move into some $wheel^{c',d'}_{q'}$ from its row copy. But combining the fact that $s^1_1$ is already visited with Claim~\ref{claim:Pwheel}, gives that $\cal{P}$ can never go back into a row copy, and hence can never reach the end node. This means that $\cal{P}$ continues from $\ell^{c,0}_{q}$ with a $(q,c,0)$-forward-step. Now, we proceed with an induction over $d$, and let $0 < d\leq k-1$ be the minimal such that $\cal{P}$ does not take a $(q,c,d)$-forward-step. This means that from $\ell^{c,d}_{q}$, the path $\cal{P}$ goes to $\beta^{c,d}_{q}$, either through $wheel^{c,d}_{q}$ or through $\sigma^{c,d}_{q}$, and then continues to $\ell^{c,d-1}_{q}$. But by the minimality of $d$, $\ell^{i,d-1}_{q}$ is already visited, which is a contradiction. This gives Part~\ref{case:forward}.
	
	Next, assume that $\cal{P}$ goes from $r_c$ to $\ell^{c,k-1}_{q}$. By Part~\ref{case:forward} for $c$, it must be the case that from $g_c$ the path continued by a sequence of $(q',c,d)$-forward-steps until it reached $g_{c+1}$, for $q' \neq q$, as otherwise, $\ell^{c,k-1}_{q}$ would have already been visited. But then from $\ell^{c,k-1}_{q}$, the path $\cal{P}$ cannot continue with a $(q,c,k-1)$-forward-step because $g_{c+1}$ is already visited, and hence it continues with a $(q,c,k-1)$-backward-step. Now, we proceed with a backward induction over $d$, and let $0 \leq d < k-1$ be the maximal such that $\cal{P}$ does not take a $(q,c,d)$-backward-step. This means that from $\beta^{c,d}_{q}$, the path $\cal{P}$ goes to $\ell^{c,d+1}_{q}$, either directly or through $\sigma^{c,d}_{q}$. But by the maximality of $d$, $\ell^{c,d+1}_{q}$ is already visited, which is a contradiction. This gives Part~\ref{case:backward}, and completes the proof.
\end{proof}

Finally, we show that sigma-forward-steps cannot occur.

\begin{claim}\label{claim:Psigma}
	For any Hamiltonian path $\cal{P}$ in $G_{x,y}$, for every $q \in \{t,f\}$, $0\leq c \leq 2\log(k)-1$, and $0\leq d \leq k-1$, the path $\cal{P}$ does not contain a $(q,c,d)$-sigma-forward-step.
\end{claim}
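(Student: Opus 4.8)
The plan is to argue by contradiction. Suppose some Hamiltonian path $\mathcal{P}$ of $G_{x,y}$ contains a $(q,c,d)$-sigma-forward-step, i.e., $\mathcal{P}$ uses the edges $\ell^{c,d}_{q}\to\sigma^{c,d}_{q}$ and then $\sigma^{c,d}_{q}\to v$; I will show that $\mathcal{P}$ cannot visit every vertex. Throughout I use two general facts: along a Hamiltonian path, every vertex other than its two endpoints has a unique predecessor and a unique successor; and since $start$ has no incoming edge while $end$ has no outgoing edge, these must be the endpoints, so $end$ is entered from $s^2_2$, $s^2_2$ from some $b^{i}_2$, $s^1_2$ from $s^2_1$, and $s^2_1$ from some $a^{i}_2$.

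First I would pin down who visits the burn vertex $\beta^{c,d}_{q}$. Its only in-neighbors are $\sigma^{c,d}_{q}$ and $wheel^{c,d}_{q}$; but $\sigma^{c,d}_{q}$ already spends its unique outgoing edge on $v$, and $v\neq\beta^{c,d}_{q}$ because $v$ is $\ell^{c,d+1}_{q}$, $g_{c+1}$, or $r_{2\log(k)-1}$. Hence $\mathcal{P}$ must enter $\beta^{c,d}_{q}$ through the row vertex $w:=wheel^{c,d}_{q}$, so the successor of $w$ along $\mathcal{P}$ is the burn vertex $\beta^{c,d}_{q}$. Next I would argue that the predecessor of $w$ is not a launch vertex: it is not $\ell^{c,d}_{q}$ (whose successor is $\sigma^{c,d}_{q}$), and were it another launch vertex $\ell^{c',d'}_{q'}$ with $wheel^{c',d'}_{q'}=w$, Claim~\ref{claim:Pwheel} would force $\mathcal{P}$ to contain the $(q',c',d')$-wheel-forward-step, making the successor of $w=wheel^{c',d'}_{q'}$ equal to $\beta^{c',d'}_{q'}$; then $\beta^{c',d'}_{q'}=\beta^{c,d}_{q}$, so $(c',d',q')=(c,d,q)$, a contradiction. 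Thus the predecessor of $w$ is one of its non-launch in-neighbors.

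Then I would close the argument using the forced structure near $end$. Recall from Claim~\ref{claim:Pwheel} that any row vertex entered from one of its launch vertices has a burn vertex as its successor. Now $s^2_1$ is entered from some $a^{j_0}_2$; this $a^{j_0}_2$ is not entered from a launch vertex, since its successor is $s^2_1$ and not a burn vertex, so it is entered from some $a^{i_0}_1$ via an input edge; and this $a^{i_0}_1$ is in turn not entered from a launch vertex, since its successor is $a^{j_0}_2$, so it is entered from $s^1_1$. Hence the successor of $s^1_1$ is a vertex $a^{i_0}_1$ whose successor is a vertex $a^{j_0}_2$, and symmetrically the successor of $s^1_2$ is a vertex $b^{i_1}_1$ whose successor is a vertex $b^{j_1}_2$. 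Finally, the non-launch in-neighbors of $w$ are exactly $s^1_1$ if $w=a^{j}_1$, $s^1_2$ if $w=b^{j}_1$, a vertex $a^{i'}_1$ with $x_{i',j}=1$ if $w=a^{j}_2$, and a vertex $b^{i'}_1$ with $y_{i',j}=1$ if $w=b^{j}_2$ (these come from $s^1_1$, from $s^1_2$, and from the input edges of $x$ and $y$). In each of the four cases, chasing the two segments above — and using once more that a row vertex entered from a launch vertex continues into a burn vertex — forces the successor of $w$ to be a row vertex, or $s^2_1$, or $s^2_2$, contradicting that the successor of $w$ is $\beta^{c,d}_{q}$. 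This contradiction proves the claim.

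I expect the last step to be the main obstacle. It requires a careful enumeration of the in- and out-neighbors of the bridge vertices $s^1_1, s^2_1, s^1_2, s^2_2$ and of the four kinds of row vertices, together with repeated use of Claim~\ref{claim:Pwheel} to show that any row vertex which is entered from one of its launch vertices is forced back into the box machinery, and hence cannot be the vertex feeding $s^2_1$ or $s^2_2$, nor the special row vertex lying on the final segment to $end$. The other steps are short, purely local deductions about unique successors and predecessors.
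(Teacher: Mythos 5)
Your argument is correct, but it follows a genuinely different route from the paper's. The paper proves Claim~\ref{claim:Psigma} by a double induction on $c$ and then $d$: assuming a $(q,c,d)$-sigma-forward-step, it forces $\beta^{c,d}_{q}$ to be entered through $wheel^{c,d}_{q}$, uses Claims~\ref{claim:Pwheel} and~\ref{claim:P} (together with the induction hypothesis) to conclude that this can only happen after $s^1_1$, $r_{c-1}$, resp.\ $\ell^{c,d-1}_q$ have already been visited, and then observes that $\beta^{c,d}_{q}$ has no unvisited out-neighbor left. You avoid the induction entirely: after the same local deduction that the successor of $w=wheel^{c,d}_{q}$ must be $\beta^{c,d}_{q}$ and that its predecessor cannot be any launch vertex (a clean application of Claim~\ref{claim:Pwheel} that makes precise the paper's somewhat terse assertion ``this must happen after $s^1_1$ is visited''), you instead chase the forced suffix $s^1_1\to a^{i_0}_1\to a^{j_0}_2\to s^2_1\to s^1_2\to b^{i_1}_1\to b^{j_1}_2\to s^2_2\to end$ backwards from $end$, and show that in each of the four cases for the non-launch predecessor of $w$, the vertex $w$ would have to coincide with one of these four special row vertices, whose successor is already pinned down to a row vertex or to $s^2_1$/$s^2_2$ rather than a burn vertex. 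I checked the case analysis — the enumeration of non-launch in-neighbors of each type of row vertex is complete, and each case does yield the contradiction you describe. What your approach buys is independence from Claim~\ref{claim:P} and from any induction, at the cost of importing the analysis of the path's tail, which the paper defers to the proof of Claim~\ref{claim:noHP}; in effect you prove a small piece of that later argument here. Both proofs are sound.
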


\begin{proof}
	The proof is by induction on $c$. For the base case $c=0$, assume that $\cal{P}$ contains a $(q,0,0)$-sigma-forward-step. Then in order to visit $\beta^{0,0}_{q}$, the path must enter it though $wheel^{0,0}_{q}$. By Claim~\ref{claim:Pwheel}, this must happen after $s^1_1$ is visited, and therefore from $\beta^{0,0}_{q}$ the path must continue to $\sigma^{0,0}_{q}$ or to $\ell^{0,1}_{q}$. But they are both visited from $\ell^{0,0}_{q}$ by the $(q,0,0)$-sigma-forward-step, a contradiction. We now proceed by induction on $0 < d \leq k-1$. Assume that $d$ is the minimal such that $\cal{P}$ contains a $(q,0,d)$-sigma-forward-step. Then in order to visit $\beta^{0,d}_{q}$, the path must enter it though $wheel^{0,d}_{q}$. By Claim~\ref{claim:Pwheel}, this must happen after $s^1_1$ is visited, and therefore from $\beta^{0,d}_{q}$ the path must continue to $\sigma^{0,d}_{q}$ or to the $v$ that is reached by the $(q,0,d)$-sigma-forward-step, which are both impossible.
	
	We now proceed with the induction step, and let $0 < c < 2\log(k)-1$ be the smallest such that $\cal{P}$ contains a $(q,c,d)$-sigma-forward-step. If $d=0$, then in order to visit $\beta^{c,0}_{q}$, the path must enter it though $wheel^{c,0}_{q}$. By Claim~\ref{claim:Pwheel}, this must happen after $s^1_1$ is visited, and therefore by Claim~\ref{claim:P} we have that $r_{c-1}$ is visited by $\beta^{c,0}_{q'}$ such that $q' \neq q$. Therefore from $\beta^{c,0}_{q}$, the path must continue to $\sigma^{c,0}_{q}$ or to the $v$ that is reached by the $(q,c,0)$-sigma-forward-step, which are both impossible. We now proceed by induction on $0 < d \leq k-1$. Assume that $d$ is the minimal such that $\cal{P}$ contains a $(q,c,d)$-sigma-forward-step. Then in order to visit $\beta^{c,d}_{q}$, the path must enter it though $wheel^{c,d}_{q}$. By Claim~\ref{claim:Pwheel}, this must happen after $s^1_1$ is visited, and therefore from $\beta^{c,d}_{q}$ the path must continue to $\sigma^{c,d}_{q}$ or to the $v$ that is reached by the $(q,c,d)$-sigma-forward-step, which are both impossible.
\end{proof}

\begin{proofof}{Claim~\ref{claim:noHP}}
	We assume that $\cal{P}$ is a Hamiltonian path in $G_{x,y}$ and show that $\disj(x,y)=\false$.
	
	By Claims~\ref{claim:Pwheel},~\ref{claim:P}, and~\ref{claim:Psigma} it must be that $\cal{P}$ goes from $start$ to $g_0$, from there it goes to $\ell^{0,0}_{q}$ for some choice of $q \in \{t,f\}$, and continues to $g_1$ by $(q,0,d)$-forward-steps. In general, it must go from every $g_c$ to some $\ell^{c,0}_{q}$ for some choice of $q \in \{t,f\}$. Denote by $choose_c$ these choices, according to the path. This reaches $r_{2\log(k)-1}$, from which $\cal{P}$ must continue by backward-steps all the way to $s^1_1$.
	
	No matter what choices $\cal{P}$ took for the $choose_c$ values, there are at least 4 row vertices that are not yet visited by it. These are the vertices $a_1^i, a_2^j, b_1^i$, and $b_2^j$, such that the $i$ is the index whose binary representation appears as a concatenation of the bits that represent the $choose_c$ decisions for $0 \leq c \leq \log(k)-1$, and $j$ is the index whose binary representation appears as a concatenation of the bits that represent the $choose_c$ decisions for $\log(k) \leq c \leq 2\log(k)-1$. If from $s^1_1$ the path does not continue to $a_1^i$, then it can never reach this vertex because all vertices with edges that are incoming to it are already visited by $\cal{P}$. So from $s^1_1$ the path goes to $a_1^i$. A similar argument gives that then $\cal{P}$ must go into $a_2^j$. If this happens, then $x_{i,j}=1$. From $a_2^j$ the path must continue to $s_1^2$ and from there to $s_2^1$, because all of the outgoing edges from the wheel copies of $a_2^j$ are already visited. A similar argument gives that now $\cal{P}$ must go to $b_1^i$ and to $b_2^j$ in order to continue to $s^2_2$ and to $end$. But if this edge exists, then $y_{i,j}=1$, which implies that $\disj(x,y)=\false$, as needed.
\end{proofof}

Based on the above, we now show the near-quadratic lower bound for directed Hamiltonian path.

\begin{theorem}
\label{thrm:Hamiltonian}
Any distributed algorithm in the \cgst{} model for finding a directed Hamiltonian path or deciding whether there is a directed Hamiltonian path in the input graph requires $\Omega ({n^2}/{\log ^4 n} )$ rounds.
\end{theorem}

\begin{proof}
Claims~\ref{claim:HP} and~\ref{claim:noHP} give that $G_{x,y}$ as constructed is indeed a family of lower bound graphs for the problem of finding a directed Hamiltonian path. The number of vertices in the graphs is $n=\Theta(k\log k)$, which implies that $k=\Theta(n/\log n)$, and the size of $E_{cut}$ is $O(\log k)=O(\log n)$. The proof of Theorem~\ref{thrm:Hamiltonian} is concluded by applying Theorem~\ref{generallowerboundtheorem}.
\end{proof}

\subsubsection{Hamiltonian Cycle and Undirected variants}

We next extend Theorem~\ref{thrm:Hamiltonian} to the case of a cycle and to the undirected variants, as follows.

\begin{theorem}
\label{thrm:Hamiltoniancycle}
Any distributed algorithm in the \cgst{} model for finding a directed Hamiltonian cycle or for deciding whether these such in the input graph requires $\Omega ({n^2}/{\log ^4 n} )$ rounds.
\end{theorem}

\begin{theorem}
\label{thrm:Hamiltonianundirected}
Any distributed algorithm in the \cgst{} model for finding an undirected Hamiltonian path or cycle or deciding whether there is such in the input graph requires $\Omega ({n^2}/{\log ^4 n} )$ rounds.
\end{theorem}

In order to prove the undirected cases in Theorem~\ref{thrm:Hamiltonianundirected}, we employ folklore reductions from the sequential model, and show how can we implement them efficiently in the \cgst{} model. But first, we prove the lower bound for the case of a directed Hamiltonian \emph{cycle}, by making a slight modification to the fixed graph construction of the family of lower bound graphs of the directed Hamiltonian path.

\begin{claim}\label{claim for directed cycle Hamiltonian}
Let $G_{x,y},\ x,y\in \set{0,1}^{k^2}$ be in the family of lower bound graphs we constructed for the directed Hamiltonian path lower bound, and denote by $G'_{x,y}$ the graph obtained from $G_{x,y}$ by adding a single vertex $middle$ and connecting the edges $(middle,start)$ and $(end,middle)$.
Then $G'_{x,y}$ contains a directed Hamiltonian cycle iff $G_{x,y}$ contains a directed Hamiltonian path.
\end{claim}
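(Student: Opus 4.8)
The plan is to prove both directions of the ``iff'' by moving between Hamiltonian paths in $G_{x,y}$ and Hamiltonian cycles in $G'_{x,y}$ via the obvious correspondence through the new vertex $middle$. The key observation, which I would establish first, is that in $G_{x,y}$ the vertex $start$ has no incoming edges and $end$ has no outgoing edges; consequently, \emph{any} Hamiltonian path in $G_{x,y}$ must begin at $start$ and end at $end$. This structural fact is what makes the reduction clean: it pins down the endpoints of every Hamiltonian path, so there is a canonical way to close it into a cycle.

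For the forward direction, suppose $G_{x,y}$ has a Hamiltonian path $\mathcal{P}$. By the observation above, $\mathcal{P}$ starts at $start$ and ends at $end$, and visits every vertex of $G_{x,y}$ exactly once. Then appending the edge $(end, middle)$ and the edge $(middle, start)$ closes $\mathcal{P}$ into a directed cycle that visits every vertex of $G'_{x,y} = G_{x,y} \cup \{middle\}$ exactly once, i.e., a directed Hamiltonian cycle in $G'_{x,y}$. Both edges $(middle,start)$ and $(end,middle)$ were added precisely for this purpose, and $middle$ has no other incident edges, so no vertex is visited twice.

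For the reverse direction, suppose $G'_{x,y}$ has a directed Hamiltonian cycle $\mathcal{C}$. Since $middle$ is a vertex of $G'_{x,y}$, $\mathcal{C}$ must pass through it, using exactly one incoming and one outgoing edge at $middle$. But the only edge into $middle$ is $(end,middle)$ and the only edge out of $middle$ is $(middle,start)$, so $\mathcal{C}$ enters $middle$ from $end$ and leaves to $start$. Deleting $middle$ (and these two edges) from $\mathcal{C}$ therefore yields a directed Hamiltonian path in $G_{x,y}$ that starts at $start$ and ends at $end$, visiting every remaining vertex exactly once. This completes the equivalence.

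I do not expect a serious obstacle here; the argument is essentially the folklore path-to-cycle gadget. The one point that needs care — and the only place where something could go wrong — is verifying that $start$ and $end$ have the claimed degree restrictions in the original construction (no incoming edge to $start$, no outgoing edge from $end$), so that the endpoints of a Hamiltonian path in $G_{x,y}$ are forced; this is immediate from inspecting the edge list of the fixed graph construction, where $start$ only has the single outgoing edge into $g_0$ and $end$ only has the single incoming edge from $s^2_2$. With that in hand, the combination of this claim with Claims~\ref{claim:HP} and~\ref{claim:noHP} (which relate Hamiltonian paths in $G_{x,y}$ to $\disj_{k^2}(x,y)=\false$) and Theorem~\ref{generallowerboundtheorem} yields Theorem~\ref{thrm:Hamiltoniancycle}, since adding a single vertex and two edges changes neither $n=\Theta(k\log k)$ nor $|E_{\cut}|=O(\log k)$ asymptotically.
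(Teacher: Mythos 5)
Your proof is correct, and the overall reduction (close the path through $middle$, or cut the cycle at $middle$) is the same as the paper's; the reverse direction is essentially identical in both. The one place you diverge is the forward direction. The paper does not argue that \emph{every} Hamiltonian path in $G_{x,y}$ runs from $start$ to $end$; instead it takes a detour through the disjointness claims: a Hamiltonian path in $G_{x,y}$ implies $\disj(x,y)=\false$ by Claim~\ref{claim:noHP}, and then the explicit construction in the proof of Claim~\ref{claim:HP} \emph{produces} some Hamiltonian path with first vertex $start$ and last vertex $end$, which is then closed into a cycle. You instead observe that $start$ has in-degree $0$ and $end$ has out-degree $0$ in $G_{x,y}$ (which is indeed immediate from the fixed construction: the only edge incident to $start$ is the outgoing edge to $g_0$, and the only edge incident to $end$ is the incoming edge from $s^2_2$), so the endpoints of any Hamiltonian path are forced. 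Your argument is more elementary and self-contained — it does not need the correctness of Claims~\ref{claim:HP} and~\ref{claim:noHP} as a prerequisite, and it pins down the endpoints of an \emph{arbitrary} Hamiltonian path rather than exhibiting one good path — while the paper's version reuses machinery it has already established. Both are valid; yours is arguably the cleaner way to state why the folklore path-to-cycle gadget applies here.
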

\begin{proof}
Assume $G_{x,y}$ contains a directed Hamiltonian path, which implies that $\disj(x,y)=\false$ by Claim \ref{claim:noHP}. By the proof of Claim \ref{claim:HP}, there is a Hamiltonian directed path $P$ in $G_{x,y}$ in which the first vertex is $start$ and the last vertex is $end$.  Now in $G'_{x,y}$ we concatenate $P$ to the path $(end,middle,start)$, and denote this cycle by $C$. Every vertex in $G'_{x,y}$ is visited exactly once in $C$, which begins and ends in $start$. Thus $C$ is a directed Hamiltonian cycle in $G'_{x,y}$.

For the other direction, assume there is a directed Hamiltonian cycle in $G'_{x,y}$, denoted by $C$. Since $middle$ has to be visited in $C$, we can deduce that $C$ contains the sub-path $(end,middle,start)$, which implies that it also contains a sub-path $P$ from $start$ to $end$. Since $C$ is Hamiltonian, then so must be $P$, and all that is left to notice is that $P$ does not visit the vertex $middle$, and thus $P$ is a directed Hamiltonian path in $G_{x,y}$.
\end{proof}

\begin{proofof}{Theorem \ref{thrm:Hamiltoniancycle}}
From Claims \ref{claim:noHP},\ref{claim:HP} and ~\ref{claim for directed cycle Hamiltonian}, we immediately deduce that $\disj(x,y)=\false$ iff $G'_{x,y}$ contains a directed Hamiltonian cycle.

We  assign $middle$ to $V_A$, and so the size of the cut between Alice and Bob increases by exactly 1. The proof of Theorem \ref{thrm:Hamiltoniancycle} is concluded by applying Theorem \ref{generallowerboundtheorem}.
\end{proofof}

In order to prove Theorem  \ref{thrm:Hamiltonianundirected},  we show how to implement sequential reductions efficiently in the \cgst{} model. Given a directed graph  $G=(V,E)$, construct the undirected graph $G'=(V',E')$, in which
\[V'=\set{v_{in}|\ v\in V}\cup \set{v_{out}|\ v\in V}\cup \set{v_{middle}|\ v\in V},\]
and
\[E'=\set{(v_{in},v_{middle})|\ v\in V}\cup \set{(v_{middle},v_{out})|\ v\in V}\cup \set{(u_{out},v_{in})|\ (u,v)\in E}.\]

This is exactly as in the classic reduction from directed Hamiltonian cycle to Hamiltonian cycle\cite{DBLP:books/daglib/p/Karp10}. Now we prove the following lemma.
\begin{lemma}\label{Simulation: DHC to HC}
Given an algorithm $A$ with round complexity $T(n)$ that decides whether a given undirected graph has a Hamiltonian cycle, there is an algorithm $B$ that decides whether a given directed graph has a Hamiltonian cycle, with round complexity $O(T(n))$.
\end{lemma}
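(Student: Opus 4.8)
The plan is to show that $B$ simulates $A$ on the undirected graph $G'$ obtained from $G$ via the classic reduction, using the fact that $G$ has a directed Hamiltonian cycle iff $G'$ has an undirected one, and that $G'$ can be simulated by $G$ with only a constant-factor blow-up in round complexity. First I would recall the standard correctness of the reduction: any directed Hamiltonian cycle in $G$ traverses each $v\in V$ exactly once, and the gadget $v_{in}\!-\!v_{middle}\!-\!v_{out}$ forces any undirected Hamiltonian cycle in $G'$ to pass through the three copies of each vertex consecutively and in a consistent orientation (since $v_{middle}$ has degree $2$, both its edges must be used, pinning the local direction), so edges $(u_{out},v_{in})$ used by the cycle correspond exactly to directed edges $(u,v)\in E$ forming a directed Hamiltonian cycle in $G$; conversely a directed Hamiltonian cycle lifts to an undirected one. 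This equivalence is what makes the simulation meaningful.

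Next I would set up the simulation. Each vertex $v$ of $G$ will locally simulate the three vertices $v_{in},v_{middle},v_{out}$ of $G'$. The internal edges $(v_{in},v_{middle})$ and $(v_{middle},v_{out})$ are simulated with no communication, since all three endpoints live at $v$. An edge $(u_{out},v_{in})$ of $G'$ corresponds to the directed edge $(u,v)\in E$; the message that $A$ would send along this edge in $G'$ is sent by $u$ to $v$ along the corresponding edge of $G$ (which exists, possibly as the reverse direction of an undirected edge or a directed edge of $G$ — in either case the physical link is present in the \congest\ network on $G$). Since each real vertex hosts $3$ virtual vertices and each real edge carries the traffic of exactly one virtual edge, in every round of $A$ each vertex of $G$ needs to send and receive only $O(1)$ many $O(\log n')$-bit messages, where $n'=3n$ is the number of vertices of $G'$; as $\log n' = O(\log n)$, this fits within the bandwidth of the \congest\ model on $G$ up to a constant factor, so one round of $A$ is simulated in $O(1)$ rounds of $B$. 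After $A$ terminates, every vertex of $G$ knows the output of $A$ on $G'$, which by the equivalence above is the correct answer for $G$; total round complexity is $O(T(n')) = O(T(3n)) = O(T(n))$, assuming the mild monotonicity/polynomial-growth of $T$ that is implicit throughout (the intended reading is $T(\Theta(n))=O(T(n))$, which holds for all the bounds in this paper).

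The main obstacle — really the only point requiring care — is the book-keeping that each real edge of $G$ is responsible for exactly one virtual edge of $G'$, so that bandwidth is preserved up to a constant and one does not accidentally route $\omega(1)$ virtual edges over a single link; this is immediate from the construction since the only virtual edges crossing between distinct real vertices $u,v$ are of the form $(u_{out},v_{in})$ (or $(v_{out},u_{in})$), one per directed edge of $G$, and the gadget edges are all vertex-internal. A secondary point is initialization: each vertex of $G$ must be able to set up the local state and identifiers of its three virtual vertices and know the virtual adjacency structure, but this is determined entirely by $v$'s own identifier and its incident edges in $G$, so it requires no communication. With these observations the proof of Lemma~\ref{Simulation: DHC to HC} is complete, and combined with Claim~\ref{claim for directed cycle Hamiltonian} and Theorem~\ref{thrm:Hamiltoniancycle} it yields the undirected Hamiltonian cycle lower bound; the undirected Hamiltonian path case follows by an analogous reduction (e.g.\ attaching pendant paths at $start$ and $end$, or reducing path to cycle), handled the same way.
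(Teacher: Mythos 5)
Your proposal is correct and follows essentially the same route as the paper: each vertex of $G$ locally hosts $v_{in},v_{middle},v_{out}$, gadget edges cost no communication, each cross edge $(u_{out},v_{in})$ is routed over the physical edge $(u,v)\in E$, and a round of $A$ is simulated in $O(1)$ rounds (the paper makes this explicit as $2$ rounds, splitting the $in$-side and $out$-side messages, since a physical link $\{u,v\}$ may carry two virtual edges when both $(u,v)$ and $(v,u)$ are in $E$ --- so your phrase ``exactly one virtual edge per real edge'' is a slight overstatement, though your $O(1)$ conclusion stands). The final bound $2\cdot T(3n)=O(T(n))$ is justified in the paper by $T(n)=O(n^2)$, matching your monotonicity remark.
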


\begin{proof}
Given the directed graph $G$, algorithm $B$ simulates $G'$ and runs $A$ on it and answers accordingly.
Each vertex $v$ simulates $v_{in},v_{middle},v_{out}$, and each round of $A$ is simulated as follows for a given vertex $v$. For all messages sent on edges of the form $(u_{out},v_{in})$, $v$ sends the appropriate message to $u$ in $G$ in a single round. This is possible since by the reduction we know that $(u,v)\in E$, and in a single round $v$ sends at most one message on each edge. Messages on the edges $(v_{in},v_{middle}),(v_{middle},v_{out})$ require no communication and can be simulated locally by $v$. In the second round, for all messages sent on edges $(v_{out},u_{in})$, $v$ sends the message to $u$ in a single round. Thus, a single round in $G'$ is simulated in only $2$ rounds in $G$, and hence simulating $A$ on $G'$ takes $2\cdot T(3n)$ rounds in $G$, and we return that $G$ has a directed Hamiltonian cycle iff $A$ finds a Hamiltonian cycle in $G'$.

Since $T(n)=O(n^2)$, we deduce that $2\cdot T(3n)=O(T(n))$. Correctness is due to the correctness of the described reduction, and $B$ has round complexity $O(T(n))$, which concludes the proof.
\end{proof}

Now consider the following known reduction from Hamiltonian cycle to Hamiltonian path\cite{DBLP:books/daglib/p/Karp10}. Given an undirected graph $G=(V,E)$, let $v\in V$ be some arbitrary vertex, and construct an undirected graph $G'=(V',E')$, where
\[V'=V\backslash \set{v}\cup \set{v_1,v_2,s,t},\]
and
\[E'=\set{(u,w)|\ u,w\in V\backslash \set{v}} \cup \set{(v_i,u)|\ i\in \set{1,2},\ (v,u)\in E}\cup \set{(s,v_1),(v_2,t)}.\]

\begin{lemma}\label{Simulation: HC to HL}
Given an algorithm $A$ with round complexity $T(n)$ that decides whether a given undirected graph has a Hamiltonian path, there is an algorithm $B$ that decides whether a given undirected graph has a Hamiltonian cycle, with round complexity $O(T(n)+D)$.
\end{lemma}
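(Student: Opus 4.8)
The plan is to run the given Hamiltonian-path algorithm $A$ on the simulated graph $G'$ and translate its answer back. First I would have each vertex of $G$ learn enough of the construction: the distinguished vertex $v$ can be elected in $O(D)$ rounds (e.g. the minimum-ID vertex), after which $v$ broadcasts its identity over a BFS tree in $O(D)$ rounds so that every neighbor $u$ of $v$ knows it must simulate the two copies $v_1,v_2$ worth of incident edges, and the special endpoint vertices $s,t$ are simulated by $v$ itself (together with $v_1,v_2$). All other vertices $u\neq v$ in $G'$ coincide with vertices of $G$, and the edge set $\{(u,w)\mid u,w\in V\setminus\{v\}\}$ is exactly the original edge set of $G$ restricted away from $v$, so those edges need no extra routing.

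Next I would argue the round-by-round simulation costs only a constant factor. For an edge $(u,w)$ with $u,w\neq v$, the message of $A$ is sent directly over the same physical edge of $G$. For an edge $(v_i,u)$ of $G'$ with $(v,u)\in E$, the vertex $v$ (which simulates $v_1,v_2,s,t$) and the vertex $u$ communicate over the physical edge $(v,u)$: since $A$ sends at most one $O(\log n)$-bit message per direction per edge and there are only two copies $v_1,v_2$ attached to each such $u$, vertex $v$ needs to send at most two messages to $u$ across $(v,u)$, which takes $O(1)$ rounds of $\congest$ (or one round after a trivial re-encoding, exactly as in the proof of Lemma 5.4). The edges $(s,v_1),(v_2,t)$ are internal to the node simulating them and need no communication. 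Hence one round of $A$ on $G'$ is simulated in $O(1)$ rounds on $G$, and since $|V(G')|=|V|+3=\Theta(n)$ and $T(n)=O(n^2)$ we get $T(|V(G')|)=O(T(n))$; adding the $O(D)$ preprocessing gives total round complexity $O(T(n)+D)$.

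Correctness then reduces to the standard fact that $G$ has a Hamiltonian cycle if and only if $G'$ has a Hamiltonian path: a Hamiltonian cycle through $v$ in $G$, say $v,u,\dots,w,v$, unfolds into the path $s,v_1,u,\dots,w,v_2,t$ in $G'$, and conversely any Hamiltonian path in $G'$ must start at $s$ and end at $t$ (they have degree one), hence has the form $s,v_1,\dots,v_2,t$, and deleting $s,t$ and merging $v_1,v_2$ back into $v$ yields a Hamiltonian cycle in $G$. Algorithm $B$ outputs ``$G$ has a Hamiltonian cycle'' exactly when $A$ reports a Hamiltonian path in $G'$.

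The main obstacle is bookkeeping the vertex $v$'s role cleanly: it must locally play four simulated vertices $v_1,v_2,s,t$ and correctly split/merge the $\congest$ traffic on each physical edge $(v,u)$ between the two copies $v_1$ and $v_2$ without exceeding the bandwidth; once one observes that $A$ uses each $G'$-edge at most once per round in each direction, this is just a constant-factor slowdown, exactly parallel to the argument already carried out in Lemma~\ref{Simulation: DHC to HC}. The only genuinely new ingredient over Lemma~\ref{Simulation: DHC to HC} is electing and announcing $v$, which is where the additive $O(D)$ term comes from.
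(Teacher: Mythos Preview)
Your proposal is correct and follows essentially the same approach as the paper: elect the minimum-ID vertex $v$ in $O(D)$ rounds, have $v$ locally simulate $v_1,v_2,s,t$, and simulate each round of $A$ on $G'$ in $O(1)$ rounds of $G$ by routing the two $(v_i,u)$-messages over the single physical edge $(v,u)$. The paper's proof is nearly identical, except that it simply cites the classical reduction for correctness rather than spelling out the Hamiltonian-cycle $\Leftrightarrow$ Hamiltonian-path argument as you do.
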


\begin{proof}
Given $G$, our algorithm $B$ simulates $G'$, and runs $A$ on it and answers accordingly.
First, we find the vertex with the lowest $ID$ in the graph in $O(D)$ rounds. This can be done for example by letting each vertex broadcast the lowest $ID$ value it knows so far in each round. Denote this vertex by $v$. The vertex $v$ simulates $v_1,v_2,s,t$. The rest of the vertices simulate themselves. Now, given a round of $A$, each message on edges of the form $(u,w)$ where $u,w\neq v$ is simulated in a single round. The message for the edge $(u,v_1)$ is sent to $v$ first, and then the message for the edge $(u,v_2)$ is sent. Thus each $u\neq v$ can simulate each round of $A$ on $G'$ in $2$ rounds of $G$. For $v$, the messages for the edges $(s,v_1),(v_2,t)$ can be simulated locally. Messages for edges of the form $(v_1,u)$ are sent by $v$ in a single round, and then it sends all of the messages on edges of the form $(v_2,u)$ in the following round. We conclude that each round of $A$ on $G'$ can be simulated with 2 rounds in $G$. Thus, simulating $A$ on $G'$ takes $2\cdot T(n+3)$ rounds in $G$. We return that $G$ has a Hamiltonian cycle iff $A$ returned that $G'$ has a Hamiltonian path.

Since $T(n)=O(n^2)$, we deduce that $2\cdot T(n+3)=O(T(n))$. Correctness is due to the correctness of the reduction, and $B$ has round complexity $O(T(n))$, which concludes the proof.
\end{proof}

Since $D=O(n)$, Theorem \ref{thrm:Hamiltonianundirected}, is an immediate corollary of Lemmas \ref{Simulation: DHC to HC} and \ref{Simulation: HC to HL}.

\subsubsection{Minimum 2-edge-connected spanning subgraph}

Finally, we show our lower bound for 2-ECSS.

\begin{theorem}
\label{thrm:twoedgeconnected}
Any distributed algorithm in the \cgst{} model for finding a minimum size 2-edge-connected spanning subgraph or deciding whether there is a 2-edge-connected sub-graph with a given number of edges requires $\Omega ({n^2}/{\log ^4 n} )$ rounds.
\end{theorem}

To prove Theorem \ref{thrm:twoedgeconnected}, we prove the following claim.
\begin{claim}\label{claimtwoedgeconnect}
A graph $G=(V,E),\ |V|=n$ contains a 2-edge-connected subgraph with $n$ edges iff $G$ contains a Hamiltonian cycle.
\end{claim}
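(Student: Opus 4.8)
The plan is to prove both directions by an elementary degree-counting argument, relying on the standard structural fact that a graph in which every vertex has degree exactly $2$ is a vertex-disjoint union of cycles. Throughout, recall that a $2$-edge-connected spanning subgraph of $G$ is a subgraph $H=(V,E_H)$ with $E_H\subseteq E$ that is connected and bridgeless; in particular, $H$ contains all $n$ vertices of $G$.

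For the easy direction, suppose $G$ contains a Hamiltonian cycle $C$. Then $C$, viewed as a spanning subgraph of $G$, uses exactly $n$ edges, and any cycle is $2$-edge-connected: deleting any single edge of $C$ leaves a Hamiltonian path, which is still connected. Hence $C$ is a $2$-edge-connected spanning subgraph of $G$ with $n$ edges.

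For the converse, suppose $H=(V,E_H)$ is a $2$-edge-connected spanning subgraph of $G$ with $|E_H|=n$. First I would observe that every vertex has degree at least $2$ in $H$: a vertex of degree $0$ would make $H$ disconnected (as $H$ has $n\geq 2$ vertices), and a vertex of degree $1$ would make its unique incident edge a bridge, contradicting $2$-edge-connectivity. The handshake identity then gives $2n=2|E_H|=\sum_{v\in V}\deg_H(v)\geq 2n$, so equality holds throughout and $\deg_H(v)=2$ for every $v\in V$. A graph all of whose degrees equal $2$ is a vertex-disjoint union of cycles, and since $H$ is connected this union is a single cycle; as $H$ spans all $n$ vertices of $G$, this cycle is a Hamiltonian cycle of $G$.

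There is essentially no real obstacle here; the only points requiring care are the conventions — that "$2$-ECSS" refers to a \emph{spanning} subgraph, so that $|V(H)|=n$ and the handshake count is tight — and the standard fact that a $2$-regular graph decomposes into cycles. Once Claim~\ref{claimtwoedgeconnect} is established, Theorem~\ref{thrm:twoedgeconnected} follows by combining it with the undirected Hamiltonian-cycle lower bound of Theorem~\ref{thrm:Hamiltonianundirected}: the very same family of lower bound graphs, with the same partition $V_A,V_B$ and the same cut $E_{\cut}$, is a family of lower bound graphs for the predicate "there is a $2$-edge-connected spanning subgraph with $n$ edges", and applying Theorem~\ref{generallowerboundtheorem} yields the claimed $\Omega(n^2/\log^4 n)$ bound.
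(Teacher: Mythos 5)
Your proof is correct and follows essentially the same route as the paper: the easy direction observes a Hamiltonian cycle is a spanning $2$-edge-connected subgraph with $n$ edges, and the converse uses $\deg_H(v)\geq 2$ from $2$-edge-connectivity plus the handshake count to force $2$-regularity, hence a disjoint union of cycles, which connectivity collapses to a single spanning cycle. Your version is in fact slightly more careful than the paper's, which invokes spanning rather than connectivity at the last step, but the argument is the same.
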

\begin{proof}
Assume $G$ has a Hamiltonian cycle $C$. The cycle $C$ contains exactly $n$ edges, it is spanning since it is Hamiltonian, and it is 2-edge-connected as it is a cycle.

For the other direction, denote by $C$ a 2-edge-connected subgraph spanning of $G$ with $n$ edges. By the 2-edge connectivity we know that $deg_C(v)\geq 2$ for all $v\in V$. Since $C$ has exactly $n$ edges we conclude that $deg_C(v)=2$ for all $v\in V$. This implies that $C$ is a union of simple cycles, but since $C$ is spanning, we can deduce that $C$ is a Hamiltonian cycle. This concludes the proof.
\end{proof}
Theorem \ref{thrm:twoedgeconnected} is an immediate corollary of Claim \ref{claimtwoedgeconnect} and Theorem \ref{thrm:Hamiltonianundirected}.

\subsection{Minimum Steiner Tree}

In this Section, we show our lower bound for the Steiner tree problem. In the Steiner tree problem, the input is a graph $G=(V,E)$ with weights on the edges, and a set of terminals $S \subseteq V$, and the goal is to find a tree of minimum cost that spans all the terminals. While several efficient approximation algorithm to the problem exist \cite{lenzen2014improved, khan2012efficient, saikia2019simple}, we show that solving the problem \emph{exactly} requires near-quadratic number of rounds. 
Our lower bound is obtained using a reduction from our MDS lower bound graph construction, and we first formally define the notion of reductions between families of lower bound graphs.

\label{subsec:steiner}
\subsubsection{Reductions Between Families of Lower Bound Graphs}

We extend the technique of utilizing lower bounds from communication complexity through families of lower bound graphs, by observing that once a certain such family is found, one may use \emph{sequential} reductions between graph-problems in order to more easily prove additional lower bounds. However, we can only use sequential reductions if they ensure that in addition to not blowing up the number of vertices in the graph, the size of $E_{cut}$ also does not increase by much. We use this technique to prove a near-quadratic lower bound for the \emph{Steiner Tree} problem. This approach is also what led us to obtain our lower bound for the Hamiltonian Path problem, but in that case a direct presentation of the construction eventually seemed more clear.

\begin{theorem}
\label{thrm:reductions}
\textsf{Reductions between Families of Lower Bound Graphs}\\
\noindent Fix a function $f:\{0,1\}^{K} \times \{0,1\}^{K} \to \{\true ,\false \}$ and a predicate $P_1$. Let $\mathcal{F}_{P_1, f}\{G_{x,y}=(V_A \cup V_B, E_{x,y}) |\ x,y\in \{0,1\}^{K}\}$ be a family of lower bound graphs with respect to $f$ and $P_1$. Let $P_2$ be another property, and let $\mathcal{F}_{P_2, f} = \{G_{x,y}'=(V_A' \cup V_B', E_{x,y}') | G_{x, y} \in \mathcal{F}_{P_1, f}  \}$ be a family of graphs. If the following conditions hold, then $\mathcal{F}_{P_2, f}$ is a family of lower bound graphs with respect to $f$ and $P_2$.
	
\begin{enumerate}
\item $V_A'$ and $V_B'$ are determined by $V_A$ and $V_B$, respectively. \label{reductions:listBegin}
\item $E_{x,y}'(V_A', V_A')$ and $E_{x,y}'(V_B', V_B')$ are determined by $E_{x,y}(V_A, V_A)$ and $E_{x,y}(V_B, V_B)$, respectively.
\item $E_{cut}' = E_{x,y}'(V_A', V_B')$ is determined by $E_{cut}$. \label{reductions:listEnd}
\item $G_{x,y}$ satisfies $P_1$ if and only if $G_{x, y}'$ satisfies $P_2$.
\end{enumerate}
\end{theorem}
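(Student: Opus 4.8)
The plan is to verify the four conditions of Definition~\ref{def: family of lb graphs} for $\mathcal{F}_{P_2,f}$ directly, using the four hypotheses of the theorem together with the fact that $\mathcal{F}_{P_1,f}$ is already known to be a family of lower bound graphs. The statement is essentially a bookkeeping lemma: it isolates exactly the structural constraints a sequential reduction must respect in order to be ``transported'' through the Alice--Bob framework, and the proof amounts to checking that each of these constraints lines up with one of the four defining properties.

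First I would fix the vertex partition: define $V' = V_A' \cup V_B'$ with $V_A', V_B'$ as given by hypothesis~\ref{reductions:listBegin}. Since $V_A$ and $V_B$ are the same across the whole family $\mathcal{F}_{P_1,f}$ (property~1 of Definition~\ref{def: family of lb graphs}) and $V_A', V_B'$ are determined by them, the vertex set and its partition are the same for every graph in $\mathcal{F}_{P_2,f}$; this gives property~1. For property~2, I must show $G'_{x,y}[V_A']$ depends only on $x$: by hypothesis, $E'_{x,y}(V_A',V_A')$ is determined by $E_{x,y}(V_A,V_A)$, which in $\mathcal{F}_{P_1,f}$ depends only on $x$ (property~2 there), so the composition depends only on $x$; symmetrically for property~3 using hypothesis on $E'_{x,y}(V_B',V_B')$ and property~3 of the original family. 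For the cut edges, hypothesis~\ref{reductions:listEnd} says $E'_{cut}$ is determined by $E_{cut}$, and $E_{cut}$ is the same for all graphs in $\mathcal{F}_{P_1,f}$, hence $E'_{cut}$ is the same for all graphs in $\mathcal{F}_{P_2,f}$, as required by the last clause of Definition~\ref{def: family of lb graphs}. Finally, property~4 is immediate: hypothesis~4 of the theorem states $G_{x,y}$ satisfies $P_1$ iff $G'_{x,y}$ satisfies $P_2$, and we already know $G_{x,y}$ satisfies $P_1$ iff $f(x,y)=\true$, so chaining the two biconditionals gives $G'_{x,y}$ satisfies $P_2$ iff $f(x,y)=\true$.

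I do not anticipate a genuine ``hard part'' here — there is no combinatorial obstruction, only the need to be careful about quantifiers: ``determined by'' must be read as ``is a fixed function of'', so that if the argument does not depend on $(x,y)$ then neither does the value, and if it depends only on $x$ (resp.\ only on $y$) then so does the value. The one subtlety worth stating explicitly is that the theorem gives us enough to conclude $\mathcal{F}_{P_2,f}$ is a family of lower bound graphs but says nothing quantitatively; to actually extract a round lower bound one then invokes Theorem~\ref{generallowerboundtheorem}, and there the relevant parameters are $|V'|$ and $|E'_{cut}|$, so in applications (such as the Steiner tree reduction from the MDS construction) one additionally checks that the reduction blows up neither $|V|$ nor $|E_{cut}|$ by more than a polylogarithmic factor. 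That quantitative check, however, is downstream of the present statement and not part of its proof.
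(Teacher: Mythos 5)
Your proposal is correct and matches the paper's approach: the paper dismisses the proof as "trivially follows from the definition of families of lower bound graphs," and your write-up is exactly that verification of the four conditions of Definition~\ref{def: family of lb graphs} spelled out. Your closing remark correctly notes that the quantitative control of $|V'|$ and $|E'_{cut}|$ belongs to the applications, not to this statement.
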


The proof of Theorem \ref{thrm:reductions} trivially follows from the definition of families of lower bound graphs.

\subsubsection{Steiner Tree Lower Bound}

\begin{theorem}
\label{thrm:ST}
Any distributed algorithm in the \cgst{} model for computing a Steiner tree or for deciding whether there is a Steiner tree of a given size $M$ requires $\Omega ({n^2}/{\log ^2 n} )$ rounds.
\end{theorem}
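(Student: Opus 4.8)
The plan is to apply the reduction machinery of Theorem~\ref{thrm:reductions} to the family of lower bound graphs for MDS constructed in Section~\ref{subsec:mds}, turning it into a family of lower bound graphs for the Steiner tree problem. Recall that the MDS construction realizes the predicate ``$G_{x,y}$ has a dominating set of size $4\log k + 2$'' exactly when $\disj_{k^2}(x,y) = \false$, using a cut $E_{\cut}$ of size $O(\log k)$ and $n = \Theta(k)$ vertices. I will mimic the classic sequential reduction from dominating set to Steiner tree, but carry it out in a way that adds only a constant number of new vertices per original vertex and only $O(1)$ new cut edges.

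Concretely, from the MDS graph $G_{x,y} = (V, E_{x,y})$ I would build $G'_{x,y}$ as follows. Add a new global root vertex $\rho$, and for every vertex $v \in V$ add a fresh vertex $v^\ast$. Keep all original edges of $G_{x,y}$ with weight, say, $0$ (or a tiny weight, handled by scaling), connect $\rho$ to every $v \in V$ with weight $1$, and connect each $v$ to its private copy $v^\ast$ with weight $1$. Declare the terminal set to be $S = \{\rho\} \cup \{v^\ast : v\in V\}$. The key observation is the standard one: a Steiner tree must reach every $v^\ast$, which forces the edge $\{v, v^\ast\}$ into the tree for each $v$; the remaining cost is in choosing which vertices of $V$ to attach directly to $\rho$ versus routing through a neighbor's zero-weight edge, and the set of vertices attached to $\rho$ must dominate $V$ in $G_{x,y}$. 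After working out the exact bookkeeping (number of terminals $|S| = n+1$, number of forced edges $n$), one gets that $G'_{x,y}$ has a Steiner tree of cost $M' = n + (4\log k + 2)$ iff $G_{x,y}$ has a dominating set of size $4\log k + 2$, i.e.\ iff $\disj_{k^2}(x,y) = \false$. I would set $P_1$ to be the MDS-size predicate and $P_2$ to be ``there is a Steiner tree of cost $\le M'$,'' and verify the four conditions of Theorem~\ref{thrm:reductions}: the vertex partition extends by assigning each $v^\ast$ to the side of $v$ and $\rho$ to $V_A'$ (say); the internal edges on each side are determined by the corresponding internal edges of $G_{x,y}$; and $E_{\cut}'$ consists of $E_{\cut}$ (with their new weights) together with the single edge $\{\rho, v\}$ for $v \in V_B$ — wait, that is $\Theta(|V_B|)$ edges, which is too many, so I will instead make $\rho$ adjacent only to $V_A$ and give each $V_B$ vertex a zero-weight private connector into $V_A$, or more cleanly, put $\rho$ on the $A$ side and only connect it to $A$-vertices while every $B$-vertex reaches the ``$\rho$-role'' through a single designated cut edge; either way $|E_{\cut}'| = |E_{\cut}| + O(1) = O(\log k)$.

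Once $G'_{x,y}$ is verified to be a family of lower bound graphs for the Steiner tree predicate, the theorem follows from Theorem~\ref{generallowerboundtheorem}: we have $CC^R(\disj_{k^2}) = \Theta(k^2)$, $|E_{\cut}'| = O(\log k) = O(\log n)$, and $n = \Theta(k)$, giving a lower bound of $\Omega\!\left(\frac{k^2}{\log k \cdot \log n}\right) = \Omega\!\left(\frac{n^2}{\log^2 n}\right)$ rounds, with the usual remark that a lower bound for deciding the cost-$M$ version implies the same bound for computing an optimal Steiner tree, since evaluating the cost of a given tree takes $O(D)$ rounds.

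The main obstacle — and the part requiring genuine care rather than routine calculation — is keeping the reduction ``cut-preserving'': the textbook dominating-set-to-Steiner-tree reduction naturally wants to add a universal vertex adjacent to all of $V$, which would blow $|E_{\cut}|$ up to $\Theta(n)$ and kill the quadratic bound, exactly the pitfall the paper warns about for MVC-based reductions. I expect to resolve this by exploiting the specific structure of the MDS construction: the ``bit-gadget'' side already has only $O(\log k)$ cross edges, so I should route all terminal/root connectivity through the $A$-side internally and let the $O(\log k)$-size $E_{\cut}$ carry whatever minimal information is needed, possibly by adding the root-role to each side separately (one root per side, joined by a single cut edge) and arguing that the Steiner cost still encodes domination. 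Checking that this modified construction still has the clean ``Steiner cost $= n + (\text{MDS size})$'' characterization — in particular that the adversary cannot cheat by using the zero-weight original edges in some unintended way, and that weighting issues (zero vs.\ positive weights, integrality of $M'$) are handled — is where the real work of the proof lies.
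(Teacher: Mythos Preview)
Your reduction has a genuine gap more fundamental than the cut-size issue you flag. With the original edges of $G_{x,y}$ at weight $0$, the construction does \emph{not} encode domination: since the MDS instance $G_{x,y}$ is connected, a Steiner tree can take any spanning tree of $G_{x,y}$ on the weight-$0$ edges (cost $0$), attach every leaf $v^\ast$ via its unique edge (cost $n$), and attach $\rho$ with a single weight-$1$ edge, for total cost $n+1$ regardless of the minimum dominating set size. Your assertion that ``the set of vertices attached to $\rho$ must dominate $V$'' is simply false: a vertex $v$ not attached to $\rho$ need only reach the tree through a chain of weight-$0$ edges, and such chains can have arbitrary length. Making the original-edge weights tiny but positive does not help either; that just turns the problem into choosing a minimum-edge spanning subtree of $G_{x,y}$, still unrelated to domination.

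The paper's construction repairs exactly this. It makes the original vertices $V_A\cup V_B$ the terminals and an \emph{independent set} in $G'_{x,y}$ (no edges among them), introduces a copy $\widetilde v$ for each $v$, and places the ``$u$ dominates $v$'' information as edges $\{\widetilde u,v\}$ for $(u,v)\in E_{x,y}$ together with identity edges $\{\widetilde v,v\}$. The copies $\widetilde{V_A}$ and $\widetilde{V_B}$ are each turned into a clique, and the two cliques are joined by only two fixed crossing edges; this clique pair is what plays the role of your global root while adding only $O(1)$ cut edges. Because terminals are independent, any Steiner tree (after a swap argument that turns every terminal into a leaf by replacing terminal-incident internal edges with clique edges) must reach each terminal $v$ through some $\widetilde u$ with $u\in N_{G_{x,y}}[v]$, forcing the selected copies to correspond to a dominating set. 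So the idea you are missing is not merely where to place $\rho$, but to separate the dominator role (the copies, clique-connected to each other) from the dominated role (the originals, isolated terminals reachable only via a dominator).
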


We use a sequential reduction from MDS to Steiner tree, which is similar in nature to common reductions from vertex cover to Steiner tree. However, the typical sequential reductions transform a graph by adding a vertex for every edge, a characteristic which drastically increases the number of vertices in the graph and thus would only show a linear lower bound in the \cgst{} model. Therefore, we must modify these reductions to fit our model.

We denote by $\mathcal{F}_{P_{MDS}, \disj_{K}}$ the family of lower bound graphs constructed in Theorem~\ref{thm: mds lb}, and we define a family $\mathcal{F}_{P_{ST}, \disj_{K}}$. We then apply Theorem~\ref{thrm:reductions} to show that $\mathcal{F}_{P_{ST}, \disj_{K}}$ is a family of lower bound graphs w.r.t. $\disj_{K}$ and the predicate $P_{ST}$ that says that there exists a Steiner tree of size $4k + 16\log k + 1$ that spans the given set of terminals, where $k=\sqrt{K}$, and the number of vertices in the graph is $n=8k+24\log k$.

\paragraph{The family of graphs:} For any $G_{x, y}(V_A \cup V_B, E_{x, y}) \in \mathcal{F}_{P_{MDS}, \disj_{K}}$, define $G_{x, y}'(V_A' \cup V_B', E_{x, y}') \in \mathcal{F}_{P_{ST}, \disj_{K}}$ as follows. Let $V_A' = V_A \cup \widetilde{V_A}, V_B' = V_B \cup \widetilde{V_B}$, where $\widetilde{v}$ is a copy of a vertex $v\in V$, and for a set $U\subseteq V$ we denote $\widetilde{U}=\{\widetilde{v}~|~v \in U\}$.
Let $E_{x, y}'$ consist of four types of edge sets: (1) \emph{identity edges} which connect $\widetilde{v}$ to $v$: $\{  (\widetilde{v}, v) | v \in  V_A \cup V_B \}$, (2) \emph{original edges} which connect $\widetilde{u}$ to $v$ according to $E_{x,y}$: $\{ (\widetilde{u}, v)  | (u, v) \in E_{x, y}  \}$, (3) \emph{clique edges} which connect $\widetilde{V_A}, \widetilde{V_B}$ in two cliques: $\widetilde{V_A} \times \widetilde{V_A}, \widetilde{V_B} \times \widetilde{V_B}$, and (4) exactly two \emph{crossing edges} which connect six specific vertices $e_1 = (\widetilde{f}_{A_1} ^0, \widetilde{f}^0 _{B_1}), e_2 =(\widetilde{t}_{A_1} ^0, \widetilde{t}^0 _{B_1})$. Notice that these are added only for $h=0$ and only for $A_1,B_1$, and not for any other bit-gadget row.

\begin{claim}
\label{claim:HST_is_a_LB_family}
The family $\mathcal{F}_{P_{ST}, \disj_{K}}$ is a family of lower bound graphs with respect to $\disj_{K}$ and $P_{ST}$ given the set of terminals $\term = V_A \cup V_B$. It holds that $|E_{cut}|=O(\log n)$.
\end{claim}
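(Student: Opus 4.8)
The plan is to verify the four conditions of Theorem~\ref{thrm:reductions} (the reduction meta-theorem) for the pair $(\mathcal{F}_{P_{MDS},\disj_K}, \mathcal{F}_{P_{ST},\disj_K})$, and then separately to check the statement $|E_{cut}|=O(\log n)$. The structural conditions \ref{reductions:listBegin}--\ref{reductions:listEnd} are essentially by inspection of the construction: $V_A'=V_A\cup\widetilde{V_A}$ depends only on $V_A$, and symmetrically for $V_B'$; the edges internal to $V_A'$ are the identity edges within $V_A$, the original edges coming from $E_{x,y}(V_A,V_A)$, and the $\widetilde{V_A}$-clique, all of which are functions of $V_A$ and $E_{x,y}(V_A,V_A)$ alone (the $x$-dependent edges of $G_{x,y}$ live inside $G[V_A]$, so their copies stay inside $V_A'$), and symmetrically for $V_B'$. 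For $E_{cut}'$: the cut edges of $G'_{x,y}$ are the copies of the original cut edges $\widetilde{E_{cut}}$ plus the two fixed crossing edges $e_1,e_2$, all determined by $E_{cut}$; this also immediately gives $|E_{cut}'|=|E_{cut}|+2=O(\log n)$ since the MDS construction has $|E_{cut}|=O(\log k)=O(\log n)$, settling the last sentence of the claim.

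The substantive work is condition~4: $G_{x,y}$ has a dominating set of size $4\log k+2$ iff $G'_{x,y}$ has a Steiner tree of size $4k+16\log k+1$ spanning $\term=V_A\cup V_B$. First I would establish the forward direction: given a dominating set $D$ of $G_{x,y}$ with $|D|=4\log k+2$, build a Steiner tree of $G'_{x,y}$. The idea, as in the classical VC-to-Steiner-tree reduction, is to take the copy-clique structure as a backbone: connect every terminal $v\in V_A\cup V_B$ to $\widetilde{d}$ where $d\in D$ dominates $v$ (if $v\in D$ use the identity edge $(\widetilde v, v)$; if $v$ has a neighbor $d\in D$, use the original edge $(\widetilde d, v)$), then span the chosen copies $\widetilde D$ using clique edges within $\widetilde{V_A}$ and within $\widetilde{V_B}$, and finally join the $\widetilde{V_A}$-side to the $\widetilde{V_B}$-side using one of the two crossing edges. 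Here one must argue the crossing edge is usable: $\widetilde f^0_{A_1}$ or $\widetilde t^0_{A_1}$ (and the matching $B_1$ copy) can be brought into the tree cheaply via clique edges, costing $O(1)$ extra, and one must carefully count: $|V_A\cup V_B|$ edges to attach all terminals (each terminal contributes exactly one edge), plus $|D|-1$ or so clique edges to link the copies, plus $1$ crossing edge — and one tunes the definitions so this totals exactly $4k+16\log k+1$ (note $|V|=8k+24\log k$, so $|V_A\cup V_B|$ row-and-bit-gadget terminals number... one has to recount against the MDS vertex set: $4k$ row vertices plus $3\cdot 4\log k$ bit-gadget vertices, i.e. $4k+12\log k$ terminals, and $|D|-1=4\log k+1$ internal edges, giving $4k+16\log k+2$; the off-by-one in the target $4k+16\log k+1$ presumably reflects that the root-side copies double as attachment points, which the careful count will pin down).

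For the reverse direction I would take a Steiner tree $\mathcal{T}$ in $G'_{x,y}$ of size $\le 4k+16\log k+1$ and extract a dominating set of $G_{x,y}$ of size $\le 4\log k+2$. Every terminal is a leaf-or-internal node of $\mathcal{T}$; since the only edges incident to a terminal $v$ are the identity edge and the original edges $(\widetilde u, v)$, each terminal is attached to $\mathcal{T}$ through some copy vertex $\widetilde u$, and the set $D=\{u : \widetilde u \text{ is used to attach some terminal}\}$ dominates $G_{x,y}$ by construction. The counting argument then forces $|D|$ to be small: the tree must spend $|V_A\cup V_B|$ edges just reaching the terminals, leaving only $\approx 4\log k+1$ edges to interconnect, which can link at most $4\log k+2$ copy vertices and at most one crossing — and the crossing being forced to be used (because $\widetilde{V_A}$ and $\widetilde{V_B}$ are otherwise disconnected in $G'_{x,y}$ once $x$-edges and $y$-edges are the only other $A$-$B$ links... wait, original edges $\widetilde{E_{cut}}$ also cross, so one uses that the MDS cut is tiny) pins down the bound. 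The main obstacle I anticipate is exactly this exact-counting balancing act: making the forward and reverse inequalities meet at the single threshold $4k+16\log k+1$, which requires getting the clique-edge bookkeeping and the treatment of the two designated crossing vertices $\widetilde f^0_{A_1},\widetilde t^0_{A_1}$ precisely right, and ensuring that a would-be Steiner tree cannot "cheat" by using an original cross edge $\widetilde{u}\,v$ with $u\in V_B, v\in V_A$ to save on the crossing-edge cost — one must check the $O(\log k)$-size cut makes such savings impossible within the budget. Once condition~4 holds, Theorem~\ref{thrm:reductions} yields that $\mathcal{F}_{P_{ST},\disj_K}$ is a family of lower bound graphs, completing the claim.
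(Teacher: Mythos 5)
Your overall route --- verifying conditions 1--3 of Theorem~\ref{thrm:reductions} by inspection, bounding $|E_{cut}|$, and reducing everything to the equivalence ``dominating set of size $4\log k+2$ iff Steiner tree of size $4k+16\log k+1$'' --- is the same as the paper's. But there is a genuine gap in your forward direction, exactly at the point you flag as delicate. You propose to make a crossing edge usable by ``bringing $\widetilde{f}^{\,0}_{A_1}$ or $\widetilde{t}^{\,0}_{A_1}$ (and the matching $B_1$ copy) into the tree cheaply via clique edges, costing $O(1)$ extra.'' Any extra cost is fatal: the predicate is an exact threshold, and the zero-slack count already lands on it --- $|\term|=4k+12\log k$ attachment edges (one per terminal) plus $|C|-1=4\log k+1$ edges to span the $4\log k+2$ copies in $\widetilde{C}$, the crossing edge included among those $|C|-1$, for a total of exactly $4k+16\log k+1$ (your figure of $4k+16\log k+2$ is an arithmetic slip; the count actually closes). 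Adding even one extra vertex or edge to reach the crossing gadget overshoots the budget, and the Yes-instance would then fail the predicate. The missing ingredient is that no extra cost is needed: the structural analysis inside the proof of Lemma~\ref{lemma: mds disj} shows that \emph{every} dominating set of size $4\log k+2$ contains exactly one vertex of each pair $\{f^h_S,t^h_S\}$, and for each pair $\{f^h_{A_\ell},f^h_{B_\ell}\}$ (resp.\ $\{t^h_{A_\ell},t^h_{B_\ell}\}$) either both or neither. Hence $C$ already contains either both $f^0_{A_1},f^0_{B_1}$ or both $t^0_{A_1},t^0_{B_1}$, so one of the two crossing edges $e_1,e_2$ joins two vertices of $\widetilde{C}$ at no cost. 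This is precisely why the crossing edges were placed at those six vertices, and it is the one fact your proposal does not supply.

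Your reverse direction is essentially the paper's: normalize the tree so that every terminal is a leaf (possible because $V_A\cup V_B$ is independent in $G'_{x,y}$, so all neighbors of a terminal lie in $\widetilde{V_A}\cup\widetilde{V_B}$ and redundant incidences can be rerouted through clique edges without increasing the edge count), after which $V_{T'}\setminus\term$ has size $|T''|+1-|\term|=4\log k+2$ and its set of originals dominates $G_{x,y}$. Your worry about a tree ``cheating'' with original cross edges is moot for this count, since the argument never needs to know which edges connect the two sides --- it only counts non-terminal vertices. Two small additional inaccuracies: $|E'_{cut}|$ is $2|E_{cut}|+2$ rather than $|E_{cut}|+2$ (each undirected cut edge $(u,v)$ of $G_{x,y}$ yields both $(\widetilde{u},v)$ and $(\widetilde{v},u)$ in $G'_{x,y}$), though this does not affect the $O(\log n)$ bound.
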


Combining Claim~\ref{claim:HST_is_a_LB_family} and Theorem~\ref{generallowerboundtheorem} proves Theorem~\ref{thrm:ST}.

\begin{proofof}{Claim \ref{claim:HST_is_a_LB_family}}
First, the only edges crossing the cut between $V_A'$ and $V_B'$ are $O(\log n)$ edges that correspond to the \emph{original edges} and the three \emph{crossing edges}, which gives that $|E_{cut}| = O(\log n)$.

We apply Theorem~\ref{thrm:reductions} to $\mathcal{F}_{P_{MDS}, \disj_{K}}$ and $\mathcal{F}_{P_{ST}, \disj_{K}}$. Let $G_{x, y}(V_A \cup V_B, E_{x, y}) \in \mathcal{F}_{P_{MDS}, \disj_{K}}$ and let $G_{x, y}'(V_A' \cup V_B', E_{x, y}') \in \mathcal{F}_{P_{ST}, \disj_{K}}$ be its corresponding graph. Observe that conditions \ref{reductions:listBegin}-\ref{reductions:listEnd} of Theorem~\ref{thrm:reductions} are trivially met due to the definition of $G_{x, y}'$. It remains to show that $G'_{x, y}$ satisfies $P_{ST}$ if and only if $G_{x,y}$ satisfies $P_{MDS}$.
	
Assume $G_{x,y}$ satisfies $P_{MDS}$, and let $C \subseteq V_A \cup V_B$ be a dominating set of size $4\log k +2$. Denote $C_A = C \cap V_A, C_B = C \cap V_B$. Notice that in $G'_{x,y}$, the vertices in $\widetilde{C_A}$ and the vertices in $\widetilde{C_B}$ are connected as cliques, and so there exist trees $T'_A$ and $T'_B$ in $G_{x,y}'$ which span them and are of size $|\widetilde{C_A}| - 1$ and $|\widetilde{C_B}| - 1$, respectively. Further, because we know that $C$ contains exactly either $\widetilde{f}_{A_1} ^0, \widetilde{f}^0 _{B_1}$ or $\widetilde{t}_{A_1} ^0, \widetilde{t}^0 _{B_1}$, we have that one of the \emph{crossing edges} $e_1, e_2,$ may be used to connect $T'_A, T'_B$ to form a tree $T'$ in $G_{x,y}'$ which spans $\widetilde{C}$ and is of size $|\widetilde{C}| - 1 = 4\log k + 1$. Lastly, notice that since $C$ is a dominating set in $G_{x, y}$, then for every vertex $v \in \term = V_A \cup V_B$ there exists a vertex $\widetilde{u} \in \widetilde{C}$ such that in $G'_{x,y}$ there is an edge $(\widetilde{u}, v)$. Thus, for every vertex $v \in \term = V_A \cup V_B$, add exactly one such edge $(\widetilde{u}, v)$ to $T'$ in order to create a Steiner tree spanning $\term$ of size exactly $4k + 16\log k + 1$, since $|\term| = 4k + 12 \log k$.
	
For the other direction, assume that there exists a Steiner tree $T'$ in $G_{x,y}'$ that spans $\term$ and is of size $4k + 16\log k + 1$. We first show that the existence of $T'$ implies the existence of a tree $T''$ in $G_{x,y}'$ which spans $\term$, is of the same size as $T'$, and has one additional property: every $v \in \term$ is a leaf in $T''$. Let $v \in \term$ be any vertex such that at least $2$ edges in $T'$ are incident to it, and let $N_{T'}(v)$ be the neighbors of $v$ w.r.t. the edges in $T'$. Observe that $V_A\cup V_B$ is an independent set, and thus $N_{T'}(v) \subseteq \widetilde{V_A} \cup \widetilde{V_B}$ since all neighbors in $G_{x,y}'$ of $v$ are vertices in $\widetilde{V_A} \cup \widetilde{V_B}$. Therefore, it is possible to form $T''$ from $T'$ by replacing the edges between $v$ and all but one of the vertices in $N_{T'}(v)$ with clique edges. This ensures that $v$ is a leaf in $T''$, which remains spanning. Further, notice that this does not increase the number of edges in $T''$, and so $|T''| \leq 4k + 16 \log k + 1$. Let $V_{T'}$ be the set of vertices in $T'$, and observe that $V_{T'} \setminus \term$ is a dominating set in $G_{x,y}$. Finally, observe that $|V_{T'} \setminus \term| = |T''| + 1 - |\term| = 4 \log k + 2$, and so there is be a dominating set in $G_{x,y}$ of size $4 \log k + 2$.
\end{proofof}

\subsection{Max-Cut}
\label{subsec:MaxCut}

The max-cut problem requires each vertex to choose a \emph{side} such that if $S$ is the set of vertices that choose one side, then the cut $E(S, V \setminus S)$ is the largest among all cuts in the graph. A trivial random assignment is known to produce a 1/2-approximation, and requires no communication. In~\cite{DBLP:conf/algosensors/Censor-HillelLS17}, an algorithm is given for obtaining this approximation factor deterministically, within $\tilde{O}(\Delta + \log^* n)$ rounds. Recently,~\cite{KawarabayashiS18} showed that a $(1/2-\epsilon)$-approximation can be obtained in $O(\log^* n)$ rounds deterministically, and in $O(1/\epsilon)$ rounds using randomization, which also hold for the \emph{weighted} version of the problem, in which the cut $E(S, V \setminus S)$ has a maximum weight.

We show in Section~\ref{subsec:maxcutAlg} below that for the unweighted case, the algorithm of~\cite{DBLP:phd/de/Zelke2009} which obtains a $(1-\epsilon)$-approximation by computing an exact solution for a randomly sampled subgraph, can be implemented in the \cgst{} model within $\tilde{O}(n)$ rounds.
However, once we need a truly exact solution, we now show that for the weighted version, any algorithm requires $(n^2/ \log^2 n)$ rounds. 

\subsubsection{Lower bound for max cut}

The essence of our lower bound construction follows the framework of Theorem~\ref{generallowerboundtheorem}, and in particular shares some of the structure of the MDS construction, in the sense of having row vertices that are connected to bit-gadget vertices according to their binary representation.

For an edge-weighted graph $G=(V,E,w)$ and a set $S \subseteq V$, the cut defined by $S$ is the set of edges $C=E(S,V\setminus S) = \{\{v,u\}~|~ \{v,u\}\in E, v\in S, u\notin S\}$. The weight of a cut $C$ is $w(C)=\sum_{e\in C}{w(e)}$.

Let $M=k^4\cdot (8\log(k)+4) + k^3(12\log(k)-4) +4k^2+4k$.

\begin{theorem} \label{thm:max-cut}
Any distributed algorithm in the \cgst{} model for computing a maximum weighted cut or for deciding whether there is a cut of a given weight $M$ requires $\Omega\left(n^2/\log^2 n \right)$ rounds.
\end{theorem}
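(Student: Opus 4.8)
The plan is to build a family of lower bound graphs with respect to $\disj_{k^2}$ and the predicate $P$ that says $G_{x,y}$ has a cut of weight at least $M$, and then invoke Theorem~\ref{generallowerboundtheorem}. Following the remark that the construction mirrors the MDS construction, I would reuse the skeleton of four rows $A_1,A_2,B_1,B_2$ of $k$ row vertices, with bit-gadget vertices $F_S,T_S,U_S$ connected to the rows by binary representation, and edges $\{a_1^i,a_2^j\}$, $\{b_1^i,b_2^j\}$ added according to $x_{i,j}$ and $y_{i,j}$. The new ingredient is a carefully chosen weighting: the four edge ``layers'' (within-$6$-cycle edges, row-to-bit-gadget edges, and the input-dependent cross-row edges) should be given geometrically separated weights — the various powers $k^4, k^3, k^2, k$ appearing in $M$ strongly suggest that each structural layer of the graph contributes at a different scale, so that a weight-$M$ cut is forced to be \emph{lexicographically optimal}: first maximize the contribution of the heaviest edges, then the next, and so on. The constant $M$ is the sum of the maximum achievable contributions from each layer, attainable simultaneously \emph{iff} the inputs intersect.

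The core combinatorial lemma to prove, analogous to Lemma~\ref{lemma: mds disj}, is: \emph{$G_{x,y}$ has a cut of weight $\geq M$ iff $\disj_{k^2}(x,y)=\false$}. For the ``if'' direction I would, given an index $(i,j)$ with $x_{(i,j)}=y_{(i,j)}=1$, exhibit an explicit bipartition $S$: put the row vertices on one side, the bit-gadget vertices split so that the heaviest layer (each $6$-cycle contributing its full bipartite-cut value) is maximized, and arrange the placement of $a_1^i,a_2^j,b_1^i,b_2^j$ so that both special cross-row edges $\{a_1^i,a_2^j\}$ and $\{b_1^i,b_2^j\}$ are cut — this is the small residual term ($4k^2$, say) that only an intersecting pair of inputs allows to be collected on top of everything else. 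For the ``only if'' direction, I would argue layer by layer in decreasing weight order: a cut of weight $\geq M$ must cut every heavy edge it possibly can (else the deficit, being at least one unit of a heavier scale, cannot be recovered by all lighter edges combined, since the lighter layers' total weight is $< k^{(\text{scale})}$); this forces the $6$-cycles and the row/bit-gadget structure into a rigid configuration mimicking the ``binary representation'' consistency used in the MDS proof, which in turn forces the two cut cross-row edges to share indices $(i,j)$, and hence $x_{(i,j)}=y_{(i,j)}=1$.

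Finally, for the reduction: set $V_A=A_1\cup A_2\cup F_{A_1}\cup T_{A_1}\cup U_{A_1}\cup F_{A_2}\cup T_{A_2}\cup U_{A_2}$ and $V_B$ the rest, exactly as in the MDS proof, so $x$ only affects $G[V_A]$, $y$ only affects $G[V_B]$, and $|E_{\cut}|=O(\log k)$ (the $6$-cycle edges joining $A_\ell$-side and $B_\ell$-side bit-gadgets, plus the constant-many special edges). Since $n=\Theta(k)$ and $CC^R(\disj_{k^2})=\Theta(k^2)$, Theorem~\ref{generallowerboundtheorem} gives the claimed $\Omega(n^2/\log^2 n)$ bound. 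I expect the main obstacle to be the bookkeeping in the ``only if'' direction: verifying that the weight gaps between consecutive layers genuinely exceed the total weight of all lighter layers (so the lexicographic forcing is valid), and that the rigid configuration extracted at the bit-gadget level really does pin down a \emph{single common} index pair rather than merely one index on each side — this is exactly the place where the MDS argument's delicate case analysis (the ``two row vertices in different sets'', ``exactly one of every pair $f_S^h,t_S^h$'', etc.) must be adapted to the cut setting, and getting the weights to make every one of those cases tight is where the constants in $M$ really come from.
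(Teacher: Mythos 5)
Your proposal correctly identifies the overall strategy (reuse the MDS skeleton, weight the structural layers at geometrically separated scales so that a maximum cut is forced into a rigid configuration selecting one index per row, partition as in the MDS proof with $|E_{\cut}|=O(\log k)$, apply Theorem~\ref{generallowerboundtheorem}). The layer-separation argument you sketch for the fixed part of the graph is essentially what the paper does (with $k^4$ edges forcing the bit-gadget configuration, though the paper uses $4$-cycles on $F_S\cup T_S$ rather than the $6$-cycles with $U_S$, plus a few auxiliary vertices $C_A,\bar C_A,C_B$).

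However, there is a genuine gap in how you handle the input-dependent layer, and it is exactly the difficulty the paper singles out as the crux. You add an edge $\{a_1^i,a_2^j\}$ when $x_{i,j}=1$ and plan to collect a bonus ``iff an intersecting pair allows'' by cutting the two special edges. But the input-dependent contribution to a cut is not an indicator of intersection: it is the sum of weights of \emph{all} input edges that happen to cross the bipartition, and this quantity varies with the entire strings $x,y$ (e.g., a row vertex with many input edges contributes more crossing weight than one with few, regardless of whether the selected indices intersect). So the threshold $M$ cannot be set to separate intersecting from disjoint inputs; a disjoint instance with a dense input could reach weight $M$, and an intersecting instance with a sparse input could fall short. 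The paper resolves this with a normalization gadget you do not have: two extra vertices $N_A,N_B$ (forced to the opposite side from the selected row vertices), where the edge $\{a_1^i,N_A\}$ is given weight $\sum_j x_{i,j}$ and the input edges are added on \emph{zero} bits, so that the total input-layer weight incident to each row vertex is exactly $k$ and is fully cut precisely when the selected pair has no edge between them, i.e., when $x_{i,j}=y_{i,j}=1$. Without this (or an equivalent) normalization, the ``only if'' direction of your core lemma fails, and the construction does not yield a family of lower bound graphs.
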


\textbf{Intuition for the construction:}
To prove Theorem \ref{thm:max-cut} we use a similar construction  as for MDS.
The key technical hurdle is that inserting an edge into a graph according to $x,y$ (imagine for now the MDS graph) \emph{may not change the value of the maximum cut}, e.g., if it connects two vertices on its same side. The novelty of our construction is in circumventing the above by adding weights to all edges, and adding some (small) constant number of new vertices, among them two vertices $N_A$ and $N_B$, connected to all $A_1\cup A_2$ and $B_1\cup B_2$ vertices, respectively. The role of $N_A,N_B$ is the following. As in the MDS construction, we add an edge of weight 1 between $a^i_1$ and $a^j_2$ iff $x_{i,j}=0$, and similarly for $y$. The crux of our construction is that for each row vertex, say $a^i_1$, the weight of its edge to $N_A$ is equal to the number of vertices in $A_2$ to which $a^i_1$ is \emph{not} connected. Thus, the sum of weights of edges going from $a^i_1$ to vertices in $A_2\cup\{N_A\}$ is exactly $k$. Intuitively, if we now add another edge from $a^i_1$ to some $a^j_2$ that is on the same side of a given cut, and if $a^i_1$ and $N_A$ are not on the same side, then the value of the cut now decreases by $1$. We give weights to remaining edges, such that some heavy-weighted edges force any max-cut to take both $t^h_S$ vertices or both $f^h_S$ vertices from every row $h$ to the same side of the cut. Combining the above ingredients gives that in order for the maximum cut to retain its weight, the inputs strings $x,y$ must not be disjoint, which is what allows us to obtain our lower bound.

\textbf{The family of lower bound graphs:} Let $k$ be a power of $2$,
and build a family of graphs $G_{x,y}$ with respect to $f=\disj_{k^2}$ and the following predicate $P$:
the graph $G_{x,y}$ contains a cut of weight $M$.

\textbf{The fixed graph construction:} See Figure~\ref{fig:max_cut} for an illustration. The set of vertices $V$ contains four sets of $k$ vertices each, denoted by $A_1 = \{a_1^j\mid 0\le j \le k-1\}, A_2 = \{a_2^j\mid 0\le j \le k-1\}, B_1 = \{b_1^j\mid 0\le j \le k-1\}, B_2 = \{b_2^j\mid 0\le j \le k-1\}$. For each $S\in\{A_1,A_2,B_1,B_2\}$ there are two sets of vertices of size $\log{k}$, denoted by $F_S = \{f_S^h\mid 0\le h \le \log{(k)} -1\} $ and $T_S = \{t_S^h\mid 0\le h \le \log{(k)} -1\}$. In addition, there are five vertices denoted by $C_A,\bar{C}_A,C_B,N_A,N_B$.

The vertices are connected as follows. The vertices $C_A$ and $N_A$ are connected with an edge of weight $k^4$. Similarly, the vertices $C_B$ and $N_B$ are connected with an edge of weight $k^4$. The vertices $C_A$ and $\bar{C}_A$ are connected with an edge of weight $k^4$, and the vertices $\bar{C}_A$ and $C_B$ are connected with an edge of weight $k^4$. For each $z\in\{1,2\}$ and $0\le h\le \log{(k)} -1 $, the vertices $(t_{A_z}^h,f_{A_z}^h,t_{B_z}^h,f_{B_z}^h)$ are connected as a 4-cycle, with edges of weight $k^4$. For each $S\in \{A_1,A_2,B_1,B_2\}$ and for each $0\le j \le k-1$ we connect the vertex $s^j$ as follows. Let $\{j_h\}_{h=0}^{\log{(k)} -1}$ be the binary representation of $j$, i.e., $j=\sum_{h=0}^{\log{(k)} -1} 2^h j_h$, and define $Bin(s^j)=\{t_S^h\mid j_h=1\}\cup\{f_S^h\mid j_h=0\}$. The vertex $s^j$ is connected to every vertex in $Bin(s^j)$ with an edge of weight $2k^2$. For every vertex $a_z^j\in A_1\cup A_2$, connect it with $C_A$ by an edge of weight $2k^2\log (k) - k^2$. Similarly for every vertex $b_z^j\in B_1\cup B_2$, connect it with $C_B$ by an edge of weight $2k^2\log (k) - k^2$. Finally, every vertex $a_z^j\in A_1 \cup A_2$ is connected to $N_a$, and every vertex $z_i^j\in B_1 \cup B_2$ is connected to $N_B$. The weight of these edges is determined by the input strings.

\begin{figure}
	\centering
	\includegraphics[width=\textwidth]{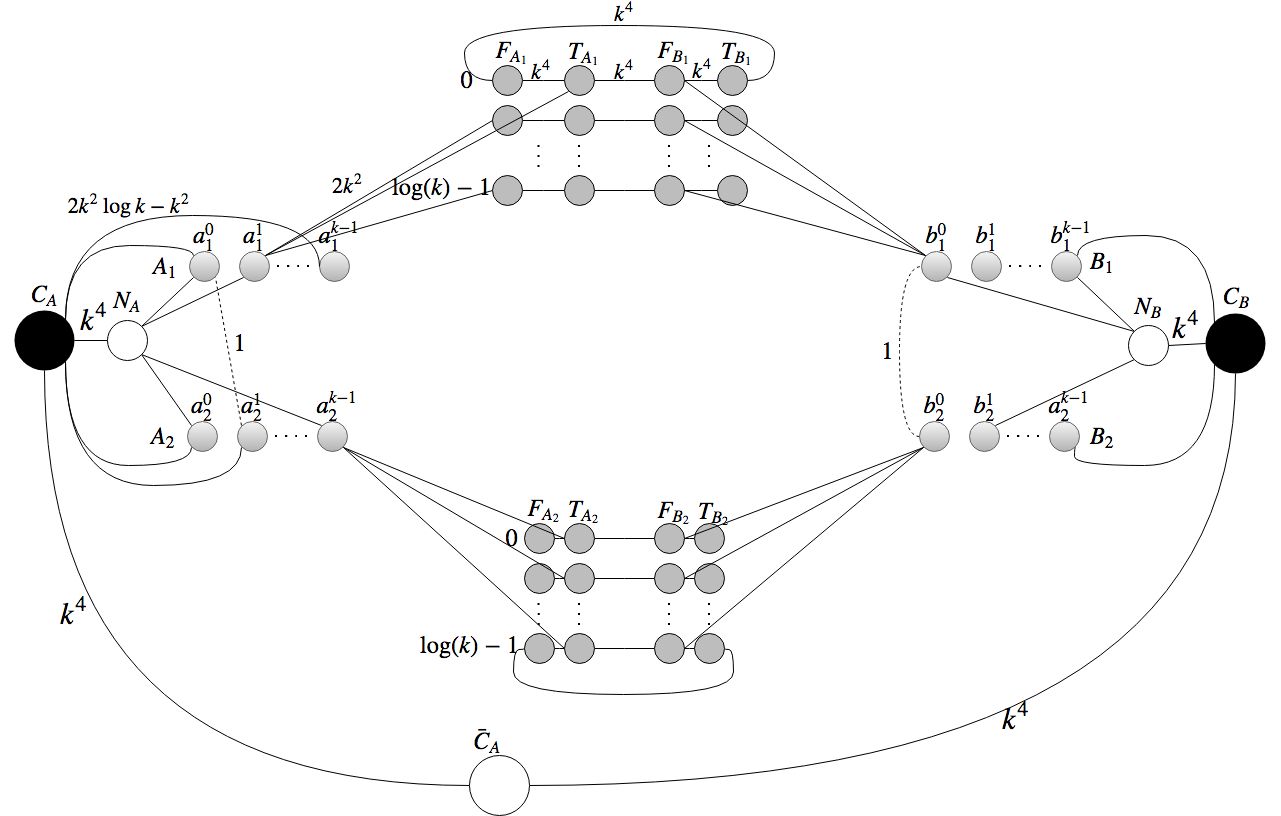}
	\caption{The family of lower bound graphs for deciding the weight of the max-cut, with many edges and weights omitted for clarity.}
	\label{fig:max_cut}
\end{figure}

\textbf{Constructing $G_{x,y}$ from $G$ given $x,y\in\{0,1\}^{k^2}$:}
We index the strings $x,y\in\{0,1\}^{k^2}$ by pairs of indices of the form $(i,j)\in\{0,1,\dots,k-1\}^2$. Now we augment $G$ in the following way: For all pairs $(i,j)$ such that $x_{i,j} = 0$ we connect the vertices $a_1^i$ and $a_2^j$ with an edge of weight $1$. Similarly for every $(i,j)$ such that $y_{i,j} = 0$ we connect the vertices $b_1^i$ and $b_2^j$ with an edge of weight $1$.

In addition, for every $0\le i\le k-1$, we set the weight of the edge $\{a_1^i,N_A\}$ to be $\sum_{j=0}^{k-1}x_{i,j}$ (intuitively, this promises that the total weight of the edges that are connected to $a_1^i$ is always exactly $k$). Similarly, we set the weights of the edges $\{N_A,a_2^i\},\{N_B,b_1^i\}$ and $ \{N_B,b_2^i\}$, to be $\sum_{j=0}^{k-1}x_{j,i}, \sum_{j=0}^{k-1}y_{i,j}$ and $\sum_{j=0}^{k-1}y_{j,i}$ respectively.

This completes the construction of $G_{x,y}$. Let $S\subseteq V$ be such that $C=E(S,V\setminus S)$ is a maximum weight cut of $G_{x,y}$. Assume without loss of generality that $C_A\in S$. We prove the following sequence of claims about $S$.

\begin{claim} \label{clm:max-cut:k^4 edges}
It holds that $C_B\in S$ and $N_A,N_B,\bar{C}_A\in V\setminus S$, and for every $z\in\{1,2\}$ and $0\le h \le \log (k) - 1$ it holds that $t_{A_z}^h\in S \iff t_{B_z}^h\in S \iff f_{A_z}^h\notin S \iff f_{B_z}^h\notin S$.
\end{claim}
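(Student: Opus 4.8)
The claim asserts that in any maximum-weight cut, all the heavy ($k^4$-weight) edges are cut, and consequently the two $F_S/T_S$ vertices of each 4-cycle end up on prescribed sides. The plan is a standard ``heavy-edge domination'' argument: the total weight of \emph{all} non-$k^4$ edges in the graph is small compared to a single $k^4$, so a maximum cut cannot afford to leave any $k^4$-edge uncut. First I would count the light edges: there are $O(k)$ row vertices, each incident to $O(\log k)$ edges of weight $2k^2$ (to $Bin(s^j)$), one edge of weight $2k^2\log k - k^2 = O(k^2 \log k)$ to $C_A$ or $C_B$, and one edge of weight at most $k$ to $N_A$ or $N_B$; summing over all $O(k)$ row vertices gives a total light-edge weight of $O(k^3 \log k)$, which is $o(k^4)$. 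Hence if some $k^4$-edge were not cut, one could flip the side of one of its endpoints; this gains $k^4$ from that edge (and possibly others incident to it) while losing at most the total light-edge weight plus the weight of the at most two other $k^4$-edges at that vertex — but I need to be slightly careful here, since a vertex can be incident to several $k^4$-edges.

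The cleaner way, which I would actually carry out, is to argue globally. Consider the subgraph $H$ consisting only of the $k^4$-weight edges: this is the union of the path $N_A - C_A - \bar C_A - C_B - N_B$ together with the $2\log k$ disjoint 4-cycles $(t^h_{A_z}, f^h_{A_z}, t^h_{B_z}, f^h_{B_z})$. Every component of $H$ is bipartite (a path and 4-cycles are bipartite), so $H$ has a cut that cuts \emph{all} of its edges. Take such a partition $S_0$ of $V(H)$, extend it arbitrarily to all of $V$, and then greedily place each remaining vertex on its better side: this yields a cut of weight at least (number of $k^4$-edges)$\cdot k^4 = (4 + 8\log k)k^4$ plus at least half the light-edge weight — in particular strictly more than $(4+8\log k)k^4$. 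On the other hand, any cut that fails to cut even one $k^4$-edge has weight at most $(4 + 8\log k - 1)k^4 + (\text{total light weight}) = (3 + 8\log k)k^4 + O(k^3\log k) < (4+8\log k)k^4$. Therefore the maximum cut $S$ must cut every $k^4$-edge.

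Once that is established, the rest is immediate combinatorics on the structure of $H$. Cutting the path $N_A - C_A - \bar C_A - C_B - N_B$ forces the sides to alternate: given $C_A \in S$, we get $N_A, \bar C_A \notin S$, then $C_B \in S$, then $N_B \notin S$, which is exactly the first assertion. Cutting the 4-cycle $(t^h_{A_z}, f^h_{A_z}, t^h_{B_z}, f^h_{B_z})$ — a cycle on 4 vertices all of whose edges are cut — forces a proper 2-coloring, i.e.\ the two ``even'' vertices $t^h_{A_z}, t^h_{B_z}$ on one side and the two ``odd'' vertices $f^h_{A_z}, f^h_{B_z}$ on the other; this is precisely the chain of iff's $t^h_{A_z}\in S \iff t^h_{B_z}\in S \iff f^h_{A_z}\notin S \iff f^h_{B_z}\notin S$.

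The only real obstacle is getting the light-edge weight bound sharp enough and in the right direction; in particular I must make sure the comparison is strict, so I would compute the total light weight explicitly rather than with loose $O(\cdot)$ bounds — it is $4k$ edges of weight $\le k$ plus $4k$ edges of weight $2k^2\log k - k^2$ plus $4k\log k$ edges of weight $2k^2$ (each row vertex hits exactly $\log k$ of its $Bin$ neighbors), giving total light weight $\le 4k^2 + 4k(2k^2\log k - k^2) + 8k^3\log k = 16k^3\log k - 4k^3 + 4k^2$, which is indeed $< k^4$ for $k$ large and in fact matches the ``correction terms'' in the definition of $M$. With that inequality in hand, the flip/greedy argument goes through and the claim follows. (One may alternatively phrase the contradiction as a local flip at a single vertex violating $k^4$-edges, but handling a vertex incident to two or three $k^4$-edges cleanly is exactly why the global bipartite-subgraph argument is preferable.)
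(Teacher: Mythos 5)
Your proposal is correct and follows essentially the same route as the paper: both arguments show that the total weight of all non-$k^4$ edges is $O(k^3\log k)$, which is strictly smaller than $k^4$, exhibit a cut containing every $k^4$-edge (the paper simply writes down $\tilde{S}=T_{A_1}\cup T_{A_2}\cup T_{B_1}\cup T_{B_2}\cup\{C_A,C_B\}$, whereas you derive its existence from the bipartiteness of the heavy-edge subgraph), and conclude that any maximum cut must contain all of them, after which the alternation along the path $N_A\text{--}C_A\text{--}\bar C_A\text{--}C_B\text{--}N_B$ and the $2$-coloring of each $4$-cycle give the claim. The only nit is that your explicit light-edge inventory omits the $O(k^2)$ input edges of weight $1$ inside $A_1\cup A_2$ and $B_1\cup B_2$, but their total contribution of at most $2k^2$ does not affect the comparison with $k^4$.
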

\begin{proof}
Let $E'$ be the set of edges that are determined to be in the cut by the statement of the claim. Notice that $E'$ is exactly the set edges of weight $k^4$. The rest of the edges are $O(k)$ edges of weight $O(k^2\log k)$, $O(k\log k)$ edges of weight $O(k^2)$, and $O(k^2)$ edges of weight $1$. Therefore, the total weight of the edges that do not have weight $k^4$, is $O(k^3 \log k)$ which is strictly smaller than $k^4$. Thus, if we show that there is a cut that contains $E'$, then any maximum cut must be such. Taking $\tilde{S}=T_{A_1}\cup T_{A_2}\cup T_{B_1}\cup T_{B_2}\cup \{C_A,C_B\}$ produces a cut $E(\tilde{S}, V\setminus\tilde{S})$ which contains $E'$, completing the proof.
\end{proof}

\begin{claim} \label{clm:max-cut:bin}
For every $z\in\{1,2\}$ and $0\le j\le k-1$ it holds that $a_z^j\in S \iff Bin(a_z^j)\cap S = \emptyset$ and $b_z^j\in S \iff Bin(b_z^j)\cap S = \emptyset$.
\end{claim}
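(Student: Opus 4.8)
The plan is a standard local-move (vertex-swap) argument exploiting that $S$ realizes a \emph{maximum} cut: flipping any single vertex to the other side cannot increase $w(C)$, and once the sides of a row vertex's neighbors are fixed — which Claim~\ref{clm:max-cut:k^4 edges} already does for $C_A,C_B,N_A,N_B$ — this constraint pins down the row vertex's side as a function of how many of its bit-gadget neighbors lie in $S$.

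I would fix one row vertex, say $a_1^j$; by the symmetry of the construction and of Claim~\ref{clm:max-cut:k^4 edges} (which gives $C_A,C_B\in S$ and $N_A,N_B,\bar C_A\notin S$), the identical computation, with $C_B,N_B$ replacing $C_A,N_A$, covers $a_2^j$, $b_1^j$, $b_2^j$ as well. The edges incident to $a_1^j$ are: the $\log k$ edges of weight $2k^2$ to $Bin(a_1^j)$; the edge of weight $2k^2\log k-k^2$ to $C_A$; the edge to $N_A$; and the weight-$1$ edges to vertices of $A_2$. Writing $p=|Bin(a_1^j)\cap S|$, I compare the total contribution of these edges to $w(C)$ in the two cases $a_1^j\in S$ and $a_1^j\notin S$ (using $C_A\in S$, $N_A\notin S$), and form the difference $\delta$ (contribution when $a_1^j\in S$ minus contribution when $a_1^j\notin S$), which depends only on the already-fixed sides of $a_1^j$'s neighbors. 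The $Bin$-edges contribute $2k^2(\log k-p)$ versus $2k^2p$; the $C_A$-edge contributes $0$ versus $2k^2\log k-k^2$; and, crucially, the construction makes the total weight of edges from $a_1^j$ to $A_2\cup\{N_A\}$ equal to the input-independent constant $k$ (the $N_A$-edge weight $\sum_{j'}x_{j,j'}$ exactly compensates for the missing weight-$1$ edges), so their combined contribution lies in $[0,k]$ in either case. Hence $\delta=k^2(1-4p)+\eta$ with $|\eta|\le k$.

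Finishing is a magnitude comparison. If $p=0$ then $\delta\ge k^2-k>0$, so $a_1^j$ must lie in $S$ (otherwise flipping it into $S$ would strictly increase $w(C)$, contradicting maximality), and $p=0$ means precisely that $Bin(a_1^j)\cap S=\emptyset$. If $p\ge 1$ then $\delta\le -3k^2+k<0$, so $a_1^j$ must lie outside $S$ (flipping it out would otherwise strictly increase $w(C)$). Thus $a_1^j\in S\iff Bin(a_1^j)\cap S=\emptyset$, and symmetrically for $a_2^j$, $b_1^j$, $b_2^j$.

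The only delicate point — and the reason the weights were chosen as they were — is bookkeeping the ``light'' edges: one has to observe that the weight of the $\{a_1^j,N_A\}$ edge was defined exactly so that the total weight from any row vertex into its opposite row plus $N_A$ is the fixed constant $k$, which bounds the perturbation term $\eta$ by $k$ and lets the $\Theta(k^2)$ terms determine the sign of $\delta$ regardless of the input. Everything else is routine arithmetic.
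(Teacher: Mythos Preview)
Your proposal is correct and follows essentially the same local-swap argument as the paper: both exploit that $S$ is a maximum cut, use Claim~\ref{clm:max-cut:k^4 edges} to pin down the sides of $C_A,N_A$ (resp.\ $C_B,N_B$), and observe that the total weight from a row vertex into its opposite row plus $N_A$ (or $N_B$) is exactly $k$, so the $\Theta(k^2)$ terms decide the sign. The paper presents this as two separate contradictions (flip $a_1^j$ out of $S$ for one direction, into $S$ for the other), whereas you compute a single signed difference $\delta$; this is the same computation packaged differently. One tiny imprecision: for $a_2^j$ the relevant heavy neighbors are still $C_A,N_A$, not $C_B,N_B$ --- the replacement by $C_B,N_B$ applies only to $b_1^j,b_2^j$.
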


\begin{proof}
We prove the claim for $A_1$, and the proofs for $A_2,B_1,B_2$ are similar.

For the first direction, assume $a_1^j\in S$. If $Bin(a_1^j)\cap S\ne\emptyset$ then there exists $u\in Bin(a_1^j)$ such that $u\in S$. Let $S' = S\setminus\{a_1^j\}$, and $C'=E(S',V\setminus S')$. We show that $w(C')>w(C)$, which contradicts $C$ being a maximum cut. Notice that $w(C')-w(C)=w(C'\setminus C) - w(C\setminus C')$. Since $u\in S'$, it holds that $|Bin(a_1^j)\cap(V\setminus S')|\le \log (k) -1$. The edges that connect $a_1^j$ to $Bin(a_1^j)\cap(V\setminus S')$ are of weight $2k^2$, and are in $C\setminus C'$. By Claim~\ref{clm:max-cut:k^4 edges}, the rest of the edges in $C\setminus C'$, if they exist, are edges that connect $a_1^j$ to vertices in $A_2\cup\{N_A\}$, of which there are no more than $k$, and whose weights are $1$. Thus, $w(C\setminus C') \le 2k^2(\log (k) -1 ) + k = 2k^2\log k - 2k^2 + k$. The edge $\{a_1^j,C_A\}$ is in $C'\setminus C$ and its weight is $2k^2\log k -k^2$, which gives $w(C' \setminus C) \ge 2k^2\log k -k^2 > 2k^2\log k - 2k^2 + k \ge w(C\setminus C')$. Thus $w(C') > w(C)$, which is a contradiction, and therefore  $Bin(a_1^j)\cap S = \emptyset$.
	
For the other direction, assume $Bin(a_1^j)\cap S = \emptyset$. If $a_1^j\notin S$, define $S' = S \cup \{a_1^j\}$, and $C'=E(S',V\setminus S')$. As before, we show that $w(C')>w(C)$. Since $Bin(a_1^j) \cap S = \emptyset$ and $a_1^j\notin S$, and by Claim~\ref{clm:max-cut:k^4 edges}, we have that $C\setminus C'$ only contains edges that connect $a_1^j$ to vertices in $A_2 \cup \{C_A\}$. These have a total weight of no more than $2k^2 \log k -k^2 + k$. Since $a_1^j \in S'$ and $Bin(a_1^j)\cap S' = \emptyset$, we have that $C'$ contains all the edges that connect $a_1^j$ to vertex in $Bin(a_1^j)$. There are exactly $\log k$ such edges, and their weights are $2k^2$. Therefore, the total weight of $C'\setminus C$ is at least $2k^2 \log k$. Thus $w(C'\setminus C) \ge 2k^2 \log k > 2k^2 \log k -k^2 + k \ge w(C\setminus C')$. We get that indeed $w(C') > w(C)$, which contradicts $C$ being a maximum cut, and therefore $a_1^j \in S$.
\end{proof}

\begin{claim} \label{clm:max-cut:same-one}
For every $z\in\{1,2\}$, and $0\le j\le k-1$ it holds that $a_z^j\in S \iff b_z^j\in S$, and there exists exactly one $0\le j^*\le k-1$ such that $a_z^{j^*}\in S$.
\end{claim}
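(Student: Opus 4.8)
The plan is to argue, exactly as in the analysis of the MDS construction, that the only way a maximum cut can avoid losing weight is to choose for each coordinate $z\in\{1,2\}$ exactly one ``selected index'' $j^*$ among the row vertices, and that the $A$-side and $B$-side selections must coincide. First I would fix $z$ and consider the quantity that a cut $C$ collects from the edges incident to the row vertices $a_z^0,\dots,a_z^{k-1}$ and to $N_A$: by construction, for every fixed $a_z^j$ the total weight of edges from $a_z^j$ into $A_{3-z}\cup\{N_A\}$ is exactly $k$ (this is the point of the $N_A$-weights), and these $k+1$ endpoints other than $a_z^j$ all lie on the same side as $N_A$ by Claim~\ref{clm:max-cut:k^4 edges} once we also use Claim~\ref{clm:max-cut:bin} to pin down which side each $a_z^j$ is on. So the contribution of this bundle of edges to the cut is $k$ times the number of indices $j$ with $a_z^j$ on the opposite side from $N_A$. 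I would then show that a maximum cut must make this number equal to exactly $1$: if it were $0$, one could flip a single row vertex to the $N_A$-opposite side and gain $k$ from these edges while losing at most $2k^2$ from the $Bin$-edges and the $C_A$-edge — wait, that direction needs the accounting to come out right, so more carefully, I would compare against the ``all $A_z$ on the $N_A$-side'' configuration versus ``exactly one off,'' and use that the $Bin$-edge and $C_A$-edge weights were chosen (the $2k^2\log k - k^2$ weight of $\{a_z^j,C_A\}$ versus $2k^2\log k$ for the $Bin$-edges, already exploited in Claim~\ref{clm:max-cut:bin}) precisely so that the selected vertex prefers to sit with its $Bin$-set rather than against it, while an unselected vertex sits against its $Bin$-set; a short weight comparison shows any maximum cut has exactly one selected index per side, giving existence and uniqueness of $j^*$ for the $A$-side, and symmetrically for $B$.

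Next I would establish $a_z^j\in S\iff b_z^j\in S$. The relevant structural fact is that the bit-gadget vertices for the $A$-rows and for the $B$-rows are tied together by the weight-$k^4$ $4$-cycles $(t_{A_z}^h,f_{A_z}^h,t_{B_z}^h,f_{B_z}^h)$, so by Claim~\ref{clm:max-cut:k^4 edges} we already have $t_{A_z}^h\in S\iff t_{B_z}^h\in S$ and $f_{A_z}^h\in S\iff f_{B_z}^h\in S$ for all $h$. Consequently $Bin(a_z^j)\subseteq S$ is equivalent to $Bin(b_z^j)\subseteq S$, and $Bin(a_z^j)\cap S=\emptyset$ is equivalent to $Bin(b_z^j)\cap S=\emptyset$, because $Bin(a_z^j)$ and $Bin(b_z^j)$ pick out the ``same'' $\log k$ vertices across the $4$-cycles (one of $t^h,f^h$ for each $h$, by the identical binary representation). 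Combining this with Claim~\ref{clm:max-cut:bin} applied on both the $A$- and $B$-sides yields $a_z^j\in S\iff Bin(a_z^j)\cap S=\emptyset\iff Bin(b_z^j)\cap S=\emptyset\iff b_z^j\in S$. In particular the unique selected index $j^*$ for the $A_z$-row equals the unique selected index for the $B_z$-row, so the statement ``there exists exactly one $0\le j^*\le k-1$ such that $a_z^{j^*}\in S$'' is unambiguous and simultaneously describes $b_z^{j^*}$.

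The main obstacle I anticipate is the ``exactly one'' part rather than the ``iff'' part: the $A$--$B$ correspondence is essentially forced by the $k^4$-weight $4$-cycles and is routine, but proving that a maximum cut selects \emph{precisely one} index per side requires a careful global weight comparison. One has to simultaneously control the $Bin$-edges (weight $2k^2$ each, $O(k\log k)$ of them), the $\{a_z^j,C_A\}$-edges (weight $2k^2\log k-k^2$, $O(k)$ of them), and the input-dependent weight-$1$ edges into $A_{3-z}\cup\{N_A\}$, and show that deviating from exactly one selected index either contradicts Claim~\ref{clm:max-cut:bin} or strictly decreases the cut weight. I would handle this by an exchange argument: start from the configuration the claim predicts, verify it is locally optimal under single-vertex and small-group flips among the row vertices (using that each row vertex contributes $k$ total into $A_{3-z}\cup\{N_A\}$), and invoke Claim~\ref{clm:max-cut:bin} to rule out configurations where some $a_z^j\in S$ but $Bin(a_z^j)\cap S\ne\emptyset$. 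The value $M$ in the theorem statement is presumably chosen to equal exactly the weight of this predicted optimal configuration, which will later let us read off disjointness of $x,y$ from whether the cut weight reaches $M$.
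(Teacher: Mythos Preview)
Your second paragraph, the ``iff'' part via Claims~\ref{clm:max-cut:k^4 edges} and~\ref{clm:max-cut:bin}, is exactly the paper's argument and is correct.

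The gap is in your treatment of the ``exactly one'' part. You propose a global weight-exchange argument using the $N_A$-edges and the fact that each $a_z^j$ has total weight $k$ into $A_{3-z}\cup\{N_A\}$, and you flag this as the main obstacle. But none of that is needed: the uniqueness of $j^*$ is already forced purely by Claims~\ref{clm:max-cut:k^4 edges} and~\ref{clm:max-cut:bin}, with no further accounting. The point you are missing is that Claim~\ref{clm:max-cut:k^4 edges} says that for every $h$ exactly one of $t_{A_z}^h,f_{A_z}^h$ lies in $S$. This single fact picks out a unique binary string: define $j^*$ by setting the $h$-th bit of $j^*$ to $1$ if $t_{A_z}^h\notin S$ and to $0$ if $f_{A_z}^h\notin S$. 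Then $Bin(a_z^{j^*})$ is precisely the set of bit-gadget vertices \emph{not} in $S$, so $Bin(a_z^{j^*})\cap S=\emptyset$, and for every $j\ne j^*$ at least one bit differs, forcing $Bin(a_z^{j})\cap S\ne\emptyset$. Now Claim~\ref{clm:max-cut:bin} converts this directly into $a_z^{j}\in S\iff j=j^*$.

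So the entire claim is a two-line corollary of the two previous claims; the $N_A$-edges, the weight-$k$ balancing, and the exchange argument play no role here (they are used later, in Claim~\ref{clm:max-cut:constant} and Lemma~\ref{lem:max-cut}, to control the remaining cut weight). Your proposed exchange argument might be salvageable, but it is working much harder than necessary and, as you yourself noticed, the accounting is delicate and not obviously correct as written.
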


\begin{proof}
From Claim~\ref{clm:max-cut:k^4 edges}, we have that for every $z\in \{1,2\}$ and $0\le j \le k-1$, it holds that $Bin(a_z^j) \cap S = \emptyset \iff Bin(b_z^j) \cap S = \emptyset$, and also that there exists exactly one $0\le j^* \le k-1$ such that $Bin(a_z^{j^*}) \cap S = \emptyset$, which is the value whose binary representation is determined by the set of indexes $h$ for which $t^h_{A_z} \not\in S$. From Claim~\ref{clm:max-cut:bin} we conclude that $a_z^j \in S \iff b_z^j\in S$, and that there exists exactly one $0\le j^* \le k-1$ such that $a_z^{j^*}\in S$.
\end{proof}

For a set of vertices $U\subset V$, we denote $E[U] = \{\{u,v\}~|~ u,v\in U, \{u,v\}\in E\}$, which is the set of edges that connect two vertices in $U$.

\begin{claim} \label{clm:max-cut:constant}
Let $E'=C\setminus E[A_1\cup A_2 \cup B_1\cup B_2 \cup \{N_A,N_B\}]$. Then the  total weight of edges in $E'$ is $M'=k^4\cdot (8\log(k)+4) + k^3(12\log(k)-4) +4k^2$, regardless of $x$ and $y$.
\end{claim}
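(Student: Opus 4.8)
The plan is to evaluate the total weight of the cut restricted to $E'$ by partitioning $E'$ into its natural edge classes and showing that, under the structural constraints already established in Claims~\ref{clm:max-cut:k^4 edges}--\ref{clm:max-cut:same-one}, the contribution of each class is a fixed quantity independent of $x,y$. The set $E'$ consists of exactly those cut edges that are not incident to two row vertices and not incident to $N_A$ or $N_B$; that is, $E'$ is made up of (i) the eight weight-$k^4$ edges among $C_A,\bar C_A,C_B,N_A,N_B$ — namely $\{C_A,N_A\},\{C_B,N_B\},\{C_A,\bar C_A\},\{\bar C_A,C_B\}$ — together with the $4\log k$ weight-$k^4$ edges of the $4$-cycles $(t^h_{A_z},f^h_{A_z},t^h_{B_z},f^h_{B_z})$; (ii) the weight-$2k^2$ edges connecting row vertices to their bit-gadget sets $Bin(\cdot)$; and (iii) the weight-$(2k^2\log k - k^2)$ edges connecting row vertices to $C_A$ or $C_B$. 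I would treat these three groups in turn.

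First I would count the weight-$k^4$ edges in the cut. By Claim~\ref{clm:max-cut:k^4 edges}, every weight-$k^4$ edge is in the cut: the four edges among $C_A,\bar C_A,C_B,N_A,N_B$ all cross (since $C_A,C_B\in S$ and $N_A,N_B,\bar C_A\notin S$), contributing $4k^4$; and for each of the $2\log k$ values of $(z,h)$ the $4$-cycle is split into two singletons of $S$ and two of $V\setminus S$ (the $t$'s on one side and the $f$'s on the other), so all four of its edges cross, contributing $4k^4$ each, i.e.\ $8k^4\log k$ in total. This group therefore contributes exactly $k^4(8\log k + 4)$, matching the $k^4$-term of $M'$.

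Next, for the weight-$2k^2$ row-to-$Bin$ edges: by Claim~\ref{clm:max-cut:bin}, for every row vertex $s^j$ we have $s^j\in S \iff Bin(s^j)\cap S=\emptyset$. If $s^j\in S$ then all $\log k$ edges from $s^j$ to $Bin(s^j)$ cross; if $s^j\notin S$ then, combined with Claim~\ref{clm:max-cut:k^4 edges} (the $t$'s and $f$'s of each row partition so that exactly one of $\{t^h_S,f^h_S\}$ is in $S$), exactly those $h$ with the "in-$S$" bit-gadget vertex inside $Bin(s^j)$ give a crossing edge, and one checks this is again a fixed number — in fact, summing over the $4k$ row vertices, Claim~\ref{clm:max-cut:same-one} tells us that in each of the four rows exactly one row vertex $s^{j^*}$ lies in $S$ (contributing $\log k$ crossing edges) and the other $k-1$ lie outside $S$; for the latter, the number of crossing $Bin$-edges from $s^j$ equals the Hamming distance between $j$ and $j^*$, and summing Hamming distances from $j^*$ over all $j\in[k]$ gives $\tfrac{k\log k}{2}$ regardless of $j^*$. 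So each row contributes $2k^2\cdot\tfrac{k\log k}{2} = k^3\log k$ crossing weight from its non-$S$ vertices plus $2k^2\log k$ from $s^{j^*}$, i.e.\ $k^3\log k + 2k^2\log k$ per row, for a total of $4k^3\log k + 8k^2\log k$. Finally, the weight-$(2k^2\log k-k^2)$ edges: each row vertex is joined to $C_A$ (if it is an $A$-vertex) or $C_B$ (if a $B$-vertex); since $C_A,C_B\in S$, such an edge crosses iff the row vertex is not in $S$, which by Claims~\ref{clm:max-cut:bin} and~\ref{clm:max-cut:same-one} happens for exactly $k-1$ of the $k$ vertices in each row, i.e.\ $4(k-1)$ row vertices total, contributing $4(k-1)(2k^2\log k - k^2)$. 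Adding the three groups: $k^4(8\log k+4) + \bigl(4k^3\log k + 8k^2\log k\bigr) + 4(k-1)(2k^2\log k-k^2)$; expanding the last term as $8k^3\log k - 4k^3 - 8k^2\log k + 4k^2$ and collecting gives $k^4(8\log k+4) + k^3(12\log k - 4) + 4k^2 = M'$, as claimed, and crucially none of these counts referenced the bits of $x$ or $y$.

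The main obstacle I anticipate is the careful bookkeeping in the second group — verifying that the total number of crossing $Bin$-edges is genuinely input-independent. This requires using Claim~\ref{clm:max-cut:same-one} to pin down that exactly one row vertex per row sits in $S$, and then the elementary but slightly delicate identity that $\sum_{j=0}^{k-1}\operatorname{dist}_{\mathrm{Ham}}(j,j^*) = \tfrac{k\log k}{2}$ independently of $j^*$ (each bit position is flipped in exactly half of the $k$ indices). Once that is in hand, the other two groups are immediate from Claim~\ref{clm:max-cut:k^4 edges} alone, and the only remaining work is the arithmetic of summing the three contributions, which I would present as a short displayed computation rather than belabor here.
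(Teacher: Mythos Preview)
Your proposal is correct and follows essentially the same approach as the paper: partition $E'$ into the weight-$k^4$ edges, the row-to-$C_A/C_B$ edges, and the row-to-$Bin$ edges, and use Claims~\ref{clm:max-cut:k^4 edges}--\ref{clm:max-cut:same-one} to count each class. The only cosmetic difference is that the paper expresses the $Bin$-edge count via $\sum_h h\binom{\log k}{h}$ while you invoke the equivalent Hamming-distance identity $\sum_j d_H(j,j^*)=\tfrac{k\log k}{2}$; also note two small slips in your opening paragraph (you wrote ``eight'' where you list four special edges, and ``$4\log k$'' where there are $8\log k$ cycle edges), though your actual computation in the next paragraph gets both right.
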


\begin{proof}
To prove the claim, we argue that for every $t\in \mathbb{N}$, the number of edges of weight $t$ in $E'$ does not depend on $x,y$.
		
By Claim~\ref{clm:max-cut:k^4 edges}, all of the edges of weight $k^4$ in $G_{x,y}$ are in $C$, and none of them are in $E[A_1\cup A_2 \cup B_1\cup B_2 \cup \{N_A,N_B\}]$. This implies that all of the edges of weight $k^4$ are in $E'$. The number of these edges is $8\log(k)+4$ regardless of $x,y$.
	
By Claim~\ref{clm:max-cut:same-one}, there are exactly $k-1$ vertices in each of $A_1,A_2,B_1,B_2$ which are not in $S$, and therefore there are exactly $4(k-1)$ edges in $E'$ of weight $2k^2\log k -k^2$, since $C_A,C_B\in S$.
	
Again by Claim~\ref{clm:max-cut:same-one}, we know that there exists $0\le j^* \le k-1$, such that for every $j\ne j^*$ it holds that $a_1^j\notin S$. By Claims~\ref{clm:max-cut:k^4 edges} and~\ref{clm:max-cut:bin}, we know that $Bin(a_1^{j^*}) \cap S = \emptyset$ and for every $0\le h \le \log (k) - 1$ it holds that $t_{A_1}^h\in S \iff f_{A_1}^h\notin S$. We conclude that $|Bin(a_1^j)\cap S|$ is equal to $|Bin(a_1^j)\setminus Bin(a_1^{j^*})|$ which is the number of bits that are different between the binary representations of $j$ and $j^*$. Therefore, for every $0\le h \le \log (k) - 1$, there are exactly $\binom{\log k}{h}$ indices $j_{\ell}$ for $1\leq \ell \leq {\binom{\log k}{h}}$ such that $|Bin(a_1^{j_{\ell}})\setminus Bin(a_1^{j^*})| = h$.
Since $a_1^{j^*}\in S$  and $Bin(a_1^{j^*}) \subseteq V\setminus S$, we get that all the edges that connect $a_1^{j^*}$ to $Bin(a_1^{j^*})$ are also in $E'$. Similar arguments hold for $A_2,B_1$ and $B_2$.
Thus, the number of edges of weight $2k^2$ in $E'$ is exactly $4(\log k + \sum_{h=1}^{\log k}h\binom{\log k}{h}) = 2(k + 2)\log k$.
	
The only edges in the graph we did not address above are edges in $E[A_1\cup A_2 \cup B_1\cup B_2 \cup \{N_A,N_B\}]$, which are not in $E'$. Therefore, $M'=w(E')= k^4\cdot (8\log(k)+4) + (2k^2\log k -k^2)\cdot 4(k-1) + (2k^2)\cdot 2(k + 2)\log k = k^4\cdot (8\log(k)+4) + 8k^3\log(k)-4k^3-8k^2\log(k)+ 4k^2 + 4k^3\log(k)+8k^2\log(k) = k^4\cdot (8\log(k)+4) + k^3(12\log(k)-4) +4k^2$, which completes the proof.
\end{proof}

\begin{lemma} \label{lem:max-cut}
Given $x,y\in \{0,1\}^{k^2}$, the graph $G_{x,y}$ contains a cut $C=E(S,V\setminus S)$ such that  $w(C) \ge M$ if and only if $\disj_{k^2}(x,y) = \false$.
\end{lemma}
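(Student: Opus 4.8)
The plan is to derive both directions from the structural facts already established for a maximum cut $C = E(S, V\setminus S)$, by reducing everything to the weight that $C$ contributes inside $E[A_1\cup A_2\cup B_1\cup B_2\cup\{N_A,N_B\}]$ --- the only part of the graph that depends on $x,y$. Since $M = M' + 4k$ where $M'$ is the constant from Claim~\ref{clm:max-cut:constant}, and since by that claim the weight of $C$ outside $E[A_1\cup A_2\cup B_1\cup B_2\cup\{N_A,N_B\}]$ is always exactly $M'$, a maximum cut has weight $\ge M$ if and only if the edges it contains inside $E[A_1\cup A_2\cup B_1\cup B_2\cup\{N_A,N_B\}]$ have total weight $\ge 4k$. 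So the whole lemma comes down to computing this inside-weight.

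For the forward direction, suppose $G_{x,y}$ has a cut of weight $\ge M$; then so does the maximum cut $C$. Let $i^*$ be the unique index with $a_1^{i^*}\in S$ (hence $b_1^{i^*}\in S$) and $j^*$ the unique index with $a_2^{j^*}\in S$ (hence $b_2^{j^*}\in S$), which exist by Claims~\ref{clm:max-cut:k^4 edges} and~\ref{clm:max-cut:same-one}. Because $N_A\notin S$, I would enumerate the cut edges inside $A_1\cup A_2\cup\{N_A\}$: they are $\{a_1^{i^*},N_A\}$ of weight $\sum_j x_{i^*,j}$, $\{a_2^{j^*},N_A\}$ of weight $\sum_i x_{i,j^*}$, the edges $\{a_1^{i^*},a_2^{j'}\}$ with $j'\ne j^*$ and $x_{i^*,j'}=0$, and the edges $\{a_1^{i'},a_2^{j^*}\}$ with $i'\ne i^*$ and $x_{i',j^*}=0$ (the edge $\{a_1^{i^*},a_2^{j^*}\}$, if present, joins two vertices of $S$ and is not cut). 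Invoking the design identity that the weight of all edges from a fixed $a_1^i$ to $A_2\cup\{N_A\}$ equals $k$, i.e.\ $\sum_j x_{i^*,j}+|\{j' : x_{i^*,j'}=0\}|=k$, these contributions sum to $2k-2\cdot\mathbf{1}[x_{i^*,j^*}=0]$; the symmetric count inside $B_1\cup B_2\cup\{N_B\}$ gives $2k-2\cdot\mathbf{1}[y_{i^*,j^*}=0]$. Hence the inside-weight is $4k-2\cdot\mathbf{1}[x_{i^*,j^*}=0]-2\cdot\mathbf{1}[y_{i^*,j^*}=0]$, which is $\ge 4k$ only when $x_{i^*,j^*}=y_{i^*,j^*}=1$; this exhibits an index certifying $\disj_{k^2}(x,y)=\false$.

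For the converse, given $(i,j)$ with $x_{i,j}=y_{i,j}=1$, I would exhibit an explicit cut $S$ of weight $M$: put $C_A,C_B$ and $a_1^i,a_2^j,b_1^i,b_2^j$ into $S$ and all other row vertices outside $S$; put $\bar C_A,N_A,N_B$ outside $S$; and for each $z\in\{1,2\}$ and $h$ place $t_{A_z}^h,t_{B_z}^h$ on one side and $f_{A_z}^h,f_{B_z}^h$ on the other according to the $h$-th bit of $i$ (for $z=1$) or of $j$ (for $z=2$), so that all $k^4$-weight $4$-cycles are fully cut while $Bin(a_1^i),Bin(a_2^j),Bin(b_1^i),Bin(b_2^j)$ are disjoint from $S$. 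This cut readily satisfies the conclusions of Claims~\ref{clm:max-cut:k^4 edges}--\ref{clm:max-cut:same-one}, so the same counting as in Claim~\ref{clm:max-cut:constant} shows its weight outside $E[A_1\cup A_2\cup B_1\cup B_2\cup\{N_A,N_B\}]$ is exactly $M'$; and the identity above, now with $x_{i,j}=y_{i,j}=1$, shows its inside-weight is exactly $4k$. Therefore its total weight is $M'+4k=M$.

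I expect the main obstacle to be the middle paragraph: getting the bookkeeping of the $x,y$-dependent edges exactly right, in particular correctly handling the two ``active'' row vertices $a_1^{i^*},a_2^{j^*}$ --- whose edges to $N_A$ are cut and carry the potentially large weights $\sum_j x_{i^*,j}$ and $\sum_i x_{i,j^*}$, and whose mutual edge is not cut --- and then using the conservation identity ``total weight to $A_2\cup\{N_A\}$ equals $k$'' to collapse the dependence on the whole input down to the single bit $x_{i^*,j^*}$. Everything else is routine once Claims~\ref{clm:max-cut:k^4 edges}--\ref{clm:max-cut:constant} are in hand.
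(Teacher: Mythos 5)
Your proposal is correct and follows essentially the same route as the paper's proof: both reduce the lemma to Claim~\ref{clm:max-cut:constant}'s constant contribution $M'$ outside $E[A_1\cup A_2\cup B_1\cup B_2\cup\{N_A,N_B\}]$, use the ``total weight to $A_2\cup\{N_A\}$ equals $k$'' identity to show the inside contribution is $4k$ exactly when $x_{i^*,j^*}=y_{i^*,j^*}=1$, and exhibit the same explicit cut for the converse. Your forward-direction bookkeeping is just a slightly more explicit version of the paper's inequality argument.
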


\begin{proof}
From Claim~\ref{clm:max-cut:constant}, we have that for any $x,y\in\{0,1\}^{k^2}$ and any maximum cut $C=E(S,V\setminus S)$ of $G_{x,y}$, the total weight of $C\setminus E[A_1\cup A_2 \cup B_1\cup B_2 \cup \{N_A,N_B\}]$ is $M'$. 

	

For the first direction of the proof, assume $\disj_{k^2}(x,y) = \false$. Therefore, there exists $(j_1,j_2)$ such that $x_{j_1,j_2} = y_{j_1,j_2} = 1$. We define $S$ as follows. We take $a_1^{j_1}, b_1^{j_1}, a_2^{j_2}, b_2^{j_2}, C_A, C_B$ into $S$. In addition, for every $z\in \{1,2\}$, we add $(T_{A_z}\cup F_{A_z} )\setminus Bin(a_z^{j_z})$ and $(T_{B_z}\cup F_{B_z} )\setminus Bin(b_i^{j_z})$ to $S$.

By construction, the conditions that are claimed to hold for a maximum cut in Claims~\ref{clm:max-cut:k^4 edges},~\ref{clm:max-cut:bin}, and~\ref{clm:max-cut:same-one} hold for $S$ and $C$ as defined above.
Therefore, the proof of Claim~\ref{clm:max-cut:constant} carries over, which means that $w(C\setminus E[A_1\cup A_2 \cup B_1\cup B_2 \cup \{N_A,N_B\}]) = M'$.
	
Now, the total weight of edges from $a_1^{j_1}$ to vertices in $A_2 \cup \{N_A\}$ is exactly $k$. All of the vertices in $A_2 \cup \{N_A\}$ except for $a_2^{j_2}$ are not in $S$, and since $x_{j_1,j_2} = 1$, there is no edge between $a_1^{j_1}$ and $a_2^{j_2}$, and therefore all of the edges from $a_1^{j_1}$ to vertices in $A_2 \cup \{N_A\}$ are in $C$. Similar arguments holds for $a_2^{j_2}, b_1^{j_1}$ and $b_2^{j_2}$, and therefore $w(C) \ge M' + 4k = M$.

For the other direction, assume there exist a maximum cut $C=E(S,V\setminus S)$ such that $w(C)\ge M = M' +4k$.
By Claim~\ref{clm:max-cut:constant} we have that $w(C\setminus E[A_1\cup A_2 \cup B_1\cup B_2 \cup \{N_A,N_B\}]) = M'$ and thus, $w(C\cap E[A_1\cup A_2 \cup B_1\cup B_2 \cup \{N_A,N_B\}]) \ge 4k$.

By Claim~\ref{clm:max-cut:same-one}, there exist two indices $0\le j_1^*,j_2^*\le k-1$ such that $a_1^{j_1^*},b_1^{j_1^*},a_2^{j_2^*},b_2^{j_2^*}\in S$, and no other vertex in $A_1\cup A_2\cup B_1 \cup B_2$ is in $S$. By Claim~\ref{clm:max-cut:k^4 edges} it holds that $N_A,N_B\notin S$, and therefore all the edges in $C\cap E[A_1\cup A_2 \cup B_1\cup B_2 \cup \{N_A,N_B\}]$ are connected to one of the vertices $a_1^{j_1^*},b_1^{j_1^*},a_2^{j_2^*},b_2^{j_2^*}$.
For each vertex in $a_1^{j_1^*},b_1^{j_1^*},a_2^{j_2^*},b_2^{j_2^*}$ the total weight of edges that are connected to it and to a vertex in $A_1\cup A_2 \cup B_1\cup B_2 \cup \{N_A,N_B\}$ is $k$. Since $w(C\cap E[A_1\cup A_2 \cup B_1\cup B_2 \cup \{N_A,N_B\}]) \ge 4k$, we have that all of those edges are in $C$, are therefore there is no edge between $a_1^{j_1^*}$ and $a_2^{j_2^*}$, and no edge between $b_1^{j_1^*}$ and $b_2^{j_2^*}$. Since for every $(i,j)$ there is an edge between $a_1^i$ and $a_2^j$ if and only if $x_{i,j} = 0$, we get $x_{j_1^*,j_2^*} = 1$, and from the same reason $y_{j_1^*,j_2^*} = 1$, thus $\disj_{k^2}(x,y)=\false$.
\end{proof}

\begin{proofof}{Theorem~\ref{thm:max-cut}}
We partition $V$ into two sets $V_A = A_1\cup A_2 \cup T_{A_1} \cup F_{A_1} \cup T_{A_2} \cup F_{A_2} \cup \{C_A,N_A,\bar{C}_A\}$, and $V_B = V\setminus V_A$.
Note that $n=\Theta(k)$, and therefore the parameter $K$ we use for $\disj$ is $K=k^2=\Theta(n^2)$.
The set $E_{cut}$ is the set of edges between $T_{A_1}\cup F_{A_1} \cup T_{A_2} \cup F_{A_2}$ and $T_{B_1}\cup F_{B_1} \cup T_{B_2} \cup F_{B_2}$, hence its size is $\Theta(\log k) = \Theta(\log n)$.
Lemma~\ref{lem:max-cut} gives that $\{G_{x,y}\}$ is a family of lower bounds graphs, and hence applying Theorem~ \ref{generallowerboundtheorem} implies a lower bound of $\Omega(k^2/\log k\log n)=\Omega(n^2/\log^2 n)$ rounds.
\end{proofof}

\subsubsection{A (1-$\epsilon$) approximation for max-cut}
\label{subsec:maxcutAlg}
While the above shows that finding an exact solution for max-cut requires a near-quadratic number of rounds, here we show that an almost exact solution for the unweighted case can be obtained in a near-linear number of rounds. Specifically, we show a simple distributed algorithm for computing a $(1-\epsilon)$-approximation of the maximum cut, proving the following.
\begin{theorem}\label{theorem:max cut approx}
Given a constant $\epsilon>0$, there is a randomized distributed algorithm in the \cgst{} model that computes a $(1-\epsilon)$-approximation of the maximum cut in $G$ in $\tilde{O}(n)$ rounds, with high probability.
\end{theorem}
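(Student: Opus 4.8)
The plan is to follow the subgraph-sampling idea of~\cite{DBLP:phd/de/Zelke2009}: build a sparse random subgraph $H\subseteq G$, compute $\mathrm{MaxCut}(H)$ \emph{exactly} by brute force (which is permitted, as local computation in the \cgst{} model is unbounded), and rescale the answer. The structural fact that makes this work is that $\mathrm{MaxCut}(G)\ge m/2$, where $m=|E|$, since a uniformly random bipartition cuts $m/2$ edges in expectation. Accordingly, fixing a constant $\epsilon>0$, I would sample each edge of $G$ independently with probability $p:=\min\{1,\,cn\log n/(\epsilon^2 m)\}$ for a suitable constant $c$; then $H$ has $pm=\tilde{O}(n)$ edges in expectation, and $O(n\log n/\epsilon^2)$ with high probability by a Chernoff bound, while $p\cdot\mathrm{MaxCut}(G)\ge pm/2=\Omega(n\log n/\epsilon^2)$.

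The algorithm is: (i) on a BFS tree, convergecast the degree sum so that every node learns $n$ and $m$ and computes $p$ locally ($O(D)$ rounds); (ii) for each edge, its lower-ID endpoint tosses an independent $p$-biased coin and informs the other endpoint, so both endpoints learn whether the edge lies in $H$; (iii) by pipelined convergecast to a leader followed by pipelined broadcast, make all $\tilde{O}(n)$ edges of $H$ known to every node, in $O(|E(H)|+D)=\tilde{O}(n)$ rounds; (iv) each node locally computes the lexicographically-first maximum cut $(\hat S,V\setminus\hat S)$ of $H$ and outputs its own side of $\hat S$ together with the estimate $\tfrac1p\,\mathrm{MaxCut}(H)$ for $\mathrm{MaxCut}(G)$. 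If $p=1$, i.e.\ $m=\tilde{O}(n)$, steps (iii)--(iv) simply collect all of $G$ and return an exact maximum cut. Since $D\le n$, every communication step runs in $\tilde{O}(n)$ rounds.

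The core of the proof is the sampling lemma: with high probability, $\mathrm{MaxCut}(H)\in[(1-\epsilon),(1+\epsilon)]\cdot p\cdot\mathrm{MaxCut}(G)$. Write $c^*=\mathrm{MaxCut}(G)$, and for a bipartition $S$ let $c_S$ and $c'_S$ denote its numbers of crossing edges in $G$ and in $H$. The lower bound follows from a single Chernoff bound applied to a fixed optimal bipartition $S^*$: since $\mathbb{E}[c'_{S^*}]=pc^*=\Omega(n/\epsilon^2)$, we get $c'_{S^*}\ge(1-\epsilon)pc^*$ except with probability $e^{-\Omega(\epsilon^2 pc^*)}$. For the upper bound we must union-bound over all $2^n$ bipartitions, and the subtlety is that a bipartition with few crossing edges in $G$ must still not cross too many in $H$; I would treat every $S$ uniformly by bounding $\Pr[c'_S>pc_S+\epsilon pc^*]$ rather than $\Pr[c'_S>(1+\epsilon)pc_S]$, which, by an additive Chernoff bound and $c_S\le c^*$, is at most $e^{-\Omega(\epsilon^2 pc^*)}=e^{-\Omega(n)}$, small enough to survive the union bound over the $2^n$ choices of $S$. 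Hence simultaneously $c'_S\le pc_S+\epsilon pc^*\le(1+\epsilon)pc^*$ for all $S$, so $\mathrm{MaxCut}(H)\le(1+\epsilon)pc^*$. Applying the same uniform bound to the $H$-maximizer $\hat S$ yields $pc_{\hat S}\ge\mathrm{MaxCut}(H)-\epsilon pc^*\ge(1-2\epsilon)pc^*$, so the bipartition $\hat S$ output by the nodes is itself a $(1-2\epsilon)$-approximate maximum cut of $G$; rescaling $\epsilon$ by a constant gives the claimed $(1-\epsilon)$-approximation, and likewise $\tfrac1p\,\mathrm{MaxCut}(H)\in[(1-\epsilon),(1+\epsilon)]\cdot\mathrm{MaxCut}(G)$.

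The main obstacle is exactly this upper-bound direction: the naive relative-error Chernoff bound degrades for bipartitions that are light in $G$, so one needs the comparison against $pc_S+\epsilon pc^*$ together with $p\cdot\mathrm{MaxCut}(G)=\Omega(n/\epsilon^2)$ — which is precisely what forces the choice $p=\Theta(n\log n/(\epsilon^2 m))$ and the use of $\mathrm{MaxCut}(G)\ge m/2$ — for the $2^n$ union bound to go through. The rest is routine: a Chernoff bound for $|E(H)|=\tilde{O}(n)$, and standard pipelined convergecast/broadcast to disseminate $H$, which implements step (iii) within $\tilde{O}(n)$ rounds.
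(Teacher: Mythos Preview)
Your proposal is correct and follows essentially the same approach as the paper: sample edges to get a sparse $\tilde{O}(n)$-edge subgraph, collect it in $\tilde{O}(n)$ rounds, and compute its max-cut exactly by unbounded local computation. The only difference is that the paper invokes the sampling guarantee as a black box from~\cite{DBLP:phd/de/Zelke2009} (Lemma~\ref{lemma:max cut}), whereas you reprove it directly via Chernoff plus a union bound over all $2^n$ bipartitions, exploiting $\mathrm{MaxCut}(G)\ge m/2$; your argument is sound and more self-contained.
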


Our algorithm will be a rather straightforward adaptation to the \cgst{} model of the sampling technique presented in \cite{DBLP:phd/de/Zelke2009}. The idea is to sample each edge independently with some probability $p$. This results in a subgraph of $G$, denoted by $G_p$, that has $O(mp)$ edges in expectation, where $m$ is the number of edges in $G$. We then have a single vertex $w\in V$ learn the entire graph $G_p$ in $O(mp+D)$ rounds, where $D$ is the diameter of $G$. Now, $w$ locally computes the maximum cut in $G_p$, denoted by $C$, and denote the value of this cut in $G_p$ by $c_p^*$. Then we simply return $C$ and the value $c_p^*/p$ as our approximation.

For the correctness of this procedure we use the following result from~\cite[Theorem 21]{DBLP:phd/de/Zelke2009}.
\begin{lemma}\label{lemma:max cut}
Given a graph $G=(V,E)$ with $n$ vertices and $m$ edges and a constant $\epsilon>0$, denote by $c^*$ the size of its maximum cut.
Let $G_p$ be a subgraph of $G$ obtained by independently sampling each edge $e\in E$ into $G_p$ with probability $p=n\log ^s (n)/m$ for a constant $s$ that depends only on $\epsilon$, and denote a maximum cut in $G_p$ by $C$ and its value by $c_p^*$. With high probability, $c_p^*/p$ is a $(1-\epsilon)$-approximation of $c^*$.
\end{lemma}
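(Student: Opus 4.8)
The plan is to prove the (formally stronger) cut‑uniform statement that, with high probability, the rescaled sampled graph $G_p$ simultaneously preserves \emph{all} near‑optimal cut values, which immediately yields the lemma. The whole argument rests on one elementary fact: a uniformly random bipartition of $V$ cuts each edge with probability $1/2$, so $c^* \ge m/2$. Since $p = n\log^s n / m$ (and we may assume $p \le 1$, since otherwise $G_p=G$ and the claim is trivial), this gives the crucial inequality $p\,c^* \ge pm/2 = \tfrac12 n\log^s n$; that is, the expected number of $G_p$‑edges crossing the optimal cut is $\Omega(n\log^s n)$, which is \emph{much larger} than $n$. This is exactly what makes an exponential union bound over all $2^n$ cuts affordable.

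For a fixed $S \subseteq V$, let $c_G(S)$ and $c_{G_p}(S)$ denote its cut value in $G$ and in $G_p$; then $c_{G_p}(S)$ is a sum of $c_G(S)$ independent $\mathrm{Bernoulli}(p)$ variables, with mean $p\,c_G(S)$. First I would establish a one‑sided bound holding for every cut at once. Since $c_G(S)\le c^*$, the variable $c_{G_p}(S)$ is stochastically dominated by $\mathrm{Bin}(c^*,p)$, so a Chernoff bound gives $\Pr[c_{G_p}(S) \ge (1+\epsilon)p\,c^*] \le \exp(-\epsilon^2 p\,c^*/3) \le \exp(-\epsilon^2 n\log^s n/6)$; a union bound over the $2^n$ cuts shows that, for $s$ a large enough constant depending only on $\epsilon$, with high probability $c_p^* \le (1+\epsilon)p\,c^*$, i.e.\ $c_p^*/p \le (1+\epsilon)c^*$. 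For the matching lower bound no union bound is needed: a Chernoff bound applied to the \emph{single} optimal cut $S^*$ of $G$ gives $\Pr[c_{G_p}(S^*) < (1-\epsilon)p\,c^*] \le \exp(-\epsilon^2 p\,c^*/2)$, which is $1/\poly(n)$, so with high probability $c_p^* \ge c_{G_p}(S^*) \ge (1-\epsilon)p\,c^*$, i.e.\ $c_p^*/p \ge (1-\epsilon)c^*$. Together these give $(1-\epsilon)c^* \le c_p^*/p \le (1+\epsilon)c^*$ with high probability, which is the claimed approximation (rescaling $\epsilon$ by a constant if a purely one‑sided guarantee is desired).

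To also certify that the cut $C$ returned by the algorithm is a $(1-\epsilon)$‑approximation \emph{in $G$}, I would add one further union bound of the same flavor: for every $S$ with $c_G(S) < (1-3\epsilon)c^*$, stochastic domination by $\mathrm{Bin}(\lceil(1-3\epsilon)c^*\rceil,p)$ together with a Chernoff bound gives $\Pr[c_{G_p}(S) \ge (1-2\epsilon)p\,c^*] \le \exp(-\Omega(\epsilon^2 n\log^s n))$, so with high probability every such cut has $c_{G_p}(S) < (1-2\epsilon)p\,c^* \le c_p^*$; hence the maximizer $C$ cannot be one of them, so $c_G(C) \ge (1-3\epsilon)c^*$. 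Taking $s$ large enough (as a function of $\epsilon$) that every $\exp(-\Theta(\epsilon^2 n\log^s n))$ term beats both the $2^n$ factor and the target $1/\poly(n)$ failure probability completes the proof; the assumption $\epsilon<1/3$ is without loss of generality.

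The main obstacle — and essentially the only nontrivial point — is the exponential union bound over all cuts: a priori a cut that is small in $G$ could blow up multiplicatively in $G_p$ and overtake the true optimum. This is controlled only because $p$ is calibrated so that $p\,c^* = \Theta(n\log^s n) \gg n$, which is precisely where both the lower bound $c^*\ge m/2$ and the choice $p = n\log^s n/m$ are used. Everything else reduces to routine Chernoff estimates, and a sparser sample would break the union bound while a denser one would make the algorithm of Theorem~\ref{theorem:max cut approx} too slow.
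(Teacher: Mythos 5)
Your proof is correct. Note, however, that the paper does not prove this lemma at all: it is imported verbatim as Theorem~21 of Zelke's thesis \cite{DBLP:phd/de/Zelke2009}, so there is no in-paper argument to compare against. What you give is the standard Karger-style sampling argument, and it is sound: the crux, which you correctly isolate, is that $c^*\ge m/2$ forces $p\,c^*\ge \tfrac12 n\log^s n\gg n$, so the Chernoff tail $\exp(-\Theta(\epsilon^2 p\,c^*))$ beats the $2^n$ union bound over cuts; the one-sided lower bound needs only the single optimal cut; and the stochastic-domination step (comparing $c_{G_p}(S)$ to $\mathrm{Bin}(c^*,p)$ so that all cuts are measured against the same threshold $(1+\epsilon)p\,c^*$) is handled properly. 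Your third paragraph, certifying that the returned cut $C$ is itself near-optimal in $G$ and not merely that the value $c_p^*/p$ is a good estimate, goes beyond what the lemma literally asserts but is exactly what Theorem~\ref{theorem:max cut approx} needs, since the algorithm outputs $C$ as the approximate cut. The only cosmetic caveats are the degenerate case $m\le n\log^s n$ (where $p$ must be capped at $1$ and the estimate taken with the capped value), which you flag, and the constant-factor rescaling of $\epsilon$ to turn the two-sided relative-error bound into the stated one-sided form, which you also flag.
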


To prove Theorem~\ref{theorem:max cut approx}, we show how to implement this sampling procedure in the \cgst{} model in near linear time.
\begin{proofof}{Theorem~\ref{theorem:max cut approx}}
Each vertex $v\in V$ checks for each of its neighbors $u\in N(v)$ whether $ID(v)<ID(u)$. If so, $v$ samples the edge $(u,v)$ into $G_p$ with probability $p=n\log ^s(n)/m$, where $s=s(\epsilon)$ is the constant needed in Lemma~\ref{lemma:max cut}. Now, the vertex $w$ with the smallest $ID(w)$ builds a BFS tree $T$ rooted at $w$, collects all the edges of $G_p$ over $T$, and computes $C$ and $c^*_p$, and sends this information back through $T$.

The correctness follows directly from Lemma~\ref{lemma:max cut}. For the round complexity, deciding on $w$ can be done in $O(n)$ rounds and building $T$ can be done in $O(D)$ rounds, where $D$ is the diameter of $G$.\footnote{One can replace this by building multiple BFS trees, one from each node, and dropping any procedure that encounters a BFS from a root with a smaller $ID$. This would require only $D$ rounds, but we will need to pay $O(n)$ rounds anyway in the rest of the algorithm.} Sending all edges of $G_p$ to $w$ can be done in $O(m_p+D)$ where $m_p$ is the number of edges in $G_p$. The expected value of $m_p$ is $O(mp)$ and by a standard Chernoff bound we can deduce that $G_p$ has $O(n\log ^s(n))$ edges, with high probability. The local computation done by $w$ requires no communication. Finally, downcasting the cut $C$ and its value can be completed in $O(n)$ rounds. Therefore, the total number of rounds required for the algorithm is $O(n\log ^s(n))$, with high probability.
\end{proofof}

\section{Lower bounds for bounded degree graphs} \label{sec:exact-bounded(b)}

In this section we show that finding exact solution for MaxIS, MVC, MDS and minimum 2-spanner remains difficult even if the graph has a bounded degree. In bounded degree graphs we can solve all these problems in $O(n)$ rounds by learning the whole graph. We show a nearly tight lower bound of $\tilde{\Omega}(n)$ rounds. It is easy to show a linear lower bound for MaxIS or MVC in a cycle that has bounded-degree 2. However, a cycle has a linear diameter $D$, and the lower bound follows from the fact that these problems are global problems that require $\Omega(D)$ rounds. Here, we show that $\tilde{\Omega}(n)$ rounds are required even if the graph has logarithmic diameter.
We mention that all the above problems admit efficient constant approximations in bounded degree graphs, where in general graphs currently there are no efficient constant approximations for MaxIS, MDS and minimum 2-spanner in the \cgst{} model. However, when it comes to exact solutions all these problems are still difficult. 

\subsection{Converting a graph to a bounded-degree graph} \label{bounded_reductions}

To show a lower bound for MaxIS, we start by describing a series of (non-distributed) reductions. Then, we explain how we apply these reductions on a family of lower bound graphs for MaxIS to obtain a new family of lower bound graphs for MaxIS where now all the graphs have a bounded degree.
We need the following definitions.
For a graph $G$, we denote by $\alpha(G)$ the size of maximum independent set in $G$. For a CNF formula $\phi$, we denote by $f(\phi)$ the maximum number of clauses that can be satisfied in $\phi$. We say that an assignment $\pi$ to the variables of $\phi$ is \emph{maximal} if it satisfies $f(\phi)$ clauses.

Our construction is based on a series of (non-distributed) reductions between MaxIS and max 2SAT instances. These reductions are applied on a family of lower bound graphs for MaxIS to obtain a new family of lower bound graphs for MaxIS which have bounded degrees.
First, we replace our graph $G$ with a CNF formula $\phi$, where $f(\phi)$ is determined by $\alpha(G)$. Working with $\phi$ instead of $G$ allows us to use the power of expander graphs and replace $\phi$ by a new equivalent CNF formula $\phi'$ where each variable appears only a constant number of times. Finally, $\phi'$ is replaced by a bounded-degree graph $G'$ such that $\alpha(G')$ is determined by $\alpha(G)$.
The first and last reductions are based on standard ideas, whereas the second one is inspired by \cite{papadimitriou1991optimization, lecture} and is the main ingredient that allows us eventually to convert our graph to a bounded-degree graph. 


\subsubsection*{From $G$ to $\phi$}

Given a graph $G=(V,E)$, we construct $\phi$ as follows. For each vertex $v \in V$, we have a variable $x_v$, and a clause $x_v$. For each edge $e=\{u,v\} \in E$, we have the clause $(\neg x_u \vee \neg x_v)$. Intuitively, this guarantees that we do not take both $u$ and $v$ to the independent set. The formula $\phi$ is the conjunction of all the clauses. We next show the following.

\begin{claim} \label{G_phi}
$f(\phi)=\alpha(G)+|E|$.
\end{claim}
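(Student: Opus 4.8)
The plan is to show both inequalities by translating between independent sets of $G$ and assignments to the variables $\{x_v\}_{v\in V}$. First I would fix some notation: for an assignment $\pi$, let $I_\pi = \{v \in V \mid \pi(x_v)=\true\}$ be the set of variables it sets to true. The key structural observation is that the edge-clauses $(\neg x_u \vee \neg x_v)$ behave very differently from the singleton clauses $x_v$: an edge-clause is \emph{falsified} exactly when both endpoints are true, i.e.\ exactly when the edge $\{u,v\}$ lies inside $I_\pi$, while a singleton clause $x_v$ is satisfied exactly when $v \in I_\pi$. Hence for any assignment $\pi$, the number of satisfied clauses equals $|I_\pi| + \bigl(|E| - |E(G[I_\pi])|\bigr)$, where $E(G[I_\pi])$ is the set of edges with both endpoints in $I_\pi$.

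For the direction $f(\phi) \geq \alpha(G)+|E|$, I would take a maximum independent set $I$ of $G$ with $|I|=\alpha(G)$ and let $\pi$ be the assignment with $I_\pi = I$. Since $I$ is independent, $E(G[I]) = \emptyset$, so by the formula above $\pi$ satisfies exactly $|I| + |E| = \alpha(G)+|E|$ clauses; thus $f(\phi) \geq \alpha(G)+|E|$.

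For the reverse direction $f(\phi) \leq \alpha(G)+|E|$, I would take a maximal assignment $\pi$ satisfying $f(\phi)$ clauses and argue we may assume $I_\pi$ is an independent set. Indeed, suppose some edge $e=\{u,v\}$ has both endpoints in $I_\pi$. Flipping $x_u$ to $\false$ changes the count as follows: we lose the singleton clause $x_u$ (one satisfied clause lost), but we gain back every edge-clause of the form $\{u,w\}$ with $w \in I_\pi$ (at least one, namely $e$, now becomes satisfied), and no previously satisfied clause not incident to $u$ is affected. So the number of satisfied clauses does not decrease, and strictly decreases the quantity $|E(G[I_\pi])|$; iterating this finitely many times yields an assignment $\pi'$, still satisfying $f(\phi)$ clauses, with $I_{\pi'}$ independent. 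Then $f(\phi) = |I_{\pi'}| + |E| \leq \alpha(G) + |E|$ since every independent set has size at most $\alpha(G)$. Combining the two inequalities gives $f(\phi) = \alpha(G)+|E|$.

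The only mild subtlety — not really an obstacle — is making the flipping argument rigorous: one must check that flipping $x_u$ never decreases the satisfied-clause count even when $u$ has several neighbors in $I_\pi$, which is immediate since each such neighbor's edge-clause goes from falsified to satisfied and the only clause we can lose is the single clause $x_u$. A termination argument (the nonnegative integer $|E(G[I_\pi])|$ strictly decreases at each step) handles the iteration. I would also note in passing that this construction transparently keeps the structure needed later: if $G$ came with a partition $V_A \cup V_B$, the variables split accordingly, each singleton clause depends only on one side, and an edge-clause crosses the cut only if the corresponding edge did — so applying this reduction to a family of lower bound graphs yields a max-2SAT instance respecting the same cut, which is exactly what the subsequent reductions need.
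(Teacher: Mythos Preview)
Your proposal is correct and follows essentially the same approach as the paper: both directions are proved by translating between independent sets and truth assignments, and the upper bound is obtained by a flipping argument that converts any optimal assignment into one whose true-set is independent without decreasing the number of satisfied clauses. Your exposition is slightly more explicit (writing the satisfied-clause count as $|I_\pi| + |E| - |E(G[I_\pi])|$ and giving the termination measure), but the argument is the same.
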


\begin{proof}
We first show that $f(\phi) \geq \alpha(G)+|E|$. Let $U$ be an independent set of size $\alpha(G)$. It is easy to see that the assignment that gives $T$ to all the variables $\{x_v|v \in U\}$ satisfies $\alpha(G)+|E|$ clauses, which shows $f(\phi) \geq \alpha(G)+|E|$.

We next prove that $f(\phi) \leq \alpha(G)+|E|$.
First, there is a maximal assignment $\pi$ that satisfies all the edge clauses. This holds since we can convert any maximal assignment to a maximal assignment that satisfies all the edge clauses as follows. While there is an edge clause that is not satisfied $C=(\neg x_u \vee \neg x_v)$ we can set the variable $x_u$ to $F$. Now $C$ and all other edge clauses that have $x_u$ are satisfied, but the variable clause $x_u$ is not satisfied. The number of satisfied clauses in this process can only increase. We can continue in the same manner and get an assignment with the same maximal number of satisfied clauses, where all the edge clauses are satisfied. We denote this assignment by $\pi$.

Let $U=\{v| \pi(x_v)=T\}$. Since $\pi$ satisfies all the edge clauses, for each edge $\{u,v\} \in E$, at least one of $u$ and $v$ is not in $U$, which shows that $U$ is an independent set. The number of satisfied clauses is exactly $f(\phi)=|U|+|E|$.
Since $|U| \leq \alpha (G)$, we get $f(\phi) \leq \alpha(G) + |E|$.
\end{proof}

\subsubsection*{Expander graphs}

In order to convert $\phi$ to $\phi'$ we use the following graphs from \cite{papadimitriou1991optimization}.

\begin{claim} \label{expander_graph}
For every $d$, there is a graph $G_d=(V_d,E_d)$ with $\Theta(d)$ vertices,  maximum degree 4 and diameter $O(\log{d})$, with the following property. $G_d$ has a set $D$ of $d$ \emph{distinguished} vertices of degree 2, and for every cut $(S, \overline{S})$ in $G_d$, the number of edges crossing the cut is at least $\min{\{|D \cap S|, |D \cap \overline{S}|\}}$.
\end{claim}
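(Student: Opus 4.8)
The plan is to reuse the standard bounded-degree expander construction (as in~\cite{papadimitriou1991optimization}) and verify it meets the three quantitative requirements: $\Theta(d)$ vertices of maximum degree $4$, a set $D$ of $d$ distinguished vertices of degree $2$, and the edge-isoperimetric bound relative to $D$. I would start from a family of $\Delta_0$-regular expanders $H_d$ on $\Theta(d)$ vertices, where $\Delta_0$ is a large absolute constant, with edge expansion (Cheeger constant) at least a constant $c_0\ge 1$ and diameter $O(\log d)$. Such families are classical (explicit Ramanujan-type graphs, or random $\Delta_0$-regular graphs with high probability), and the Cheeger constant can be pushed above any prescribed constant by taking $\Delta_0$ large (via the spectral gap, or by taking a constant power of a smaller expander).

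Next I would reduce the maximum degree to $4$ while introducing the distinguished vertices, by replacing each vertex $v$ of $H_d$ with a constant-size gadget $W_v$ that has: $\Delta_0$ \emph{port} vertices of internal degree at most $3$ (each absorbing exactly one of the $\Delta_0$ edges incident to $v$, so their final degree is at most $4$); one distinguished vertex $\delta_v$ of degree exactly $2$ with no external edges; all remaining vertices of degree at most $4$; and enough internal connectivity that any cut separating $\delta_v$ from the ports is forced to cross many gadget edges. Routing the $\Delta_0$ edges of $v$ one-to-a-port yields the graph $G_d$, and I set $D=\{\delta_v\}$. Then $|V(G_d)|=\Theta(d)$ (constant blow-up per vertex), the maximum degree is $4$, and $D$ has size $d$ with every $\delta_v$ of degree $2$; the diameter is $O(\log d)$ since the gadgets have $O(1)$ diameter and any two gadgets are linked by an $H_d$-path of length $O(\log d)$. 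These parameter checks are routine.

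The technical heart is the isoperimetric bound. Given a cut $(S,\overline S)$ of $G_d$, I would induce a cut $(\widehat S,\overline{\widehat S})$ of $H_d$ by declaring $v\in\widehat S$ iff $\delta_v\in S$, and split the crossing edges of $G_d$ into gadget-internal edges and routed $H_d$-edges. Calling a port \emph{violating} if it lies on the opposite side of the cut from the $\delta$ of its own gadget, every routed $H_d$-edge with two non-violating endpoints crosses in $G_d$ exactly when the corresponding $H_d$-edge crosses, so the routed edges contribute at least $|E_{H_d}(\widehat S,\overline{\widehat S})|$ minus the number of $H_d$-edges incident to a violating port; the gadget-internal edges are then charged against the violating ports using the internal connectivity of the gadgets; and $|E_{H_d}(\widehat S,\overline{\widehat S})|\ge c_0\min(|\widehat S|,|\overline{\widehat S}|)\ge\min(|D\cap S|,|D\cap\overline S|)$ since $c_0\ge 1$. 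Combining these, together with the relation between the number of split gadgets, the number of violating ports, and the internal crossings, yields the claimed inequality. I expect this charging to be the main obstacle: a degree-$2$ vertex can only ever contribute two crossing edges when it sits alone on one side of a cut, so the gadget must be shaped so that ``many ports disagreeing with $\delta_v$'' forces ``many internal crossings'', and making this survive the degree-$4$ budget is precisely what pins down the numerical choices in the statement; everything else is bookkeeping.
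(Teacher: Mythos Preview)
Your plan is in the right family of ideas but the structure is inverted relative to the paper, and that inversion is exactly what makes your ``main obstacle'' hard. The paper does not replace expander vertices by gadgets; it goes the other way. It takes the $d$ distinguished vertices as roots of full binary trees of constant depth, each with at least $1/c$ leaves, and then places a single degree-$3$ $c$-expander on the union of all leaves. The root of each tree has degree $2$, the internal tree vertices and leaves have degree at most $4$, and the diameter is $O(\log d)$ because the expander on the leaves has logarithmic diameter and each tree has constant depth.

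With this layout the charging is one line. For a cut $(S,\overline S)$, let $d_1$ (resp.\ $d_2$) be the number of trees entirely in $S$ (resp.\ $\overline S$), and $d_3$ the number of split trees; each split tree already gives one internal cut edge. If $S$ is the side with fewer leaves, the expander contributes at least $c\cdot(\text{leaves in }S)\geq c\cdot d_1\cdot(1/c)=d_1$ cut edges, so the total is at least $d_1+d_3\geq|D\cap S|\geq\min(|D\cap S|,|D\cap\overline S|)$. The amplification ``one distinguished vertex $\mapsto$ $1/c$ leaves'' is what converts the expander's expansion constant $c<1$ into the required constant $1$ relative to $D$.

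Your scheme lacks this amplification, and the charging you sketch does not go through as written. Consider the cut where $S=\{\delta_v\}$ for a single $v$: then every port of $W_v$ is violating, so your bound on routed crossings becomes $\Delta_0-\Delta_0=0$, and you would need at least $\Delta_0$ internal crossings in $W_v$; but $\delta_v$ has degree $2$, so the gadget can supply only two. The global inequality you need in this case is trivial ($\geq 1$), so the conclusion is fine, but it shows that ``internal crossings $\geq$ violating ports'' is false for any gadget with $\deg(\delta_v)=2$ and $\Delta_0\geq 3$. Fixing this would force you to reintroduce exactly the one-to-many amplification the paper uses; at that point the two constructions coincide.
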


\begin{proof}
We follow the construction from \cite{papadimitriou1991optimization}, and show that it has a small diameter. A graph $G$ is a $c$-expander, for a constant $c$, if for every $|S| \leq \frac{n}{2}$ in $G$, the set $S$ has at least $c|S|$ neighbours outside $S$.
In \cite{ajtai1994recursive}, there is a construction that for every $n$ produces an expander of size $n$ with maximum degree 3 and a constant expansion rate $c$.

The graph $G_d$ is constructed as follows. It has a set $D$ of $d$ distinguished vertices. For each $v \in D$ we construct a full binary tree of constant size that has at least $\frac{1}{c}$ leaves, $v$ is the root of the tree. Next, all the leaves of all the binary trees are connected by the expander graph from \cite{ajtai1994recursive}. $G_d$ clearly has $\Theta(d)$ vertices and maximum degree 4, and all the vertices in $D$ have degree 2. We next bound its diameter.

We look at the expander graph $G_E$ induced on the leaves of the binary trees. Let $n = \Theta(d)$ be the number of vertices in $G_E$. From the expander properties, for each vertex $v \in G_E$, the neighbourhood of radius $i$ of $v$ in $G_E$ is either of size greater than $\frac{n}{2}$ or satisfies $N^i(v) \geq (1+c)N^{i-1}(v)$. This means that for a certain $i=O(\log{d})$, $N^i(v) > \frac{n}{2}$. Since this holds for any vertex in $G_E$, every two leaves of the binary trees in $G_d$ have a vertex in the intersection of their $O(\log{d})$-neighbourhoods, which gives a path between them of length $O(\log{d})$. Since the diameter of the binary trees in constant, the whole graph $G_d$ has diameter $O(\log{d})$.

We next show that $G_d$ has the last property. Let $(S, \overline{S})$ be a cut in $G_d$, and let $d_1,d_2$ be the number of binary trees that are contained entirely in $S$ and $\overline{S}$, respectively. Let $d_3 = d - d_1 - d_2$ be the rest of the trees, each of them has at least one edge in the cut. Let $n_1,n_2$ be the number of leaves in $S$ and $\overline{S}$, respectively, and assume without loss of generality that $n_1 \leq n_2$. From the expander properties, there are at least $c \cdot n_1$ edges in the cut. Now, each of the $d_1$ trees in $S$ has at least $\frac{1}{c}$ leaves, which gives $n_1 \geq \frac{1}{c} \cdot d_1$, which implies that the number of edges in the cut is at least $d_1 + d_3 \geq \min{\{|D \cap S|, |D \cap \overline{S}|\}}$, as needed.
\end{proof}

\subsubsection*{From $\phi$ to $\phi'$}

We next explain how to convert $\phi$ to a new formula $\phi'$ where each variable appears a constant number of times and $f(\phi')$ is determined by $f(\phi)$. This is inspired by \cite{papadimitriou1991optimization, lecture}.

For each variable $x_v$ in $\phi$, let $d_v$ be the number of appearances of $x_v$ in $\phi$ (from the construction, $d_v = deg_G(v) + 1$). In $\phi'$, for each variable $x_v \in \phi$, we have $d_v$ different variables $D =\{x_v^i| 1 \leq i \leq d_v\}$. Each appearance of $x_v$ in $\phi$ is replaced by appearance of one of the new variables in $\phi'$, such that each one of the new variables appears in exactly one clause. For example, the clause $(\neg x_u \vee \neg x_v) \in \phi$ is replaced by a corresponding clause $(\neg x^i_u \vee \neg x^j_v) \in \phi'$ for some $i,j$. In addition, to guarantee that all the variables $\{x_v^i| 1 \leq i \leq d_v\}$ have the same value, we add new variables and clauses, as follows.

Let $G_{d_v}$ be the graph from Claim \ref{expander_graph} with parameter $d_v$. We identify the variables $\{x_v^i| 1 \leq i \leq d_v\}$ with the distinguished vertices in $G_{d_v}$, and we add additional $\Theta(d_v)$ variables of the form $\{x_v^i| d_v < i \leq \Theta(d_v)\}$, each of them is identified with one non-distinguished vertex in $G_{d_v}$.
For each edge $e=\{x_v^i,x_v^j\}$ in $G_{d_v}$, we add to $\phi'$ the clauses $(\neg x_v^i \vee x_v^j)$ and $(\neg x_v^j \vee x_v^i)$. Note that these clauses are equivalent to the conditions $x_v^i \rightarrow x_v^j$ and $x_v^j \rightarrow x_v^i$ or equivalently $x_v^i=x_v^j$.
We do the above for each vertex $v$. We call these new clauses added the \emph{expander} clauses, and the rest of the clauses the \emph{original} clauses.
Each variable appears only a constant number of times in $\phi'$. If $x_v^i \in D$ it appears once in an original clause $x_v$ appeared in. In addition, each vertex $x_v^i$ appears in two clauses for each edge adjacent to $x_v^i$ in $G_{d_v}$. Since $G_{d_v}$ has maximum degree 4, and the degree of a vertex in $D$ is 2, each variable appears in at most 8 clauses. In addition, since in the two clauses added for an edge in $G_{d_v}$, a variable appears once in a positive form and once in a negative form, each literal (a variable $x$ or its negation $\neg x$) appears at most 4 times in $\phi'$.
We next show the following.

\begin{claim} \label{expander}
There is always a maximal assignment for $\phi'$ where all the expander clauses are satisfied.
\end{claim}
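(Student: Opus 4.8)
The plan is to start from an arbitrary maximal assignment $\pi'$ of $\phi'$ and surgically modify it so that every expander clause becomes satisfied, without ever decreasing the number of satisfied clauses; since $f(\phi')$ is the maximum, the modified assignment will then also be maximal, which is exactly the claim. First I would record two bookkeeping facts about the gadget attached to a single variable $x_v$ of $\phi$. Fix $v$ and let $S_v=\{x_v^i : \pi'(x_v^i)=\true\}$ be the set of ``true'' vertices inside the copy of $G_{d_v}$. For an edge $\{x_v^i,x_v^j\}$ of $G_{d_v}$, exactly one of its two expander clauses $(\neg x_v^i \vee x_v^j),(\neg x_v^j \vee x_v^i)$ is unsatisfied when $x_v^i$ and $x_v^j$ receive different values, and both are satisfied otherwise. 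Hence the number of expander clauses of the $v$-gadget unsatisfied by $\pi'$ equals the number of edges of $G_{d_v}$ crossing the cut $(S_v,\overline{S_v})$, which by Claim~\ref{expander_graph} is at least $\min\{|D\cap S_v|, |D\cap \overline{S_v}|\}$, where $D$ is the set of distinguished vertices. On the other hand, by the construction of $\phi'$, each distinguished variable $x_v^i$ occurs in exactly one original clause, while the non-distinguished variables of the gadget occur in no original clause at all.

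Then I would process the variables $v$ of $\phi$ one at a time (equivalently, all at once, since the variable sets $\{x_v^i\}$ are pairwise disjoint over $v$). For the current $v$, let $b\in\{\true,\false\}$ be a value held by at least half of the distinguished vertices of $G_{d_v}$ under the current assignment (ties broken arbitrarily), and reassign every variable $x_v^i$ of the gadget to $b$. I claim the net change in the number of satisfied clauses is nonnegative. On the gain side: after the reassignment all vertices of $G_{d_v}$ agree, so every expander clause of the $v$-gadget is satisfied, and by the first bookkeeping fact we recover at least $\min\{|D\cap S_v|, |D\cap \overline{S_v}|\}$ clauses. On the loss side: the only variables whose value changes are the distinguished vertices disagreeing with $b$, of which there are exactly $\min\{|D\cap S_v|, |D\cap \overline{S_v}|\}$; each such variable lies in exactly one original clause, so at most $\min\{|D\cap S_v|, |D\cap \overline{S_v}|\}$ clauses go from satisfied to unsatisfied, and no other clause is affected (the non-distinguished variables we may also flip appear only in the now-fully-satisfied expander clauses of this gadget). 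Finally, since only variables of the $v$-gadget change value and expander clauses of a gadget for $u\neq v$ involve only that gadget's variables, all expander clauses of previously processed gadgets remain satisfied; the original clauses shared between gadgets may fluctuate, but that fluctuation is already charged in the loss bound above. Combining these, the satisfied-clause count does not decrease at any step.

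After all variables of $\phi$ have been processed, every expander clause of $\phi'$ is satisfied, and the resulting assignment satisfies at least $f(\phi')$ clauses; since $f(\phi')$ is the maximum, it satisfies exactly $f(\phi')$, i.e.\ it is a maximal assignment, as required. I do not foresee a genuine obstacle here: the only delicate point is making the gain and the loss cancel exactly, i.e.\ checking that the number of distinguished vertices flipped equals the minority count and that Claim~\ref{expander_graph} returns at least that many formerly-violated expander clauses --- which is precisely the reason the bounded-degree expander of Claim~\ref{expander_graph} is used in the construction.
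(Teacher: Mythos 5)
Your proposal is correct and follows essentially the same argument as the paper: take a maximal assignment, and for each gadget reassign all its variables to the majority value of the distinguished vertices, charging the at most $\min\{|D\cap S_v|,|D\cap\overline{S_v}|\}$ lost original clauses against the at least equally many newly satisfied expander clauses guaranteed by the cut property of Claim~\ref{expander_graph}. The bookkeeping details (one unsatisfied expander clause per cut edge, non-distinguished variables appearing only in expander clauses) match the paper's proof.
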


\begin{proof}
Let $\pi$ be a maximal assignment for $\phi'$, we show how to change $\pi$ to a new maximal assignment where all the expander clauses are satisfied. While there is a vertex $v$ where not all the variables $X = \{x_v^i| 1 \leq i \leq \Theta(d_v)\}$ receive the same value in $\pi$, we do the following.

Let $D =  \{x_v^i| 1 \leq i \leq d_v\}$.
First, note that variables in $X \setminus D$ appear only in expander clauses, and that if all the variables in $X$ receive the same value, all the expander clauses are satisfied. Let $S= \{x_v^i | \pi(x_v^i)=T\}$, let $Y = D \cap S$, and let $Z = D \cap \overline{S}$. We define a new assignment $\pi'$ that is equal to $\pi$ on all variables not in $X$. We set $\pi'$ on variables of $X$ as follows. If $|Y| \geq |Z|$, we set $\pi'(x)=T$ for all $x \in X$, and otherwise we set $\pi'(x)=F$ for all $x \in X$. This guarantees that all the expander clauses of variables in $X$ are satisfied.

We next show that the number of clauses satisfied by $\pi'$ is at least as the number of clauses satisfied by $\pi$. For this, we look at all the edges in the cut $(S,\overline{S})$ in $G_{d_v}$.
Each of these edges is an edge of the form $(x_v^i,x_v^j)$ where exactly one of $\pi(x_v^i),\pi(x_v^j)$ is equal to $T$. This means that one of the clauses  $(\neg x_v^i \vee x_v^j)$ and $(\neg x_v^j \vee x_v^i)$ is not satisfied by $\pi$. In $\pi'$ all the expander clauses are satisfied, which means that for each edge in the cut $(S,\overline{S})$ there is one clause satisfied by $\pi'$ and not by $\pi$.
On the other hand, let $W$ be the smaller set between $Y$ and $Z$. Each of the $|W|$ vertices in $W$ appears in one original clause that may be satisfied by $\pi$ but not by $\pi'$. These are the only clauses in $\phi'$ that may be satisfied by $\pi$ and not by $\pi'$ since variables in $X \setminus D$ appear only in expander clauses, and variables not in $X$ have the same assignment in $\pi$ and $\pi'$.
From Claim \ref{expander_graph}, the number of edges in $G_{d_v}$ that cross the cut $(S,\overline{S})$ is at least $\min\{|Y|,|Z|\}$, which proves that the number of clauses satisfied by $\pi'$ can only increase with respect to $\pi$.

We continue in the same manner: move to the next variable $u$ such that not all the variables $\{x_u^i| 1 \leq i \leq \Theta(d_u)\}$ receive the same value in $\pi$, until we get an assignment $\pi'$ where all the expander clauses are satisfied. As explained, in the process the number of satisfied clauses can only increase, which proves the claim.
\end{proof}

Let $m_{exp}$ be the number of expander clauses in $\phi'$.
From Claim \ref{expander}, we get the following.

\begin{corollary} \label{phi'_phi}
$f(\phi')=f(\phi)+m_{exp}$.
\end{corollary}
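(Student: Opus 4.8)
The plan is to prove the two inequalities $f(\phi') \ge f(\phi) + m_{exp}$ and $f(\phi') \le f(\phi) + m_{exp}$ separately, with the second (harder-looking) direction essentially handed to us by Claim~\ref{expander}. The only thing that really needs care is the bookkeeping of the bijection between the clauses of $\phi$ and the \emph{original} clauses of $\phi'$.

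For the lower bound, I would start from a maximal assignment $\pi$ of $\phi$, satisfying exactly $f(\phi)$ clauses, and lift it to an assignment $\pi'$ of $\phi'$ by setting $\pi'(x_v^i) = \pi(x_v)$ for \emph{every} copy $x_v^i$ of $x_v$, including the non-distinguished auxiliary copies $x_v^i$ with $d_v < i \le \Theta(d_v)$. Since all copies of a given original variable then receive the same value, every expander clause — which encodes an equality $x_v^i = x_v^j$ between two copies — is satisfied, contributing all $m_{exp}$ of them. Moreover, each original clause of $\phi'$ is a renamed copy of a clause of $\phi$ in which each occurrence of a literal on $x_v$ is replaced by the same literal on some copy $x_v^i$; under $\pi'$ such a clause has exactly the same truth value as the corresponding clause under $\pi$. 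Hence $\pi'$ satisfies at least $f(\phi) + m_{exp}$ clauses, giving $f(\phi') \ge f(\phi) + m_{exp}$.

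For the upper bound, I would invoke Claim~\ref{expander} to obtain a maximal assignment $\pi'$ of $\phi'$ that satisfies all $m_{exp}$ expander clauses; thus $\pi'$ satisfies exactly $f(\phi') - m_{exp}$ original clauses. Because $\pi'$ satisfies every expander clause and $G_{d_v}$ is connected, for each original variable $x_v$ all of its copies $\{x_v^i : 1 \le i \le \Theta(d_v)\}$ receive one common value in $\pi'$ (the expander clauses chain the equalities across $G_{d_v}$). Define $\pi(x_v)$ to be this common value. By the same renaming correspondence as above, the number of original clauses of $\phi'$ satisfied by $\pi'$ equals the number of clauses of $\phi$ satisfied by $\pi$, which is at most $f(\phi)$. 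Therefore $f(\phi') - m_{exp} \le f(\phi)$, and combining with the lower bound yields the claimed equality.

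I do not expect a genuine obstacle here: the corollary is essentially immediate once Claim~\ref{expander} is available. The points to state carefully are that the original clauses of $\phi$ and $\phi'$ are in bijection (each literal on $x_v$ being replaced by a literal on a single copy $x_v^i$), and that satisfying all expander clauses of a single variable $v$ forces all of $v$'s copies — distinguished and auxiliary alike — to share one truth value since $G_{d_v}$ is connected.
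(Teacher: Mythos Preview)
Your proof is correct and follows essentially the same two-inequality argument as the paper: lift a maximal assignment of $\phi$ to $\phi'$ for the lower bound, and for the upper bound invoke Claim~\ref{expander} and project back to $\phi$. Your explicit mention that connectedness of $G_{d_v}$ is what forces all copies of $x_v$ to agree is a detail the paper leaves implicit, but otherwise the arguments coincide.
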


\begin{proof}
Let $\pi$ be a maximal assignment for $\phi$. For all $v,i$ we define $\pi'(x^i_v)=\pi(x_v)$. This assignment satisfies all the expander clauses since $\pi'(x^i_v)=\pi'(x^j_v)$ for all $i,j$, and exactly $f(\phi)$ of the original clauses, proving $f(\phi') \geq f(\phi)+m_{exp}$.

On the other hand, from Claim \ref{expander}, there is a maximal assignment $\pi'$ that satisfies all the expander clauses. This means that for all $v$, $\pi'(x^i_v)=\pi'(x^j_v)$ for all $i,j$. Hence, this assignment corresponds to an assignment $\pi$ for $\phi$ where $\pi(x_v)=\pi'(x_v^1)$. The number of clauses satisfied by $\pi$ is at most $f(\phi)$, which gives that the number of original clauses satisfied in $\phi'$ by $\pi'$ is at most $f(\phi)$. Since $\pi'$ is a maximal assignment, we get $f(\phi') \leq f(\phi)+m_{exp}$, which completes the proof.
\end{proof}

\subsubsection*{From $\phi'$ to $G'$}

We next explain how to convert $\phi'$ to a bounded-degree graph $G'$ such that $\alpha(G')=f(\phi')$.

In $\phi'$ all the clauses have one or two literals, where a literal is a variable or its negation. For each clause $C$ of the form $\ell$, we add to $G'$ a new vertex $v_{\ell}^C$, and for each clause $C=(\ell_1 \vee \ell_2)$, we add to $G'$ two new vertices $v_{\ell_1}^C,v_{\ell_2}^C$ with an edge between them.
In addition, we add the following edges. For each variable $x$ we add the edges $\{v_x^C,v_{\neg x}^{C'}\}$ to $G'$ for all $C,C'$ where $x$ is a literal in $C$ and $\neg x$ is a literal in $C'$. This guarantees that at most of them is added to an independent set.

$G'$ has a bounded degree because each vertex $v_{\ell}^C$ is connected to at most one additional vertex from the clause $C$, and also to vertices of the form $v_{\neg {\ell}}^{C'}$. However, each literal appears in $\phi'$ at most 4 times, which means that the degree of $v_{\ell}^C$ is at most 5.

We next prove the following claim.

\begin{claim} \label{phi'_G'}
$\alpha(G') = f(\phi')$.
\end{claim}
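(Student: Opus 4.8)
The plan is to establish the claim $\alpha(G') = f(\phi')$ via two inequalities, mirroring the structure of Claim~\ref{G_phi}. The key object is the correspondence between independent sets in $G'$ and assignments to $\phi'$: each vertex $v_\ell^C$ of $G'$ corresponds to a pair (clause $C$, literal $\ell$ occurring in $C$), and selecting $v_\ell^C$ into an independent set should be read as ``clause $C$ is satisfied via literal $\ell$''. The edges within a clause gadget ensure at most one vertex per clause is chosen, and the cross edges $\{v_x^C, v_{\neg x}^{C'}\}$ ensure that no independent set simultaneously uses a literal $x$ and its negation $\neg x$, so that the selected literals are consistent with some assignment.

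For the direction $\alpha(G') \ge f(\phi')$, I would take a maximal assignment $\pi$ for $\phi'$ satisfying $f(\phi')$ clauses. For each satisfied clause $C$, pick one literal $\ell$ in $C$ with $\pi(\ell) = \true$, and put $v_\ell^C$ into a set $I$. I claim $I$ is independent: within a clause gadget only one vertex is chosen, and an edge $\{v_x^C, v_{\neg x}^{C'}\}$ cannot have both endpoints in $I$ because that would require $\pi(x) = \true$ and $\pi(\neg x) = \true$. Hence $\alpha(G') \ge |I| = f(\phi')$.

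For the direction $\alpha(G') \le f(\phi')$, I would take a maximum independent set $I$ in $G'$, with $|I| = \alpha(G')$. Since the cross edges forbid choosing both a vertex labeled by literal $x$ (for any clause) and a vertex labeled by $\neg x$, the set of literals appearing as labels of vertices in $I$ is consistent, so it extends to an assignment $\pi$ (set remaining variables arbitrarily). Since each clause gadget contributes at most one vertex to $I$, the map sending $v_\ell^C \in I$ to the clause $C$ is injective; and every such $C$ is satisfied by $\pi$ because its chosen literal $\ell$ has $\pi(\ell) = \true$. Therefore $\pi$ satisfies at least $|I|$ clauses, giving $f(\phi') \ge |I| = \alpha(G')$. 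Combining the two inequalities yields $\alpha(G') = f(\phi')$.

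I do not expect a serious obstacle here; the main point requiring care is verifying that the two types of edges in $G'$ (intra-clause edges and literal-vs-negation edges) are exactly what is needed, and in particular that consistency of the literal labels in $I$ really does let us build a well-defined assignment — this hinges on the fact that, for a fixed variable $x$, all vertices labeled $x$ (across different clauses) are mutually compatible with being ``true'' while all those labeled $\neg x$ are mutually compatible with being ``false'', and the cross edges rule out the mixed case. The only mild subtlety is handling unsatisfied clauses and unconstrained variables when going from $I$ to $\pi$, which is resolved by simply making arbitrary choices there since they only help.
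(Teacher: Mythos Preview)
Your proposal is correct and follows essentially the same approach as the paper's proof: both directions are argued via the natural correspondence between independent sets in $G'$ and partial assignments, using the intra-clause edges to ensure at most one vertex per clause and the cross edges $\{v_x^C, v_{\neg x}^{C'}\}$ to enforce consistency of the chosen literals.
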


\begin{proof}
Let $\pi'$ be a maximal assignment for $\phi'$, we build an independent set of size $f(\phi')$ as follows. For each satisfied clause $C$ in $\phi'$, we choose one of the satisfied literals in $C$, $\ell$, and add the corresponding vertex $v_{\ell}^C$ to the set $I$. From the construction, $|I|=f(\phi')$, we next show that $I$ is independent. In all the edges that correspond to clauses, we add at most one vertex to $I$ as needed. For an edge of the form $\{v_x^C,v_{\neg x}^{C'}\}$, we add at most one vertex to $I$, since exactly one of $x,\neg x$ is satisfied by $\pi'$. This shows that $\alpha(G') \geq f(\phi')$.

In the other direction, let $I$ be an independent set of size $\alpha(G')$ in $G'$. Since $G'$ has all the edges of the form $\{v_x^C,v_{\neg x}^{C'}\}$, there are no two vertices of the form $v_x^C,v_{\neg x}^{C'}$ in $I$. We construct an assignment $\pi'$ as follows. We set $\pi'(x)=T$ if there is a vertex of the form $v_x^C$ in $I$, and we set it $\pi'(x)=F$ otherwise. The number of satisfied clauses in $\pi'$ is at least $|I|$: let $v_{\ell}^C \in I$, then $\pi'(\ell)=T$ by the definition of $\pi'$ which means that $C$ is satisfied. Also, from each clause $C=(\ell_1 \vee \ell_2)$ there is at most one vertex in $I$ because we have the edge $\{v_{\ell_1}^C,v_{\ell_2}^C\}$ in $G'$. This means that for any vertex in $I$ we have a different satisfied clause. Also the number of satisfied clauses is at most $f(\phi')$, which shows $\alpha(G') = |I| \leq f(\phi')$. This completes the proof.
\end{proof}

\subsection{A lower bound for MaxIS}

We next explain how we use the reductions described in Section \ref{bounded_reductions} to get a lower bound for MaxIS in bounded degree graphs. We use the construction from \cite{DBLP:conf/wdag/Censor-HillelKP17}. In this construction, Alice and Bob get inputs $x,y$ of size $k^2$, and construct a graph $G_{x,y}$ with $n_G = \Theta(k)$ vertices, constant diameter and a cut of size $\Theta(\log{k})$ such that $G_{x,y}$ has a minimum vertex cover of a certain size $M$ if and only if $DISJ(x,y)=False$. Since the complement of a minimum vertex cover is a MaxIS it follows that $G_{x,y}$ has a MaxIS of size $Z = n_G -M$ if and only if $DISJ(x,y)=False$. In this construction the number of edges is $\Theta(n_G^2)$ and in particular the degrees are not bounded.

\subsubsection*{Applying the reductions}

In order to get a bounded degree graph construction we let Alice and Bob apply the reductions described in Section \ref{bounded_reductions} on the graph $G_{x,y}$, such that each of them applies the reduction on its part, and both of them apply it on parts related to the edges of the cut. We next describe this in more detail.

\paragraph{Building $\phi_{x,y}$:} When building the formula $\phi_{x,y}$ from the graph $G_{x,y}$ there is a variable for every vertex in $G_{x,y}$ and a clause for every vertex and every edge in $G_{x,y}$, each of them knows all the variables corresponding to vertices in its side and all the clauses that contain at least one vertex from its side. In particular, from the construction, for each edge in the cut there is exactly one clause in $\phi_{x,y}$ and both of them know about it.

\paragraph{Building $\phi'_{x,y}$:} Next, they build $\phi'_{x,y}$ from $\phi_{x,y}$. Now each variable $x_v$ is replaced by $\Theta(d_v)$ variables and there are new clauses between these new variables according to the graphs $G_{d_v}$. However, these new clauses are only of the form $(\neg x_v^i \vee x_v^j)$ for the same vertex $v$. This means that it is enough that only one of Alice and Bob would know about any new clause. In particular, if $v$ is in Alice's side, Alice knows $d_v$ since she knows all the clauses that contain $v$, and she can replace $v$ by $\Theta(d_v)$ new variables and add the new clauses between them, and similarly for Bob. For each original clause in $\phi_{x,y}$ there is one clause in $\phi'_{x,y}$ such that a clause of the form $(\neg x_u \vee \neg x_v) \in \phi$ is replaced by a corresponding clause $(\neg x^i_u \vee \neg x^j_v) \in \phi'$ for some $i,j$. If $u$ is in Alice's side, she knows $i$, and otherwise Bob knows it and similarly for $v$.

\paragraph{Building $G'_{x,y}$:} Finally, they build $G'_{x,y}$ from $\phi'_{x,y}$. In $G'_{x,y}$ for each clause $C$ of the form $\ell$, there is a new vertex $v_{\ell}^C$, and for each clause $C=(\ell_1 \vee \ell_2)$, there are two new vertices $v_{\ell_1}^C,v_{\ell_2}^C$ with an edge between them. In addition, for each variable $x$ the edges $\{v_x^C,v_{\neg x}^{C'}\}$ are added to $G'_{x,y}$. Each one of Alice and Bob adds the vertices and edges that correspond to variables in its side. In particular, the edges $\{v_x^C,v_{\neg x}^{C'}\}$ are contained entirely in one of the sides, depending if $x$ is a variable in Alice's or Bob's side. The only edges in the graph that are in the cut between Alice and Bob correspond to clauses that contain one variable from each of the sides. From the construction, the number of these edges is equal to the number of edges in the original cut between Alice and Bob in $G_{x,y}$. This follows since each one of the cut edges is replaced by one clause in $\phi_{x,y}$ that is replaced by one clause in $\phi'_{x,y}$, that is replaced by one edge in $G'_{x,y}$. Alice and Bob have all the information to build the part of $G'_{x,y}$ that contains vertices and edges in their side. Each cut edge in $G'_{x,y}$ corresponds to an original cut edge in $G_{x,y}$ and Alice and Bob know the correspondence between the edges. In addition, from the construction $G'_{x,y}$ has maximum degree 5.

\paragraph{The size of $G'_{x,y}$:} We next analyze the size of $G'_{x,y}$. Let $n_G$ be the number of vertices in $G_{x,y}$. The number $n_G$ depends only on the size of the inputs and not on the specific inputs. From the construction of $\phi_{x,y}$, the number of variables in $\phi_{x,y}$ is $n_G$. In $\phi'_{x,y}$ each variable is replaced by $\Theta(d_v)$ variables where $d_v = deg_G(v)+1$. In $G_{x,y}$ the degree of all the vertices is $\Theta(n_G)$ (this follows from the construction of \cite{DBLP:conf/wdag/Censor-HillelKP17}). Hence, the number of variables in $\phi'_{x,y}$ is $\Theta(n_G^2)$.
Each variable appears in a constant number of clauses, hence the number of clauses is $\Theta(n_G^2)$ as well. In $G'_{x,y}$, for each clause there are one or two new vertices, hence the number of vertices is $\Theta(n_G^2)$. From the construction of $G_{x,y}$, it holds that $n_G=\Theta(k)$, where the size of the inputs of Alice and Bob is $k^2$, and the size of the cut is $\Theta(\log{k})$. In $G'_{x,y}$ the size of the cut is still $\Theta(\log{k})$ as explained above, however the number of vertices is $\Theta(n_G^2)=\Theta(k^2)$.

\paragraph{The diameter of $G'_{x,y}$:} We next show that the diameter of $G'_{x,y}$ is logarithmic.

\begin{claim}
The diameter of $G'_{x,y}$ is logarithmic.
\end{claim}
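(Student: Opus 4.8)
The plan is to bound the diameter of $G'_{x,y}$ by $O(\log k)$; since $G'_{x,y}$ has $\Theta(k^2)$ vertices, this is logarithmic in its size. I would group the vertices of $G'_{x,y}$ into \emph{clouds}, one per original vertex $v$ of $G_{x,y}$: the cloud $\Gamma_v$ consists of all vertices $v_{\ell}^{C}$ of $G'_{x,y}$ whose literal $\ell$ is built on one of the variables $x_v^1,\dots,x_v^{\Theta(d_v)}$ that replaced $x_v$. The proof then rests on two facts: (i) any two vertices in the same cloud $\Gamma_v$ are at distance $O(\log d_v)=O(\log k)$ in $G'_{x,y}$, and (ii) if $u,w$ are joined by an edge of $G_{x,y}$, then $\Gamma_u$ and $\Gamma_w$ contain vertices at distance $O(1)$ in $G'_{x,y}$. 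Granting (i) and (ii), fix arbitrary $p\in\Gamma_u$, $q\in\Gamma_v$, take a shortest $u$--$w$ path $u=u_0,u_1,\dots,u_r=v$ in $G_{x,y}$ (so $r$ is at most the diameter of $G_{x,y}$, which is $O(1)$ by construction, for every $x,y$), step from $\Gamma_{u_t}$ to $\Gamma_{u_{t+1}}$ at cost $O(1)$ by (ii), and move inside each cloud to the next passage point at cost $O(\log k)$ by (i); concatenating gives a $p$--$q$ walk of length $O(r\log k)=O(\log k)$.

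For (i), the key observation is that every variable $x_v^i$ occurring in $\phi'_{x,y}$ occurs both positively and negatively: distinguished variables have degree $2$ in $G_{d_v}$ and the others degree $\geq 1$, and an expander edge $\{a,b\}$ contributes the clauses $(\neg a\vee b)$ and $(\neg b\vee a)$, so each endpoint appears once with each sign. Hence in $G'_{x,y}$ every positive occurrence $v_{x_v^i}^{C}$ is joined by a literal-edge to every negative occurrence $v_{\neg x_v^i}^{C'}$, so any two vertices built on the same variable $x_v^i$ are at distance at most $2$; in particular each cloud vertex coming from an original (edge or unit) clause is within distance $\leq 2$ of a vertex coming from an expander clause of the same variable. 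Moreover, for an expander edge $\{a,b\}$ of $G_{d_v}$ the clause $(\neg b\vee a)$ yields an edge of $G'_{x,y}$ between a positive occurrence of $a$ and a negative occurrence of $b$; combined with the distance-$\leq 2$ bound this shows one can pass from \emph{any} occurrence of $a$ to \emph{any} occurrence of $b$ using $O(1)$ edges of $G'_{x,y}$. Iterating along a shortest $x_v^i$--$x_v^{i'}$ path in $G_{d_v}$ — of length $O(\log d_v)$ by Claim~\ref{expander_graph} — and spending $O(1)$ further edges to reach the exact target vertex establishes (i).

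For (ii): if $\{u,w\}\in E(G_{x,y})$ then $\phi_{x,y}$ has the clause $(\neg x_u\vee\neg x_w)$, which becomes some $(\neg x_u^i\vee\neg x_w^j)$ in $\phi'_{x,y}$, which contributes the edge $\{v_{\neg x_u^i}^{C},v_{\neg x_w^j}^{C}\}$ to $G'_{x,y}$, with one endpoint in $\Gamma_u$ and the other in $\Gamma_w$. Since $G_{x,y}$ has constant diameter for every $x,y$, $d_v=\Theta(n_G)=\Theta(k)$, and $G'_{x,y}$ has $\Theta(k^2)$ vertices, the argument above yields diameter $O(\log k)=O(\log n)$. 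The step requiring the most care is the bookkeeping in (i): one must verify that the expander backbone together with the literal-edges keeps an entire cloud connected (including vertices coming from the original clauses, not only from the expander clauses), and that the $O(\log d_v)$ diameter of $G_{d_v}$ survives the passage $\phi'_{x,y}\to G'_{x,y}$ with only a constant-factor and additive-constant loss; the constant-size binary-tree "stubs" used to build $G_{d_v}$ contribute only an additive $O(1)$ and so do not affect the bound.
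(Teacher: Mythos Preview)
Your proposal is correct and follows essentially the same approach as the paper: both arguments associate to each original vertex $v\in G_{x,y}$ the set of $G'_{x,y}$-vertices built on variables $x_v^i$ (your ``clouds'', the paper's ``corresponding vertices''), use the $O(\log d_v)$ diameter of $G_{d_v}$ together with the positive/negative literal edges to bound intra-cloud distances, use the edge clause $(\neg x_u^i\vee\neg x_w^j)$ to hop between adjacent clouds, and finish via the constant diameter of $G_{x,y}$. Your organization into facts (i) and (ii) is slightly cleaner, but the underlying argument is the same.
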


\begin{proof}
Each vertex in $G'_{x,y}$ is of the form $v_{\ell}^C$ where $\ell$ is a literal that contains a variable of the form $x_u^i$ in $\phi'_{x,y}$. We say that $u$ is the corresponding vertex to $v_{\ell}^C$ in $G_{x,y}$. We show that for every edge $\{u,w\} \in G_{x,y}$, there is a path of logarithmic length between $u',w' \in G'_{x,y}$ for every pair of vertices $u',w'$ that $u,w$ are their corresponding vertices in $G_{x,y}$. Since the diameter of $G_{x,y}$ is constant and each vertex in $G_{x,y}$ has at least one corresponding vertex in $G'_{x,y}$, this implies that the diameter in $G'_{x,y}$ is logarithmic.

Let $\{u,w\} \in G_{x,y}$. In $\phi_{x,y}$, there is a clause $(\neg x_u \vee \neg x_w)$. In $\phi'_{x,y}$ there is a corresponding clause $(\neg x_u^i \vee \neg x_w^j)$ for some $i,j$. Since the graphs $G_{d_u},G_{d_w}$ have logarithmic diameter, for each variable $x_u^{i'}$ there is a path of logarithmic length in $G_{d_u}$ between $x_u^i$ and $x_u^{i'}$ and the same holds for $x_w^j, x_w^{j'}$. Since each edge in the graphs $G_{d_u},G_{d_w}$ corresponds to two clauses in $\phi'_{x,y}$, the path between $x_u^i$ and $x_u^{i'}$ in $G_{d_u}$ corresponds to a series of clauses in $\phi'_{x,y}$, where the first has the variable $x_u^i$, the last has the variable $x_u^{i'}$ and every two consecutive clauses share a variable. The same holds for the path between $x_w^j$ and $x_w^{j'}$.

Now, a pair of vertices $u',w' \in G'_{x,y}$ that correspond to $u,w \in G_{x,y}$ is of the form $u'=v_{\ell_u'}^{C_u'},w'=v_{\ell_w'}^{C_w'}$ such that $x_u^{i'}, x_w^{j'}$ are the variables in the literals $\ell_u', \ell_w'$ for some $i',j'$. For each clause in $\phi'_{x,y}$, there is an edge in $G'_{x,y}$. Hence, the path of clauses between $x_u^{i'}$ and $x_u^i$ in $\phi'_{x,y}$ corresponds to a series of (not necessarily adjacent) edges in $G'_{x,y}$, where the first edge has a vertex of the form $v_{\ell'}^{C'}$ where $\ell'$ contains $x_u^{i'}$, the last edge has a vertex of the form $v_{\ell}^{C}$ where $\ell$ contains $x_u^{i}$, and for every two consecutive edges, there is a variable $z$ that appears in both of the corresponding clauses. We can connect all these edges by a path in $G'_{x,y}$ with the same length up to a constant factor, as follows. Let $C,C'$ be two consecutive clauses, and let $z$ be the variable they share. The corresponding edges in $G'_{x,y}$ have vertices of the form $v_{\ell}^{C}, v_{\ell'}^{C'}$ where both $\ell,\ell'$ contain $z$. If $z$ appears once in a positive form and once in a negative form, there is an edge between $v_{\ell}^{C}, v_{\ell'}^{C'}$ from the construction. Otherwise, $\ell = \ell'$, and both the vertices are connected to a vertex of the form $v_{\neg \ell}^{C''}$ for some $C''$ that contains $\neg \ell$ (it follows from the construction that each variable in $\phi'_{x,y}$ appears at least once in a positive form and once in a negative form in the expander clauses). So we can connect the edges that correspond to $C,C'$ by adding one or two additional edges to the path.

To conclude, this gives a path of logarithmic length between two vertices that contain the variables $x_u^{i'}$ and $x_u^i$. Similarly, there is a path of logarithmic length between two vertices that contain the variables $x_w^{j'}$ and $x_w^j$, and there is an edge in $G'_{x,y}$ that corresponds to the clause $(\neg x_u^i \vee \neg x_w^j)$. As explained above, every two vertices that contain some variable $z$ are connected by a path of length at most 2 in $G'_{x,y}$ (depending if it appears in a positive or negative form in each of them), which implies that for all $u',w' \in G'_{x,y}$ that correspond to $u,w \in G_{x,y}$ there is a path of logarithmic length between them. This completes the proof.
\end{proof}

\subsubsection*{The lower bound}

Next we show that if $A$ is an algorithm for MaxIS in bounded degree graphs that takes $T(n)$ rounds for a graph with $n$ vertices, Alice and Bob can solve disjointness on inputs of size $k^2$ by exchanging $O(T(k^2) \cdot |CUT| \cdot \log{k})$ bits, where $|CUT|$ is the size of the cut in $G'_{x,y}$. Then, the lower bound for set disjointness implies a lower bound for MaxIS.
We assume that $T(n)$ is a monotonic function, and that if $n'=\Theta(n)$, then $T(n') = \Theta(T(n'))$.

\begin{claim} \label{simulation_bounded}
Let $A$ be an algorithm for computing a MaxIS or the size of a MaxIS in bounded degree graphs that takes $T(n)$ rounds, then Alice and Bob can solve disjointness on inputs of size $k^2$ by exchanging $O(T(k^2) \cdot |CUT| \cdot \log{k})$ bits.
\end{claim}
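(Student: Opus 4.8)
The plan is to run the hypothetical MaxIS algorithm $A$ on the bounded-degree graph $G'_{x,y}$ constructed above, with Alice simulating the vertices of $V_A'$ and Bob simulating the vertices of $V_B'$, and to read the answer to $\disj_{k^2}(x,y)$ off of $\alpha(G'_{x,y})$. First recall what the construction provides: $G'_{x,y}$ has $n'=\Theta(k^2)$ vertices, maximum degree $5$, and logarithmic diameter; its vertex set is partitioned into $V_A'$ and $V_B'$ so that every edge of $G'_{x,y}$ except the $|CUT|=\Theta(\log k)$ cut edges lies entirely inside $V_A'$ or entirely inside $V_B'$; and, crucially, Alice can build $G'_{x,y}[V_A']$ together with the cut edges from $x$ alone, while Bob can build $G'_{x,y}[V_B']$ together with the cut edges from $y$ alone. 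Combining Claim~\ref{G_phi}, Corollary~\ref{phi'_phi} and Claim~\ref{phi'_G'} with the property of the construction of~\cite{DBLP:conf/wdag/Censor-HillelKP17}, we have $\alpha(G'_{x,y})=f(\phi'_{x,y})=\alpha(G_{x,y})+|E(G_{x,y})|+m_{exp}$, where both $|E(G_{x,y})|$ and $m_{exp}$ depend only on $k$ and not on $x,y$. Hence there is a fixed threshold $Z'=Z+|E(G_{x,y})|+m_{exp}$, with $Z=n_G-M$, such that $\alpha(G'_{x,y})=Z'$ if $\disj_{k^2}(x,y)=\false$ and $\alpha(G'_{x,y})<Z'$ otherwise.

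The simulation itself is the standard one. Alice keeps the states of all vertices in $V_A'$ and Bob the states of all vertices in $V_B'$; each player can compute the initial state of each of its vertices since it knows the relevant part of $G'_{x,y}$ (in the randomized case the players use shared randomness, which is free for $\disj$). In one round of $A$ every vertex sends an $O(\log n')=O(\log k)$-bit message on each incident edge. For an edge with both endpoints on one side the sending player already knows the message and the receiver's state, so no communication is needed; for each of the $|CUT|$ cut edges the two players exchange the two messages crossing it, at a cost of $O(\log k)$ bits per cut edge. Thus a single round of $A$ is simulated using $O(|CUT|\cdot\log k)$ bits, after which each player updates the states of its own vertices. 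Since $A$ terminates within $T(n')$ rounds and $n'=\Theta(k^2)$, the assumption on $T$ gives $T(n')=\Theta(T(k^2))$, so simulating all of $A$ costs $O(T(k^2)\cdot|CUT|\cdot\log k)$ bits in total.

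Once $A$ halts, every vertex knows whether it belongs to the output independent set $I$ (if $A$ instead reports the size, a designated vertex already holds $\alpha(G'_{x,y})$, and broadcasting it costs an additional $O(D\cdot|CUT|\cdot\log k)=O(|CUT|\cdot\log^2 k)$ bits, which is absorbed into the bound). Alice counts $|I\cap V_A'|$, Bob counts $|I\cap V_B'|$, they exchange these two numbers using $O(\log k)$ bits, add them to obtain $\alpha(G'_{x,y})=|I|$, compare with the fixed threshold $Z'$, and declare $\disj_{k^2}(x,y)=\false$ iff the value equals $Z'$. This correctly decides disjointness while exchanging $O(T(k^2)\cdot|CUT|\cdot\log k)$ bits, as claimed. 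The points requiring genuine care are not in the simulation but in the construction of $G'_{x,y}$, and were already dealt with above: that the cut edges of $G'_{x,y}$ are in bijection with the cut edges of $G_{x,y}$, so $|CUT|=\Theta(\log k)$, and that every expander gadget introduced in passing from $\phi_{x,y}$ to $\phi'_{x,y}$ — and hence every new vertex and edge of $G'_{x,y}$ — is generated locally by the player owning the corresponding original vertex, so that no communication beyond the $|CUT|$ cut edges is ever needed. For randomized $A$ one also notes that the $2/3$ success probability of $A$ carries over verbatim to a $2/3$-correct protocol for $\disj_{k^2}$, matching the randomized regime of Theorem~\ref{generallowerboundtheorem}.
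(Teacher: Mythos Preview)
Your simulation argument is the same as the paper's, but there is a factual slip that breaks the ``fixed threshold'' shortcut. You assert that $|E(G_{x,y})|$ and $m_{exp}$ depend only on $k$; they do not. In the construction of~\cite{DBLP:conf/wdag/Censor-HillelKP17}, edges between row vertices are added or not according to the bits of $x$ and $y$, so $m_G=|E(G_{x,y})|$ varies with the inputs. Consequently the degrees $d_v=\deg_{G_{x,y}}(v)+1$ vary with the inputs, and since the expander gadget $G_{d_v}$ (and hence the number of expander clauses it contributes) is a function of $d_v$, the quantity $m_{exp}$ also depends on $x,y$. There is therefore no input-independent threshold $Z'$ against which Alice and Bob can compare $\alpha(G'_{x,y})$.

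The paper fixes this with essentially no cost: Alice knows the number of edges (resp.\ expander clauses) on her side, Bob knows his, so by exchanging two $O(\log k)$-bit messages they recover $m_G$ and $m_{exp}$, and then compute $\alpha(G_{x,y})=\alpha(G'_{x,y})-m_G-m_{exp}$ and compare it to the genuinely fixed value $Z$. With this correction your argument goes through and matches the paper's proof; the extra $O(\log k)$ bits are absorbed in the stated bound.
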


\begin{proof}
Alice and Bob build the graph $G'_{x,y}$ with $\Theta(k^2)$ vertices as described above.
From Claim \ref{G_phi}, Corollary \ref{phi'_phi} and Claim \ref{phi'_G'}, the size of MaxIS in $G'_{x,y}$ is exactly $\alpha(G_{x,y})+m_G+m_{exp}$, where $m_G$ is the number of edges in $G_{x,y}$ and $m_{exp}$ is the number of expander clauses in $\phi'_{x,y}$.
Now, Alice and Bob can compute $m_G$ and $m_{exp}$ as follows. $m_G$ is the number of edges in the graph $G_{x,y}$ that corresponds to the inputs of Alice and Bob, each of them knows the number of edges in $G_{x,y}$ in its side based on its input, hence by exchanging one message they can compute $m_G$. The number $m_{exp}$ is the number of expander clauses in the formula $\phi'_{x,y}$. Again, each one of Alice and Bob knows the number of expander clauses corresponding to variables in its side. Hence, by exchanging one message they can compute $m_{exp}$.
The graph $G_{x,y}$ is taken from \cite{DBLP:conf/wdag/Censor-HillelKP17}, and satisfies $\alpha(G_{x,y}) = Z$ if and only if $DISJ(x,y)=False$, for a certain value $Z$ that depends only on $k$. Specifically, $Z = n_G - 4(k-1) - 4\log{k}$, where $n_G$ is the number of vertices in $G_{x,y}$ that is fixed given $k$, and known to Alice and Bob.

Alice and Bob can simulate the algorithm $A$ on the graph $G'_{x,y}$ as follows. Alice simulates the vertices in her side and Bob simulates the vertices in his side, each message that is sent on an edge in one of the sides is simulated locally by the relevant player, and for each edge in the cut they exchange one message in each direction for each round of the simulation.
At the end, both of them know the output of the algorithm on their sides. If the output is the size of MaxIS they know $\alpha(G'_{x,y}).$ Otherwise, each of them knows the number of vertices taken to the MaxIS on its side. Then, by exchanging one message they compute $\alpha(G'_{x,y})$. Since $A$ takes $\Theta(T(k^2))$ rounds, computing $m_G,m_{exp}$ requires exchanging two messages, and the size of each message is $\Theta(\log(k^2))=\Theta(\log{k})$ bits, after exchanging $O(T(k^2) \cdot |CUT| \cdot \log{k})$ bits, Alice and Bob know the values $\alpha(G'_{x,y}),m_G,m_{exp}$. Since $\alpha(G'_{x,y})=\alpha(G_{x,y})+m_G+m_{exp}$, this allows them to compute also $\alpha(G_{x,y})$. However, $\alpha(G_{x,y}) = Z$ if and only if $DISJ(x,y)=False$, which means that after exchanging $O(T(k^2) \cdot |CUT| \cdot \log{k})$ bits Alice and Bob solved disjointness.
\end{proof}

\begin{theorem}
Any distributed algorithm for computing a MaxIS or the size of a MaxIS in bounded-degree graphs in the \cgst{} model requires $\Omega(\frac{n}{\log^2{n}})$ rounds, this holds even for graphs with maximum degree 5 and logarithmic diameter.
\end{theorem}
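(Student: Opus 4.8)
The plan is to derive the bound directly from the simulation established in Claim~\ref{simulation_bounded}, together with the (randomized) communication complexity lower bound for set disjointness. Unlike the earlier sections, we cannot invoke Theorem~\ref{generallowerboundtheorem} as a black box here, because the target independent-set size in $G'_{x,y}$ is $\alpha(G_{x,y}) + m_G + m_{exp}$, and both $m_G$ (the number of edges of $G_{x,y}$) and $m_{exp}$ (the number of expander clauses of $\phi'_{x,y}$) depend on the inputs $x,y$ --- which is precisely why Claim~\ref{simulation_bounded} has Alice and Bob exchange a constant number of extra messages to agree on these two quantities before running the simulation.

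Concretely, I would argue as follows. Fix $k$ a power of two and let Alice and Bob build $G'_{x,y}$ as in Section~\ref{bounded_reductions}: it has $n = \Theta(k^2)$ vertices, maximum degree $5$, logarithmic diameter, and its cut $CUT$ between Alice's and Bob's simulated vertices has size $\Theta(\log k)$. By Claim~\ref{G_phi}, Corollary~\ref{phi'_phi} and Claim~\ref{phi'_G'} we have $\alpha(G'_{x,y}) = \alpha(G_{x,y}) + m_G + m_{exp}$, and by the construction of~\cite{DBLP:conf/wdag/Censor-HillelKP17} there is a value $Z$ depending only on $k$ with $\alpha(G_{x,y}) = Z \iff \disj_{k^2}(x,y) = \false$. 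Now suppose, towards a contradiction, that some (possibly randomized) algorithm $A$ computes a MaxIS, or merely its size, on bounded-degree graphs in $T(n)$ rounds with $T(n) = o\!\left(n/\log^2 n\right)$. By Claim~\ref{simulation_bounded}, Alice and Bob can then decide $\disj_{k^2}$ by exchanging $O\!\left(T(k^2)\cdot |CUT| \cdot \log k\right)$ bits; using $n = \Theta(k^2)$, $\log k = \Theta(\log n)$, the assumed monotonicity and $\Theta$-robustness of $T$, and $|CUT| = \Theta(\log k)$, this is $O\!\left(T(\Theta(n))\cdot \log^2 k\right) = o(k^2)$ bits. This contradicts $CC^R(\disj_{k^2}) = \Omega(k^2)$, which holds even with shared randomness by~\cite[Example 3.22]{KushilevitzN:book96}; the deterministic case follows identically from $CC(\disj_{k^2}) = \Omega(k^2)$. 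Hence $T(n) = \Omega\!\left(n/\log^2 n\right)$, and since every graph $G'_{x,y}$ in the family has maximum degree $5$ and logarithmic diameter, the bound already holds within this restricted class.

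I do not expect a genuine obstacle at this final step --- all the structural work (composing the three reductions, keeping the blow-up quadratic, bounding the degree by $5$, bounding the diameter by $O(\log n)$, and keeping the cut at $\Theta(\log k)$) is already carried out in Claims~\ref{G_phi}, \ref{expander_graph}, \ref{expander}, \ref{phi'_G'}, Corollary~\ref{phi'_phi}, and in the size and diameter analyses preceding Claim~\ref{simulation_bounded}. The only point needing care is the bookkeeping in passing from $k$ to $n$: because $n = \Theta(k^2)$, the quadratic blow-up of the reduction is essentially ``free'', in that the $\Omega(k^2)$ disjointness bound still becomes $\Omega(n)$, but the two logarithmic factors --- one from $|CUT| = \Theta(\log n)$ and one from the $\Theta(\log n)$-bit messages --- are genuinely lost, so the final bound carries a denominator of $\log^2 n$ rather than $\log n$.
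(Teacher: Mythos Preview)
Your proposal is correct and follows essentially the same approach as the paper: invoke Claim~\ref{simulation_bounded} to turn a $T(n)$-round MaxIS algorithm into a disjointness protocol of cost $O(T(k^2)\cdot |CUT|\cdot \log k)$, use $|CUT|=\Theta(\log k)$ and the $\Omega(k^2)$ disjointness lower bound to get $T(k^2)=\Omega(k^2/\log^2 k)$, and then substitute $n=\Theta(k^2)$. Your additional remarks about why Theorem~\ref{generallowerboundtheorem} is not invoked directly and where the two $\log n$ factors originate are accurate elaborations, but the core argument matches the paper's.
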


\begin{proof}
From Claim \ref{simulation_bounded}, applying a distributed algorithm for MaxIS on the graphs $G'_{x,y}$ allows Alice and Bob to solve disjointness on inputs of size $k^2$ by exchanging $O(T(k^2) \cdot |CUT| \cdot \log{k})$ bits. From the $\Omega(k^2)$ lower bound for disjointness, and since $|CUT| = \Theta(\log{k})$ we get $T(k^2) = \Omega(\frac{k^2}{\log{k} \log{k}})$ rounds. Writing $n=k^2$, we get $T(n) = \Omega(\frac{n}{\log^2{n}})$ rounds. The graphs $G'_{x,y}$ have logarithmic diameter and maximum degree 5 from the construction.
\end{proof}

\subsection{Implications for MVC, MDS and 2-spanner} \label{sec:implications}

We next use known reductions to show $\tilde{\Omega}(n)$ lower bounds for MVC, MDS and minimum 2-spanner in bounded-degree graphs.

\subsubsection*{MVC}
Since the complement of a MaxIS is an MVC, finding an optimal MVC is equivalent to finding a MaxIS. Hence, the same construction implies an $\tilde{\Omega}(n)$ lower bound for exact computation of MVC in bounded-degree graphs, summarized as follows.

\begin{theorem} \label{MVC_bounded}
Any distributed algorithm for computing an MVC or the size of an MVC in bounded-degree graphs in the \cgst{} model requires $\Omega(\frac{n}{\log^2{n}})$ rounds, this holds even for graphs with maximum degree 5 and logarithmic diameter.
\end{theorem}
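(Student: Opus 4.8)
The plan is to reuse verbatim the family of bounded-degree lower bound graphs $\{G'_{x,y}\}$ constructed in the previous subsection for MaxIS, together with the elementary fact that for any graph $H$ on $n$ vertices a set $S$ is a maximum independent set of $H$ if and only if $V\setminus S$ is a minimum vertex cover of $H$; in particular $\alpha(H)+\tau(H)=n$, where $\tau(H)$ denotes the size of a minimum vertex cover of $H$. Since the vertex set of $G'_{x,y}$ is fixed and has size $n=\Theta(k^2)$ depending only on $k$, both players know $n$ in advance, so an answer to the MVC problem on $G'_{x,y}$ translates into an answer to the MaxIS problem on $G'_{x,y}$ without any extra communication beyond what is needed to agree on the cardinality of the returned set. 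This is exactly the informal statement that, since the complement of a MaxIS is an MVC, the same construction already does the job.

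Concretely, I would argue along the lines of Claim~\ref{simulation_bounded}. Given an algorithm $A$ for MVC running in $T(n)$ rounds on bounded-degree graphs, Alice and Bob build $G'_{x,y}$ (each constructing the part on its side, using their common knowledge of the cut edges as before) and simulate $A$ on it, paying $O(T(k^2)\cdot |CUT|\cdot \log k)$ bits of communication for the $|CUT|=\Theta(\log k)$ cut edges. If $A$ reports the size of an MVC the players obtain $\tau(G'_{x,y})$ directly; if $A$ reports an actual vertex cover, each player counts the vertices chosen on its side and one extra message yields $\tau(G'_{x,y})$. Then $\alpha(G'_{x,y})=n-\tau(G'_{x,y})$. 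As in Claim~\ref{simulation_bounded}, two additional messages let the players compute $m_G$ (the number of edges of $G_{x,y}$) and $m_{exp}$ (the number of expander clauses of $\phi'_{x,y}$), each knowing the relevant count on its own side; and by Claim~\ref{G_phi}, Corollary~\ref{phi'_phi} and Claim~\ref{phi'_G'} we have $\alpha(G'_{x,y})=\alpha(G_{x,y})+m_G+m_{exp}$, so the players recover $\alpha(G_{x,y})$. Since $\alpha(G_{x,y})=Z$ if and only if $\disj(x,y)=\false$ for the fixed value $Z$ depending only on $k$, the players have solved $\disj_{k^2}$ using $O(T(k^2)\cdot \log k\cdot \log k)$ bits.

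Finally, I would invoke the $\Omega(k^2)$ lower bound for $\disj_{k^2}$, which forces $T(k^2)=\Omega(k^2/\log^2 k)$; substituting $n=\Theta(k^2)$ gives $T(n)=\Omega(n/\log^2 n)$. The maximum degree $5$ and the logarithmic diameter are inherited directly from the construction of $G'_{x,y}$, since the graph — vertices and edges alike — is identical to the one used for MaxIS. I do not anticipate a genuine obstacle here: the only two points deserving a sentence of care are that the complementation $\alpha=n-\tau$ is exact because $n$ is common knowledge (true by construction), and that an algorithm returning an explicit cover rather than its cardinality is handled at the cost of a single extra message.
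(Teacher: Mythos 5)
Your proposal is correct and follows exactly the paper's route: the paper proves this theorem by observing that the complement of a MaxIS is an MVC, so the same bounded-degree family $G'_{x,y}$ and the simulation argument of Claim~\ref{simulation_bounded} carry over verbatim. Your additional care about the identity $\alpha(H)+\tau(H)=n$ and the one extra message to aggregate the cardinality is a faithful (and slightly more explicit) rendering of the same argument.
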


\subsubsection*{MDS}

\begin{theorem}
Any distributed algorithm for computing an MDS or the size of an MDS in bounded-degree graphs with logarithmic diameter in the \cgst{} model requires $\Omega(\frac{n}{\log^2{n}})$ rounds.
\end{theorem}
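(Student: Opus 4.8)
The plan is to derive the MDS lower bound from the MVC lower bound of Theorem~\ref{MVC_bounded} by applying the classical polynomial-time reduction from vertex cover to dominating set~\cite{papadimitriou1991optimization} on the family of lower bound graphs, and then invoking Theorem~\ref{generallowerboundtheorem} (equivalently, repeating the simulation argument of Claim~\ref{simulation_bounded}). Recall the reduction: given a graph $H=(V,E)$ with no isolated vertices, build $H'$ by adding, for every edge $e=\{u,v\}\in E$, a new vertex $w_e$ joined to $u$ and to $v$. One checks that the domination number of $H'$ equals the vertex cover number of $H$: a vertex cover of $H$ dominates $H'$, and conversely any dominating set of $H'$ can be pushed into $V$ (replace each chosen $w_e$ by an endpoint) without increasing its size, after which it must be a vertex cover. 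The crucial point --- and the reason this reduction, which blows up the vertex count by $|E|$ and is therefore useless in general (see Section~\ref{subsec:mds}) --- is \emph{harmless here}: in the bounded-degree family of Theorem~\ref{MVC_bounded} we have $|E(H_{x,y})|=O(|V(H_{x,y})|)$, so $|V(H'_{x,y})|=\Theta(|V(H_{x,y})|)=\Theta(k^2)$ for inputs of size $k^2$.

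Concretely, I would let $\{H_{x,y}\}$ denote the bounded-degree family underlying Theorem~\ref{MVC_bounded} (the MaxIS/MVC construction of Section~\ref{bounded_reductions}, which has no isolated vertices since every literal vertex $v^C_\ell$ has a neighbour $v^{C'}_{\neg\ell}$), and define $H'_{x,y}$ by the above reduction. For the partition: a gadget vertex $w_e$ with $e$ an edge inside $V_A$ goes to Alice, inside $V_B$ to Bob, and for each of the $\Theta(\log k)$ cut edges $e$ --- which, as noted in Claim~\ref{simulation_bounded}, both players know --- place $w_e$ on Alice's side, adding exactly one new cut edge per original cut edge. Hence $|E'_{cut}|=O(|E_{cut}|)=\Theta(\log k)=\Theta(\log n)$. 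Every original vertex at most doubles its degree and every $w_e$ has degree $2$, so $H'_{x,y}$ has bounded degree; and since each edge of $H_{x,y}$ becomes a path of length $2$ in $H'_{x,y}$, distances at most double, so $H'_{x,y}$ has diameter $O(\log n)$.

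It remains to observe that $\{H'_{x,y}\}$ is a family of lower bound graphs for MDS with respect to $\disj_{k^2}$ and the predicate ``$H'_{x,y}$ has a dominating set of size $M$'', where $M$ is the MVC size from Theorem~\ref{MVC_bounded}: by the tightness of the reduction, $H'_{x,y}$ has a dominating set of size at most $M$ iff $H_{x,y}$ has a vertex cover of size at most $M$ iff $\disj_{k^2}(x,y)=\false$ (and ``size at most $M$'' is the same as ``size $M$'', as a smaller dominating set extends to size $M$). Applying Theorem~\ref{generallowerboundtheorem} with $|V(H'_{x,y})|=\Theta(k^2)$, $K=k^2$ and $|E'_{cut}|=\Theta(\log n)$ gives $\Omega\!\big(k^2/\log^2 n\big)=\Omega(n/\log^2 n)$ rounds for deciding the predicate, and hence the same bound for computing an MDS or its size, since counting a proposed set costs only $O(D)=O(\log n)$ extra rounds.

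The only real things to get right --- the ``main obstacle'', though it is more a matter of care than of depth --- are (i) verifying the exact tightness $\gamma(H'_{x,y})=\tau(H_{x,y})$, which relies on the base family having no isolated vertices, and (ii) checking that the gadget vertices for cut edges can be handled by one fixed side so that $|E'_{cut}|$ stays $\Theta(\log n)$ and that Alice and Bob can build their parts of $H'_{x,y}$ from local information exactly as in Claim~\ref{simulation_bounded}. Everything else is bookkeeping.
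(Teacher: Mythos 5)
Your proposal is correct and follows essentially the same route as the paper: both use the classical vertex-cover-to-dominating-set reduction (adding a degree-2 vertex per edge) and both rest on the same key observation that this is harmless in the bounded-degree family because $|E|=O(|V|)$, so the vertex count, degrees, diameter, and cut size are all preserved up to constants. The only cosmetic difference is that the paper phrases the conclusion as a distributed simulation of the MDS algorithm inside the MVC instance (each edge-gadget vertex simulated by an endpoint), whereas you re-verify the family-of-lower-bound-graphs conditions and invoke Theorem~\ref{generallowerboundtheorem} directly; these are interchangeable.
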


\begin{proof}
We use a standard reduction from MVC to MDS (see, for example, \cite{papadimitriou1991optimization}).
Let $G_{VC}$ be an input graph to MVC, we convert it to a new input graph $G$ for MDS as follows. In $G$, there is a vertex $v$ for each vertex in $G_{VC}$, and a vertex $v_e$ for every edge in $G_{VC}$. $G$ has all the edges of $G_{VC}$. In addition, for each edge $e=\{u,v\} \in G_{VC}$ we have the edges $\{v_e,v\},\{v_e,u\}$ in $G$.
Note that if $G_{VC}$ has a bounded-degree, the number of edges in $G_{VC}$ is $O(n)$, which means that the number of vertices in $G$ is $O(n)$. In addition, $G$ has a bounded-degree since the degrees of original vertices only duplicate, and the new vertices have degree 2. The reduction also preserves the diameter up to an additive factor.

Now, the size of MDS in $G$ is the size of MVC in $G_{VC}$. First, any dominating set in $G$ can be converted to one without vertices of the form $v_e$ for $e=\{u,v\} \in G_{VC}$, since both $u$ and $v$ cover $\{v,u,v_e\}$ and possibly additional vertices, so we can replace $v_e$ by one of $v$ or $u$. Next, each dominating set in $G$ that contains only original vertices is a vertex cover in $G_{VC}$ because all the vertices $v_e$ are covered in $G$ which implies that all the edges are covered in $G_{VC}$. Finally, any vertex cover in $G_{VC}$ is a set of vertices that covers all the edges in $G_{VC}$ and hence all the edge-vertices $v_e$ in $G$. However, if $v_e$ is covered at least one of its endpoints is in the vertex cover, and hence the other endpoint is also covered by this endpoint, which means that this is a dominating set in $G$.

The reduction can be simulated locally in the \cgst{} model, by having one of the endpoints of $e \in G_{VC}$ simulate the vertex $v_e \in G$. Since the degrees of new vertices are 2 and the original degree only increased by a factor of 2, any round of an algorithm in $G$ can be simulated in a constant number of rounds in $G_{VC}$. Hence, a distributed $T$-round algorithm for MDS in bounded-degree graphs gives an $O(T)$-round algorithm for MVC in bounded-degree graphs. Hence, Theorem \ref{MVC_bounded} implies an $\Omega(\frac{n}{\log^2{n}})$ lower bound for MDS in bounded-degree graphs.
\end{proof}

\subsubsection*{Minimum 2-spanner}

In the minimum 2-spanner problem, the input is a weighted graph $G$ and the goal is to find the minimum cost subgraph $H$ of $G$ such that for edge $e=\{u,v\} \in G$ there is a path of length at most 2 in $H$ between $u$ and $v$. In \cite{DBLP:conf/podc/Censor-HillelD18} there is a distributed reduction from unweighted MVC to weighted 2-spanner in the \cgst{} model. This reduction preserves the number of vertices, the degrees in the graph and the diameter up to a constant factor. Hence, Theorem \ref{MVC_bounded} implies the following.

\begin{theorem}
Any distributed algorithm for computing a minimum 2-spanner or the cost of the minimum 2-spanner in bounded-degree graphs with logarithmic diameter in the \cgst{} model requires $\Omega(\frac{n}{\log^2{n}})$ rounds.
\end{theorem}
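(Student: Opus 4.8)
The plan is to obtain the lower bound by reducing from MVC in bounded-degree, logarithmic-diameter graphs — for which Theorem~\ref{MVC_bounded} already gives an $\Omega(n/\log^2 n)$ lower bound — and to transport that hardness through the distributed reduction of~\cite{DBLP:conf/podc/Censor-HillelD18} from unweighted MVC to weighted minimum $2$-spanner. Concretely, I would take the family of bounded-degree lower-bound graphs $G'_{x,y}$ underlying Theorem~\ref{MVC_bounded} — those have $\Theta(k^2)$ vertices, maximum degree $5$, logarithmic diameter, a cut of size $\Theta(\log k)$, and an MVC of a fixed size iff $\mathrm{DISJ}_{k^2}(x,y)=\mathrm{False}$ — and apply the reduction of~\cite{DBLP:conf/podc/Censor-HillelD18} to each of them to obtain a family of weighted instances $H_{x,y}$ of minimum $2$-spanner.

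First I would record the properties of that reduction that I need: it sends a graph to a weighted graph while keeping, up to constant factors, the number of vertices, the maximum degree, and the diameter, and it turns the MVC value into the $2$-spanner cost in a value-preserving way; moreover it can be simulated in the \cgst{} model with $O(1)$ rounds of overhead per round, each vertex of $H_{x,y}$ being simulated by a nearby vertex of $G'_{x,y}$. Since $G'_{x,y}$ has maximum degree $5$ and logarithmic diameter, so does $H_{x,y}$ up to constants; and since the reduction acts locally, moving edges only within a side or along the already-existing cut edges, the partition $V_A,V_B$ of $G'_{x,y}$ induces a partition of $H_{x,y}$ whose cut size is still $\Theta(\log k)$. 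Thus $\{H_{x,y}\}$ is a family of lower-bound graphs for weighted minimum $2$-spanner with respect to $\mathrm{DISJ}_{k^2}$, with $n=\Theta(k^2)$ vertices and $|E_{\cut}|=\Theta(\log k)$.

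The conclusion then follows exactly as in the MaxIS/MVC arguments already given: composing a $T(n)$-round algorithm for computing the cost of a minimum $2$-spanner on bounded-degree, logarithmic-diameter graphs with the $O(1)$-overhead simulation of the reduction (plus the routine step of having Alice and Bob each total their own side's contribution to the value) lets Alice and Bob decide $\mathrm{DISJ}_{k^2}$ by exchanging $O(T(k^2)\cdot|E_{\cut}|\cdot\log k)=O(T(k^2)\log^2 k)$ bits; the $\Omega(k^2)$ communication lower bound for set disjointness then gives $T(k^2)=\Omega(k^2/\log^2 k)$, i.e.\ $T(n)=\Omega(n/\log^2 n)$, in the stated graph class. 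Equivalently, one can simply apply Theorem~\ref{generallowerboundtheorem} directly to the family $\{H_{x,y}\}$.

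The step I expect to take the most care is the middle one: verifying that the specific reduction of~\cite{DBLP:conf/podc/Censor-HillelD18} really preserves all three parameters — degree, diameter, and cut size — simultaneously, and that it admits an $O(1)$-round-per-round distributed simulation on our particular lower-bound graphs. Once that is in hand, the remaining pieces (monotonicity of $T$, the constant-overhead bookkeeping, and invoking Theorem~\ref{generallowerboundtheorem}) are routine and parallel the MVC and MDS proofs above.
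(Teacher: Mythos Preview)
Your proposal is correct and follows essentially the same approach as the paper: invoke the distributed reduction from unweighted MVC to weighted minimum $2$-spanner of~\cite{DBLP:conf/podc/Censor-HillelD18}, note that it preserves the number of vertices, the maximum degree, and the diameter up to constant factors, and then transport the $\Omega(n/\log^2 n)$ bound of Theorem~\ref{MVC_bounded}. The paper's proof is in fact even terser than yours---it treats the reduction purely at the algorithm level (a fast $2$-spanner algorithm yields a fast MVC algorithm on graphs in the same class), so it never needs to argue about cut-size preservation or rebuild a family of lower-bound graphs; your extra care about $|E_{\cut}|$ is unnecessary once you use the reduction as a black box on top of Theorem~\ref{MVC_bounded}.
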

\section{Hardness of approximation}
\label{section:approx(b)}

In this section, we show hardness of approximation results for non-trivial approximation ratios of fundamental problems. 
The main intuition behind our constructions is that we \emph{enhance the bit-gadget}.
The bit-gadget is a key component in obtaining our near-quadratic lower bounds for \emph{exact} computation. However, in all these constructions there is only a slight difference in the size of the optimal solutions according to the disjointness of the inputs. One of the reasons for this is that our row vertices have a vertex for any possible binary representation. Hence, in particular, we have vertices with almost identical binary representations, which limits the potential lower bounds we can show. To overcome this, one approach could be to leave only a subset of the row vertices that satisfy additional suitable properties, for example, a set of vertices in which any two have very different binary representations. Both of our constructions, one for MaxIS and one for $k$-MDS and additional problems, rely on and develop this idea.
\newline

We obtain lower bounds for computing a maximum independent set (MaxIS), for 2-minimum dominating set (2-MDS) and several related problems. In the MaxIS problem, the goal is to find an independent set of maximum weight or cardinality.
In the $k$-MDS (minimum $k$-dominating set) problem, the goal is to find a minimum weight set of vertices such that all the vertices in the graph have a vertex within the set at distance at most $k$.

In order to work with the same infrastructure as Theorem~\ref{generallowerboundtheorem}, we need to slightly extend the definition of a family of lower bound graphs. Instead of considering a Yes/No predicate $P$ when a graph parameter is considered (the max/min size of a certain set), we extend $P$ such that not only its No instances do not have the required size, but they have a bound on their size that separates them from the Yes instances. As a concrete example, for MaxIS we show a family of graphs which either have a maximum independent set of size $8\ell+4t$ for some appropriate values of $\ell, t$ that will be fully explained later, or have any independent set being of size at most $7\ell+4t$. Thus, in order to decide whether $P$ holds, an algorithm does not have to give an exact answer for the size of the maximum independent set, but rather an approximation suffices. This allows us to obtain lower bounds for approximation algorithms.

\subsection{MaxIS}\label{maxis lower bound}
The work of~\cite{DBLP:conf/wdag/Censor-HillelKP17} constructs a family of lower bound graphs for the problem of finding an exact solution for MaxIS, which obtains a lower bound of $\Omega(n^2/\log ^2 n)$. However, because the values of the size of the independent set in Yes/No instances are very close to each other, the construction in~\cite{DBLP:conf/wdag/Censor-HillelKP17} cannot show hardness of approximation, apart from very small additive approximation ratios.

Our main results in this section are the following.
\begin{theorem}\label{theorem:MaxIS}
Given $\epsilon >0$, any distributed algorithm in the \cgst{} model for computing a $(7/8 +\epsilon)$-approximation to the maximum independent set requires $\Omega(n^2/\log^7 n)$ rounds of communication.
\end{theorem}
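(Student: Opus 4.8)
The plan is to exhibit a \emph{family of gap lower bound graphs} with respect to $f=\disj_K$ for a suitable $K=\tilde\Theta(n^2)$, where the gap predicate $P$ asserts $\alpha(G_{x,y})\ge 8\ell+4t$, and the construction guarantees the separation: $\disj_K(x,y)=\false$ implies $\alpha(G_{x,y})\ge 8\ell+4t$, while $\disj_K(x,y)=\true$ implies $\alpha(G_{x,y})\le 7\ell+4t$, for appropriate parameters $\ell,t$. Once such a family is built with $|E_{\cut}|=\poly\log n$, Theorem~\ref{generallowerboundtheorem} together with $CC^R(\disj_K)=\Theta(K)$ gives an $\tilde\Omega(n^2)$ lower bound for deciding $P$, and this transfers to $(7/8+\eps)$-approximation: in the $\false$ case a $(7/8+\eps)$-approximation outputs an independent set of size at least $(7/8+\eps)(8\ell+4t)$, which exceeds $7\ell+4t$ provided $t$ is a sufficiently small (constant, depending on $\eps$) fraction of $\ell$, so the algorithm can tell the two cases apart (reporting the size costs only $O(D)$ extra rounds). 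Hence it remains to build the family and prove the gap.

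For the construction I would start from the MaxIS/MVC lower bound graph of \cite{DBLP:conf/wdag/Censor-HillelKP17} — four rows $A_1,A_2,B_1,B_2$ of row vertices, a bit-gadget whose vertices attach to row vertices according to binary representations, and cross edges encoding $x$ between $A_1$–$A_2$ and $y$ between $B_1$–$B_2$ — and \emph{enhance the bit-gadget} in two ways. First, instead of using all $k$ possible binary strings as indices of row vertices, use only the codewords of a binary code $\mathcal C\subseteq\{0,1\}^\ell$ with constant relative minimum distance and $|\mathcal C|=\Theta(k)$ codewords; this forces any two distinct row vertices in the same row to disagree in a constant fraction of the $\ell$ bit positions. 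Second, replace each bit position by a small constant-size gadget (and/or take a $\poly\log n$ number of parallel copies of the whole bit-gadget), tuned so that a maximum independent set is forced to make a genuine $\true/\false$ choice at each position, the gadget contribution to the independent set being $8\ell$ in the "aligned" case and strictly smaller otherwise; the quantities $8\ell+4t$ and $7\ell+4t$ and the precise power of $\log n$ come out of this bookkeeping. The coordinates of $x,y$ are indexed by pairs of codewords, so $K=|\mathcal C|^2=\tilde\Theta(n^2)$, and $E_{\cut}$ consists only of the (constantly many per position) bit-gadget edges plus a couple of auxiliary edges, giving $|E_{\cut}|=\poly\log n$.

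The heart of the argument is the gap. For \textbf{completeness}, assuming $\disj_K(x,y)=\false$, pick a witness pair of codewords $(i,j)$ for which the cross edges $\{a_1^i,a_2^j\}$ and $\{b_1^i,b_2^j\}$ are present, and exhibit the explicit independent set consisting, in each row, of the gadget vertices "complementary" to the relevant codeword ($i$ for $A_1,B_1$; $j$ for $A_2,B_2$) together with the four row vertices $a_1^i,a_2^j,b_1^i,b_2^j$; this is independent of size $8\ell+4t$ precisely because those two cross edges are the ones that are \emph{absent} in the complementary configuration, so all four row vertices may be taken. For \textbf{soundness}, assuming $\disj_K(x,y)=\true$, I would show that any independent set $I$ with $|I|>7\ell+4t$ must, in each row, be \emph{aligned} with a single codeword — this is where the constant relative distance of $\mathcal C$ is essential, since being simultaneously consistent with two codewords that differ in $\Omega(\ell)$ positions forfeits $\Omega(\ell)$ gadget vertices and pushes $|I|$ below the threshold — and then that these four aligned codewords, combined with the fact that $I$ contains row vertices, force a pair of cross edges $\{a_1^i,a_2^j\},\{b_1^i,b_2^j\}$ to exist, i.e.\ $x_{i,j}=y_{i,j}=1$, contradicting disjointness. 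This soundness direction — quantifying exactly how much an independent set loses by misalignment, and propagating alignment from the $A$-side across the cut to the $B$-side — is the main obstacle; it requires a careful counting over the enhanced bit-gadget, more delicate than but analogous to the case analysis in Lemma~\ref{lemma: mds disj}.

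Finally I would assemble the pieces: verify the four conditions of Definition~\ref{def: family of lb graphs} for the gap predicate (with $V_A$ the $A$-rows and their attached gadget vertices, $V_B$ the rest), check that $n$ equals $\Theta(|\mathcal C|)$ up to polylog factors so that $K=\tilde\Theta(n^2)$, and invoke Theorem~\ref{generallowerboundtheorem} to obtain $\Omega\bigl(CC^R(\disj_K)/(|E_{\cut}|\log n)\bigr)=\Omega(n^2/\log^7 n)$ rounds, the seven logarithmic factors accounting for the code length, the parallel copies of the bit-gadget, the cut size, and the $O(\log n)$ bandwidth. Combined with the observation in the first paragraph that a $(7/8+\eps)$-approximation decides the gap predicate, this proves Theorem~\ref{theorem:MaxIS}.
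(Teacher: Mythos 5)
Your overall strategy matches the paper's: build a gap family of lower bound graphs for $\disj_{k^2}$ in which row vertices are represented by codewords of an error-correcting code, prove completeness/soundness with values $8\ell+4t$ versus $7\ell+4t$, and invoke Theorem~\ref{generallowerboundtheorem} together with the observation that a $(7/8+\eps)$-approximation decides the gap. The paper indeed first proves the weighted case and then passes to the unweighted one by replacing each row vertex with a batch of $\ell$ twins (your ``parallel copies''), which is where the extra $\log$ factors come from.

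However, there is a genuine quantitative gap in your plan: you commit to a \emph{binary} code $\mathcal C\subseteq\{0,1\}^\ell$ with constant relative distance, and this cannot yield the $7/8$ threshold. In any such construction the loss that a ``misaligned'' independent set suffers is $\min(d,w)$, where $d$ is the code distance (one forfeited gadget vertex per differing coordinate in the relevant row pair) and $w$ is the weight of a row vertex (the alternative of simply dropping one row vertex). With an aligned optimum of $4w+4L$ for a length-$L$ code, the achievable inapproximability ratio is $1-\min(d,w)/(4w+4L)$, optimized at $w=d$ to give $1-d/(4d+4L)$; this reaches $7/8$ only when $d/L\to 1$. By the Plotkin bound, a binary code with $\omega(1)$ codewords has relative distance below $1/2$, so your construction tops out around $11/12$, not $7/8$. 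The paper avoids this by using a Reed--Solomon code over $\mathbb{F}_q$ with $q=\ell+t+1$, length $\ell+t$, dimension $t=\log k$ and distance $\ell+1$ (relative distance $1-o(1)$): each coordinate contributes a clique-column of $q$ gadget vertices per side, and corresponding $A$- and $B$-columns are joined by a complete bipartite graph minus a perfect matching, which is exactly the mechanism that propagates alignment across the cut (two surviving gadget vertices in a coordinate must carry the same field element). Your ``constant-size gadget per bit position'' with ``constantly many cut edges per position'' is tied to the binary alphabet and would have to be replaced by this $q$-ary column structure. A secondary omission: you never state that each set of row vertices $A_1,A_2,B_1,B_2$ is a clique (in the unweighted version, that batches with distinct indices in the same row are pairwise completely joined), which is what caps the row contribution at one (batch of) row vertex per row and makes the $\min(d,w)$ accounting valid.
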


We first prove that it is hard to find a $(7/8 +\epsilon)$-approximation for the weighted case, and then explain how this extends to the unweighted case as well.
We then further show that even if one is happy with a $(5/6 +\epsilon)$-approximation, at least a linear number of rounds must be spent in order to find it.

\begin{theorem}\label{theorem:linearMaxIS}
Given $\epsilon >0$, any distributed algorithm in the \cgst{} model for computing a $(5/6 +\epsilon)$-approximation to the maximum independent set requires $\Omega(n/\log^6 n)$ rounds of communication.
\end{theorem}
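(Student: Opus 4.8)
The plan is to exhibit, in the extended (gap) sense of a \emph{family of lower bound graphs} described just above Section~\ref{maxis lower bound}, a family $\{H_{x,y}\}$ with respect to $\disj_K$ and a gap-predicate $P$, where $H_{x,y}$ has an independent set of size $6\ell+O(t)$ when $\disj_K(x,y)=\false$ and every independent set of $H_{x,y}$ has size at most $5\ell+O(t)$ when $\disj_K(x,y)=\true$, for parameters $\ell,t$ with $t=o(\ell)$. Any algorithm computing a $(5/6+\epsilon)$-approximation to MaxIS then decides $P$ (it cannot return a value that is simultaneously consistent with both regimes), so by Theorem~\ref{generallowerboundtheorem} in its gap form it runs in $\Omega\!\left(CC^R(\disj_K)/(|E_{\cut}|\log n)\right)$ rounds. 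Arranging the construction so that $K=\tilde{\Theta}(n)$ and $|E_{\cut}|=\poly\log n$, together with $CC^R(\disj_K)=\Theta(K)$, yields $\Omega(n/\poly\log n)$, which a careful accounting of the polylog factors sharpens to the stated $\Omega(n/\log^6 n)$.

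For the construction I would start from the \emph{enhanced bit-gadget} of the proof of Theorem~\ref{theorem:MaxIS}: the row vertices of the four groups are indexed not by all of $[k]$ but by the codewords of an error-correcting code of constant rate and relative distance bounded away from $0$, which is exactly what makes a disagreement between an $A$-side choice and a $B$-side choice cost a constant \emph{fraction} of the achievable profit rather than a single unit. The modification needed here is to thicken the per-index gadget so that (i) a correctly aligned pair of choices contributes $6$ units to the independent set per gadget, giving the $6$-versus-$5$ separation in place of the $8$-versus-$7$ one, and (ii) each of the $K$ input bits is realised by a gadget of $\Theta(1)$ vertices, so that $n=\tilde{\Theta}(K)$; this is precisely the trade that turns the quadratic round bound of Theorem~\ref{theorem:MaxIS} into a linear one. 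As in Lemma~\ref{lemma: mds disj}, the $x$-dependent edges lie inside $V_A$ and the $y$-dependent edges inside $V_B$, so the cut consists only of the fixed inter-gadget edges and a constant number of ``hub'' edges, keeping $|E_{\cut}|=\poly\log n$. (One cannot instead obtain this by pushing the construction of Theorem~\ref{theorem:MaxIS} through the bounded-degree reduction chain of Section~\ref{sec:exact-bounded(b)}: that chain only preserves $\alpha$ up to the additive term $|E|+m_{\mathrm{exp}}$, which here dominates $\alpha$ itself and washes the multiplicative gap out to $1$, so a direct construction is required.)

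The two directions of the gap are then proved along the lines of the exact constructions, but quantitatively. For $\disj_K(x,y)=\false$ one fixes an index witnessing non-disjointness, uses it to align the chosen $A$-codeword with the chosen $B$-codeword, and writes down an explicit independent set of size $6\ell+O(t)$. For $\disj_K(x,y)=\true$ one argues that a near-optimal independent set must commit to a single $A$-codeword and a single $B$-codeword — any ``mixing'' inside a group already forfeits $\Theta(\ell)$ because of the clique/edge structure of the thickened gadgets — and then, since no index is common, these two codewords disagree on a constant fraction of coordinates, each such coordinate draining a bounded number of units from the profit, for a total loss of at least $\ell+O(t)$, i.e.\ at most $5\ell+O(t)$ overall. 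Finally, to pass from the weighted to the unweighted statement one replaces a weighted vertex by an independent set of parallel copies, exactly as for Theorem~\ref{theorem:MaxIS}; since all weights are polylogarithmic this inflates $n$ by only a polylogarithmic factor and changes neither $K$ nor $|E_{\cut}|$ asymptotically.

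The step I expect to be the main obstacle is the No-case upper bound: one must show that \emph{every} independent set, not merely ``structured'' ones, respects the $5\ell+O(t)$ cap. This requires ruling out every way an independent set might try to beat it — mixing codewords within a group, poaching vertices from the bit-gadget layer, or exploiting the $x,y$-edges — and, crucially, making the code's distance convert ``the two chosen codewords disagree somewhere'' into ``they disagree on a constant fraction of coordinates, each worth $\Omega(1)$''. This is the analogue of the long case analysis in Lemma~\ref{lemma: mds disj}, but now it must be quantitative rather than all-or-nothing, and tuning the gadget so the constants land on exactly the $6$-versus-$5$ ratio — so that hardness holds down to $5/6+\epsilon$ and not merely for some unspecified constant below $1$ — is the delicate part.
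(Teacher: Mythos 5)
Your scaffolding is right — the gap version of Definition~\ref{def: family of lb graphs}, $\disj_K$ with $K=\tilde{\Theta}(n)$, a polylogarithmic cut, and a $6\ell$-versus-$5\ell$ separation obtained by modifying the construction of Theorem~\ref{theorem:MaxIS} — and your remark that the bounded-degree reduction chain of Section~\ref{sec:exact-bounded(b)} cannot be used here is correct. But the construction itself, which is the content of the proof, is missing, and the sketch you offer in its place does not work as stated: you keep all four row groups and hope to ``thicken the per-index gadget'' so that an aligned choice contributes $6$ units per gadget with each input bit realised by $\Theta(1)$ vertices, without saying how, and you yourself flag the tuning as unresolved. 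The paper's device is different and concrete: delete $A_1$ and $B_1$ together with their code gadgets, and replace them by two hub batches $batch(v_A),batch(v_B)$ of $\ell$ vertices each; the inputs now have length $k$, with the edge $(v_A,a_2^i)$ present iff $x_i=0$ and $(v_B,b_2^i)$ present iff $y_i=0$. The optimum is then $4\ell$ (four batches) plus $2(\ell+t)$ (the single surviving code-gadget pair, which aligns the $A_2$ and $B_2$ choices), i.e.\ $6\ell+2t$ in the Yes case and at most $5\ell+2t$ in the No case, with $n=\Theta(k\ell)$ and $|E_{\cut}|=O((\ell+t)^2)$ — which is exactly where both the linear round bound and the $5/6$ ratio come from.

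Moreover, the No-case mechanism you describe is logically backwards. You write that ``since no index is common, these two codewords disagree on a constant fraction of coordinates,'' but disjointness of $x$ and $y$ does not force the chosen codewords to disagree: the aligned choice is always available to an adversarial independent set. The code distance only punishes the \emph{misaligned} branch; in the aligned branch the $\ell$ loss must come from the input-dependent edges (in the paper's construction: for every $i$, at least one of $batch(a_2^i),batch(b_2^i)$ is adjacent to its hub batch, so one of the four $\ell$-sized batches must be dropped). A correct No-case argument is a two-branch case split — misalignment forfeits $\ell$ code-gadget vertices, alignment forfeits an $\ell$-sized batch — and the branch that actually uses $\disj_K(x,y)=\true$ is the one absent from your sketch.
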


A key component of our lower bounds for MaxIS is a way to employ error correcting codes in our graph constructions, because of reasons which we later give some intuition for. In what follows, we provide the required background. Given some finite field $\mathbb{F}$ of size $q$, denote by $\mathbb{F} ^N$ the linear space of $N$-tuples over $F$. A \emph{linear code} $\C$ of \emph{length} $N$ and \emph{dimension} $\kappa$ is a linear subspace of dimension $\kappa$ of $\mathbb{F} ^N$. The Hamming distance  $d (v_1,v_2)$ between two elements $v_1,v_2\in \mathbb{F} ^N$ is the number of coordinates in which they differ. The \emph{distance} of a linear code $\C$ is $d=\min_{c_1,c_2 \in \C } d (c_1,c_2)$. Thus, a linear code $\C$ has parameters $(N,\kappa,d,q)$. We will make use of Reed-Solomon codes (see, e.g.,~\cite[Proposition 4.2]{DBLP:books/daglib/0015526}), which are codes with parameters $(N,\kappa,N-\kappa+1,q)$, where $q> N$. The value of $q$ is any prime power that is larger $N$.

To prove Theorem~\ref{theorem:MaxIS}, we first prove it for the weighted case.
\begin{theorem}\label{theorem:wMaxIS}
Given $\epsilon >0$, any distributed algorithm in the \cgst{} model for computing a $(7/8 +\epsilon)$-approximation to the maximum weight independent set requires $\Omega(n^2/\log^5 n)$ rounds of communication.
\end{theorem}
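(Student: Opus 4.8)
The plan is to exhibit, for suitable parameters $\ell$ and $t$, a family of lower bound graphs $\{G_{x,y}\}$ with respect to $f=\disj_{k^2}$ and the \emph{extended} predicate $P$: ``$G_{x,y}$ has an independent set of weight at least $8\ell+4t$''. The construction will guarantee that the maximum-weight independent set has weight exactly $8\ell+4t$ when $\disj_{k^2}(x,y)=\false$, and weight at most $7\ell+4t$ when $\disj_{k^2}(x,y)=\true$. Since $\tfrac{7\ell+4t}{8\ell+4t}\to\tfrac78$ as $t/\ell\to0$, choosing $t$ to be a sufficiently small fraction of $\ell$ (e.g.\ $t=\Theta(1)$ while $\ell=\Theta(\log k)$) makes this ratio at most $\tfrac78+\epsilon$ for all large $k$; hence any $(\tfrac78+\epsilon)$-approximation algorithm, followed by $O(D)$ rounds that sum up the weight of the returned set, decides $P$. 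Feeding $CC^R(\disj_{k^2})=\Theta(k^2)$, $|E_{\cut}|=O(\log^{O(1)}n)$ and $n=\Theta(k)\cdot\mathrm{poly}(\log k)$ into Theorem~\ref{generallowerboundtheorem} (in its trivial extension to approximate predicates) then yields the bound; tracking the polylogarithmic blow-ups incurred below pins the exponent at $5$.

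For the construction I would start from the CHKP-style MaxIS (complement-of-MVC) lower bound graph --- four rows $A_1,A_2,B_1,B_2$ of row vertices, with Alice's input $x$ controlling the edges of $G[V_A]$ between $A_1$ and $A_2$, Bob's input $y$ controlling the edges of $G[V_B]$ between $B_1$ and $B_2$, and a bit-gadget across the cut --- and \emph{enhance the bit-gadget} exactly as previewed in the section introduction. Concretely, I would index the $k$ vertices of each row by $k$ distinct codewords of a Reed--Solomon code $\C$ with parameters $(N,\kappa,N-\kappa+1,q)$, where $q$ is a prime power with $q>N$, $N=\Theta(\log k)$, and $\kappa\le N/2$ (so $q^{\kappa}\ge k$ and the distance is $\ge N/2$); thus any two row vertices in the same row have their $\mathbb{F}_q$-symbols differing in at least $N/2$ of the $N$ coordinates. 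The bit-gadget is then organized into $N$ coordinate-blocks, each of size $\Theta(q)$ (with symbols further encoded in $\Theta(\log q)$ bits), and weights are assigned so that (i) every maximum-weight independent set is forced, block by block, to ``agree'' with a single codeword in each of the four rows, contributing a fixed amount $8\ell$ of weight that is \emph{aligned} with those four codewords, and (ii) the remaining $4t$ weight is available exactly when one row vertex can be taken from each of the four rows so that the $A_1$- and $A_2$-selected codewords are not joined by an $x$-edge and the $B_1$- and $B_2$-selected codewords are not joined by a $y$-edge.

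When $\disj_{k^2}(x,y)=\false$ there is a common index for which this holds, giving weight $8\ell+4t$; when $x,y$ are disjoint, every such quadruple fails, and the code distance is precisely what forbids recouping the loss by partially re-aligning the bit-gadget toward a second codeword --- any such re-alignment misaligns $\ge N/2$ coordinate-blocks and costs $\ell$ --- so the optimum drops to at most $7\ell+4t$. Verifying these two structural statements is the analogue, for the present construction, of Lemma~\ref{lemma: mds disj}. The cut $E_{\cut}$ consists only of the $O(\log^{O(1)}n)$ bit-gadget edges joining Alice's and Bob's coordinate-blocks (plus $O(1)$ auxiliary edges), and the number of vertices is $\Theta(k)$ up to the $\mathrm{poly}(\log k)$ overhead of the $q$-ary, binary-encoded bit-gadget, so $k=\Theta(n)/\mathrm{poly}(\log n)$ and Theorem~\ref{generallowerboundtheorem} delivers $\Omega(n^2/\log^5 n)$. (The unweighted statement, Theorem~\ref{theorem:MaxIS}, then follows by the standard device of replacing a weight-$w$ vertex with $w$ true copies forming an independent set, which inflates $n$ by the $\Theta(\log n)$ magnitude of the weights and loses two further logarithmic factors.)

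I expect the main obstacle to be item (ii): calibrating the weights so that the gap is genuinely \emph{multiplicative} --- i.e.\ $\ell$ is a constant fraction of the optimum $8\ell+4t$ rather than a lower-order additive term --- while simultaneously (a) making the ``alignment'' rigid enough that no independent set can straddle two codewords (or otherwise deviate inside a block) to recover the $\ell$ weight, which is where the code's minimum distance $N/2$ must be used quantitatively; (b) keeping all weights $\mathrm{poly}(n)$ so that the separation is preserved when rounding; and (c) preserving that only $O(\log^{O(1)}n)$ edges cross the cut despite the enriched bit-gadget, since even a $\mathrm{poly}(\log n)$-sized cut already costs us in the exponent. Once the weight assignment is fixed and the two structural claims above are proven, the theorem follows immediately from Theorem~\ref{generallowerboundtheorem}.
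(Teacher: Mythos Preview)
Your high-level plan is exactly what the paper does: replace the binary bit-gadget by a code gadget indexed by Reed--Solomon codewords, so that any forced ``mismatch'' across the cut costs a constant fraction of the optimum rather than a single bit. The target gap $8\ell+4t$ versus $7\ell+4t$ is precisely the paper's Lemma~\ref{lemma:MaxISgap}.

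There is, however, a genuine miscalibration in your weight assignment. You propose that the four row vertices contribute $4t$ with $t=\Theta(1)$ while the code gadget contributes $8\ell$ with $\ell=\Theta(\log k)$. With these weights the disjoint case does \emph{not} drop to $7\ell+4t$: the adversary simply takes only three row vertices (say $a_1^i,a_2^{\tilde i},b_1^i$ with $x_{i,\tilde i}=1$), lets all four code-gadget columns align perfectly with the codewords of $i$ and $\tilde i$ (the $B_2$ column is unconstrained by any row vertex), and achieves $8\ell+3t$. Since $t\ll\ell$, this exceeds $7\ell+4t$ and the gap collapses to an additive $t$. The paper's fix is to set each row-vertex weight equal to $\ell$ (and each code-gadget vertex to weight $1$) and to take the Reed--Solomon parameters $(\ell+t,\,t,\,\ell+1,\,q)$ with $t=\log k$, $\ell=c\log^2 k$, $q=\ell+t+1$: then dropping a row vertex costs exactly $\ell$, misaligning the code gadget also costs at least $\ell$ (the code distance), and both failure modes meet at $7\ell+4t$. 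This balance---row-vertex weight $=$ code distance---is the heart of the calibration you correctly anticipated as the obstacle.

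A minor point: there is no need to binary-encode the $\mathbb{F}_q$ symbols. The paper uses one code-gadget vertex per element of $\mathbb{F}_q$ per coordinate per row, connects $row(j,A_z)$ and $row(j,B_z)$ by a complete bipartite graph minus the identity matching, and connects each row vertex to every code-gadget vertex not on its codeword. With $q=\Theta(\log^2 k)$ this keeps the cut polylogarithmic and $n=\Theta(k)$, which is what pins the exponent at $5$.
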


\begin{figure}[h]
\label{fig:maxis}
    \centering
    \includegraphics[width=170mm]{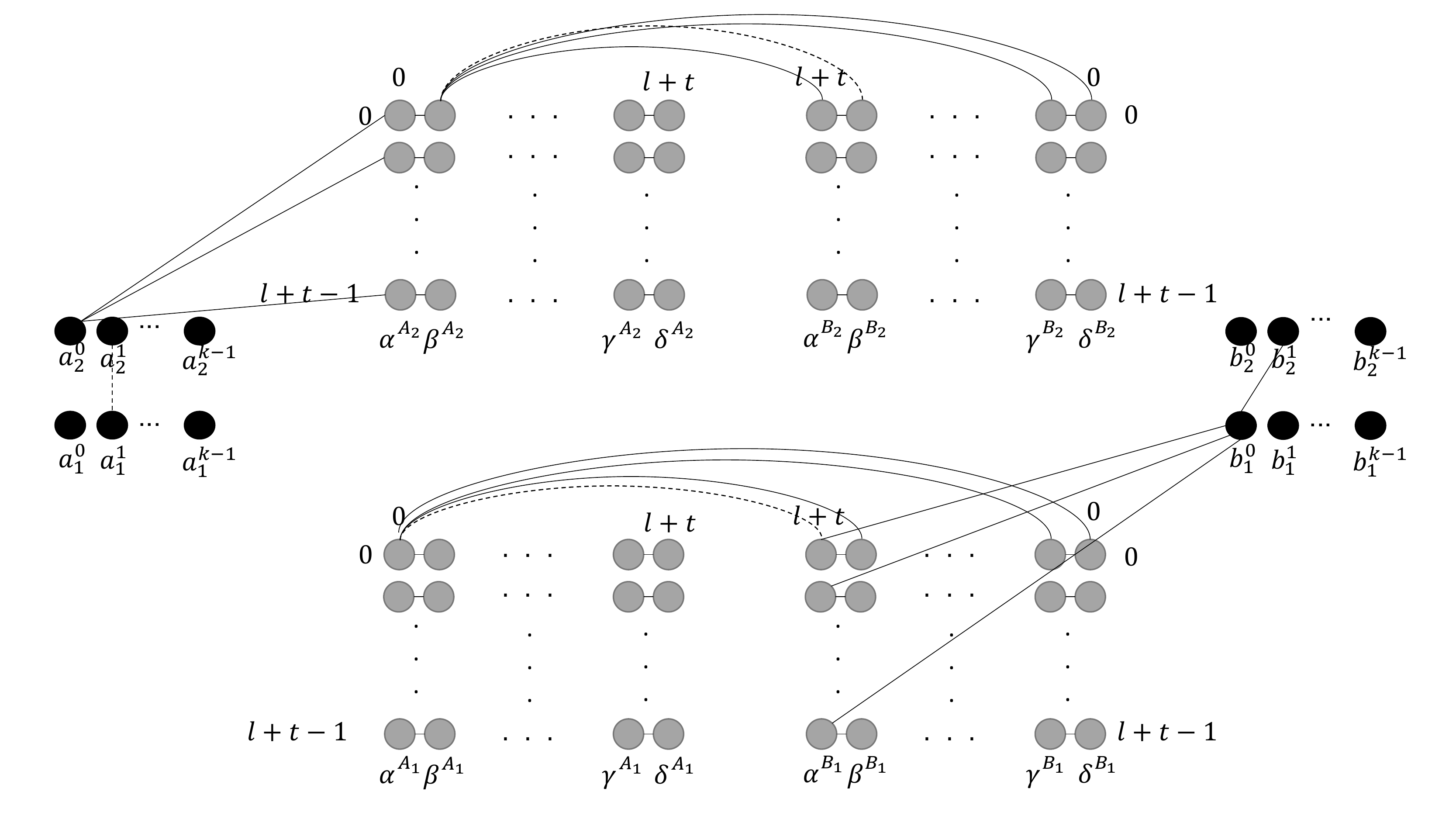}
    \caption{The family of lower bound graphs for the $(7/8+\epsilon)$-approximation of the weighted maximum independent set, with most of the edges omitted for clarity. Note that  $a_2 ^{0}$ is  connected to all vertices associated with $A_2$  except for all the  vertices in $\alpha _0 ^{A_2}$,  Similar examples are given for Bob's side. There are also examples of input edges corresponding to the case where $x_{(1,1)} =1, y_{(0,1)} =0$.}
    \label{fig:codes}
\end{figure}

\textbf{The fixed graph construction:} See Figure~\ref{fig:maxis} for an illustration of the base graph $G$. We define 4 sets of vertices of size $k$:
$A_1=\{a_1 ^i ~|~ 0\leq i\leq k-1\},A_2=\{a_2 ^i ~|~ 0\leq i\leq k-1\},B_1=\{b_1 ^i ~|~ 0\leq i\leq k-1\}$ and $B_2=\{b_2 ^i ~|~ 0\leq i\leq k-1\}$, which we call \emph{row vertices}. Each $S\in \{A_1,A_2,B_1,B_2\}$ is connected as a clique. Denote $\ell =c\log^2 k$, where the choice of the constant $c$ will be explained shortly, and assign all row vertices, i.e., all vertices in $A_1,A_2,B_1,B_2$, with weight $\ell$.

Consider a linear code $\C$ with parameters $(\ell+t,t,\ell+1,q)$ where $t=\log k$. It holds that $q> \ell+t$, and we can choose the constant $c$ so that equality holds, i.e., $q=\ell+t+1$. In words, $\ell+t$ is the length of the code, $t$ is its dimension, and $\ell+1$ is its distance. We leverage this large distance by representing each row vertex with a different codeword of $\C$ using a subgraph which we call a \emph{code gadget}, as follows.

For each element $\alpha \in \mathbb{F}_q$ and each $S\in \{A_1, A_2,B_1,B_2\}$ we add a \emph{column} of $\ell+t$ vertices (these are $4q(\ell+t)$ vertices in total). We call these vertices \emph{code gadget vertices}. All code gadget vertices are assigned a  weight of $1$. For each $0 \leq j \leq \ell+t-1$, we denote by $row(j,S)$ the set of vertices $\{\alpha_j^S ~|~ \alpha \in \mathbb{F}_q\}$. For each $S\in \{A_1,A_2,B_1,B_2\}$ and each $0 \leq j \leq \ell+t-1$, we connect all vertices of $row(j,S)$ as a clique. For each $z\in \{1,2\}$ and each $0 \leq j \leq \ell+t-1$, we connect $row(j,A_z)$ and $row(j,B_z)$ by a complete bipartite graph without a perfect matching. More specifically, for each $z\in \{1,2\}$, $0 \leq j \leq \ell+t-1$, and $\alpha\neq\alpha' \in \mathbb{F}_q$, we add an edge $(\alpha_j^{A_z}, \alpha_{j}^{B_z})$.

Now, we describe how to connect the row vertices to the code gadget vertices. Let $g: \{0,\dots, k-1\} \to \C $ be an injection. We abuse notation by referring to $g$ also as a function from any $S\in \{A_1, A_2,B_1,B_2\}$ to $\C$, in the natural way.
For an element $c \in \C$, and $0\leq j \leq \ell+t-1$ we denote by $c_j$ the element of $F$ which is in the $j$-th coordinate of $c$. For each $S\in \{A_1, A_2,B_1,B_2\}$, $0 \leq i \leq k-1$, we define $code(S^i)=\{v\in \alpha^S ~|~ \alpha \in\mathbb{F}_q \} \setminus \{\alpha^S_j ~|~ g(S^i)_j = \alpha\}$. We connect $S^i$ to all of the code gadget vertices in $code(S^i)$.

\paragraph{Intuition:} The construction of~\cite{DBLP:conf/wdag/Censor-HillelKP17} does not handle approximations well, since the representations of two distinct row vertices may be too similar, e.g., when their binary representation differs in a single bit. Error correcting codes give us the ability to represent each row vertex with a codeword while making sure that the representations of each two distinct row vertices are different in at least $\ell$ edges.
Thus, any small change in the row vertices of a given independent set forces a significant change in the choice of its code-gadget vertices.

\begin{claim} \label{claim:ISfixed}
Let $I\subseteq V$ be an independent set in $G$. Then $I$ contains at most $4(\ell+t)$ code gadget vertices. Specifically, it contains at most 2 vertices from each code gadget row, i.e., for all $0\leq j\leq \ell+t-1$ and each $z \in \{1,2\}$, it holds that $| I\cap \{\alpha_j^S ~|~ \alpha\in \mathbb{F}_q,\ S\in \{A_z,B_z\}\}| \leq 2$.  Furthermore, if equality holds for some $0\leq j\leq \ell+t-1$ and $z \in\{1,2\}$ then there is (a single) $\alpha\in \mathbb{F}_q$ such that  $\alpha _{j} ^{A_z}$ and $\alpha _{j} ^{B_z}$ are the vertices in $I$.
\end{claim}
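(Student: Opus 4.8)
The plan is to analyze the independent set $I$ one code-gadget row at a time, using the two structural ingredients of each row: within a fixed row $j$ of a fixed side $S$, the vertices $\{\alpha_j^S \mid \alpha\in\mathbb{F}_q\}$ form a clique, and across the two sides of a level $z$, the bipartite graph between $row(j,A_z)$ and $row(j,B_z)$ is complete \emph{minus} a perfect matching. First I would observe that since each $row(j,S)$ is a clique, $I$ contains at most one vertex from $row(j,A_z)$ and at most one from $row(j,B_z)$, so at most two vertices from $\{\alpha_j^S \mid \alpha\in\mathbb{F}_q,\ S\in\{A_z,B_z\}\}$ for each $j$ and $z$. Summing over the $\ell+t$ values of $j$ and the two values of $z$ gives the bound of $4(\ell+t)$ code-gadget vertices in $I$ in total.

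The more delicate part is the ``furthermore'' clause. Suppose for some $j$ and some $z\in\{1,2\}$ equality holds, i.e.\ $I$ contains a vertex $\alpha_j^{A_z}$ and a vertex $\beta_j^{B_z}$ for some $\alpha,\beta\in\mathbb{F}_q$. Since the bipartite graph between $row(j,A_z)$ and $row(j,B_z)$ contains every edge $(\gamma_j^{A_z},\gamma_j'^{B_z})$ with $\gamma\neq\gamma'$, the only way $\alpha_j^{A_z}$ and $\beta_j^{B_z}$ can be non-adjacent is if $\alpha=\beta$. Hence there is a single element $\alpha\in\mathbb{F}_q$ with $\alpha_j^{A_z},\alpha_j^{B_z}\in I$, which is exactly the claimed conclusion.

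I expect the whole claim to be essentially a direct unwinding of the definitions; the only point requiring care is to make sure the ``complete bipartite graph without a perfect matching'' is read correctly — namely that the missing matching is precisely $\{(\alpha_j^{A_z},\alpha_j^{B_z}) \mid \alpha\in\mathbb{F}_q\}$, so that two vertices of the same index $\alpha$ on opposite sides are the \emph{only} non-adjacent cross pair in that row. One should also note that the code-gadget vertices across different levels $z\neq z'$ (and the row vertices and the $A$–$B$ input edges) play no role here: the argument is entirely local to a single pair $row(j,A_z),row(j,B_z)$, and these pairs are vertex-disjoint as $j$ and $z$ vary, so the per-row bounds add up without interference. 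No genuine obstacle is anticipated; this claim is a warm-up structural fact to be used in the subsequent approximation analysis.
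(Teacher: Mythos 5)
Your proposal is correct and matches the paper's own proof: both argue that each $row(j,S)$ clique contributes at most one vertex of $I$, sum over rows, and then use the ``complete bipartite minus a perfect matching'' structure to force $\alpha=\beta$ in the equality case. No differences worth noting.
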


\begin{proof}
For all $0\leq j\leq \ell+t-1$ and each  $S\in \{A_1, A_2,B_1,B_2\}$, the vertices in $row(j,S)$ are connected by a clique, which implies that no two of them can be in $I$. This gives $| I\cap \{\alpha_j^S ~|~ \alpha\in \mathbb{F}_q, S\in \{A_z,B_z\}\}| \leq 2$. Further, recall that for each $z\in \{1,2\}$, $0 \leq j \leq \ell+t-1$, and $\alpha\neq\alpha' \in \mathbb{F}_q$, we add an edge $(\alpha_j^{A_z}, \alpha_{j}^{B_z})$. This means that if $\alpha_j^{A_z}, \beta_j^{B_z}$ are in $I$, then $\alpha=\beta$. That is, if $| I\cap \{\alpha_j^S ~|~ \alpha\in \mathbb{F}_q, S\in \{A_z,B_z\}\}| = 2$, then there is an element $\alpha\in \mathbb{F}_q$ so that $\alpha _{j} ^{A_z}$ and $\alpha _{j} ^{B_z}$ are the vertices in $I$.
\end{proof}

An immediate corollary of Claim~\ref{claim:ISfixed} is that any independent set in $G$ has a weight of at most $4(\ell+t+k)$ since we can take at most a single vertex from each set $A_1,A_2,B_1,B_2$, as each such set is connected as a clique.

\textbf{Constructing $G_{x,y}$ from $G$ given $x,y\in \{0,1\}^{k^2}$:} We index the strings $x,y$ by pairs of the form $(i,i')$, such that $0 \leq i,i'\leq k-1$. Now we augment $G$ in the following way: For all pairs $(i,i')$, we add the edge $(a_1^i, a_2^{i'})$ if and only if $x_{i,i'} =0$, and we add the edge $(b_1^i, b_2^{i'})$ if and only if $y_{i,i'} =0$.

We now prove the following lemma.
\begin{lemma}\label{lemma:MaxISgap}
For every $x,y\in \{0,1\}^{k^2}$, it holds that $G_{x,y}$ has an independent set of weight $8\ell+4t$ if and only if $\disj_{k^2} (x,y)=\false$. If $\disj_{k^2} (x,y)=\true$ then the weight of the maximum independent set in $G_{x,y}$ is $7\ell+4t$.
\end{lemma}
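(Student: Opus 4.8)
The plan is to analyze the structure of independent sets in $G_{x,y}$ by combining the "clique" constraints with the code-gadget constraints from Claim~\ref{claim:ISfixed}. First I would record the key observation: each of $A_1,A_2,B_1,B_2$ is a clique, so any independent set $I$ takes at most one row vertex from each, contributing weight at most $4\ell$; and by Claim~\ref{claim:ISfixed} the code gadget vertices contribute weight at most $4(\ell+t)$, with equality in row $j$ for side-pair $z$ forcing both chosen vertices to be $\alpha_j^{A_z},\alpha_j^{B_z}$ for a single $\alpha$. So $w(I)\le 8\ell+4t$ always, and to reach the maximum we must pick exactly one row vertex from each of $A_1,A_2,B_1,B_2$ and exactly two code gadget vertices (a "matched pair" $\alpha_j^{A_z},\alpha_j^{B_z}$) from every code-gadget row. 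The crucial compatibility point is: if $a_z^i$ (resp. $b_z^i$) is in $I$, then $I$ cannot use any code-gadget vertex in $code(S^i)$; since the matched pair in row $j$ must avoid $code(a_z^i)$, the only surviving choice is the vertex $\alpha_j^{A_z}$ with $\alpha=g(a_z^i)_j$ — i.e., the code-gadget choices are completely pinned down by the codeword $g(a_z^i)$. Hence for side-pair $z$, both $a_z^i$ and $b_z^{i'}$ can be in $I$ together with a full set of matched code-gadget pairs only if $g(a_z^i)=g(b_z^{i'})$, equivalently (since $g$ is an injection) $i=i'$.

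Next I would prove the "if" direction. Assume $\disj_{k^2}(x,y)=\false$, with witness $(i,i')$ so that $x_{i,i'}=y_{i,i'}=1$, meaning the edges $(a_1^i,a_2^{i'})$ and $(b_1^i,b_2^{i'})$ are \emph{absent}. Build $I$ by taking $a_1^i,a_2^{i'},b_1^i,b_2^{i'}$ — these four are pairwise non-adjacent: within a side-pair the only possible edge is the input edge, which is absent by the witness; across side-pairs $A$ and $B$ there are no edges. For the code gadgets: for side-pair $z=1$, the codewords $g(a_1^i)$ and $g(b_1^i)$ — careful, here we need $g$ to assign \emph{the same} codeword argument; since we want matched pairs $\alpha_j^{A_1},\alpha_j^{B_1}$ avoiding both $code(a_1^i)$ and $code(b_1^i)$, we need $g(a_1^i)_j=g(b_1^i)_j$ for all $j$. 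I should double check the intended indexing: row vertices $a_1^i$ and $b_1^i$ sharing index $i$ should map to the same codeword under $g$, so the matched pair in row $j$ is $\alpha_j^{A_1},\alpha_j^{B_1}$ with $\alpha=g(i)_j$. This pair is non-adjacent to $a_1^i,a_2^{i'},b_1^i,b_2^{i'}$ because it avoids $code(a_1^i)$ and $code(b_1^i)$ by construction, and the other two row vertices are on side-pair $z=2$ whose code-gadget rows are disjoint. Similarly for $z=2$ use $\alpha=g(i')_j$. Finally the matched pairs across different rows $j$ are non-adjacent (edges only go within a row or between $A_z$ and $B_z$ in the same row). This yields $w(I)=4\ell + 4(\ell+t)=8\ell+4t$.

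For the "only if" / gap direction, suppose $\disj_{k^2}(x,y)=\true$ and let $I$ be any independent set; I must show $w(I)\le 7\ell+4t$. Using the structural analysis above, if $w(I)>7\ell+4t$ then, since code gadget weight is at most $4(\ell+t)$ and $\ell$ is large compared to $t$ (so $4(\ell+t) < 7\ell+4t$ forces... let me instead argue directly), $I$ must contain strictly more than $3$ row vertices, hence all four, one from each of $A_1,A_2,B_1,B_2$, say $a_1^{i_1},a_2^{i_2},b_1^{j_1},b_2^{j_2}$; and moreover $I$ must contain at least $4(\ell+t) - (\ell-1) \cdot(\text{something})$... the cleaner route: if $I$ has four row vertices but the code-gadget weight is below $4(\ell+t)$, I need to bound the deficit. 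The key quantitative step is that if the row vertices on side-pair $z$ have $i_z\ne j_z$, then by the distance property of the code $g(i_z)$ and $g(j_z)$ differ in at least $\ell+1$ coordinates; in each such coordinate $j$, no vertex of $row(j,A_z)$ or $row(j,B_z)$ can be in $I$ (the only candidates compatible with $a_z^{i_z}$ are in $code(a_z^{i_z})^c$ and with $b_z^{j_z}$ in $code(b_z^{j_z})^c$, and these two singletons disagree), so $I$ loses at least $2$ weight per such coordinate relative to the maximum... that would give a loss of $2(\ell+1)$, too much. I think the intended bound is subtler: one shows that a maximum-weight $I$ with four row vertices forces $i_1=j_1$ and $i_2=j_2$ via the matched-pair argument, and then the absent-edge requirement ($x_{i_1,i_2}=1$ for $a_1^{i_1}\not\sim a_2^{i_2}$, similarly $y$) forces $x_{i_1,i_2}=y_{i_1,i_2}=1$, contradicting disjointness; hence in the \true\ case $I$ has at most three row vertices, giving $w(I)\le 3\ell+4(\ell+t)=7\ell+4t$.

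The main obstacle I anticipate is making the gap argument tight and correct: specifically, carefully showing that one \emph{cannot} compensate for having only three row vertices by somehow gaining extra code-gadget vertices (impossible, since $4(\ell+t)$ is the absolute cap by Claim~\ref{claim:ISfixed}), and conversely that having all four row vertices genuinely forces the codeword-equality $g(i_z)=g(j_z)$ on both side-pairs — this is where the "complete bipartite minus a perfect matching" structure and the definition of $code(S^i)$ interact delicately, and where I'd want to be most careful about off-by-one issues in which vertex survives in each row. Once that structural dichotomy is established, both directions and the exact value $7\ell+4t$ in the \true\ case follow by bookkeeping.
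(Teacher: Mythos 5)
Your overall strategy is exactly the paper's: cap the row-vertex contribution at $4\ell$ and the code-gadget contribution at $4(\ell+t)$ via Claim~\ref{claim:ISfixed}, observe that a row vertex $s^i\in I$ pins the unique usable vertex of $row(j,S)$ to the codeword coordinate $g(i)_j$, build the weight-$(8\ell+4t)$ set from a non-disjointness witness, and in the other direction use the code distance to penalize mismatched indices. The forward direction and the structural setup are correct.

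The one place your argument goes wrong is the quantitative step in the gap direction, and your attempted repair does not work. First, in a coordinate $j$ where $g(i_z)_j\neq g(j_z)_j$ it is \emph{not} true that no vertex of $row(j,A_z)\cup row(j,B_z)$ can be in $I$: the single non-neighbor of $a_z^{i_z}$ in $row(j,A_z)$ and the single non-neighbor of $b_z^{j_z}$ in $row(j,B_z)$ are each individually available; they just cannot \emph{both} be taken, because they carry distinct field elements and are therefore adjacent (the bipartite graph is complete minus the perfect matching). So the loss is $1$ per differing coordinate, hence at least $\ell$ in total, giving $w(I)\le 4\ell+4(\ell+t)-\ell=7\ell+4t$ — which is exactly the bound you need, so there was nothing "too much" to worry about (and in any case an over-large loss could only strengthen an upper bound). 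Second, your fallback conclusion "hence in the $\true$ case $I$ has at most three row vertices" is not a valid deduction: when $\disj_{k^2}(x,y)=\true$, an independent set may well contain four row vertices with mismatched indices; the bound in that case comes from the $\ge\ell$ code-gadget deficit just computed, not from dropping a row vertex. The correct conclusion is the dichotomy: either $I$ has at most three row vertices (weight $\le 3\ell+4(\ell+t)=7\ell+4t$), or it has four, in which case matching indices on both side-pairs would force $x_{i,i'}=y_{i,i'}=1$ and contradict disjointness, so some pair mismatches and the $\ell$-loss applies. With that one count fixed, your proof coincides with the paper's.
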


\begin{proof}
Suppose $\disj_{k^2} (x,y)=\false$ and let $(i,i')$ be the pair for which $x_{i,i'} =y_{i,i'}=1$. This means that  the edges $(a_1 ^i,a_2 ^{i'}),(b_1 ^i,b_2 ^{i'})$ do not exist in $G_{x,y}$. Define $$I=\{a_1 ^i,a_2^{i'},b_1^i,b_2 ^{i'}\}\cup code(a_1 ^i)\cup code(b_1 ^i)\cup code(a_2 ^{i'})\cup code(b_2 ^{i'}).$$ It is easy to verify that $I$ is an independent set and that its weight is $8\ell+4t$.

For the other direction, suppose $G_{x,y}$ has an independent set $I$ of weight $8\ell+4t$. By Claim~\ref{claim:ISfixed} (and its following corollary) it must contain 4 row vertices, one vertex from each of the sets $A_1,B_1,A_2,B_2$. Denote these vertices $a_1 ^i, b_1^{i'} ,a_2 ^{\tilde{i}},b_2 ^{\tilde{i}'}$.
We claim that it must hold that $i=i'$ and $\tilde{i}=\tilde{i}'$, which implies that $\disj_{k^2} (x,y)=\false$. We prove this by showing that otherwise, there do not exist an additional $4(\ell+t)$ vertices in $I$ from the code gadget. Assume that $i\neq i'$. Then there exists $0 \leq j \leq \ell+t-1$ such that $g(i)=\alpha$, $g(i')=\beta$, such that $\alpha\neq\beta$. By the construction, we have that the only vertex in $row(j,A_1)$ that is not connected to $a_1 ^i$ is $\alpha_j^{A_1}$, and the only vertex in $row(j,B_1)$ that is not connected to $b_1 ^i$ is $\beta_j^{B_1}$. But $\alpha_j^{A_1}$ and  $\beta_j^{B_1}$ cannot both be in $I$ because they are connected by an edge, since $\alpha\neq\beta$.  Hence, $I$ contains at most one vertex from $row(j,A_1)\cup row(j,B_1)$, so by Claim~\ref{claim:ISfixed} we have that $I$ cannot contain $4(\ell+t)$ vertices from the code gadget. Hence, it must be that $i=i'$, and a similar argument shows that $\tilde{i}=\tilde{i}'$.

Moreover, if $a_1 ^i, b_1^{i'}$ are in an independent set $I$ but it does not hold that $i=i'$, then not only there exists $0 \leq j \leq \ell+t-1$ such that there can be at most a single vertex from $row(j,A_1)\cup row(j,B_1)$ in $I$, but there are in fact at least $\ell$ such indexes $j_1,\dots, j_{\ell}$ such that for every $1\leq r \leq \ell$ there can be at most a single vertex from $row(j_r,A_1)\cup row(j_r,B_1)$ in $I$. This follows directly from the distance $\ell$ of the code $\C$. Therefore, if $a_1 ^i, b_1^{i'} ,a_2 ^{\tilde{i}},b_2 ^{\tilde{i}'}$ are in $I$ but it does not hold that $i=i'$ and $\tilde{i}=\tilde{i}'$, then there are at most $4(\ell+t)-\ell$ vertices from the code gadget that appear in $I$, implying that its weight is at most $7\ell+4t$. Alternatively, if $\disj_{k^2} (x,y)=\true$, it may be possible to take $4(\ell+t)$ vertices from the code gadget into $I$ while sacrificing one vertex from $A_1,B_1,A_2$ or $B_2$. This also gives that the weight of $I$ is at most $7\ell+4t$, which completes the proof.
\end{proof}

We can now wrap up the proof of the lower bound.
\begin{proofof}{Theorem \ref{theorem:wMaxIS}}
Partition the vertices into $V_A,V_B$, such that $V_A =\bigcup_{j=0} ^{\ell+t-1} \alpha _j^{A_1} \cup \bigcup_{j=0} ^{\ell +t-1} \alpha _j^{A_2} \cup A_1 \cup A_2$ and $V_B=V\setminus V_A$.
Let $P$ be the property of having an independent set of weight $8\ell +4t$. Lemma~\ref{lemma:MaxISgap} implies that $G_{x,y}$ is a family of lower bound graphs with respect to $\disj$ and $P$, which allows us to complete the proof of the lower bound, as follows.

It holds that  $E(V_A,V_B)=E_{cut}=O((\ell+t)^2 )$. By Theorem~\ref{generallowerboundtheorem}, any algorithm for deciding $P$ requires $\Omega(k^2/(\ell+t)^2\log n)$ rounds. Recalling that $\ell=O(\log^2 k), t=O(\log k)$, gives that $n=\Theta(k)$, which implies a lower bound of $\Omega(n^2/\log^5 n)$ rounds, for distinguishing between the case of a maximum independent set being of weight $8\ell+4t$ and that of a maximum independent set being of weight $7\ell+4t$. This implies a lower bound for an approximation ratio of $(7\ell+4t)/(8\ell+4t)$, which is $7/8+\epsilon$.
\end{proofof}

To prove the same lower bound for the unweighted case, we essentially use the same approach, but slightly modify it to handle the fact that we cannot use weights.

\begin{proofof}{Theorem \ref{theorem:MaxIS}}
We modify the family of lower bound graphs as follows. For all $0\leq j\leq \ell+t-1$ and each  $S\in \{A_1, A_2,B_1,B_2\}$, we replace each row vertex $S^i$ with an independent set of $\ell$ vertices, denoted by $batch(S^i)=\{S^i(\xi) ~|~ 0 \leq \xi\leq\ell-1\}$. Any edge touching $S^i$ is now replaced by $\ell$ edges, each one touching a different $S^i(\xi)$. Note that, if one imagines modifying the edges for each batch in any arbitrary order, gives that an edge whose both endpoints were row vertices, is now replaced by $\ell^2$ edges.

The key observation is that if $I$ is a maximum independent set, then for each $batch(S^i)$, either all of its vertices are in $I$ or none are. This is because all the vertices of a batch have exactly the same neighbors. The proof then follows the exact same arguments as those for the weighted case, where now instead of $4k$ row vertices there are $4k\ell$ batch vertices. However, since $\ell=O(\log^2 k)$, this does not change the bound too much. We now have $n=\Theta(k\ell)=\Theta(k\log k)$, which implies that $k=\Theta(n/\log n)$, giving a bound of $\Omega(n^2/\log^7 n)$ rounds for a $(7/8+\epsilon)$-approximation of unweighted MaxIS.
\end{proofof}

A fundamental question is now to map the trade-offs between approximation ratios for MaxIS and the number of rounds required for them in the \cgst{} model.

\begin{proofof}{Theorem~\ref{theorem:linearMaxIS}}
We modify the construction of the family of lower bound graph that we used for proving Theorem~\ref{theorem:MaxIS} as follows. For the fixed graph construction, we completely remove all vertices in $A_1,B_1$ and all the code gadget vertices that touch them. We replace these removed vertices with two batches of $\ell$ vertices, $batch(v_A),batch(v_B)$.

Given strings $x,y$ of length $k$, we construct $G_{x,y}$ by connecting an edge $(v_A,a_2^i)$ if and only if $x_i=0$, and connecting an edge $(v_B,b_2^i)$ if and only if $y_i=0$, for all $i\in \{0,...,k-1\}$.

It now holds that if $\disj_k(x,y)=\false$ then the size of a maximum independent set in $G_{x,y}$ is $6\ell+2t$, and if $\disj_k(x,y)=\true$ then the size of a maximum independent set in $G_{x,y}$ is $5\ell+2t$. The reason is that if $\disj_k(x,y)=\false$, then there exists $0\leq i\leq k-1$ such that the two batches of row vertices $batch(a_2^i), batch(b_2^i)$ are not connected to $batch(v_A), batch(v_B)$. This implies that we can take all four batches, as well as two code gadget vertices from each row $0\leq j\leq \ell+t-1$ into an independent set. Otherwise, if $\disj_k(x,y)=\true$, then for any $0\leq i\leq k-1$, at least one of the two batches of row vertices $batch(a_2^i), batch(b_2^i)$ is connected to $batch(v_A), batch(v_B)$, respectively, implying that we cannot all 4 batches into an independent set, for any such $i$, giving that any maximum independent set has size at most $5\ell+2t$.

To distinguish these cases, Theorem~\ref{generallowerboundtheorem} says that $\Omega(k/(\ell+t)^2 \log n)$ rounds are required. The number of vertices is $n=\Theta(k\ell)$ and so we obtain a lower bound of $\Omega(n/ \log^6 n)$ for obtaining an approximation ratio of $(5\ell+2t)/(6\ell+2t)$, which is $(5/6+\epsilon)$.
\end{proofof}


\subsection{2-MDS}

Our lower bound for approximating the weight of a minimum 2-dominating set is as follows.
\begin{theorem} \label{2mds_thm}
Let $0 < \epsilon < 1$ be a constant. There is a constant $\beta$ such that obtaining a $(\beta \epsilon \log{n})$-approximation for weighted 2-MDS requires $\Omega(n^{1-\epsilon}/\log{n})$ rounds in the \cgst{} model. In addition, there is a constant $\beta$ such that obtaining a $(\beta \log{\log{n}})$-approximation for weighted 2-MDS requires $\widetilde{\Omega}(n)$ rounds.
\end{theorem}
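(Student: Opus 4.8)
The plan is to mimic the MaxIS approximation-hardness construction of Section~\ref{maxis lower bound}, but adapt the ``enhanced bit-gadget'' so that it produces a \emph{gap in the minimum} rather than the maximum, and so that the gap is \emph{multiplicative} (of size $\Omega(\log n)$ or $\Omega(\log\log n)$) rather than a small additive factor. The first step is to set up the right code-based row representation: instead of connecting a row vertex to \emph{all} code-gadget vertices except those dictated by its codeword (as in the MaxIS construction), I would connect it so that covering the ``wrong'' code-gadget vertices forces many extra vertices into any $2$-dominating set. Concretely, I would use Reed--Solomon codes with parameters $(\ell+t, t, \ell+1, q)$ as before, represent each of the $k$ row vertices in $A_1, A_2, B_1, B_2$ by a distinct codeword, and arrange the code-gadget columns so that a $2$-dominating set that ``agrees'' on row vertices $i=i'$ (and $\tilde i = \tilde i'$) can dominate everything cheaply, whereas any disagreement on the row indices — forced by $\disj(x,y)=\true$ via the absence of the input edges $(a_1^i,a_2^{i'}),(b_1^i,b_2^{i'})$ — means that at least one of the four ``row clusters'' is not $2$-dominated from its intended cheap vertex and must be covered at cost blown up by a factor $\Theta(\log n)$ (resp.\ $\Theta(\log\log n)$). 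The multiplicative blowup should come from giving the cheap ``agreement'' dominators weight $1$ while the fallback dominators (the ones usable when the indices disagree) collectively cost a $\Theta(\log n)$ factor more; the code distance $\ell$ guarantees that a single disagreement propagates to $\ell$ coordinates, which is exactly where the logarithmic factor is harvested.

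The key lemma, analogous to Lemma~\ref{lemma:MaxISgap}, would state: there is a $2$-dominating set of weight $W$ if $\disj_{k^2}(x,y)=\false$, and every $2$-dominating set has weight at least $(\beta\epsilon\log n)\cdot W$ (resp.\ $(\beta\log\log n)\cdot W$) if $\disj_{k^2}(x,y)=\true$. The forward direction is the easy ``exhibit a solution'' argument: pick the index pair $(i,i')$ with $x_{i,i'}=y_{i,i'}=1$, take the four aligned row vertices and their intended cheap code-gadget dominators, and check that this $2$-dominates the whole graph. The reverse direction is the main work: assuming $\disj(x,y)=\true$, show that no matter which vertices one puts in a $2$-dominating set, some row cluster is not reachable within distance $2$ from any weight-$1$ vertex, because the only potential cheap dominator would require $i=i'$ on one of the two sides, which $\disj(x,y)=\true$ forbids; then use the code distance to argue that covering that cluster (and the induced uncovered code-gadget vertices) forces total weight up by the claimed logarithmic factor. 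Once the gap lemma holds, I would choose $k$ as a function of $n$ and $\epsilon$ so that the desired bounds come out: for the $(\beta\epsilon\log n)$-approximation I want $K=\Theta(k^2)=\Theta(n^{2(1-\epsilon)})$-ish — more precisely, tune the dimension/column count so that $n = k^{\,\Theta(1/\epsilon)}$ or insert a polynomial-sized ``amplification'' layer (replacing each row vertex by a cluster of $\mathrm{poly}(k)$ copies, as in the passage from Theorem~\ref{theorem:wMaxIS} to Theorem~\ref{theorem:MaxIS}) so that $K = \Theta(n^{1-\epsilon})$ after accounting for the blowup — and then apply Theorem~\ref{generallowerboundtheorem} with $|E_{\cut}| = O(\mathrm{polylog}\,n)$ to get $\Omega(n^{1-\epsilon}/\log n)$. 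For the $(\beta\log\log n)$-approximation, I would instead make the blowup factor only $\Theta(\log\log n)$ — e.g.\ by nesting the code gadget shallowly or by using a smaller alphabet — while keeping $K=\Theta(n^2/\mathrm{polylog}\,n)$, yielding the near-quadratic $\widetilde\Omega(n)$ bound.

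Throughout I would verify the structural requirements of Definition~\ref{def: family of lb graphs}: the partition $V_A,V_B$ should cut only the code-gadget ``bipartite without a matching'' layers for the $A$-rows and the two input-controlled edge families, so $|E_{\cut}| = O((\ell+t)^2) = O(\mathrm{polylog}\,n)$; the $x$-dependence lives entirely in $G[V_A]$ (the edges $(a_1^i,a_2^{i'})$ within $A$) and the $y$-dependence entirely in $G[V_B]$; and the predicate $P$ is the ``gap'' predicate in the extended sense discussed at the start of Section~\ref{section:approx(b)}, so that a $c$-approximation algorithm for the relevant $c$ decides $P$. The main obstacle I anticipate is the reverse direction of the gap lemma for $2$-MDS specifically: unlike independent set, domination is a ``covering'' condition, and one must rule out clever $2$-dominating sets that cover a misaligned row cluster \emph{indirectly} through paths of length $2$ that route through code-gadget or cross vertices rather than through the intended dominator. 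Controlling all such length-$2$ paths — ensuring that the only short routes into a row cluster are the ones the code alignment was designed to gate — is the delicate part, and it is presumably why the theorem is stated with the weaker $\Theta(\epsilon\log n)$ and $\Theta(\log\log n)$ factors rather than a clean $\Theta(\log n)$; getting these distances and weights to interact correctly, and picking the code parameters so that the ratio of ``aligned cost'' to ``misaligned cost'' is exactly logarithmic (not merely bounded below by a constant, and not unboundedly large), is where the real calculation lives.
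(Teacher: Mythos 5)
There is a genuine gap here, and it sits exactly where you yourself flag it: you never establish the mechanism by which the construction produces a \emph{multiplicative} $\Theta(\log n)$ (or $\Theta(\log\log n)$) gap, and the code-gadget machinery you propose to reuse cannot produce one. The Reed--Solomon gadget of Section~\ref{maxis lower bound} works by making a misaligned choice of row vertices forfeit $\ell$ out of roughly $8\ell$ units of weight --- an inherently \emph{constant-factor} gap ($7/8$, $5/6$), because both the ``aligned'' and ``misaligned'' solutions must pay for essentially all $\Theta(\ell+t)$ code-gadget coordinates. To turn $2$-MDS approximation hardness into a logarithmic ratio you need the no-instances to cost a $\log$ factor \emph{more} than the yes-instances, and no choice of weights on the code gadget delivers that: deferring it to ``where the real calculation lives'' leaves the theorem unproved.

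The paper takes an entirely different route, imported from set-cover inapproximability: a covering design $S_1,\dots,S_T$ over a universe of size $\ell$ with the \emph{$r$-covering property} (Lemma~\ref{cc_2mds}, following Nisan), meaning any $r$ of the sets $S_i,\overline{S_i}$ that avoid complementary pairs leave some element uncovered. Element $j$ becomes an edge $\{a_j,b_j\}$, set $S_i$ a vertex adjacent to $\{a_j : j\in S_i\}$, and $\overline{S_i}$ a vertex adjacent to $\{b_j : j\in\overline{S_i}\}$; a weight-$0$ vertex $R$ dominates everything except the element vertices for free, and the inputs $x,y$ set each $S_i$ (resp.\ $\overline{S_i}$) to weight $1$ or to a prohibitive weight $\alpha>r$. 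If $x_i=y_i=1$ then $\{S_i,\overline{S_i}\}$ is a weight-$2$ solution since every $j$ lies in $S_i$ or $\overline{S_i}$; if the inputs are disjoint, no complementary pair is cheap, so any cheap $2$-dominating set is a set cover without complementary pairs and the covering property forces weight $>r=c\log\ell$ (Lemma~\ref{2mds_disj}). The reduction is thus from $\disj_T$ with $T=\Theta(n)$ --- not $\disj_{k^2}$ --- and the two parameter regimes ($\ell=T^{\epsilon}$ versus $\ell=\mathrm{polylog}(T)$) give the $(\beta\epsilon\log n)$ and $(\beta\log\log n)$ trade-offs. Your framework-level checks (cut size, where $x$ and $y$ live, applying Theorem~\ref{generallowerboundtheorem}) are the right ones, but the gap-creating combinatorial object has to be a covering design, not a code.
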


For parameters $T$ and $\ell$ that will be chosen later, we use a collection $C$ of sets $S_1,S_2,...,S_T$ of the universe $U=\{1,...,\ell\}$, with the following property. For any $r$ sets from $S_1,S_2,...,S_T, \overline{S_1}, \overline{S_2},...,\overline{S_T}$, such that for all index $1\leq i\leq T$, the set $S_i$ and its complement $\overline{S_i}$ are not included in the $r$ sets, there is at least one element in $U$ that is not covered, i.e., does not appear in any of the $r$ sets. We call this property of $C$ the \emph{$r$-covering property}. Collections with this property are used in \cite{lund1994hardness, nisan2002communication} to show the inaproximability of set cover in different models. We use different parameters $T,\ell$ to get different time-approximation tradeoffs. In general, construction of such sets exist where $T$ is exponential in $\ell$ and $r$ is logarithmic in $\ell$. We use the following Lemma from \cite{nisan2002communication}.

\begin{lemma}(\cite{nisan2002communication}) \label{cc_2mds}
For any given $r \leq \log{\ell} - O(\log{\log{\ell}})$, there exists a collection $C$ of size $T$ satisfying the $r$-covering property, such that $T = e^{{\ell}/{r2^r}}$.
\end{lemma}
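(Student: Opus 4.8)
The plan is to prove Lemma~\ref{cc_2mds} by the probabilistic method, following the approach of \cite{nisan2002communication} (see also \cite{lund1994hardness}). Build $C=\{S_1,\dots,S_T\}$ at random: independently for every $1\le i\le T$ and every element $a\in U=\{1,\dots,\ell\}$, put $a\in S_i$ with probability $1/2$, with all $T\ell$ coin flips mutually independent. I would then show that for $T$ essentially as large as claimed, the resulting collection has the $r$-covering property with positive probability, which already yields the existence of such a collection.

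Fix one candidate ``bad'' configuration: $r$ distinct indices $i_1,\dots,i_r\in[T]$ together with, for each $m$, a sign selecting one of $S_{i_m}$ or $\overline{S_{i_m}}$; call the chosen sets $R_1,\dots,R_r$. This is exactly one of the $r$-tuples allowed by the hypothesis of the $r$-covering property (no index used twice, so in particular no $S_i$ and $\overline{S_i}$ both chosen), and it witnesses a failure of the property precisely when $R_1\cup\dots\cup R_r=U$. For a fixed element $a\in U$ and fixed $m$, $\Pr[a\notin R_m]=1/2$, regardless of whether $R_m=S_{i_m}$ or $R_m=\overline{S_{i_m}}$; since $i_1,\dots,i_r$ are distinct, these events are independent over $m$, so $\Pr[a\notin R_1\cup\dots\cup R_r]=2^{-r}$. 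The events for distinct elements $a$ depend on disjoint collections of coin flips and are therefore independent, hence $\Pr[R_1\cup\dots\cup R_r=U]=(1-2^{-r})^\ell\le \exp(-\ell/2^r)$. Union-bounding over the at most $\binom{T}{r}2^r$ configurations, the probability that the $r$-covering property fails is at most $\binom{T}{r}2^r\exp(-\ell/2^r)$.

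It remains to check that this is below $1$ for the stated $T$. Using $\binom{T}{r}\le (eT/r)^r$, it suffices that $r\ln(2eT/r)<\ell/2^r$, i.e.\ $T<\tfrac{r}{2e}\exp(\ell/(r2^r))$; for $r\ge 6$ this accommodates $T=\exp(\ell/(r2^r))$, and for the finitely many smaller $r$ the claim is immediate because then $\exp(\ell/(r2^r))$ is exponential in $\ell$ and a constant-factor loss is harmless. The only role of the hypothesis $r\le \log\ell-O(\log\log\ell)$ is to ensure that the target value $T=\exp(\ell/(r2^r))$ is a nontrivial (growing) quantity: writing $r=\log\ell-c$ gives $2^r=\ell\cdot 2^{-c}$ and $r2^r=\Theta(\ell\,2^{-c}\log\ell)$, so $\ell/(r2^r)=\Theta(2^c/\log\ell)$, which is bounded below by a growing function exactly when $c\ge\log\log\ell-O(1)$, i.e.\ $r\le\log\ell-O(\log\log\ell)$. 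I do not foresee a real obstacle; this is a textbook first-moment computation. The two mildly delicate points are confirming that the $1/2$-biased sampling makes complementation ``free'' (so that the $r$-covering property, which permits complemented sets, is no harder to achieve than a one-sided covering property), and bookkeeping the constants so as to land exactly on $T=e^{\ell/(r2^r)}$ rather than on a constant multiple of it — which is in any case immaterial for the way the lemma is used afterwards.
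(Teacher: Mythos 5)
The paper does not prove this lemma at all --- it is imported verbatim from \cite{nisan2002communication} --- so there is no in-paper argument to compare against; your first-moment computation is in fact the standard proof of the cited result, and it is correct. The only loose end is the constant factor for $r<6$, where your union bound gives $T<(r/2e)\,e^{\ell/(r2^r)}$ rather than $T=e^{\ell/(r2^r)}$; this is genuinely immaterial here, both because a sharper estimate of $(1-2^{-r})^\ell$ (it equals $e^{\ell\ln(1-2^{-r})}$, which beats $e^{-\ell/2^r}$ by a factor exponential in $\ell$ for constant $r$) closes the gap, and because the paper only ever invokes the lemma with $r=c\log\ell\to\infty$.
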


Given a collection $C$ as described above, we present a construction of a family of lower bound graphs for 2-MDS, see Figure~\ref{2mds} for an illustration. We choose $r = c \log{\ell}$ for a constant $c<1$.
\paragraph{The fixed graph construction:} For every $1 \leq j \leq \ell$, we have two vertices $a_j$ and $b_j$, connected by an edge, both representing the element $j\in U$. For every $1 \leq i \leq T$ we have two vertices $S_i$ and $\overline{S_i}$ representing the corresponding sets. The vertex $S_i$ is connected to a vertex $a_j$ if and only if $j$ is in the set $S_i$. Similarly, the vertex $\overline{S_i}$ is connected to a vertex $b_j$ if and only if $j$ is in the set $\overline{S_i}$, or equivalently if and only if $j$ is not in the set $S_i$. In addition, we have the vertices $\{a,b,R\}$ where $a$ is connected to all the vertices $S_i$, $b$ is connected to all the vertices $\overline{S_i}$, and there are edges $\{R,a\},\{R,b\}$.
We define $V_A=\{a_j\}_{j=1}^{\ell}\cup\{S_i\}_{i=1}^{T}\cup\{a\}$ and $V_B=\{b_j\}_{j=1}^{\ell}\cup \{\overline{S_i}\}_{i=1}^{T}\cup\{b,R\}$. The vertices have weights as follows. Let $\alpha$ be an integer greater than $r=c\log{\ell}$. All the vertices $\{a_j\}_{j=1}^{\ell}, \{b_j\}_{j=1}^{\ell}$ and $a,b$ have weight $\alpha$. The vertex $R$ has weight $0$.

\paragraph{Constructing $G_{x,y}$ from $G$ given $x,y \in \{0,1\}^T$:}  The vertex $S_i$ has weight 1 if $x_i=1$, and it has weight $\alpha$ otherwise. Similarly, the vertex $\overline{S_i}$ has weight 1 if $y_i=1$, and it has weight $\alpha$ otherwise.

\setlength{\intextsep}{0pt}
\begin{figure}[h]
\centering
\setlength{\abovecaptionskip}{-2pt}
\setlength{\belowcaptionskip}{6pt}
\includegraphics[scale=0.5]{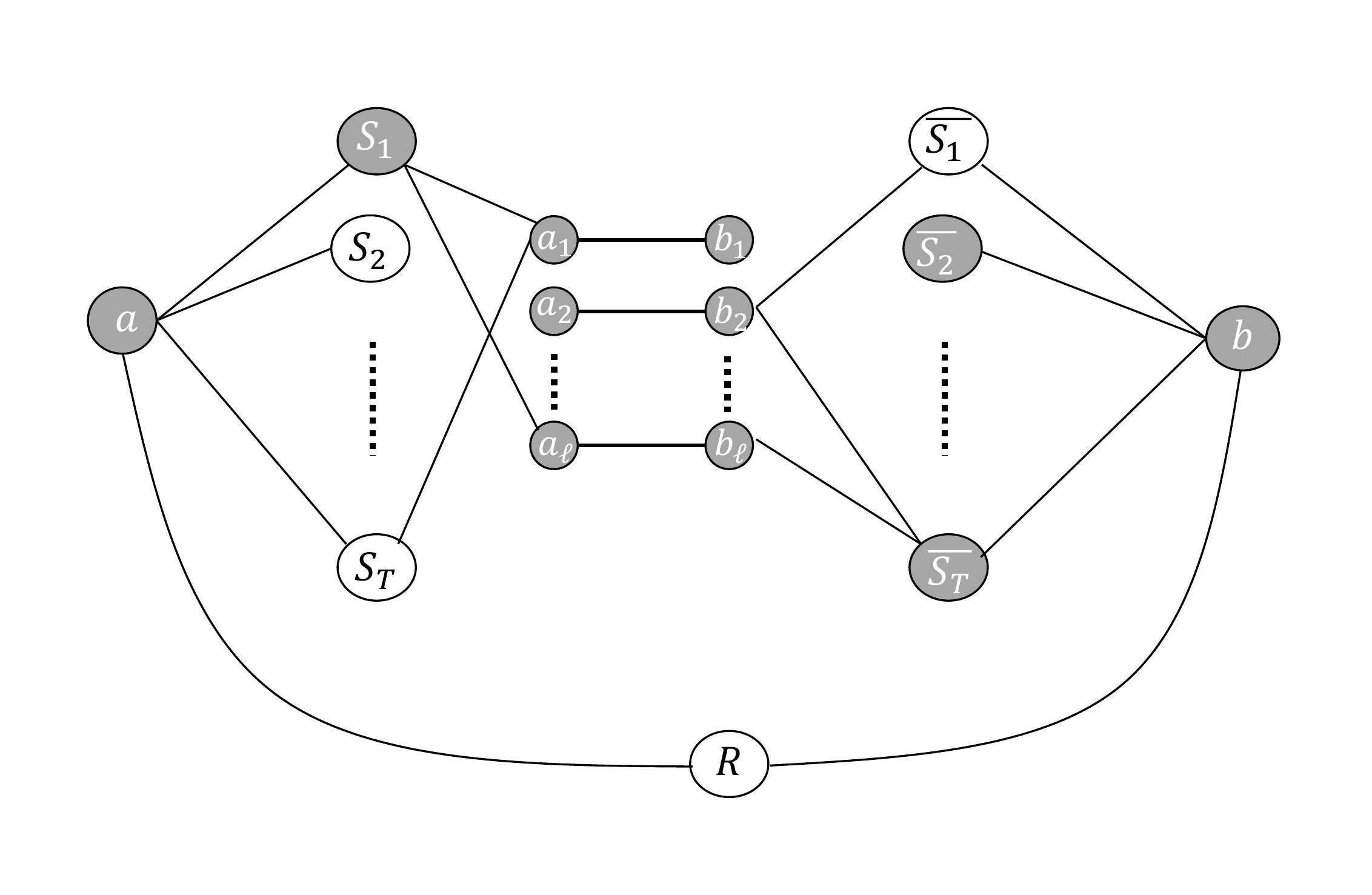}
 \caption{An illustration of the construction for the lower bound for 2-MDS, some of the edges are omitted for clarity. The grey vertices are vertices of weight $\alpha$. The vertices $S_i, \overline{S_i}$ may have weight 1 or $\alpha$ depending on the strings $x,y$.}
\label{2mds}
\end{figure}

The construction satisfies the following.

\begin{lemma} \label{2mds_disj}
If $\disj_T(x,y)=\false$ then there is a 2-MDS of weight 2, and otherwise any 2-MDS has weight greater than $c \log{\ell}$.
\end{lemma}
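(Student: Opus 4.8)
The plan is to prove the two directions of Lemma~\ref{2mds_disj} separately, each by unwinding the definition of a $2$-dominating set against the fixed structure of the gadget.

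For the first direction, assume $\disj_T(x,y)=\false$ and pick an index $i^*$ with $x_{i^*}=y_{i^*}=1$, so that both $S_{i^*}$ and $\overline{S_{i^*}}$ have weight $1$. I would show that $D=\{S_{i^*},\overline{S_{i^*}}\}$ is a $2$-dominating set of weight $2$ by a short case analysis over vertex types: each $a_j$ with $j\in S_{i^*}$ is adjacent to $S_{i^*}$, and each $a_j$ with $j\notin S_{i^*}$ has $j\in\overline{S_{i^*}}$, hence is at distance $2$ via $b_j$; the $b_j$'s are handled symmetrically; $a$ and $b$ are adjacent to $S_{i^*}$ and $\overline{S_{i^*}}$, respectively; $R$ is at distance $2$ via $a$; and each $S_i$ (resp.\ $\overline{S_i}$) with $i\neq i^*$ is at distance $2$ via $a$ (resp.\ $b$). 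For completeness one also checks that the minimum weight is exactly $2$: the only weight-$0$ vertex is $R$, which is at distance $3$ from every $a_j$ and $b_j$, and a single weight-$1$ vertex $S_i$ or $\overline{S_i}$ covers only a proper subset of $U$ (by the $r$-covering property applied with a single set), so no $2$-dominating set of weight $0$ or $1$ exists.

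For the second direction, assume $\disj_T(x,y)=\true$ and suppose towards a contradiction that $D$ is a $2$-dominating set of weight at most $r=c\log\ell$. Since $\alpha$ is an integer larger than $r$ and all vertex weights are integers, $D$ contains no vertex of weight $\alpha$; hence $D$ consists only of $R$ together with at most $r$ vertices of the form $S_i$ (with $x_i=1$) or $\overline{S_i}$ (with $y_i=1$). The crucial local fact is that the only vertices within distance $2$ of $a_j$ that do not have weight $\alpha$ are the sets $S_i$ with $j\in S_i$ and the complements $\overline{S_i}$ with $j\notin S_i$ — in particular $R$ is at distance $3$ from $a_j$, so it cannot help dominate any element vertex. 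Therefore, letting $\mathcal S_D$ denote the collection of sets/complements chosen by $D$, the requirement that every $a_j$ be $2$-dominated forces $\bigcup\mathcal S_D=U$. By the disjointness of $x$ and $y$, $D$ cannot contain both $S_i$ and $\overline{S_i}$ for any $i$, so $\mathcal S_D$ contains no complementary pair, and $|\mathcal S_D|\le r$ (padding with sets on fresh indices if there are fewer than $r$, which is possible since $T$ is exponential in $\ell$). This contradicts the $r$-covering property of $C$ guaranteed by Lemma~\ref{cc_2mds}, so every $2$-dominating set has weight strictly greater than $c\log\ell$.

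I expect the delicate point to be the distance-$2$ bookkeeping in the second direction: one must confirm that $R$ (and hence the ``hub'' vertices $a,b$, which have weight $\alpha$ anyway) sits at distance $3$ — not $2$ — from the element vertices $a_j,b_j$, and that no other vertex of weight less than $\alpha$ lies in the distance-$2$ ball of an $a_j$. This is exactly what makes the cheap vertices of $D$ behave like a sub-collection of $C$ and reduces the problem to the combinatorial $r$-covering property; the rest is routine.
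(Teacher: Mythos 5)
Your proof is correct and follows essentially the same route as the paper's: the $\false$ direction takes $S_{i^*}$ and $\overline{S_{i^*}}$ as the cheap dominating set, and the $\true$ direction converts any low-weight $2$-dominating set into a small set cover with no complementary pair, contradicting the $r$-covering property. Your extra bookkeeping (that $R$ lies at distance $3$ from the $a_j,b_j$ and the padding to exactly $r$ sets) only makes explicit steps the paper leaves implicit.
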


\begin{proof}
First, note that all the vertices except $\{a_j\}_{j=1}^{\ell}, \{b_j\}_{j=1}^{\ell}$ are at distance at most 2 from the vertex $R$. Since $R$ has weight 0, we can cover all these vertices at cost 0.

If $\disj_T(x,y)=\false$, there is an index $i$ where $x_i=y_i=1$. This means that both the vertices $S_i, \overline{S_i}$ have weight 1. Adding them to the cover results in a 2-MDS of weight 2: for every $j$, either $j \in S_i$ or $j \in \overline{S_i}$. In the former case, both the vertices $a_j,b_j$ are covered by $S_i$ since there is an edge $\{S_i,a_j\}$ and a path of length 2 $(S_i,a_j,b_j)$. In the latter case, $a_j,b_j$ are covered by $\overline{S_i}$.

Assume now that $\disj_T(x,y)=\true$. If a 2-MDS has a vertex of weight $\alpha$ then it has weight greater than $c \log{\ell}$. A 2-MDS that does not contain a vertex of weight $\alpha$ includes a collection $C' \subseteq C$ of sets of weight 1 from $S_1,...,S_T,\overline{S_1},...,\overline{S_T}$. Since $\disj_T(x,y)=\true$, $C'$ does not have a set $S_i$ and its complement $\overline{S_i}$ since at least one of them has weight $\alpha$. We next show that $C'$ is a set cover of $\{1,...,\ell\}$. Let $j \in U$, the vertex $a_j$ is covered by the 2-MDS, which implies that either there is a set $S_i \in C'$ where $j \in S_i$ and then $a_j$ is covered by the vertex $S_i$, or there is a set $\overline{S_i} \in C'$ where $j \in \overline{S_i}$, and then $a_j$ is covered by the vertex $\overline{S_i}$ of distance 2. These are the only possible options to cover the vertex $a_j$ by a vertex of weight 1, which shows that $C'$ is a set cover. Since $C$ satisfies the $r$-covering property, the size of $C'$ is greater than $r$, which completes the proof.
\end{proof}

We next choose different values for $T,\ell$ to show different time-approximation tradeoffs.

\begin{proofof}{Theorem~\ref{2mds_thm}}
We choose $T,\ell$ such that $T>\ell$. The number of vertices in the graph is $n=\Theta(T)$, the size of the inputs is $T$ and the size of $E_{cut}$ is $\Theta(\ell)$.
Lemma \ref{2mds_disj} implies that the graphs $\{G_{x,y}\}$ are a family of lower bound graphs for obtaining a $((c \log{\ell})/2)$-approximation for 2-MDS. Using Theorem \ref{generallowerboundtheorem},
we get a lower bound of $\Omega(n/(\ell \log{n}))$ rounds for a $((c \log{\ell})/2)$-approximation. We next choose different values for $\ell$. From Lemma \ref{cc_2mds}, we can choose $T$ to be at most exponential in $\ell$. However, to show a lower bound for an $O(\log{n})$-approximation, we start by choosing $\ell = T^{\epsilon}$ for some constant $\epsilon$. Now $\log{\ell}=\epsilon{\log{T}}=O(\epsilon \log {n})$, and we get a lower bound of $\Omega(n/(\ell \log{n})) = \Omega(n^{1-\epsilon}/ \log{n})$ for an $O(\epsilon \log{n})$-approximation. We can also get a nearly-linear lower bound, at a price of a smaller approximation ratio. From Lemma \ref{cc_2mds}, we can choose $T$ to be at most $e^{\ell/r2^r} = e^{\ell/c \log{\ell} \cdot \ell^c} = e^{\ell^{1-c}/c \log{\ell}}$. In particular, we can choose $T = e^{\ell^{1-2c}}$, which gives $\ell = (\ln{T})^{1/(1-2c)}$. If we choose $c$ to be a small constant, we get $\log{\ell} = \Theta(\log{\log{T}})$ and $\ell$ is polylogarithmic in $T$. This shows a lower bound of $\widetilde{\Omega}(n)$ for a $(\beta \log{\log{n}})$-approximation for some constant $\beta$.
\end{proofof}


\subsection{$k$-MDS}
A similar construction works also for $k$-MDS for $k>2$. The difference is that any edge of the form $\{S_i,a_j\}$ or $\{\overline{S_i},b_j\}$ is replaced by a path of length $k-1$ that has $(k-2)$ new internal vertices of weight $\alpha$. The vertices on paths corresponding to the original edges $\{S_i,a_j\}$ are in $V_A$, and the vertices on paths corresponding to the original edges $\{\overline{S_i},b_j\}$ are in $V_B$. The crucial observations are that all the vertices except $\{a_j\}_{j=1}^{\ell},\{b_j\}_{j=1}^{\ell}$ are at distance at most $k$ from the vertex $R$ of weight 0, and that the only vertices of weight $1$ at distance at most $k$ from a vertex $a_j$ or $b_j$ are vertices $S_i$ where $j \in S_i$ or vertices $\overline{S_i}$ where $j \in \overline{S_i}$. Using this, and following the proof of Lemma \ref{2mds_disj}, we get the following.

\begin{lemma} \label{kmds_disj}
If $\disj_T(x,y)=\false$ then there is a $k$-MDS of weight 2, and otherwise any $k$-MDS has weight greater than $c \log{\ell}$.
\end{lemma}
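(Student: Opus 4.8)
<br>

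The plan is to mimic the proof of Lemma~\ref{2mds_disj} almost verbatim, adjusting only the distance accounting that changes because the edges $\{S_i,a_j\}$ and $\{\overline{S_i},b_j\}$ have been replaced by paths of length $k-1$. The only structural facts we need are the two ``crucial observations'' already recorded right before the statement: first, every vertex other than the $a_j$'s and $b_j$'s lies within distance $k$ of the zero-weight vertex $R$; second, the only weight-$1$ vertices lying within distance $k$ of a given $a_j$ (resp.\ $b_j$) are those $S_i$ with $j\in S_i$ (resp.\ those $\overline{S_i}$ with $j\in\overline{S_i}$). In the original $k=2$ construction the path $S_i\!-\!a_j\!-\!b_j$ of length $2$ gave that $S_i$ covers both $a_j$ and $b_j$; here the subdivided path from $S_i$ to $a_j$ has length $k-1$ and then one more edge $\{a_j,b_j\}$ reaches $b_j$ at distance $k$, so $S_i$ still $k$-dominates both $a_j$ and $b_j$ whenever $j\in S_i$, and symmetrically $\overline{S_i}$ $k$-dominates both whenever $j\in\overline{S_i}$. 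The newly added internal path vertices have weight $\alpha>r$, so they behave exactly like the other weight-$\alpha$ vertices.

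For the first direction, assume $\disj_T(x,y)=\false$ and pick $i$ with $x_i=y_i=1$, so both $S_i$ and $\overline{S_i}$ have weight $1$. Take $\{R,S_i,\overline{S_i}\}$ as the candidate set; its weight is $0+1+1=2$. The vertex $R$ $k$-dominates everything except the $a_j$'s and $b_j$'s by the first observation, and for each $j\in U$ either $j\in S_i$, in which case $S_i$ $k$-dominates $a_j$ and $b_j$ as just argued, or $j\in\overline{S_i}$, in which case $\overline{S_i}$ does; hence this is a $k$-MDS of weight $2$.

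For the converse, suppose $\disj_T(x,y)=\true$ and let $D$ be any $k$-MDS. If $D$ contains any vertex of weight $\alpha$ we are done, since $\alpha>r=c\log\ell$. Otherwise $D$ consists only of $R$ together with a subcollection $C'\subseteq\{S_1,\dots,S_T,\overline{S_1},\dots,\overline{S_T}\}$ of weight-$1$ vertices. Because $\disj_T(x,y)=\true$, for every $i$ at least one of $S_i,\overline{S_i}$ has weight $\alpha$, so $C'$ never contains both a set and its complement. Now fix $j\in U$: the vertex $a_j$ must be $k$-dominated by some element of $D$, but by the second observation the only weight-$1$ vertices within distance $k$ of $a_j$ are the $S_i$ with $j\in S_i$ and the $\overline{S_i}$ with $j\in\overline{S_i}$ (and $R$ is not within distance $k$ of $a_j$), so some member of $C'$ covers the element $j$. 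Thus $C'$ is a set cover of $U$ using no $S_i$ together with its complement, so by the $r$-covering property of $C$ we get $|C'|>r=c\log\ell$, hence $D$ has weight greater than $c\log\ell$. I do not anticipate a genuine obstacle here; the only point needing care is to verify that every internal subdivision vertex on a path from $S_i$ to $a_j$ (resp.\ $\overline{S_i}$ to $b_j$) is itself within distance $k$ of $R$ — it sits on a path $R\!-\!a\!-\!S_i\!-\!\cdots\!-\!a_j$ (resp.\ $R\!-\!b\!-\!\overline{S_i}\!-\!\cdots\!-\!b_j$) of total length $k+1$, but it is at distance at most $k-2$ from its endpoint $S_i$ (resp.\ $\overline{S_i}$), which is at distance $2$ from $R$, giving distance at most $k$; and one should double-check that the endpoints $a_j,b_j$ of the subdivided paths are exactly the vertices that fail to be within distance $k$ of $R$, which is immediate since the shortest route from $R$ to $a_j$ has length $2+(k-1)=k+1$. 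The distance bookkeeping is the only thing to get right, and it matches the $k=2$ case exactly.
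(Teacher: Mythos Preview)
Your proposal is correct and follows essentially the same route as the paper: the paper merely states the two ``crucial observations'' and then defers to the proof of Lemma~\ref{2mds_disj}, and you have faithfully unwound that deferral with the appropriate distance bookkeeping. The only minor imprecision is the phrase ``$D$ consists only of $R$ together with a subcollection $C'$''---$D$ need not contain $R$---but since $R$ has weight $0$ and does not $k$-dominate any $a_j$, this has no bearing on the argument.
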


Since we add new vertices to the graph, the proof of the following is slightly different from the proof of Theorem~\ref{2mds_thm}.
\begin{theorem}
Let $0 < \epsilon < 1$ and $k$ be constants. There is a constant $\beta$ such that obtaining a $(\beta \epsilon \log{n})$-approximation for weighted $k$-MDS requires $\Omega(n^{1-\epsilon}/\log{n})$ rounds. In addition, there is a constant $\beta$ such that obtaining a $(\beta \log{\log{n}})$-approximation for weighted $k$-MDS requires $\widetilde{\Omega}(n)$ rounds.
\end{theorem}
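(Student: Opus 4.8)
The plan is to transfer the $2$-MDS lower bound to $k$-MDS for any constant $k$ by the same communication-complexity reduction, after establishing the graph-theoretic separation stated in Lemma~\ref{kmds_disj}. Concretely, I would first observe that the modified construction (replacing each edge $\{S_i,a_j\}$ by a path of $k-1$ edges with $k-2$ internal vertices of weight $\alpha$, and similarly on Bob's side) still has all the structural properties needed: the vertices $a,b,R$ and all the path-internal vertices, together with the $S_i,\overline{S_i}$ vertices, are at distance at most $k$ from the weight-$0$ vertex $R$, so they can all be covered for free. The only vertices that genuinely need to be covered by a positive-weight choice are the $a_j,b_j$, and the only weight-$1$ vertices within distance $k$ of $a_j$ (resp.\ $b_j$) are the $S_i$ with $j\in S_i$ (resp.\ the $\overline{S_i}$ with $j\in\overline{S_i}$). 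This is precisely what makes the $r$-covering argument of Lemma~\ref{2mds_disj} go through verbatim, giving Lemma~\ref{kmds_disj}: if $\disj_T(x,y)=\false$ then picking the weight-$1$ pair $S_i,\overline{S_i}$ at a common index covers everything at total weight $2$, while if $\disj_T(x,y)=\true$ any solution avoiding weight-$\alpha$ vertices induces a set cover not containing any complementary pair, hence of size $>r=c\log\ell$, and any solution using a weight-$\alpha$ vertex already has weight $\alpha>r$.

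Next I would verify that this is a family of lower bound graphs in the sense of Definition~\ref{def: family of lb graphs}, with the partition $V_A,V_B$ extended so that path-internal vertices on $\{S_i,a_j\}$-paths go to $V_A$ and those on $\{\overline{S_i},b_j\}$-paths go to $V_B$. The crucial point is that $E_{\cut}$ does not grow: the only cut edges remain those incident to $R$ (namely $\{R,a\},\{R,b\}$) plus whatever tiny fixed number the base construction has, so $|E_{\cut}|=\Theta(\ell)$ — the same as in the $2$-MDS case, since the $a_j$–$b_j$ edges are the cut edges of weight-$\alpha$ vertices and there are $\ell$ of them. (For $k\ge 3$ the $a_j$–$b_j$ edges can even be removed if desired, but keeping the count at $\Theta(\ell)$ suffices.) By Lemma~\ref{kmds_disj}, distinguishing the two cases amounts to computing a $((c\log\ell)/2)$-approximation for $k$-MDS, so Theorem~\ref{generallowerboundtheorem} gives an $\Omega(CC(\disj_T)/(|E_{\cut}|\log n)) = \Omega(T/(\ell\log n))$ lower bound.

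Finally I would plug in the two parameter regimes exactly as in the proof of Theorem~\ref{2mds_thm}, using Lemma~\ref{cc_2mds}. The new wrinkle is only bookkeeping: the number of vertices is no longer $\Theta(T)$ but $n=\Theta(T) + \Theta(k\cdot(\text{number of subgraph edges})\cdot) = \Theta(kT\ell)$ in the worst case (each of the $\le T\ell$ original incidence edges becomes a path with $k-2$ internal vertices); since $k$ is a constant this is still $n=\Theta(T\ell)=\poly(T)$, and with the choice $\ell=T^{\epsilon'}$ (for suitable $\epsilon'$ depending on $\epsilon$) we get $n=\Theta(T^{1+\epsilon'})$, hence $T=\Theta(n^{1/(1+\epsilon')})\ge n^{1-\epsilon}$ for $\epsilon'$ small enough, and $\log\ell=\epsilon'\log T=\Theta(\epsilon\log n)$, yielding an $\Omega(n^{1-\epsilon}/\log n)$ lower bound for a $(\beta\epsilon\log n)$-approximation. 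For the near-linear bound, choosing $T=e^{\ell^{1-2c}}$ as before makes $\ell$ polylogarithmic in $T$ (hence in $n$, since $n=\Theta(T\ell)=T^{1+o(1)}$), giving $\log\ell=\Theta(\log\log n)$ and an $\widetilde\Omega(n)$ lower bound for a $(\beta\log\log n)$-approximation. The main obstacle, such as it is, is purely the verification that inserting length-$(k-1)$ paths does not break the distance bookkeeping (every path-internal vertex and every $S_i/\overline S_i$ must stay within distance $k$ of $R$, and no $a_j$ must come within distance $k$ of a cheap vertex other than the intended $S_i/\overline S_i$), and that $|E_{\cut}|$ and $n$ stay within the claimed asymptotics; once those are checked, the reduction and parameter choices are identical to the $2$-MDS case.
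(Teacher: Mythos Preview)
Your proposal is correct and follows the paper's proof essentially verbatim: the same path-subdivision construction, the same verification of Lemma~\ref{kmds_disj}, the same observation that $|E_{\cut}|=\Theta(\ell)$ and $n=\Theta(kT\ell)$, and the same two parameter regimes from Lemma~\ref{cc_2mds}. One small slip to fix in your write-up: the lower bound from Theorem~\ref{generallowerboundtheorem} is $\Omega(T/(\ell\log n))$, not $\Omega(T/\log n)$, so in the first regime you need $T/\ell=T^{1-\epsilon'}=n^{(1-\epsilon')/(1+\epsilon')}\ge n^{1-\epsilon}$ (the paper takes $\epsilon'=\epsilon/2$ and observes $(1-\epsilon')/(1+\epsilon')\ge 1-2\epsilon'=1-\epsilon$), rather than merely $T\ge n^{1-\epsilon}$.
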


\begin{proof}
We choose $T,\ell$ such that $T>\ell$ and $\ell = o(n)$. We add $\Theta(k)$ new vertices for each edge of the form $\{S_i,a_j\}$ or $\{\overline{S_i},b_j\}$, and since the number of such edges is at most $\Theta(T\ell)$, this gives that the number of vertices in the graph is $n=\Theta(kT\ell)$. As in the proof of Theorem \ref{2mds_thm}, the size of the inputs to $\disj$ is $T$, the size of the cut $E_{cut}$ is $\Theta(\ell)$, and hence we obtain a lower bound of $\Omega(T/(\ell \log{n}))$ rounds for a $((c \log{\ell})/2)$-approximation.

We next choose different values for $\ell$. We start by choosing $\ell = T^{\epsilon}$ for some constant $\epsilon$. Since $T=\Theta(n/k\ell)$ where $k$ is constant and $\ell = o(n)$, we get $\log{\ell}=\epsilon{\log{T}}=O(\epsilon \log {n})$. The lower bound is $\Omega(T/(\ell \log{n})) = \Omega(T^{1-\epsilon}/\log{n})$, where $n=\Theta(T^{1+\epsilon})$. This gives a lower bound of $\Omega(n^{(1-\epsilon)/(1+\epsilon)}/\log{n})=\Omega(n^{1-2\epsilon/(1+\epsilon)}/\log{n})=\Omega(n^{1-2\epsilon}/\log{n})$. Choosing $\epsilon'=2\epsilon$, we get a lower bound of $\Omega(n^{1-\epsilon'}/\log{n})$ for an $O(\epsilon' \log{n})$-approximation, as needed.

The second case follows the proof of Theorem \ref{2mds_thm}. We choose $T = e^{\ell^{1-2c}}$, which gives $\ell = (\ln{T})^{1/1-2c}$ where $c$ is a small constant. Since $\ell$ is polylogarithmic in $T$ and $k$ is constant, we get $n=\widetilde{\Theta}(T)$. As in the proof of Theorem \ref{2mds_thm}, we get a lower bound of $\widetilde{\Omega}(n)$ rounds for a $(\beta \log{\log{n}})$-approximation for some constant $\beta$.
\end{proof}


\subsection{Steiner tree problems}

A similar construction provides similar hardness results for the Steiner tree problem in directed graphs and for the Steiner tree problem with weights on vertices instead of edges.
\paragraph{Node weighted Steiner tree:} In the node weighted Steiner tree problem, the input is a graph $G=(V,E)$ with weights on the vertices and a set of terminals $S \subseteq V$, and the goal is to find a tree of minimum cost that spans all the terminals.

The exact same graph construction as the one for 2-MDS works also for the node weighted Steiner tree problem with the following changes. Now the weights of all the vertices in $\{a,b,R\}$ as well as in $A=\{a_j\}_{j=1}^{\ell},B=\{b_j\}_{j=1}^{\ell}$ are 0. The set of terminals is $A \cup B$. The weights of the vertices $S_i,\overline{S_i}$ is still $1$ or $\alpha$ depending on the strings $x,y$, exactly as in the 2-MDS construction. We show the following.

\begin{lemma} \label{steiner_disj}
If $\disj_T(x,y)=\false$ then there is a Steiner tree of weight 2, and otherwise any Steiner tree has weight greater than $c \log{\ell}$.
\end{lemma}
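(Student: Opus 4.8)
The plan is to mimic the proof of Lemma~\ref{2mds_disj} almost verbatim, since the graph is structurally identical; only the roles of the auxiliary vertices change (weight $0$ instead of weight $\alpha$, and now they serve as terminals / connectors rather than as dominators). First I would handle the easy direction: if $\disj_T(x,y)=\false$, pick $i$ with $x_i=y_i=1$, so both $S_i$ and $\overline{S_i}$ have weight $1$. Build the tree by taking the zero-weight ``backbone'' $R,a,b$ together with all edges $\{R,a\},\{R,b\}$, attaching $S_i$ via the edge $\{a,S_i\}$ and $\overline{S_i}$ via $\{b,\overline{S_i}\}$, and then for every $j\in\{1,\dots,\ell\}$ attaching the terminal pair $a_j,b_j$: if $j\in S_i$ use the edges $\{S_i,a_j\}$ and $\{a_j,b_j\}$; otherwise $j\in\overline{S_i}$, so use $\{\overline{S_i},b_j\}$ and $\{a_j,b_j\}$ (where $\{a_j,b_j\}$ is the edge present in the fixed construction). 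Every vertex added has weight $0$ except $S_i,\overline{S_i}$, so the total weight is exactly $2$, and the resulting subgraph is connected and spans all terminals $A\cup B$; prune to a tree if desired without increasing weight.

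For the hard direction, assume $\disj_T(x,y)=\true$ and let $T'$ be any Steiner tree spanning $A\cup B$. If $T'$ uses any vertex of weight $\alpha>r=c\log\ell$ we are done, so assume $T'$ uses only vertices of weight $0$ or $1$; the weight-$1$ vertices used form a subcollection $C'\subseteq\{S_1,\dots,S_T,\overline{S_1},\dots,\overline{S_T}\}$, and by $\disj_T(x,y)=\true$ this subcollection never contains both $S_i$ and $\overline{S_i}$. I claim $C'$ is a set cover of $U=\{1,\dots,\ell\}$: fix $j$; the terminal $a_j$ must be connected in $T'$ to the rest of the tree, and its only neighbors are $b_j$, vertices $S_i$ with $j\in S_i$, and (after the natural identification of $a_j$'s neighborhood) nothing else of nonzero help --- crucially, the only weight-$\le 1$ vertices adjacent to the pair $\{a_j,b_j\}$ are exactly the $S_i$ with $j\in S_i$ and the $\overline{S_i}$ with $j\in\overline{S_i}$, since $a$ is adjacent only to the $S_i$'s and $b$ only to the $\overline{S_i}$'s, and $R$ is not adjacent to $a_j$ or $b_j$. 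Hence the component of $a_j,b_j$ in $T'$ must touch some $S_i\in C'$ with $j\in S_i$ or some $\overline{S_i}\in C'$ with $j\in\overline{S_i}$, so $j$ is covered by $C'$. Therefore $C'$ covers $U$ while avoiding complementary pairs, so by the $r$-covering property $|C'|>r=c\log\ell$, and $w(T')\ge|C'|>c\log\ell$, as claimed.

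The main obstacle --- though it is minor --- is making the ``only neighbors that matter'' argument airtight: one must carefully check that in the fixed construction the vertices $a_j$ and $b_j$ have no edges to $a$, $b$, or $R$, so that the only way the tree can connect the terminal pair $\{a_j,b_j\}$ to the zero-weight backbone $\{a,b,R\}$ is through a set-vertex of weight $1$ (or through a weight-$\alpha$ vertex, which is the case already excluded). This is exactly the combinatorial heart of the reduction and is identical in spirit to the corresponding step in the proof of Lemma~\ref{2mds_disj}; once it is established, the set-cover lower bound from the $r$-covering property finishes the proof. The subsequent theorem statement and its proof would then follow the same pattern as Theorem~\ref{2mds_thm}, choosing $T>\ell$ with the two parameter regimes $\ell=T^\epsilon$ and $T=e^{\ell^{1-2c}}$ to obtain the $\Omega(n^{1-\epsilon}/\log n)$ and $\widetilde\Omega(n)$ bounds respectively via Theorem~\ref{generallowerboundtheorem}, noting $n=\Theta(T\ell)$ here (there are $\Theta(T\ell)$ edges but no new vertices are added for the Steiner tree variant, so in fact $n=\Theta(T+\ell)=\Theta(T)$, and $|E_{cut}|=\Theta(\ell)$).
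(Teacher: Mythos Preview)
Your proposal is correct and follows essentially the same approach as the paper's proof: build the weight-$2$ tree from the backbone $R,a,b$ plus $S_i,\overline{S_i}$ in the non-disjoint case, and in the disjoint case observe that the terminal pair $\{a_j,b_j\}$ can only be connected to the rest of the tree through a set-vertex containing $j$, forcing a set cover with no complementary pairs and hence size exceeding $r=c\log\ell$. The only cosmetic difference is that the paper lets $C'$ be \emph{all} set-vertices in the tree and splits into cases afterward, whereas you first dispose of the weight-$\alpha$ case and then take $C'$ to be the weight-$1$ set-vertices; the logic is identical.
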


\begin{proof}
If $\disj_T(x,y)=\false$ then there is an index $i$ where both the vertices $S_i,\overline{S_i}$ have weight 1. We add them to the tree with all the edges adjacent to them, and we add all the edges $\{a_j,b_j\}$. In addition, we add to the tree the vertices $\{a,b,R\}$ with the edges $\{R,a\},\{R,b\},\{a,S_i\},\{b,\overline{S_i}\}$. The cost of the tree is 2, and it spans all the terminals, as follows. Let $1 \leq j \leq \ell$. If $j \in S_i$, we have the edges $\{S_i,a_j\},\{a_j,b_j\}$ in the tree, which shows that $a_j,b_j$ are spanned. Otherwise, $j \in \overline{S_i}$, and we have the edges $\{\overline{S_i},b_j\},\{b_j,a_j\}$ in the tree which shows that $a_j,b_j$ are spanned.

Assume now that $\disj_T(x,y)=\true$. Let $1 \leq j \leq \ell$. Let $C'$ be the collection of all the vertices from $S_1,...,S_T,\overline{S_1},...,\overline{S_T}$ that are in the Steiner tree.
To connect the vertices $a_j,b_j$ to the tree, there must be an edge in the tree of the form $\{S_i,a_j\}$ or $\{\overline{S_i},b_j\}$. This implies that $C'$, when we identify vertices with the corresponding sets, is a set cover of $\{1,...,\ell\}$. Now if $C'$ has a vertex of weight $\alpha$, we are done. Otherwise, since $\disj_T(x,y)=\true$, for each $i$, at least one of $S_i,\overline{S_i}$ has weight $\alpha$. Since $C$ has the $r$-covering property, we get that $C'$ is of size at least $r=c \log{\ell}$, which completes the proof.
\end{proof}

Using Lemma \ref{steiner_disj}, and following the proof of Theorem \ref{2mds_thm}, we get the following.

\begin{theorem} \label{steiner_thm}
Let $0 < \epsilon < 1$ be a constant. There is a constant $\beta$ such that obtaining a $(\beta \epsilon \log{n})$-approximation for the node weighted Steiner tree problem requires $\Omega(n^{1-\epsilon}/\log{n})$ rounds. In addition, there is a constant $\beta$ such that obtaining a $(\beta \log{\log{n}})$-approximation for the node weighted Steiner tree problem requires $\widetilde{\Omega}(n)$ rounds.
\end{theorem}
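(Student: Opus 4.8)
The plan is to mirror the structure of the proof of Theorem~\ref{2mds_thm}, applying Theorem~\ref{generallowerboundtheorem} to the family of lower bound graphs just described, with Lemma~\ref{steiner_disj} playing the role of Lemma~\ref{2mds_disj}. First I would record the combinatorial parameters of the construction: the number of vertices is $n=\Theta(T)$ (we have $2\ell$ element vertices, $2T$ set vertices, and the three extra vertices $a,b,R$), the input length for $\disj$ is $T$, and the cut $E_{cut}$ consists only of the edges $\{a_j,b_j\}$ together with the edge $\{R,a\}$ (since $a$ is in $V_A$ and $R$ in $V_B$), so $|E_{cut}|=\Theta(\ell)$. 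One should double-check here that no set-vertex edge crosses the cut: $S_i$ and its neighbours $a_j$ all lie in $V_A$, and $\overline{S_i}$ with its neighbours $b_j$ all lie in $V_B$, so indeed the only crossing edges are the $\ell$ element-pair edges and the single edge touching $R$. Lemma~\ref{steiner_disj} then says that $\{G_{x,y}\}$ is a family of lower bound graphs (in the extended, gap sense) with respect to $\disj_T$ and the predicate ``there is a Steiner tree of weight at most $2$'': if $\disj_T(x,y)=\false$ the optimum is $2$, and if $\disj_T(x,y)=\true$ the optimum exceeds $c\log\ell$. Hence any algorithm that computes a $\big(\tfrac{c\log\ell}{2}\big)$-approximation to the node-weighted Steiner tree can distinguish the two cases, and Theorem~\ref{generallowerboundtheorem} gives a lower bound of $\Omega\!\big(CC^R(\disj_T)/(|E_{cut}|\log n)\big)=\Omega\!\big(T/(\ell\log n)\big)$ rounds.

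Next I would instantiate the two parameter regimes exactly as in the proof of Theorem~\ref{2mds_thm}. For the first statement, choose $\ell=T^{\epsilon}$; then $T>\ell$ as required, $n=\Theta(T)$, $\log\ell=\epsilon\log T=\Theta(\epsilon\log n)$, and the bound becomes $\Omega(T/(\ell\log n))=\Omega(T^{1-\epsilon}/\log n)=\Omega(n^{1-\epsilon}/\log n)$ rounds for a $\big(\tfrac{c}{2}\log\ell\big)=\Theta(\epsilon\log n)$-approximation; absorbing constants into $\beta$ gives the claimed $(\beta\epsilon\log n)$-approximation lower bound. For the second statement, use Lemma~\ref{cc_2mds} with $r=c\log\ell$ for a small constant $c$ and set $T=e^{\ell^{1-2c}}$, so that $\ell=(\ln T)^{1/(1-2c)}$ is polylogarithmic in $T$ and therefore in $n$; then $n=\widetilde\Theta(T)$, $\log\ell=\Theta(\log\log n)$, and $\Omega(T/(\ell\log n))=\widetilde\Omega(n)$ rounds are needed for a $(\beta\log\log n)$-approximation. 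One must keep the constraint $r\le\log\ell-O(\log\log\ell)$ from Lemma~\ref{cc_2mds} satisfied, which holds since $c<1$ is chosen small.

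The bulk of the argument is really Lemma~\ref{steiner_disj} and the verification that the construction fits Definition~\ref{def: family of lb graphs}; given those, Theorem~\ref{steiner_thm} is a short bookkeeping corollary. The step I expect to need the most care is confirming that the edge-set decomposition respects the $V_A/V_B$ partition — in particular that the paths/edges incident to $S_i$ live entirely on Alice's side and those incident to $\overline{S_i}$ entirely on Bob's side, so that altering $x$ (which only reweights the $S_i$) touches only $G[V_A]$ and altering $y$ only $G[V_B]$, and that $|E_{cut}|=\Theta(\ell)$ rather than something larger. Everything else — the weight-$0$ vertex $R$ covering all non-element vertices cheaply, and the $r$-covering property forcing weight $>c\log\ell$ in the disjoint case — is already handled by Lemma~\ref{steiner_disj}, which in turn follows the template of Lemma~\ref{2mds_disj}.
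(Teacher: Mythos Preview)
Your proposal is correct and matches the paper's approach exactly: the paper's entire proof of Theorem~\ref{steiner_thm} is the one-line remark ``Using Lemma~\ref{steiner_disj}, and following the proof of Theorem~\ref{2mds_thm}, we get the following,'' and you have faithfully unpacked that bookkeeping, including the same two parameter regimes $\ell=T^{\epsilon}$ and $T=e^{\ell^{1-2c}}$. Your verification that $|E_{cut}|=\Theta(\ell)$ and that the input-dependent weights respect the $V_A/V_B$ partition is correct and slightly more explicit than what the paper writes.
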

\paragraph{Directed Steiner tree:} In the directed Steiner tree problem, the input is a graph $G=(V,E)$ with weights on the edges, a root $R$, and a set of terminals $S \subseteq V$, and the goal is to find a directed tree with root $R$ of minimum cost that spans all the terminals.

We can adapt the 2-MDS construction for the directed Steiner tree problem, as follows. The vertex $R$ is the root, and $A \cup B$ are the terminals. Now all the edges are directed, as follows. We have the directed edges $(R,a),(R,b)$. For all $i$, we have the directed edges $(a,S_i),(b,\overline{S_i})$ and for all $j$ we have the following edges $(a_j,b_j),(b_j,a_j)$. If there is an edge between a vertex $S_i$ and a vertex $a_j$, the direction is $(S_i,a_j)$, and similarly for $\overline{S_i}$ and $b_j$. See Figure \ref{steiner_pic} for an illustration.
Now instead of weights on the vertices, there are weights on the edges, as follows. All the edges of the form $(a,S_i)$ or $(b,\overline{S_i})$ have weight 1, and the rest of the edges have weight 0. Given two strings $x=(x_1,...,x_t),y=(y_1,...,y_t)$, if $x_i = 1$ the graph has the edges $(S_i,a_j)$ for all $j \in S_i$, and if $x_i = 0$, none of these edges is in the graph. Similarly, the edges $(\overline{S_i},b_j)$ for $j \in \overline{S_i}$ appear in the graph if and only if $y_i = 1$.

\setlength{\intextsep}{0pt}
\begin{figure}[h]
\centering
\setlength{\abovecaptionskip}{-2pt}
\setlength{\belowcaptionskip}{6pt}
\includegraphics[scale=0.5]{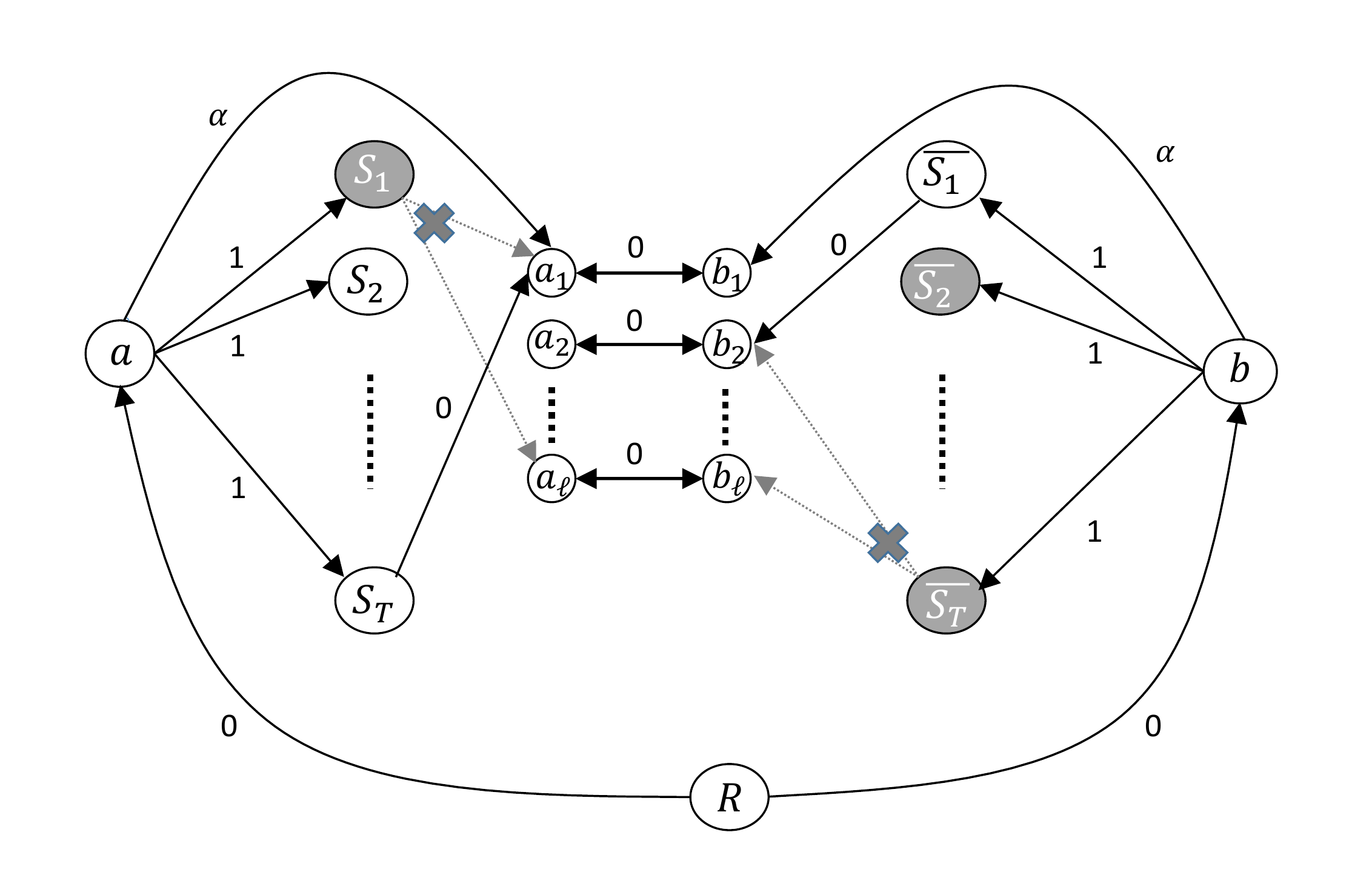}
 \caption{An illustration of the construction for the direct Steiner tree problem, where some of the edges are omitted for clarity. The edges from $a$ to $S_i$ have weight 1, and the edges from $a$ to $a_j$ have weight $\alpha$. Similarly, in the other side with respect to $b,\overline{S_i},b_j$. The rest of the edges have weight 0.
The edges $(S_i,a_j)$ appear in the graph depending on the inputs. For example, if $x_i=0$, the edges $(S_1,a_j)$ are not in the graph.}
\label{steiner_pic}
\end{figure}

To guarantee that there is always a solution (regardless of $x,y$), we add to the construction additional edges of weight $\alpha$, from the vertex $a$ to all the vertices $a_j \in A$ and from the vertex $b$ to all the vertices $b_j \in B$. The construction satisfies the following.

\begin{lemma} \label{dsteiner_disj}
If $\disj_T(x,y)=\false$ then there is a directed Steiner tree of weight 2, and otherwise any directed Steiner tree has weight greater than $c \log{\ell}$.
\end{lemma}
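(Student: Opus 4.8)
The plan is to mimic the proof of Lemma~\ref{steiner_disj} for the node-weighted Steiner tree, adapting each direction to the directed setting. The construction is essentially the same, but the roles of ``covering'' the vertices $a_j,b_j$ are now realized by directed paths emanating from the root $R$, and the weighted edges $(a,S_i)$, $(b,\overline{S_i})$ play the role that the weighted set-vertices played before. The key structural fact is unchanged: $R$ reaches $a,b$ for free, $a$ reaches all $S_i$ for free, $b$ reaches all $\overline{S_i}$ for free, and each terminal $a_j$ (resp.\ $b_j$) can only be reached, at zero cost from the set-vertices, through an edge $(S_i,a_j)$ with $j\in S_i$ (resp.\ $(\overline{S_i},b_j)$ with $j\in\overline{S_i}$); the only alternative is to use one of the weight-$\alpha$ edges $(a,a_j)$ or $(b,b_j)$.

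First I would handle the easy direction: if $\disj_T(x,y)=\false$, pick an index $i$ with $x_i=y_i=1$, so both edges $(a,S_i)$ and $(b,\overline{S_i})$ exist with weight $1$, and moreover the edges $(S_i,a_j)$ for $j\in S_i$ and $(\overline{S_i},b_j)$ for $j\in\overline{S_i}$ are all present in the graph. Build the tree from the root using $(R,a),(R,b),(a,S_i),(b,\overline{S_i})$, then for each $j$ route to $a_j$ either directly as $(S_i,a_j)$ (if $j\in S_i$) or as $(\overline{S_i},b_j)$ followed by $(b_j,a_j)$ (if $j\notin S_i$, so $j\in\overline{S_i}$), and symmetrically for $b_j$; pick one of these two routes per index $j$ so the result is a directed tree reaching every terminal. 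Every edge used except $(a,S_i),(b,\overline{S_i})$ has weight $0$, so the total cost is exactly $2$.

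For the other direction, suppose $\disj_T(x,y)=\true$ and let $\mathcal{T}$ be any directed Steiner tree. Let $C'$ be the set of indices $i$ such that $S_i$ or $\overline{S_i}$ is reached in $\mathcal{T}$ via its weight-$1$ edge $(a,S_i)$ or $(b,\overline{S_i})$ respectively (equivalently, the set-vertices of weight $1$ used). If any weight-$\alpha$ edge is used (an edge $(a,a_j)$, $(b,b_j)$, or an edge into a set-vertex of weight $\alpha$ — which happens exactly when that set-vertex corresponds to a $0$-bit), the cost already exceeds $c\log\ell$ since $\alpha>r=c\log\ell$, and we are done. Otherwise, every terminal $a_j$ must be reached through some $(S_i,a_j)$ with $j\in S_i$ and $i\in C'$, or through $(\overline{S_i},b_j)\to(b_j,a_j)$ with $j\in\overline{S_i}$ and $i\in C'$; identifying $C'$ with the corresponding sub-collection of $\{S_i\}\cup\{\overline{S_i}\}$, this says $C'$ is a set cover of $U$. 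Since $\disj_T(x,y)=\true$, for each $i$ at most one of $S_i,\overline{S_i}$ has weight $1$, so $C'$ contains no complementary pair; by the $r$-covering property of $C$, $|C'|>r=c\log\ell$, and each such set contributes weight $1$, so the tree has weight greater than $c\log\ell$. This completes the argument.

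The only mild obstacle is bookkeeping in the converse: one must be careful that all zero-cost ways of reaching a terminal really do factor through a weight-$1$ set-vertex, i.e.\ that there is no ``sideways'' zero-weight route between terminals that avoids the set-vertices entirely. This is immediate from the construction, since the only zero-weight edges incident to $a_j$ or $b_j$ are $(a_j,b_j),(b_j,a_j)$ and the incoming edges from set-vertices, and a directed tree rooted at $R$ that contains $a_j$ must contain a directed $R$--$a_j$ path, whose last non-$\{a_j,b_j\}$ vertex is forced to be a set-vertex; so no new idea is needed beyond what already appears in Lemma~\ref{steiner_disj}. With Lemma~\ref{dsteiner_disj} in hand, the time–approximation tradeoffs follow exactly as in the proof of Theorem~\ref{2mds_thm} and Theorem~\ref{steiner_thm}, using $|E_{cut}|=\Theta(\ell)$, input size $T$, and $n=\Theta(T\ell)$ (or $n=\widetilde{\Theta}(T)$ in the $\log\log$ regime).
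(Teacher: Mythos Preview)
Your approach matches the paper's: build the tree of weight $2$ from a common index in the $\false$ case, and in the $\true$ case argue that the set-vertices actually used to reach terminals must form a set cover without complementary pairs, hence have size exceeding $r=c\log\ell$ by the $r$-covering property.

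There is one misreading of the construction that creates a small gap. In the directed Steiner instance, the inputs $x,y$ do \emph{not} affect the weights of the edges $(a,S_i),(b,\overline{S_i})$; those edges always have weight $1$. What the inputs control is the \emph{existence} of the edges $(S_i,a_j)$ and $(\overline{S_i},b_j)$: they are present iff $x_i=1$ (resp.\ $y_i=1$). So your parenthetical ``the set-vertices of weight $1$ used'' and the case ``an edge into a set-vertex of weight $\alpha$ --- which happens exactly when that set-vertex corresponds to a $0$-bit'' do not describe this construction. Consequently, your $C'$ (all set-vertices reached via a weight-$1$ edge) is simply \emph{all} set-vertices appearing in the tree, and nothing prevents both $S_i$ and $\overline{S_i}$ from being in $C'$ when $\disj=\true$; your sentence ``so $C'$ contains no complementary pair'' is not justified as stated.

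The repair is immediate and is exactly what the paper does: restrict attention to the set-vertices that have an outgoing edge in the tree to some $a_j$ or $b_j$. Such an $S_i$ must satisfy $x_i=1$ (otherwise $(S_i,a_j)$ does not exist), and such an $\overline{S_i}$ must satisfy $y_i=1$. This sub-collection is the one that covers $U$, and by $\disj=\true$ it contains no complementary pair, so its size exceeds $r$; since each set-vertex in the tree forces a weight-$1$ edge $(a,S_i)$ or $(b,\overline{S_i})$, the cost exceeds $c\log\ell$. With this correction your argument is complete and coincides with the paper's.
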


\begin{proof}
If $\disj_T(x,y)=\false$ then there is an index $i$ where the edges $(S_i,a_j)$ for $j \in S_i$ and the edges $(\overline{S_i},b_j)$ for $j \in \overline{S_i}$ are in the graph. Adding these edges to the tree, as well as the edges $(a_j,b_j)$ for $j \in S_i$, the edges $(b_j,a_j)$ for $j \in \overline{S_i}$, and the paths from $R$ to $S_i$ and $\overline{S_i}$ results in a directed Steiner tree of weight 2, since the only edges with weight greater than $0$ in the tree are the two edges $(a,S_i),(b,\overline{S_i})$.

On the other hand, to connect the vertices $a_j,b_j$ to the tree, the tree must include an edge incoming to $a_j$ or $b_j$. The only possible such edges with weight smaller than $\alpha$ are edges of the form $(S_i,a_j)$ for $j \in S_i$ where $x_i=1$ or $(\overline{S_i},b_j)$ for $j \in \overline{S_i}$ where $y_i=1$. In addition, if $S_i$ or $\overline{S_i}$ are in the tree, the edge $(a,S_i)$ or $(b,\overline{S_i})$ must be in the tree, and these edges have weight $1$. If $\disj_T(x,y)=\true$, since $C$ satisfies the $r$-covering property, following the proof of Lemma \ref{steiner_disj}, we get that we must add at least $r > c \log{\ell}$ such edges, which completes the proof.
\end{proof}

Using Lemma \ref{dsteiner_disj}, and following the proof of Theorem \ref{2mds_thm}, we get the following.

\begin{theorem} \label{dsteiner_thm}
Let $0 < \epsilon < 1$ be a constant. There is a constant $\beta$ such that obtaining a $(\beta \epsilon \log{n})$-approximation for the directed Steiner tree problem requires $\Omega(n^{1-\epsilon}/\log{n})$ rounds. In addition, there is a constant $\beta$ such that obtaining a $(\beta \log{\log{n}})$-approximation for the directed Steiner tree problem requires $\widetilde{\Omega}(n)$ rounds.
\end{theorem}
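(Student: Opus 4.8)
The final statement to prove is Theorem~\ref{dsteiner_thm}, the hardness of approximation for the directed Steiner tree problem. The plan is to combine the combinatorial gap established in Lemma~\ref{dsteiner_disj} with the general lower bound machinery of Theorem~\ref{generallowerboundtheorem}, exactly mirroring the proof of Theorem~\ref{2mds_thm}. First I would verify that the graphs $\{G_{x,y}\}$ constructed for directed Steiner tree form a family of lower bound graphs with respect to $\disj_T$ and the predicate $P$ stating ``there is a directed Steiner tree of weight at most $2$'': the vertex partition $V_A, V_B$ is fixed, Alice's side contains the $a_j$'s, the $S_i$'s and $a$, Bob's side contains the $b_j$'s, the $\overline{S_i}$'s, $b$ and $R$, the edges depending on $x$ lie inside $V_A$, those depending on $y$ lie inside $V_B$, and by Lemma~\ref{dsteiner_disj} the predicate holds iff $\disj_T(x,y)=\false$. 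Moreover the edge cut $E_{cut}$ consists only of the edges $(R,a)$, $(R,b)$, and the edges $(a_j,b_j),(b_j,a_j)$ connecting the element-vertices across the partition, so $|E_{cut}|=\Theta(\ell)$.

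Next I would instantiate the parameters. The number of vertices is $n=\Theta(T)$ (here, unlike the $k$-MDS case, no subdivision vertices are added, so $n=\Theta(T+\ell)=\Theta(T)$ since $T>\ell$), the communication input length is $K=T$, and $|E_{cut}|=\Theta(\ell)$. Plugging into Theorem~\ref{generallowerboundtheorem} with the $\Omega(T)$ randomized communication lower bound for $\disj_T$ yields a lower bound of $\Omega\bigl(T/(\ell\log n)\bigr)=\Omega\bigl(n/(\ell\log n)\bigr)$ rounds for distinguishing a Steiner tree of weight $2$ from one of weight $>c\log\ell$, hence for any $((c\log\ell)/2)$-approximation. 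Then I would make the two choices of $\ell$ exactly as in Theorem~\ref{2mds_thm}: taking $\ell=T^{\epsilon}$ gives $\log\ell=\Theta(\epsilon\log n)$ and a bound of $\Omega(n^{1-\epsilon}/\log n)$ for an $O(\epsilon\log n)$-approximation; taking $T=e^{\ell^{1-2c}}$ for small constant $c$ makes $\ell$ polylogarithmic in $n$, $\log\ell=\Theta(\log\log n)$, and yields $\widetilde{\Omega}(n)$ rounds for a $(\beta\log\log n)$-approximation. In both cases one uses Lemma~\ref{cc_2mds} to guarantee that a collection $C$ with the $r$-covering property of the required size exists for $r=c\log\ell$.

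The main obstacle, and the place I would spend the most care, is confirming the parameter bookkeeping: that $n=\Theta(T)$ still holds after adding the auxiliary weight-$\alpha$ edges from $a$ to all $a_j$ and from $b$ to all $b_j$ (these add no vertices, so $n$ is unchanged), that the directed orientation does not enlarge $E_{cut}$ beyond $\Theta(\ell)$, and that Lemma~\ref{dsteiner_disj} together with Theorem~\ref{thrm:reductions}-style reasoning indeed certifies the family-of-lower-bound-graphs conditions for the \emph{extended} predicate (Yes instances have weight $2$, No instances have weight $>c\log\ell$), so that an approximation algorithm can decide $P$. Everything else is a direct transcription of the proof of Theorem~\ref{2mds_thm}, so I would write:

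\begin{proof}
By Lemma~\ref{dsteiner_disj}, the graphs $\{G_{x,y}\}$ form a family of lower bound graphs with respect to $\disj_T$ and the (extended) predicate asserting the existence of a directed Steiner tree of weight at most $2$, since a $((c\log\ell)/2)$-approximation algorithm distinguishes this case from the case in which every directed Steiner tree has weight greater than $c\log\ell$. The number of vertices is $n=\Theta(T)$, the input length is $T$, and $|E_{cut}|=\Theta(\ell)$. Theorem~\ref{generallowerboundtheorem} therefore yields a lower bound of $\Omega(T/(\ell\log n))=\Omega(n/(\ell\log n))$ rounds for a $((c\log\ell)/2)$-approximation.

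Choosing $\ell=T^{\epsilon}$ gives $\log\ell=\epsilon\log T=\Theta(\epsilon\log n)$ and a lower bound of $\Omega(n^{1-\epsilon}/\log n)$ rounds for an $O(\epsilon\log n)$-approximation; after rescaling $\epsilon$ this is the first claim. For the second claim, by Lemma~\ref{cc_2mds} we may take $T=e^{\ell^{1-2c}}$, so that $\ell=(\ln T)^{1/(1-2c)}$ is polylogarithmic in $T$ and $n=\widetilde{\Theta}(T)$; here $\log\ell=\Theta(\log\log n)$, and the bound becomes $\widetilde{\Omega}(n)$ rounds for a $(\beta\log\log n)$-approximation for a suitable constant $\beta$.
\end{proof}
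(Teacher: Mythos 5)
Your proposal is correct and follows essentially the same route as the paper, which proves this theorem by invoking Lemma~\ref{dsteiner_disj} and repeating the parameter choices from the proof of Theorem~\ref{2mds_thm} ($n=\Theta(T)$, $|E_{cut}|=\Theta(\ell)$, then $\ell=T^{\epsilon}$ for the first bound and $T=e^{\ell^{1-2c}}$ for the second). The only slip is cosmetic: $(R,b)$ is internal to $V_B$ and not a cut edge, but this does not affect $|E_{cut}|=\Theta(\ell)$.
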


\subsection{Restricted hardness of approximation for MDS}

The MDS problem is a well-studied problem with several efficient $O(\log{\Delta})$-approximation algorithms in the \cgst{} model \cite{jia2002efficient, kuhn2005constant, DBLP:journals/jacm/KuhnMW16}. In addition, in the LOCAL model there are efficient $(1+\epsilon)$-approximations \cite{ghaffari2017complexity}. A natural question is whether is is possible to get efficient \cgst{} algorithms also for better than $O(\log{\Delta})$-approximations. If we can extend the construction for $2$-MDS, also for MDS, this would rule out such algorithms. However, to do so, we need to replace the sets of vertices $\{a_j\}_{j=1}^{\ell},\{b_j\}_{j=1}^{\ell}$ by one set, which would increase significantly the size of the cut. Moreover, as we explain in more detail in Section \ref{sec:limitations_MDS}, we cannot use the same lower bound technique to show such hardness result, since Alice and Bob can always get a 2-approximation to the MDS problem by solving the problem optimally on their sides and combining the solution. However, we can show such hardness result for a more restricted class of algorithms we refer to as \emph{local aggregate algorithms}. This includes algorithms that are based on simple operations such as computing a minimum, a sum, etc. We note that there are many \cgst{} algorithms that are based on such simple operations, and in particular there are efficient $O(\log{\Delta})$-approximations for MDS that are local aggregate algorithms.\footnote{See for example \cite{kuhn2005constant}, \cite{jia2002efficient} and Section 4.2 in \cite{DBLP:conf/podc/Censor-HillelD18}. The first algorithm presented in \cite{jia2002efficient} uses computation of the median, but it has variants that replace this with computation of the average or maximum (which fit our definition of local aggregate algorithms), with a slight increase in the approximation obtained or the time complexity \cite{jia2002efficient,lecture_mds}.} We show that obtaining a better approximation requires different algorithms.

Our construction is based on our 2-MDS construction, with the following changes. Instead of having the two vertices $a_j,b_j$ represent the element $j$, we have only one vertex named $j$ for each $1 \leq j \leq \ell$. The vertex $j$ is connected to all the vertices $S_i$ such that $j \in S_i$, and to all the vertices $\overline{S_i}$ such that $j \in \overline{S_i}$. The rest of the graph remains the same.
The vertices $\{S_i\}_{i=1}^{T}\cup\{a\}$ are in $V_A$ and the vertices $\{\overline{S_i}\}_{i=1}^{T}\cup\{b,R\}$ are in $V_B$, as before. The difference is that the new vertices $\{1,...,\ell\}$ are not in any of the sides and will be simulated together by Alice and Bob as we explain later. The vertices have weights, similarly to before. All the new vertices $\{1,...,\ell\}$ have weight $\alpha$ where $\alpha$ is an integer greater than $r=c\log{\ell}$. The vertices $\{R,a,b\}$ have weight 0, and the vertices $S_i, \overline{S_i}$ have weights according to the strings $x,y$, each of size $T$, denoted by $x=(x_1,...,x_T),y=(y_1,...,y_T)$, as before. The vertex $S_i$ has weight 1 if $x_i=1$, and it has weight $\alpha$ otherwise. Similarly, the vertex $\overline{S_i}$ has weight 1 if $y_i=1$, and it has weight $\alpha$ otherwise. See Figure \ref{mds_pic} for an illustration.

\setlength{\intextsep}{0pt}
\begin{figure}[h]
\centering
\setlength{\abovecaptionskip}{-2pt}
\setlength{\belowcaptionskip}{6pt}
\includegraphics[scale=0.5]{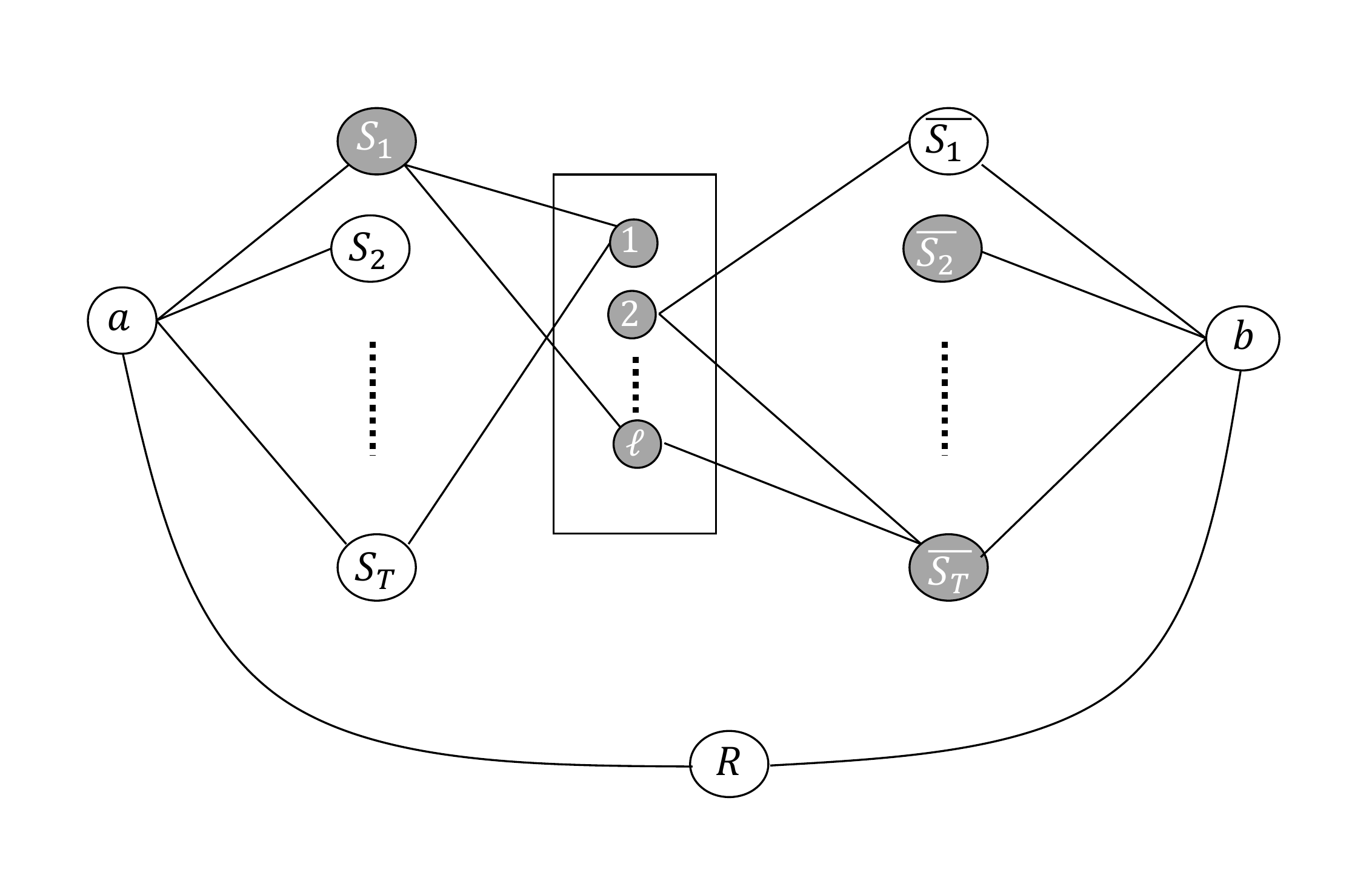}
 \caption{An illustration of the construction for the restricted hardness of approximating MDS. The new vertices $\{1,...,\ell\}$ replace the vertices $\{a_j\}_{j=1}^{\ell}, \{b_j\}_{j=1}^{\ell}$. The grey vertices have weight $\alpha$.}
\label{mds_pic}
\end{figure}

The construction satisfies the following.

\begin{lemma} \label{mds_hard_disj}
If $\disj_T(x,y)=\false$ then there is an MDS of weight 2, and otherwise any MDS has weight greater than $c \log{\ell}$.
\end{lemma}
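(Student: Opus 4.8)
The plan is to follow the proof of Lemma~\ref{2mds_disj} essentially verbatim, with the single vertex $j$ now playing the role that the pair $a_j,b_j$ played there, and to pin down the one structural fact that makes the set-cover argument go through: in this construction the only neighbors of an element vertex $j\in\{1,\dots,\ell\}$ are set-vertices, namely the $S_i$ with $j\in S_i$ and the $\overline{S_i}$ with $j\in\overline{S_i}$. In particular none of $a$, $b$, $R$ is adjacent to any $j$, so an element vertex can never be dominated "for free" by one of the zero-weight hub vertices — it must be dominated by a set-vertex that contains it.

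For the direction $\disj_T(x,y)=\false$, I would pick an index $i$ with $x_i=y_i=1$, so that both $S_i$ and $\overline{S_i}$ have weight $1$, and exhibit the dominating set $D=\{a,b,S_i,\overline{S_i}\}$ of weight $w(a)+w(b)+w(S_i)+w(\overline{S_i})=0+0+1+1=2$. Checking it dominates is immediate: $a$ dominates $R$ and every $S_{i'}$, $b$ dominates every $\overline{S_{i'}}$, $S_i$ dominates itself and every element vertex in $S_i$, $\overline{S_i}$ dominates itself and every element vertex in $\overline{S_i}$, and since $\overline{S_i}=U\setminus S_i$ every element vertex is dominated.

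For the direction $\disj_T(x,y)=\true$, I would argue by contradiction: suppose $D$ is a dominating set of weight at most $c\log\ell=r$. Since $\alpha>r$, $D$ contains no vertex of weight $\alpha$, hence no element vertex and only weight-$1$ set-vertices, so $D$ consists of a subcollection of $\{a,b,R\}$ together with set-vertices $S_i$ (only for $x_i=1$) and $\overline{S_i}$ (only for $y_i=1$). Let $C'$ be the collection of sets corresponding to the set-vertices in $D$. Because $\disj_T(x,y)=\true$, no index $i$ has $x_i=y_i=1$, so $C'$ contains no set together with its complement; and by the structural fact above, every element $j$ lies in some set of $C'$, i.e.\ $C'$ is a set cover of $\{1,\dots,\ell\}$. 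The $r$-covering property of $C$ then forces $|C'|>r$, so $w(D)\ge|C'|>c\log\ell$, contradicting $w(D)\le r$. Hence every MDS has weight greater than $c\log\ell$.

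The only place where one must be careful — and the single reason this construction works while a naive single-vertex variant would not — is the structural observation that the hub vertices $a,b,R$ are not adjacent to the element vertices; without it a weight-$0$ dominating set could cover everything and no gap would appear. Everything else is a direct transcription of the $2$-MDS argument, so I do not anticipate any further obstacle.
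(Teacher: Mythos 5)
Your proof is correct and follows exactly the route the paper intends: it transcribes the argument of Lemma~\ref{2mds_disj} to the single-element-vertex graph, and the structural observation you isolate (that an element vertex $j$ is adjacent only to set-vertices containing it, so it cannot be dominated by the weight-$0$ hubs $a,b,R$) is precisely the fact the paper's one-line proof sketch relies on. No gaps.
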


The proof follows the proof of Lemma \ref{2mds_disj}, and uses the following. All the vertices except $\{1,...,\ell\}$ are covered by a vertex of weight 0, since they are neighbours of $a$ or $b$. In addition, the only vertices of weight smaller than $\alpha$ that cover a vertex $j \in \{1,...,\ell\}$, are vertices $S_i$ such that $j \in S_i$ and $x_i=1$, or vertices $\overline{S_i}$ such that $j \in \overline{S_i}$ and $y_i=1$.

\paragraph{Lower bounds for local aggregate algorithms:}
We next use Lemma \ref{mds_hard_disj} to show lower bounds for MDS with restrictions on the algorithm. We start by defining the notion of a \emph{local aggregate} algorithm, following the definition in \cite{bar2017distributed} with slight changes.
A function $f$ is called \emph{order-invariant} if $f(x_1,...,x_n)=f(x_{\pi(1)},...,x_{\pi(n)})$ for any permutation $\pi$. We use the notation $f(\{x_i\})$ for $f(x_1,...,x_n)$. We consider functions $f$ that receive at most $n$ inputs, each of size $O(\log{n})$, and their output is of size $O(\log{n})$. If $f$ has $k<n$ inputs, we write it as $f(x_1,...,x_k,\bot,...,\bot)$, where $\bot$ is an empty input. Simple examples of $f$ we use are summation of at most $n$ elements, minimum of at most $n$ elements, etc.

\begin{definition}
A function $f$ is an \emph{aggregate} function if it is order-invariant and there exists a function $\phi$ such that for each partition of the inputs $X$ into two disjoint sets $X_1,X_2$ it holds that $f(X)=\phi(f(X_1),f(X_2))$.
\end{definition}

In many simple examples, such as minimum or sum, $\phi=f$.
Now, we define the notion of a \emph{local aggregate} algorithm.
First, we assume that in each round each vertex $v$ has some initial input of $O(\log{n})$ bits. At the beginning this input contains the input of $v$ and its id, and later it depends on the input of $v$ in the previous round and an aggregate function of the messages it received in the previous round. We also assume that all the vertices have access to a shared random string of arbitrary length. Showing a lower bound in such model clearly gives also a lower bound also in the more severe model where each vertex has only a private random string. We say that an algorithm is a \emph{local aggregate} algorithm if in addition to the above, the message that vertex $v$ sends to vertex $u$ in round $i$ depends only on the input of $v$ for round $i$, the $ID$ of $u$, the shared randomness, and an aggregate function $f$ of the messages $v$ received in round $i-1$, where $f$ is a function with inputs and output of size $O(\log{n})$ bits.
Note that since the message from $v$ to $u$ depends on the $ID$ of $u$, it may be that $v$ sends different messages to different neighbours. We next show the following.

\begin{theorem} \label{mds_hard_thm}
Let $0 < \epsilon < 1$ be a constant. There is a constant $\beta$ such that any local aggregate $(\beta \epsilon \log{n})$-approximation algorithm for weighted MDS requires $\Omega(n^{1-\epsilon}/\log{n})$ rounds. In addition, there is a constant $\beta$ such that any local aggregate $(\beta \log{\log{n}})$-approximation for weighted MDS requires $\widetilde{\Omega}(n)$ rounds.
\end{theorem}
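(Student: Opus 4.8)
The plan is to follow the template of Theorem~\ref{2mds_thm}, but since we cannot invoke Theorem~\ref{generallowerboundtheorem} directly (the vertices $\{1,\dots,\ell\}$ sit on the cut, blowing up $|E_{cut}|$ to $\Theta(\ell\cdot T)$, which would kill the bound), the main work is a \emph{simulation argument} showing that a local aggregate algorithm can be run by Alice and Bob with low communication. Concretely, I would first establish the communication-complexity reduction: given a local aggregate algorithm $A$ for weighted MDS running in $T_A(n)$ rounds, Alice and Bob build $G_{x,y}$ as described, with Alice holding $V_A=\{S_i\}\cup\{a\}$ and Bob holding $V_B=\{\overline{S_i}\}\cup\{b,R\}$, and both players jointly responsible for simulating the shared vertices $\{1,\dots,\ell\}$.

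The crux is how the shared vertices are simulated cheaply. A vertex $j\in\{1,\dots,\ell\}$ has neighbours among $\{S_i\}$ (on Alice's side) and $\{\overline{S_i}\}$ (on Bob's side). In a generic \congest{} algorithm, $j$ would receive up to $\Theta(T)$ distinct messages per round, so neither player can afford to send them across. But for a \emph{local aggregate} algorithm, the only thing $j$ needs from its neighbourhood in order to compute its round-$(i{+}1)$ state and its outgoing messages is the value $f(\cdot)$ of an aggregate function applied to the multiset of received messages. Because $f$ is aggregatable, $f$ of the whole multiset equals $\phi\big(f(\text{messages from Alice's }S_i\text{'s}),\, f(\text{messages from Bob's }\overline{S_i}\text{'s})\big)$. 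So Alice locally computes $f$ over the (at most $T$) messages her vertices sent to $j$, Bob does the same, and they exchange these two $O(\log n)$-bit values; both then compute $\phi$ and hence $j$'s aggregate, its new state, and — crucially — its outgoing messages, which depend only on $j$'s state, the recipient's $ID$, and the shared randomness. Each shared vertex thus costs $O(\log n)$ bits per round, and there are $\ell$ of them. Messages along the $O(\log n)=\Theta(\log T)$ cut edges $\{R,a\},\{R,b\}$ (and any others crossing between $V_A$ and $V_B$ not incident to the shared vertices) cost $O(\log n)$ bits each. Altogether one round of $A$ is simulated with $O(\ell\log n)$ bits of communication. Adding $O(1)$ messages to let the players agree on the weights $M'$-type quantities and the final output value (each knows the MDS contribution on its own vertices, the shared vertices' contribution is computed jointly), the total is $O(T_A(n)\cdot\ell\cdot\log n)$ bits.

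By Lemma~\ref{mds_hard_disj}, a $(c\log\ell/2)$-approximation to weighted MDS on $G_{x,y}$ distinguishes $\disj_T(x,y)=\false$ (optimum $2$) from $\disj_T(x,y)=\true$ (optimum $>c\log\ell$), so such an algorithm solves $\disj_T$. The $\Omega(T)$ randomized lower bound for $\disj_T$ then forces $T_A(n)\cdot\ell\cdot\log n = \Omega(T)$, i.e. $T_A(n)=\Omega\big(T/(\ell\log n)\big)$. Finally I would plug in the two parameter regimes exactly as in the proof of Theorem~\ref{2mds_thm}: with $n=\Theta(T)$ (the graph has $\Theta(T+\ell)=\Theta(T)$ vertices here, since no paths are subdivided), choosing $\ell=T^{\epsilon}$ via Lemma~\ref{cc_2mds} gives $\log\ell=\Theta(\epsilon\log n)$ and a lower bound of $\Omega(n^{1-\epsilon}/\log n)$ for a $(\beta\epsilon\log n)$-approximation; and choosing $T=e^{\ell^{1-2c}}$ makes $\ell$ polylogarithmic in $n$, yielding $\widetilde{\Omega}(n)$ rounds for a $(\beta\log\log n)$-approximation. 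The main obstacle — and the only place the "local aggregate" restriction is genuinely used — is verifying that the outgoing messages of the shared vertices are reconstructible by both players without further communication; this needs the definition's clause that a message from $v$ to $u$ depends only on $v$'s current state, $ID(u)$, the shared randomness, and the aggregate $f$ of what $v$ received, all of which both players possess after the single exchange described above.
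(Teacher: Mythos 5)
Your proposal matches the paper's proof essentially step for step: the same joint simulation of the shared vertices $\{1,\dots,\ell\}$ using the decomposition $f(X)=\phi(f(X_1),f(X_2))$ to exchange only $O(\log n)$ bits per shared vertex per round, the same $O(T_A(n)\cdot\ell\cdot\log n)$ total communication, the same appeal to Lemma~\ref{mds_hard_disj} and the $\Omega(T)$ disjointness bound, and the same two parameter regimes borrowed from Theorem~\ref{2mds_thm}. The argument is correct as written.
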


\begin{proof}
We prove a variant of Theorem~\ref{generallowerboundtheorem} for local aggregate algorithms. Alice and Bob need to simulate a distributed algorithm in the constructed graph, but now there are vertices that need to be simulated together by both of the players, as follows. From her input, Alice knows all the edges in the graph that are adjacent to the vertices $\{S_i\}_{i=1}^{T}\cup\{ a\}$, and Bob knows all the edges adjacent to the vertices $\{\overline{S_i}\}_{i=1}^{T}\cup\{b,R\}$. Both Alice and Bob know the $ID$s of the new vertices $\{1,...,\ell\}$, but each of them knows only about part of the edges adjacent to each of these vertices. However, for each vertex $j \in \{1,...,\ell\}$, all the edges adjacent to it are of the form $\{j,S_i\}$ or $\{j, \overline{S_i}\}$, and hence for each such edge $e$, one of Alice and Bob knows about $e$.

To simulate a local aggregate algorithm on this graph, Alice and Bob work as follows. Alice simulates the vertices in $V_A$, and Bob simulates the vertices in $V_B$. To simulate a vertex $j \in \{1,...,\ell\}$, notice that both Alice and Bob know the initial input and $ID$ of $j$ (in our case the input is the weight $\alpha$ of the vertex), and the shared randomness. They also know the input of $v$ for any later round as we show next. To decide which message $j$ sends to a vertex $v$ in $V_A$, Alice should know the $ID$ of $v$ and an aggregate function $f$ of the messages $j$ receives in round $i-1$. Since Alice and Bob together know all the edges adjacent to $j$, they can compute the output of $f$ as follows. Alice computes $f$ on the messages sent by the neighbours of $j$ in $V_A$, and Bob does the same for the neighbours in $V_B$. Now the output of $f$ is of $O(\log{n})$ bits, Alice and Bob exchange the outputs they received, and compute $\phi$ on these two outputs to know the final output of $f$. For example, if $f$ is the minimum function, each of them computes the minimum from the values it knows, and then they just exchange the values computed and take the minimum of them. Alice and Bob do this for each one of the $\ell$ vertices in $\{1,...,\ell\}$, which requires exchanging $\ell$ messages of $O(\log{n})$ bits per round. From the values computed, Alice knows all the messages sent from vertices in $\{1,...,\ell\}$ to vertices in $V_A$, and Bob knows the same with respect to $V_B$. They also know the input of each one of the vertices $\{1,...,\ell\}$ for the next round because it depends on the previous input and an aggregate function of the messages received in the current round.

After simulating the whole algorithm, Alice and Bob know which vertices in $V_A,V_B$, respectively, and in $\{1,...,\ell\}$ are added to the MDS. By exchanging one additional message they learn the size of the MDS and can deduce if $\disj_T(x,y)=\false$ according to Lemma \ref{mds_hard_disj}. Given a distributed algorithm that takes $t$ rounds, the cost of the whole simulation is $O(t \ell \log{n})$ bits, since Alice and Bob exchange $O(\ell \log{n})$ bits per round. The size of the inputs is $T=\Theta(n)$, hence using the lower bound for set disjointness (that applies also for algorithms that use shared randomness), we get a lower bound of $\Omega(T/(\ell \log{n}))$ rounds for approximating MDS within a $((c \log{\ell})/2)$ factor. These are exactly the values obtained in the proof of Theorem \ref{2mds_thm}, and we choose values for the parameters $T,\ell$ following the same analysis, which completes the proof.
\end{proof}

\section{Limitations of Theorem \ref{generallowerboundtheorem}} \label{section:limitations(b)}
In this section, we show that while Theorem \ref{generallowerboundtheorem} is powerful for many near-quadratic lower bounds for exact and approximate computations, for some problems, such as maximum matching and max-flow that are in P, it can be somewhat limited. To this end, we equip the previously used approach for showing limitation with new machinery.
\newline

For proving that the framework of Theorem~\ref{generallowerboundtheorem} cannot prove a lower bound greater than $\Omega(t(n))$ for a distributed algorithm for deciding some predicate $P$ and some function $t(n)$, we assume towards a contradiction that there exists a function $f$ and a family $\{G_{x,y}\}$ of lower bound graphs with respect to $f$ and $P$, and then we show that Alice and Bob can solve $P$ on this family of graphs while exchanging only few bits, whose number we denote by $t'$. Notice that in addition to depending on $n$, $t'$ may depend on the parts of the family $\{G_{x,y}\}$ which do not depend on the inputs $x,y$. Specifically, for our proofs we will have $t'$ depend on $|E_{cut}|$, that is, $t'=t'(n,E_{cut})$. Since the assumption is that the answer for $P$ on $G_{x,y}$ determines the output of $f$ on $(x,y)$, this implies that $CC(f)$ is at most $t'(n, E_{cut})$, which in turn gives that the lower bound obtained by Theorem~\ref{generallowerboundtheorem} is at most $t(n)=t'(n, E_{cut})/|E_{cut}|\log{n}$. Formally, for a function $f: \{0,1\}^K\times \{0,1\}^K \rightarrow \{\true,\false\}$, a predicate $P$, and a family $\{G_{x,y}\}$ of lower bound graphs with respect to $f$ and $P$, we denote by $CC^{\{G_{x,y}\}}(P)$ the number of bits that need to be exchanged between Alice and Bob for answering $P$ on graphs in $\{G_{x,y}\}$. The above discussion implies that $CC(f) \leq CC^{\{G_{x,y}\}}(P)$, which directly gives the following tool, which was implicitly used for showing a limitation of this framework for obtaining any lower bound for triangle detection~\cite{DruckerKO13}, any super-linear lower bound for weighted APSP~\cite{DBLP:conf/wdag/Censor-HillelKP17} (recently proven to have a linear solution~\cite{BernsteinN18}), and any lower bound larger than $\Omega(\sqrt{n})$ for detecting 4-cliques~\cite{DBLP:conf/wdag/CzumajK18}.

\begin{corollary}
\label{theorem:alice-bob-cut}
Fix a function $f: \{0,1\}^K\times \{0,1\}^K \rightarrow \{\true,\false\}$ and a predicate $P$. If $\{G_{x,y}\}$ is a family of lower bound graphs with respect to $f$ and $P$, then Theorem~\ref{generallowerboundtheorem} cannot give a lower bound for a distributed algorithm that is larger than $\Omega(CC^{\{G_{x,y}\}}(P)/|E_{cut}|\log{n})$.
\end{corollary}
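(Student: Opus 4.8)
The plan is to observe that any protocol for deciding $P$ on the family $\{G_{x,y}\}$ is, by the defining property of a family of lower bound graphs, already a protocol for computing $f$, and then to trace through what Theorem~\ref{generallowerboundtheorem} extracts from such a family.

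First I would unpack the quantity $CC^{\{G_{x,y}\}}(P)$: it is, by definition, the least number of bits Alice and Bob need to exchange, over all protocols, in order to correctly report whether $G_{x,y}$ satisfies $P$, given that Alice holds $x$, Bob holds $y$, and all input-independent parts of the family — in particular $E_{cut}$ — are common knowledge. Let $\pi$ be a protocol attaining this value (and likewise a randomized one for the randomized variant).

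Next I would invoke condition~4 of Definition~\ref{def: family of lb graphs}: $G_{x,y}$ satisfies $P$ if and only if $f(x,y)=\true$. Hence running $\pi$ on $(x,y)$ and returning its answer computes $f(x,y)$ exactly, with the same success probability in the randomized case. Consequently $CC(f)\le CC^{\{G_{x,y}\}}(P)$, and $CC^R(f)\le CC^{\{G_{x,y}\}}(P)$ as well. Finally, substituting these inequalities into Theorem~\ref{generallowerboundtheorem}, which for this family gives a round lower bound of $\Omega(CC(f)/(|E_{cut}|\log n))$ (respectively $\Omega(CC^R(f)/(|E_{cut}|\log n))$), shows that the bound obtainable is at most $\Omega(CC^{\{G_{x,y}\}}(P)/(|E_{cut}|\log n))$; since this is the strongest conclusion Theorem~\ref{generallowerboundtheorem} can draw from $\{G_{x,y}\}$, the corollary follows.

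There is no real obstacle here — the statement is essentially a restatement of the preceding discussion — but the one point requiring care is bookkeeping: $CC^{\{G_{x,y}\}}(P)$ is defined relative to the full (input-independent) structure of the family, so the inequality $CC(f)\le CC^{\{G_{x,y}\}}(P)$ must be read for the particular $f$ and $P$ that this family witnesses, and the corollary precisely formalizes that the Alice--Bob framework with a fixed cut $E_{cut}$ can never surpass this ceiling.
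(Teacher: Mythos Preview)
Your proposal is correct and matches the paper's own argument essentially verbatim: the paper also derives the corollary directly from the observation that a protocol deciding $P$ on $\{G_{x,y}\}$ computes $f$ (by condition~4 of Definition~\ref{def: family of lb graphs}), hence $CC(f)\le CC^{\{G_{x,y}\}}(P)$, and then substitutes this into the bound of Theorem~\ref{generallowerboundtheorem}. The paper in fact presents this as a one-line consequence of the preceding discussion rather than a separate formal proof, so your write-up is simply a slightly more explicit rendering of the same reasoning.
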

Corollary~\ref{theorem:alice-bob-cut} also holds for randomized distributed algorithms, when considering $CC^{\{G_{x,y}\},R}(P)$ as the respective randomized complexity of a 2-party protocol.
In many cases we will have $CC^{\{G_{x,y}\}}(P)=t(n)|E_{cut}|\log{n}$, for which Theorem~\ref{theorem:alice-bob-cut} implies that Theorem~\ref{generallowerboundtheorem} cannot give a lower bound that is larger than $t(n)$.

\subsection{Limitations by determining the predicate}
\label{section:alice-bob}
\subsubsection{Bounded-degree graphs}
\label{section:alice-bob-bounded}

In Section \ref{sec:exact-bounded(b)}, we showed that solving \emph{unweighted} MaxIS, MVC and MDS \emph{exactly} in bounded-degree graphs requires $\tilde{\Omega}(n)$ rounds. A natural question is whether these problems are also difficult to approximate. We do not answer this question, but we prove that this framework is incapable of showing such lower bounds.
Specifically, we show that if $\{G_{x,y}\}$ are bounded degree graphs, then $CC^{\{G_{x,y}\}}(P)=O(|E_{cut}|\log(n)/\epsilon)$, for $P$ that implies a $(1+\epsilon)$-approximation for \emph{unweighted} MVC and MDS, or a $(1-\epsilon)$-approximation for MaxIS. This gives that for these problems, Corollary~\ref{theorem:alice-bob-cut} implies that Theorem~\ref{generallowerboundtheorem} cannot provide lower bounds that are larger than $\Omega(1/\epsilon)$.

\paragraph{MVC:}
\begin{claim}
\label{claim:limit-bounded-MVC}
Let $P$ be a predicate that implies a $(1+\epsilon)$-approximation for unweighted MVC. If $\{G_{x,y}\}$ is a family of lower bound graphs with bounded degree with respect to a function $f$ and the predicate $P$, then 
Theorem~\ref{generallowerboundtheorem} cannot give a lower bound for a distributed algorithm for deciding $P$ that is larger than $\Omega(1/\epsilon)$.
\end{claim}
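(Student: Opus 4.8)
The plan is to apply Corollary~\ref{theorem:alice-bob-cut}: I will show that for \emph{any} family $\{G_{x,y}\}$ of bounded-degree lower bound graphs with respect to some $f$ and $P$, Alice and Bob can decide $P$ while exchanging only $O(|E_{cut}|\log n/\epsilon)$ bits, so that $CC^{\{G_{x,y}\}}(P)=O(|E_{cut}|\log n/\epsilon)$, and the corollary then caps the obtainable lower bound at $\Omega(1/\epsilon)$. Since $P$ implies a $(1+\epsilon)$-approximation for unweighted MVC, it suffices for the players to jointly determine the size $\tau$ of a minimum vertex cover of $G_{x,y}$ up to a multiplicative factor of $1+\epsilon$.

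First I would establish the sandwich bound $|C_A|+|C_B|\le \tau\le |C_A|+|C_B|+|E_{cut}|$, where $C_A$ and $C_B$ are minimum vertex covers of $G[V_A]$ and $G[V_B]$, each computable locally (local computation is free). The lower bound holds because, for any minimum vertex cover $C^*$ of $G_{x,y}$, the set $C^*\cap V_A$ covers $G[V_A]$ and $C^*\cap V_B$ covers $G[V_B]$, while $V_A,V_B$ partition $V$; the upper bound holds because $C_A\cup C_B$ together with one endpoint of each cut edge is a vertex cover of $G_{x,y}$. Set $L:=|C_A|+|C_B|$.

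The protocol is then: Alice sends $|C_A|$ and Bob sends $|C_B|$ ($O(\log n)$ bits), after which both know $L$ and split into two cases. If $L\ge |E_{cut}|/\epsilon$, the sandwich bound gives $L\le \tau\le L(1+\epsilon)$, so $L$ is already a $(1+\epsilon)$-approximation of $\tau$ and the players decide $P$ with no further communication. If $L<|E_{cut}|/\epsilon$, then $|C_A|,|C_B|<|E_{cut}|/\epsilon$, and here I use the bounded-degree hypothesis: every edge of $G[V_A]$ is incident to $C_A$, and each vertex of $C_A$ has degree $O(1)$, so $G[V_A]$ has only $O(|E_{cut}|/\epsilon)$ edges, and likewise $G[V_B]$. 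Thus Alice can transmit all edges inside $V_A$ and Bob all edges inside $V_B$ using $O(|E_{cut}|\log n/\epsilon)$ bits total; combined with the fixed cut $E_{cut}$, the players now know $G_{x,y}$ completely, compute $\tau$ exactly, and decide $P$. In either case the total communication is $O(|E_{cut}|\log n/\epsilon)$, giving $CC^{\{G_{x,y}\}}(P)=O(|E_{cut}|\log n/\epsilon)$ and hence, by Corollary~\ref{theorem:alice-bob-cut}, that Theorem~\ref{generallowerboundtheorem} cannot prove a lower bound larger than $\Omega(1/\epsilon)$.

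The only delicate point is calibrating the threshold in the case split against the precise gap guaranteed by ``$P$ implies a $(1+\epsilon)$-approximation'': one must choose the cutoff (and absorb a constant factor into $\epsilon$) so that in the large-$L$ regime the additive slack $|E_{cut}|$ is genuinely swallowed by the $(1+\epsilon)$-gap. I expect that to be routine bookkeeping; the real content of the argument is the edge bound $|E(G[V_A])|=O(\Delta\,|C_A|)$, which is exactly the step that fails for general (non-bounded-degree) graphs and is the reason the framework is obstructed here only under the bounded-degree assumption.
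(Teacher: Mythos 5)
Your proof is correct and follows essentially the same two-case strategy as the paper's own proof: either the cut is small enough relative to the optimum that patching the two locally optimal covers with cut vertices already gives a $(1+\epsilon)$-approximation, or the entire graph has only $O(\Delta|E_{cut}|/\epsilon)$ edges and can be exchanged outright, yielding $CC^{\{G_{x,y}\}}(P)=O(|E_{cut}|\log n/\epsilon)$. The only difference is in the bookkeeping: the paper thresholds on $|E_{cut}|$ versus $\epsilon m/(2\Delta)$ and uses $m\le\Delta\cdot OPT$ to absorb the additive slack, whereas you threshold on $L=|C_A|+|C_B|$ versus $|E_{cut}|/\epsilon$ and bound the edge count in the second case via $|E(G[V_A])|\le\Delta|C_A|$ — an equivalent calculation.
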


\begin{proof}
Alice and Bob construct a graph from the family according to their inputs, and compute the number of edges in the graph, $m$, by exchanging $\log(n)$ bits, which is possible since each of them knows the number of edges touching $V_A,V_B$, respectively. Similarly, they learn the maximum degree $\Delta$.

There are now two cases. If $|E_{cut}|\leq\frac{\epsilon m}{2 \Delta}$, then each of Alice and Bob chooses an optimal vertex cover for the edges in $V_A,V_B$, respectively, and they add also all the vertices that touch the cut (at most $\frac{\epsilon m}{\Delta}$ vertices) to cover the edges of $E_{cut}$. This gives a vertex cover of size at most $OPT+\frac{\epsilon m}{\Delta}$ since the union of the disjoint optimal covers in $V_A,V_B$ is of size at most $OPT$. Now, since $OPT$ covers all the edges and each vertex in $OPT$ covers at most $\Delta$ edges, we have $m \leq \Delta OPT$. Hence, the size of the cover that Alice and Bob get is at most $OPT+\frac{\epsilon \Delta OPT}{\Delta}=(1+\epsilon)OPT$.

Otherwise, $|E_{cut}|\geq\frac{\epsilon m}{2 \Delta}$. In this case, Alice and Bob exchange information about all $m$ edges in the graph and solve the problem optimally. This requires a number of bits that is $m\log(n) \leq |E_{cut}|2\Delta\log(n)/\epsilon$. Since $\Delta$ is a constant by the assumption on $\{G_{x,y}\}$, we have that $CC^{\{G_{x,y}\}}(P)=O(|E_{cut}|\log(n)/\epsilon)$, which implies that Theorem~\ref{generallowerboundtheorem} cannot give a lower bound for a distributed algorithm for deciding $P$ that is larger than $\Omega(1/\epsilon)$.
\end{proof}

\paragraph{MDS:}

\begin{claim}
\label{claim:limit-bounded-MDS}
Let $P$ be a predicate that implies a $(1+\epsilon)$-approximation for unweighted MDS. If $\{G_{x,y}\}$ is a family of lower bound graphs with bounded degree with respect to a function $f$ and the predicate $P$, then Theorem~\ref{generallowerboundtheorem} cannot give a lower bound for a distributed algorithm for deciding $P$ that is larger than $\Omega(1/\epsilon)$.
\end{claim}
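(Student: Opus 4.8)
The plan is to mimic the proof of Claim~\ref{claim:limit-bounded-MVC} and appeal to Corollary~\ref{theorem:alice-bob-cut}: it suffices to exhibit a protocol in which Alice and Bob decide $P$ on any $G_{x,y}$ from a putative family of bounded-degree lower bound graphs using only $CC^{\{G_{x,y}\}}(P)=O(|E_{cut}|\log(n)/\epsilon)$ bits, since then the corollary shows that Theorem~\ref{generallowerboundtheorem} cannot yield a lower bound larger than $\Omega(1/\epsilon)$. As a first step, Alice and Bob exchange $O(\log n)$ bits so that both learn the number of edges $m$ and the maximum degree $\Delta$ of $G_{x,y}$ --- each knows the edges and degrees on their own side, and $E_{cut}$ is fixed and known to both --- and thereby agree on which of the two cases below applies, according to whether $|E_{cut}|\le \epsilon m/(\Delta(\Delta+1))$.

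In the first (``small cut'') case, Alice locally computes an optimal dominating set $D_A$ of the induced subgraph $G[V_A]$, Bob computes an optimal dominating set $D_B$ of $G[V_B]$, with no communication. I would then argue three things. First, $D_A\cup D_B$ is a dominating set of $G_{x,y}$, since every vertex of $V_A$ is dominated inside $G[V_A]$ by $D_A$ and symmetrically for $V_B$. Second, $|D_A|+|D_B|\le OPT+2|E_{cut}|$, where $OPT$ is the size of a minimum dominating set of $G_{x,y}$: if $D^*$ is optimal, then $D^*\cap V_A$ together with the at most $|E_{cut}|$ vertices of $V_A$ incident to a cut edge is a dominating set of $G[V_A]$, so $|D_A|\le |D^*\cap V_A|+|E_{cut}|$, and likewise $|D_B|\le |D^*\cap V_B|+|E_{cut}|$. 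Third, since every vertex dominates at most $\Delta+1$ vertices we have $OPT\ge n/(\Delta+1)$, and since $2m\le \Delta n$ this gives $OPT\ge 2m/(\Delta(\Delta+1))$; hence $2|E_{cut}|\le 2\epsilon m/(\Delta(\Delta+1))\le \epsilon\cdot OPT$, so $D_A\cup D_B$ is a $(1+\epsilon)$-approximate minimum dominating set. As $D_A$ and $D_B$ are disjoint, Alice and Bob exchange $|D_A|$ and $|D_B|$ using $O(\log n)$ bits, learn the value $|D_A|+|D_B|$ of this $(1+\epsilon)$-approximation, and, since $P$ implies a $(1+\epsilon)$-approximation for unweighted MDS, this suffices to decide $P$.

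In the second (``large cut'') case, $|E_{cut}| > \epsilon m/(\Delta(\Delta+1))$, so $m< (\Delta(\Delta+1)/\epsilon)|E_{cut}|$; Alice and Bob exchange the entire edge set of $G_{x,y}$, which costs $O(m\log n)=O(|E_{cut}|\log(n)/\epsilon)$ bits because $\Delta$ is a constant, and each then solves MDS exactly and in particular decides $P$. Across the two cases $CC^{\{G_{x,y}\}}(P)=O(|E_{cut}|\log(n)/\epsilon)$, and since the protocol is deterministic the same bound holds for $CC^{\{G_{x,y}\},R}(P)$, so Corollary~\ref{theorem:alice-bob-cut} gives the claimed $\Omega(1/\epsilon)$ ceiling for both deterministic and randomized distributed algorithms for deciding $P$.

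The single ingredient that genuinely differs from the MVC proof --- and the step I expect to need the most care --- is the lower bound on $OPT$: for MVC one has the immediate edge bound $m\le\Delta\cdot OPT$, whereas a dominating set carries no direct edge count, so one must route through the vertex count by combining $OPT\ge n/(\Delta+1)$ with $n\ge 2m/\Delta$. The remaining pieces --- disjointness of $D_A$ and $D_B$, the preservation of $G[V_A]$-domination in the union, and the bound of at most $|E_{cut}|$ cut-incident vertices on each side --- are routine bookkeeping and should not cause difficulty.
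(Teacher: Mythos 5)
Your proof is correct and follows essentially the same route as the paper's: a two-case split on whether $|E_{cut}|$ is at most an $\epsilon/(\Delta(\Delta+1))$-fraction of $m$, combining locally optimal solutions with the cut-incident vertices in the small-cut case (the paper has each player dominate only its internal vertices and explicitly add all cut vertices, whereas you solve MDS on $G[V_A]$ and $G[V_B]$ and charge the cut vertices only in the analysis, which is an immaterial variation), and learning the whole graph in the large-cut case. The key inequality $OPT \ge n/(\Delta+1)$ combined with $m = O(\Delta n)$ is exactly the step the paper uses as well, so the argument matches.
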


\begin{proof}
Similarly to the proof of Claim~\ref{claim:limit-bounded-MVC}, there are two cases. If $|E_{cut}|\leq\frac{\epsilon m}{2(\Delta+1)\Delta}$, Alice and Bob cover optimally the internal vertices in their sides (the vertices in $V_A,V_B$, respectively, that do not touch the cut) by adding vertices only from $V_A,V_B$ (they can use vertices touching the cut in the solution). In addition, they add to the solution all the vertices of the cut. The size of the solution is at most $OPT+\frac{\epsilon m}{(\Delta+1)\Delta}$ since the union of the disjoint optimal solutions they add is at most $OPT$. Since each vertex in $OPT$ covers at most $\Delta+1$ vertices, we get $n \leq (\Delta+1)OPT$. In addition, $m \leq \Delta n$, which gives $OPT+\frac{\epsilon m}{(\Delta+1)\Delta} \leq OPT+\frac{\epsilon (\Delta+1)\Delta OPT}{(\Delta+1)\Delta} = (1+\epsilon)OPT$.

If $|E_{cut}|\geq\frac{\epsilon m}{2(\Delta+1)\Delta}$, Alice and Bob learn the whole graph and find an optimal solution. This requires exchanging a number of bits that is at most $m\log(n) \leq  |E_{cut}|2(\Delta+1)\Delta\log(n)/\epsilon$. Since $\Delta$ is a constant by the assumption on $\{G_{x,y}\}$, we have that $CC^{\{G_{x,y}\}}(P)=O(|E_{cut}|\log(n)/\epsilon)$, which implies that Theorem~\ref{generallowerboundtheorem} cannot give a lower bound for a distributed algorithm for deciding $P$ that is larger than $\Omega(1/\epsilon)$.
\end{proof}

\paragraph{MaxIS:}

\begin{claim}
\label{claim:limit-bounded-MaxIS}
Let $P$ be a predicate that implies a $(1-\epsilon)$-approximation for unweighted MaxIS. If $\{G_{x,y}\}$ is a family of lower bound graphs with bounded degree with respect to a function $f$ and the predicate $P$, then Theorem~\ref{generallowerboundtheorem} cannot give a lower bound for a distributed algorithm for deciding $P$ that is larger than $\Omega(1/\epsilon)$.
\end{claim}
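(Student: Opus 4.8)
The plan is to mirror exactly the structure of the proofs of Claims~\ref{claim:limit-bounded-MVC} and~\ref{claim:limit-bounded-MDS}, now for MaxIS. First I would have Alice and Bob construct $G_{x,y}$ from their inputs, and exchange $O(\log n)$ bits to learn the total number of edges $m$ and the maximum degree $\Delta$ (each of them knows the edges incident to $V_A$ and $V_B$ respectively, and $E_{cut}$ is input-independent and known to both). Then I would split into two cases according to the size of $|E_{cut}|$ relative to $m$ and $\Delta$.

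In the ``small cut'' case, say $|E_{cut}|\le \tfrac{\epsilon m}{2\Delta^2}$ (the exact constant in the denominator is a routine calculation I would not grind through here), each of Alice and Bob computes an optimal independent set for the subgraph induced on the \emph{internal} vertices of their side, i.e.\ $V_A$ minus the endpoints of $E_{cut}$, and $V_B$ minus the endpoints of $E_{cut}$, respectively. Taking the union of these two independent sets gives an independent set of $G_{x,y}$ of size at least $\mathrm{OPT}-2|E_{cut}|$: removing the at most $2|E_{cut}|$ cut endpoints from a global optimum yields an independent set of the internal subgraph, which splits into the two sides, so the sum of the two local optima is at least $\mathrm{OPT}-2|E_{cut}|$. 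Since $G_{x,y}$ has bounded degree, $\mathrm{OPT}\ge n/(\Delta+1)\ge m/(\Delta(\Delta+1))$, hence $2|E_{cut}|\le \tfrac{\epsilon m}{\Delta^2}\le \epsilon\,\mathrm{OPT}$ (up to adjusting the constant), so the union is a $(1-\epsilon)$-approximation and the predicate $P$ can be answered with only the $O(\log n)$ bits already exchanged.

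In the ``large cut'' case, $|E_{cut}|>\tfrac{\epsilon m}{2\Delta^2}$, Alice and Bob exchange the full list of $m$ edges, using $m\log n \le 2\Delta^2 |E_{cut}|\log n/\epsilon = O(|E_{cut}|\log n/\epsilon)$ bits since $\Delta=O(1)$, and then each computes the exact MaxIS locally and answers $P$. Combining the two cases gives $CC^{\{G_{x,y}\}}(P)=O(|E_{cut}|\log n/\epsilon)$, so by Corollary~\ref{theorem:alice-bob-cut}, Theorem~\ref{generallowerboundtheorem} cannot yield a lower bound larger than $\Omega(1/\epsilon)$. The only subtle point — and the main thing to get right rather than a genuine obstacle — is the accounting in the small-cut case: one must argue that deleting the cut endpoints from a global optimal independent set leaves a \emph{feasible} independent set in each side that the local optimizers dominate, and that the additive loss of $2|E_{cut}|$ is absorbed into $\epsilon\,\mathrm{OPT}$ using $\mathrm{OPT}=\Omega(m)$, which holds precisely because the degree is bounded. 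Everything else is identical in spirit to the MVC and MDS claims, and the argument works verbatim for randomized protocols as well, giving the randomized version via $CC^{\{G_{x,y}\},R}(P)$.
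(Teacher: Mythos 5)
Your proposal is correct and follows essentially the same two-case argument as the paper: take the union of optimal independent sets on the internal vertices of each side when the cut is small (losing at most the cut endpoints, absorbed into $\epsilon\,\mathrm{OPT}$ via $\mathrm{OPT}\ge n/(\Delta+1)$ and $m\le \Delta n$), and learn the whole graph otherwise. Your bookkeeping with $2|E_{cut}|$ endpoints is if anything slightly more careful than the paper's, which only affects constants.
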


\begin{proof}
Similarly to the proof of Claim~\ref{claim:limit-bounded-MVC}, there are two cases.
If $|E_{cut}|\leq\frac{\epsilon m}{(\Delta+1)\Delta}$, Alice and Bob compute optimal solutions in their sides of the graph not including vertices of the cut (i.e., each of them takes maximum number of independent internal vertices), and they take the union of the solutions. Since all the vertices are internal vertices in $V_A,V_B$, this gives an independent set. An optimal solution has at most $|E_{cut}|\leq\frac{\epsilon m}{(\Delta+1)\Delta}$ vertices in the cut, and the number of its internal vertices in $V_A,V_B$ is at most as in the solution that Alice and Bob construct. This means that the size of their solution is at least $OPT-\frac{\epsilon m}{(\Delta+1)\Delta}$. In addition, $OPT \geq \frac{n}{\Delta+1}$, since if the size of the maximum independent set is smaller, there is a vertex that is not in the set and is not a neighbour of a vertex in the set, which contradicts maximality. Since $m \leq \Delta n$ this gives $\frac{\epsilon m}{(\Delta+1)\Delta} \leq \frac{\epsilon OPT (\Delta+1) \Delta}{(\Delta+1)\Delta} = \epsilon OPT$, which shows that Alice and Bob get a $(1-\epsilon)$-approximation.

If the cut size it at least $\frac{\epsilon m}{(\Delta+1)\Delta}$, Alice and Bob solve the problem optimally by learning the graph using $O(|E_{cut}|\log(n)/\epsilon)$ bits, as before.
\end{proof}

\subsubsection{General graphs}
\label{sec:limitations_MDS}
Here we leverage Corollary~\ref{theorem:alice-bob-cut} for showing limitations of using Theorem~\ref{generallowerboundtheorem} with general graphs.

\paragraph{Max-cut:}
We show limitations for obtaining a $(1-\epsilon)$-approximation for unweighted max-cut, and for obtaining a $2/3$-approximation for weighted max-cut.

\begin{claim}
\label{claim:limit-general-MaxCut}
Let $P$ be a predicate that implies a $(1-\epsilon)$-approximation for unweighted max-cut. If $\{G_{x,y}\}$ is a family of lower bound graphs with respect to a function $f$ and the predicate $P$, then Theorem~\ref{generallowerboundtheorem} cannot give a lower bound for a distributed algorithm for deciding $P$ that is larger than $\Omega(1/\epsilon)$.
\end{claim}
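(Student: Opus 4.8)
The plan is to follow the same template as Claims~\ref{claim:limit-bounded-MVC}--\ref{claim:limit-bounded-MaxIS}: design a cheap two-party protocol that determines $P$ on $\{G_{x,y}\}$, show it uses only $CC^{\{G_{x,y}\}}(P)=O(|E_{cut}|\log n/\epsilon)$ bits, and then invoke Corollary~\ref{theorem:alice-bob-cut}, which immediately yields that Theorem~\ref{generallowerboundtheorem} cannot prove a bound larger than $\Omega(CC^{\{G_{x,y}\}}(P)/(|E_{cut}|\log n))=\Omega(1/\epsilon)$. Recall that by the definition of a family of lower bound graphs, Alice knows $G[V_A]$ entirely (it depends only on $x$), Bob knows $G[V_B]$ entirely, and both know the fixed set $E_{cut}$; only the endpoints inside their own side of the $E_{cut}$ edges are input-dependent but those are determined by $x,y$ as well, so after exchanging $O(\log n)$ bits they agree on the total number of edges $m$ and can carry out the case split below.

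First I would have them compute $m$, and split into two cases according to $|E_{cut}|$ versus the threshold $\epsilon m/2$. In the case $|E_{cut}|\le \epsilon m/2$, Alice computes (with unbounded local computation) a maximum cut $S_A$ of $G[V_A]$, Bob computes a maximum cut $S_B$ of $G[V_B]$, and each announces only the \emph{value} of its cut, $O(\log n)$ bits. The candidate cut $S_A\cup S_B$ in $G_{x,y}$ has weight at least $\mathrm{MaxCut}(G[V_A])+\mathrm{MaxCut}(G[V_B])$, ignoring any contribution of the crossing edges. On the other hand, restricting an optimal cut of $G_{x,y}$ to each side and bounding the crossing contribution by $|E_{cut}|$ gives $\mathrm{MaxCut}(G_{x,y})\le \mathrm{MaxCut}(G[V_A])+\mathrm{MaxCut}(G[V_B])+|E_{cut}|$. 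Hence $\mathrm{MaxCut}(G[V_A])+\mathrm{MaxCut}(G[V_B])\ge \mathrm{MaxCut}(G_{x,y})-|E_{cut}|$. Since a uniformly random cut has expected weight $m/2$ in the unweighted case, $\mathrm{MaxCut}(G_{x,y})\ge m/2\ge |E_{cut}|/\epsilon$, so the quantity Alice and Bob compute is at least $(1-\epsilon)\,\mathrm{MaxCut}(G_{x,y})$ and at most $\mathrm{MaxCut}(G_{x,y})$, i.e.\ a legitimate $(1-\epsilon)$-approximation; as $P$ is decided by any such approximation, both players now know whether $P$ holds, at total cost $O(\log n)$.

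In the complementary case $|E_{cut}|>\epsilon m/2$, we have $m<2|E_{cut}|/\epsilon$, so Alice and Bob simply exchange a description of all $m$ edges of $G_{x,y}$ (the within-side edges and the $E_{cut}$ edges), using $O(m\log n)=O(|E_{cut}|\log n/\epsilon)$ bits, after which each of them knows the whole graph and solves max-cut exactly, hence decides $P$. Combining the two cases gives $CC^{\{G_{x,y}\}}(P)=O(|E_{cut}|\log n/\epsilon)$ (and the protocol is deterministic, so it also bounds $CC^{\{G_{x,y}\},R}(P)$), and Corollary~\ref{theorem:alice-bob-cut} finishes the proof.

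I expect essentially no serious obstacle here; the only point requiring a little care is the direction of the max-cut decomposition inequality and the use of $\mathrm{MaxCut}(G_{x,y})\ge m/2$ to turn the additive slack $|E_{cut}|$ into the multiplicative $(1-\epsilon)$ guarantee — this is what fixes the threshold at $\epsilon m/2$ and is specific to \emph{unweighted} max-cut (for weighted instances the $m/2$ bound fails, which is why the weighted case only yields a limitation against a $2/3$-approximation and is handled separately).
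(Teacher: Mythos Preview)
Your proposal is correct and follows essentially the same approach as the paper: the identical threshold $|E_{\cut}|\lessgtr \epsilon m/2$, the same use of $\mathrm{MaxCut}(G_{x,y})\ge m/2$ to convert the additive $|E_{\cut}|$ loss into a multiplicative $(1-\epsilon)$ factor in the small-cut case, and the same brute-force exchange of all $m$ edges in the large-cut case. Your write-up is in fact slightly more careful than the paper's (you spell out the decomposition inequality and note explicitly why the argument is specific to the unweighted setting).
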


\begin{proof}
We consider two possible cases. If $|E_{cut}|\leq\frac{\epsilon m}{2}$, let $E_A$ be the set of edges between vertices of $V_A$, and $E_B$ be the set of edges between vertices of $V_B$. Alice provides a solution for $V_A$ that is optimal for $E_A$, and Bob provides a solution for $V_B$ that is optimal for $E_B$.
In the optimal solution, the number of cut edges in $E_A$ or $E_B$ can only be smaller than the solution that Alice and Bob provide, but the optimal solution may include $\frac{\epsilon m}{2}$ additional edges from $E_{cut}$. Hence, the solution of Alice and Bob is of size at least $OPT-\frac{\epsilon m}{2}$. Since It is well known that in any graph, the size of an optimal solution is at least $\frac{m}{2}$ (consider a random assignment of vertices to different sides), the solution obtained has size at least $OPT - \epsilon OPT = (1-\epsilon)OPT$.

If $|E_{cut}|\leq\frac{\epsilon m}{2}$, Alice and Bob solve the problem optimally by learning the graph using $m\log(n)=O(|E_{cut}|\log(n)/\epsilon)$ bits.
\end{proof}

In \emph{weighted} graphs, Alice and Bob can get a $\frac{2}{3}$-approximation for max-cut. This follows from~\cite[Section 2.3]{kogan2015sketching}, where it is shown that the following simple protocol gives a $\frac{2}{3}$-approximation (this is shown even in a more restricted model with one-way communication). Divide the edges of $G=(V,E)$ to two disjoint sets $E_A,E_B$ where $E_A$ is the set of edges in $V_A$, and $E_B$ is the union of the set of edges in $V_B$ and the set of edges $E_{cut}$. Alice finds an optimal solution $C_A$ of $(V,E_A)$ (vertices that do not touch $E_A$ are added to one of the sides of the cut arbitrarily), and Bob finds an optimal solution $C_B$ of $(V,E_B)$. A cut can be seen as a function $C: V \rightarrow \{0,1\}$, where each vertex gets a side in the cut. In~\cite{kogan2015sketching}, it is shown that at least one of $\{C_A,C_B,C_A \oplus C_B\}$ gives a $\frac{2}{3}$-approximation for the max-cut. Alice and Bob can compute the values of all of these three cuts and get a $\frac{2}{3}$-approximation, by exchanging $O(|E_{cut}|\log(n))$ bits, which gives the following.

\begin{claim}
\label{claim:limit-general-wMaxCut}
Let $P$ be a predicate that implies a $2/3$-approximation for weighted max-cut. If $\{G_{x,y}\}$ is a family of lower bound graphs with respect to a function $f$ and the predicate $P$, then Theorem~\ref{generallowerboundtheorem} cannot give a lower bound for a distributed algorithm for deciding $P$ that is larger than $\Omega(1)$.
\end{claim}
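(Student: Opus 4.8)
The plan is to invoke Corollary~\ref{theorem:alice-bob-cut}: it suffices to exhibit, for an \emph{arbitrary} family $\{G_{x,y}\}$ of lower bound graphs with respect to $f$ and $P$, a two-party protocol that decides $P$ on this family while exchanging only $O(|E_{cut}|\log n)$ bits, i.e.\ to show $CC^{\{G_{x,y}\}}(P)=O(|E_{cut}|\log n)$. Since $P$ is decidable from any $2/3$-approximation to the weight of the maximum cut of $G_{x,y}$ (this is exactly the sense in which the predicate ``implies'' the approximation, as in the extended framework of Section~\ref{section:approx(b)}), it is enough for Alice and Bob to jointly compute such an approximation within that communication budget.

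To do this I would implement the sketching protocol of~\cite{kogan2015sketching} on the cut between $V_A$ and $V_B$. Partition the edges of $G_{x,y}$ into $E_A$, the edges inside $V_A$, and $E_B$, the edges inside $V_B$ together with $E_{cut}$. By the definition of a family of lower bound graphs, $E_A$ (with its weights) depends only on $x$ and is known to Alice, $E_{cut}$ is fixed and known to both players, and the edges inside $V_B$ depend only on $y$, so Bob knows all of $E_B$. Alice computes an optimal cut $C_A$ of $(V,E_A)$ and, since every vertex of $V_B$ is incident to no edge of $E_A$, places all of $V_B$ on a fixed default side; symmetrically Bob computes an optimal cut $C_B$ of $(V,E_B)$ and places every vertex of $V_A$ not incident to a cut edge on a fixed default side. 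By~\cite{kogan2015sketching}, at least one of $C_A$, $C_B$, $C_A\oplus C_B$ is a $2/3$-approximation of the maximum cut of $G_{x,y}$.

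It remains to show the players can compute the three weights $w_{G_{x,y}}(C_A)$, $w_{G_{x,y}}(C_B)$, $w_{G_{x,y}}(C_A\oplus C_B)$ cheaply. Alice sends Bob the restriction of $C_A$ to the $V_A$-endpoints of the cut edges, and Bob sends Alice the restriction of $C_B$ to those same vertices; this costs $O(|E_{cut}|\log n)$ bits. Using the default assignments, Alice can now reconstruct each of the three candidate cuts on all of $V_A$ (on the $V_A$-vertices off the cut, $C_B$ is the default, so $C_A\oplus C_B$ agrees with $C_A$ there), and Bob can reconstruct each of them on all of $V_B$ (there $C_A$ is the default, so $C_A\oplus C_B$ agrees with $C_B$). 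Hence Alice computes the contribution of $E_A$ to each cut's weight, Bob computes the contribution of the edges inside $V_B$ and of $E_{cut}$ (he knows both endpoints' sides for every cut edge), and exchanging these $O(1)$ partial sums---each a polynomially bounded integer of $O(\log n)$ bits---lets both players learn all three weights, hence a $2/3$-approximation, hence the value of $P$. This gives $CC^{\{G_{x,y}\}}(P)=O(|E_{cut}|\log n)$, and Corollary~\ref{theorem:alice-bob-cut} yields the stated $\Omega(1)$ ceiling on what Theorem~\ref{generallowerboundtheorem} can prove.

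The only delicate point I expect is the last bookkeeping step: verifying that fixing the ``free'' vertices of each player to a default side genuinely decouples the two sides, so that the $O(|E_{cut}|)$ endpoint bits exchanged really do suffice for each player to evaluate all three candidate cuts---in particular on the cut edges themselves and on the XOR cut. Everything else (the $2/3$ guarantee, the fact that \congest edge weights are $O(\log n)$-bit so partial sums fit in $O(\log n)$ bits, and the final reduction to Corollary~\ref{theorem:alice-bob-cut}) is either quoted directly or immediate.
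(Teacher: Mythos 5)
Your proposal is correct and follows essentially the same route as the paper: the paper also splits the edges into $E_A$ (inside $V_A$) and $E_B$ (inside $V_B$ together with $E_{cut}$), invokes the result of~\cite{kogan2015sketching} that one of $C_A$, $C_B$, $C_A\oplus C_B$ is a $2/3$-approximation, and concludes via an $O(|E_{cut}|\log n)$-bit exchange. The only difference is that you spell out the bookkeeping for evaluating the three candidate cuts, which the paper leaves implicit; your accounting is sound.
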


\paragraph{MVC:}
We show the following two claims. First, Alice and Bob can get a $\frac{3}{2}$-approximation to weighted MVC by exchanging $O(|E_{cut}|\log(n))$ bits. Second, by exchanging $O(OPT |E_{cut}|\log(n))$ bits they can get a $(1+\epsilon)$-approximation for unweighted MVC, where $OPT$ is the size of optimal solution. In particular, since $OPT \leq n$ we cannot use this framework to show any super-linear lower bound for approximating unweighted MVC. Interestingly, for the MaxIS problem we do show a nearly-quadratic lower bound for a constant approximation (Section~\ref{section:approx(b)}). While a lower bound for obtaining an exact solution for one of these problems implies directly a lower bound for the other since the complement of a MaxIS is an MVC, an approximation for one does not imply an approximation for the other in a general graph.

\begin{claim}
\label{claim:limit-general-wMVC3/2}
Let $P$ be a predicate that implies a $3/2$-approximation for weighted MVC. If $\{G_{x,y}\}$ is a family of lower bound graphs with respect to a function $f$ and the predicate $P$, then Theorem~\ref{generallowerboundtheorem} cannot give a lower bound for a distributed algorithm for deciding $P$ that is larger than $\Omega(1)$.
\end{claim}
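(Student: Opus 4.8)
The plan is to give a simple two-party protocol in which Alice and Bob solve the weighted MVC problem on any graph from the family $\{G_{x,y}\}$ by exchanging only $O(|E_{cut}|\log n)$ bits, whose output is guaranteed to be within a factor $3/2$ of the optimum. Since $P$ is assumed to imply a $3/2$-approximation for weighted MVC, such a protocol determines the answer to $P$ on $G_{x,y}$, hence computes $f(x,y)$; plugging $CC^{\{G_{x,y}\}}(P)=O(|E_{cut}|\log n)$ into Corollary~\ref{theorem:alice-bob-cut} yields that Theorem~\ref{generallowerboundtheorem} cannot give a lower bound larger than $\Omega(1)$, which is the claim.

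The protocol itself follows the classical LP-rounding argument for weighted vertex cover (the one giving a $2$-approximation) combined with a known trick to improve the ratio to $3/2$ on graphs where the players can coordinate over the cut; I would instead use the more elementary split: partition $E_{x,y}$ into $E_A=E_{x,y}(V_A,V_A)$, $E_B=E_{x,y}(V_B,V_B)$, and $E_{cut}$. Bob can absorb $E_{cut}$ into his side by telling Alice the $O(|E_{cut}|)$ endpoints of cut edges that lie in $V_A$ (this is $O(|E_{cut}|\log n)$ bits), after which Alice solves MVC optimally on $(V_A,E_A)$, Bob solves MVC optimally on $(V_B,E_B\cup E_{cut})$, and they output the union. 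The union is a valid vertex cover of $G_{x,y}$, and its weight is at most $OPT(E_A)+OPT(E_B\cup E_{cut})$. The key numerical point is that $OPT(E_A)+OPT(E_B\cup E_{cut})\le \tfrac{3}{2}\,OPT(G_{x,y})$: one shows that $OPT(E_A)+OPT(E_B)\le OPT(G_{x,y})$ since the global optimum restricted to each side covers that side, and separately one bounds the extra cost of covering $E_{cut}$ by half of $OPT$ — for instance by having Bob, when solving his side, be allowed to also pick the cheaper endpoint of each cut edge, and charging via the fact that in any graph the minimum-weight vertex cover has weight at least half the weight of a maximal matching it must hit, so the slack from the cut edges is controlled. (If this particular accounting does not land exactly at $3/2$, the cleanest route is the LP relaxation: solve the half-integral LP on the whole graph — which Alice and Bob can do by exchanging only the $O(|E_{cut}|)$ LP values on cut vertices — round all variables of value $\ge 1/3$ up, and note this is a $3/2$-approximation; the communication is still $O(|E_{cut}|\log n)$ because only cut-incident variables need to be reconciled.)

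The main obstacle I expect is getting the constant exactly $3/2$ rather than $2$: the naive "each player covers his own side and takes all cut vertices" protocol only gives $OPT+|E_{cut}|$, which is a $2$-approximation at best and can be worse, so one genuinely needs the half-integral LP rounding (or an equivalent matching-based argument) to shave the ratio down to $3/2$, and one must verify that the LP can be solved jointly while communicating only about the cut. Everything else — the observation that the family of lower bound graphs has a fixed $V_A,V_B$ partition and fixed $E_{cut}$, that each player knows his own induced subgraph exactly, and that revealing the cut costs $O(|E_{cut}|\log n)$ bits — is routine and parallels the proofs of Claims~\ref{claim:limit-bounded-MVC}--\ref{claim:limit-general-MaxCut}. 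Once the $3/2$-approximation protocol is in hand, the conclusion is immediate from Corollary~\ref{theorem:alice-bob-cut} with $t'(n,E_{cut})=O(|E_{cut}|\log n)$, giving the stated $\Omega(1)$ barrier.
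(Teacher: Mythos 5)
Your high-level plan (exhibit a two-party protocol that achieves a $3/2$-approximation while exchanging only $O(|E_{cut}|\log n)$ bits, then invoke Corollary~\ref{theorem:alice-bob-cut}) is the right one, but the protocol you propose does not achieve $3/2$. The inequality $OPT(E_A)+OPT(E_B\cup E_{cut})\le \tfrac{3}{2}\,OPT(G_{x,y})$ is false: take $V_A$ to be a clique on $n$ unit-weight vertices and $V_B$ a single vertex of weight $n-1$ adjacent to all of $V_A$. Then $OPT(G)=n$, while $OPT(E_A)=n-1$ and $OPT(E_B\cup E_{cut})=n-1$, so your union has weight $2n-2\approx 2\,OPT$. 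The problem is that you fixed \emph{Bob} as the player who absorbs the cut edges; nothing forces the cut to be cheap to cover from Bob's side. Your LP fallback does not rescue this: the optimal LP solution for vertex cover is half-integral, so rounding up all variables of value at least $1/3$ is the same as rounding up those of value at least $1/2$, which is the standard factor-$2$ rounding, and the integrality gap of $2-o(1)$ (e.g., on a clique) rules out certifying $3/2$ against the LP. Likewise, the "half the weight of a maximal matching" bound is a cardinality argument and does not transfer cleanly to weights.

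The missing idea, which is how the paper gets $3/2$, is to decide \emph{which} player absorbs the cut based on the inputs. Let $OPT_A$ (resp.\ $OPT_B$) be the optimal cost of covering only the edges internal to $V_A$ (resp.\ $V_B$). The restriction of a global optimum to $V_A$ covers $E_A$ and its restriction to $V_B$ covers $E_B$, and these restrictions are disjoint, so $OPT_A+OPT_B\le OPT$ and hence $\min\{OPT_A,OPT_B\}\le OPT/2$. Alice and Bob exchange $O(\log n)$ bits to determine the minimum; the player with the smaller internal optimum contributes only that internal cover (cost at most $OPT/2$), while the other player contributes an optimal cover of \emph{all} edges touching their side, including $E_{cut}$ (cost at most $OPT$, again by restricting the global optimum). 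The union covers every edge and has weight at most $\tfrac{3}{2}OPT$, and communicating which cut-incident vertices were chosen costs $O(|E_{cut}|\log n)$ bits, after which Corollary~\ref{theorem:alice-bob-cut} gives the $\Omega(1)$ barrier exactly as you intended. In the counterexample above this protocol correctly lets Bob (whose internal optimum is $0$) cover nothing internally and lets Alice cover everything touching $V_A$ at cost $OPT$.
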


\begin{proof}
Let $OPT_A$ be the cost of an optimal vertex cover to all the internal edges in $V_A$, and define $OPT_B$ accordingly with respect to $V_B$. Note that since these optimal partial solutions are disjoint, we have $OPT_A+OPT_B \leq OPT$, or equivalently $\min\{OPT_A,OPT_B\} \leq \frac{OPT}{2}$. Alice and Bob can compute $OPT_A$ and $OPT_B$ respectively, and exchange $\log(n)$ bits for computing $\min\{OPT_A,OPT_B\}$. Assume w.l.o.g that $OPT_A = \min\{OPT_A,OPT_B\}$, then Alice adds to the solution the respective optimal cover of cost at most $OPT_A$. Bob adds to the solution an optimal cover for all the edges touching $V_B$ (including edges in $E_{cut}$, and tells Alice which vertices in $V_B$ are in it. This requires at most $|E_{cut}|\log(n)$ bits. The size of the solution Bob adds is clearly at most $OPT$. Hence, the total cost of their solution is at most $OPT_A + OPT \leq \frac{3}{2} OPT$, and is obtained by exchanging $O(|E_{cut}|\log(n))$ bits.
\end{proof}

We next show that Alice and Bob can always get a $(1+\epsilon)$-approximation to unweighted MVC in $O(\frac{OPT}{\epsilon})$ rounds. This is inspired by~\cite{DBLP:journals/corr/abs-1807-04900} that shows an $O(OPT^2)$ algorithm for MVC in the \cgst{} model. We show that similar ideas lead to a protocol for Alice and Bob which exchanges $O(|E_{cut}|\log(n)OPT/\epsilon)$ bits.

\begin{claim}
\label{claim:limit-general-MVCeps}
Let $P$ be a predicate that implies a $(1+\epsilon)$-approximation for unweighted MVC. If $\{G_{x,y}\}$ is a family of lower bound graphs with respect to a function $f$ and the predicate $P$, then 
Theorem~\ref{generallowerboundtheorem} cannot give a lower bound for a distributed algorithm for deciding $P$ that is larger than $\Omega(n/\epsilon)$.
\end{claim}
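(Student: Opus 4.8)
The plan is to apply Corollary~\ref{theorem:alice-bob-cut}: it suffices to show that for \emph{every} family $\{G_{x,y}\}$ of lower bound graphs with respect to $f$ and $P$, there is a two-party protocol deciding $P$ on $\{G_{x,y}\}$ with $CC^{\{G_{x,y}\}}(P)=O(|E_{cut}|\log(n)\cdot OPT/\epsilon)$ bits, where $OPT=OPT(G_{x,y})$ is the size of a minimum vertex cover; since $OPT\le n$ this gives $CC^{\{G_{x,y}\}}(P)=O(|E_{cut}|\,n\log(n)/\epsilon)$, and Corollary~\ref{theorem:alice-bob-cut} then says Theorem~\ref{generallowerboundtheorem} cannot prove a lower bound larger than $\Omega(n/\epsilon)$. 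Here, ``$P$ implies a $(1+\epsilon)$-approximation'' means (in the extended, gapped sense of this section) that the task is to distinguish $OPT\le M$ from $OPT\ge(1+\epsilon)M$ for the threshold $M$ fixed by the family, and $M$ is common knowledge since it does not depend on $(x,y)$.

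The protocol, inspired by the $O(OPT^2)$-round \cgst{} algorithm of~\cite{DBLP:journals/corr/abs-1807-04900}, proceeds as follows. First Alice and Bob exchange the (few) edges of $E_{cut}$ at cost $O(|E_{cut}|\log n)$; from this point the entire cut, hence the true degree of every vertex, is common knowledge. Next, each player computes an optimal vertex cover of its internal graph ($C_A$ of $G[V_A]$, $C_B$ of $G[V_B]$, using unbounded local computation), and they consider the cover $\widetilde C=C_A\cup C_B\cup V_{cut}$, where $V_{cut}$ is the set of at most $2|E_{cut}|$ endpoints of cut edges. Restricting a global optimal cover to each side gives a cover of that side, so $|C_A|+|C_B|\le OPT$ and therefore $OPT\le |\widetilde C|\le OPT+2|E_{cut}|$; each player can compute its own contribution to $|\widetilde C|$, and they exchange two $O(\log n)$-bit numbers to learn $|\widetilde C|$ exactly. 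If $|\widetilde C|$ is \emph{large}, say $|\widetilde C|>c|E_{cut}|/\epsilon$ for a suitable constant $c$, then $OPT>4|E_{cut}|/\epsilon$ (for $\epsilon\le 1/2$), so $2|E_{cut}|<\tfrac{\epsilon}{2}OPT$ and hence $OPT\le|\widetilde C|<(1+\tfrac{\epsilon}{2})OPT$. Combined with the gap in $P$, comparing $|\widetilde C|$ against $(1+\tfrac{\epsilon}{2})M$ decides $P$ correctly (Yes: $|\widetilde C|<(1+\tfrac{\epsilon}{2})M$; No: $|\widetilde C|\ge OPT\ge(1+\epsilon)M>(1+\tfrac{\epsilon}{2})M$), at total cost $O(|E_{cut}|\log n)$.

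The remaining, \emph{small}, case is $|\widetilde C|\le c|E_{cut}|/\epsilon$, so $OPT\le|\widetilde C|=O(|E_{cut}|/\epsilon)$. If $M\ge|\widetilde C|$ then $OPT\le M$ and they output Yes; otherwise $M<|\widetilde C|=O(|E_{cut}|/\epsilon)$, so $M$ is itself small, and they run a Buss-style kernelization with parameter $M$. Every vertex of degree greater than $M$ lies in every vertex cover of size $\le M$, so each player computes the set of such vertices on its side; the union $H$ is broadcast ($O(M\log n)$ bits total), and if $|H|>M$ they output No. They then compute the number of edges of the residual graph $G'=G_{x,y}-H$ (each player knows its internal residual edges, and the residual cut edges are common knowledge); if this exceeds $M^2$ they output No, since a vertex cover of $G'$ of size $\le M-|H|$ covers at most $(M-|H|)M\le M^2$ edges. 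Otherwise $G'$ has at most $M^2$ edges, Bob ships his $O(M^2)$ internal residual edges to Alice, and Alice computes $OPT(G')$ exactly; since $H$ is forced, $OPT(G_{x,y})\le M$ iff $|H|+OPT(G')\le M$, and she returns the one-bit answer.

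For the communication bound: every step except the kernel shipment costs $O(|E_{cut}|\log(n)/\epsilon)$ bits, and the shipment costs $O(M^2\log n)$ and is performed only when $M<|\widetilde C|=O(|E_{cut}|/\epsilon)$ and $M\le OPT+2|E_{cut}|$; writing $M^2=O(|E_{cut}|M/\epsilon)$ and using $M=O(OPT+|E_{cut}|)$ (and $M\le n$ in any valid family for the crude $n/\epsilon$ version) bounds the whole protocol by $O(|E_{cut}|\log(n)\cdot OPT/\epsilon)\subseteq O(|E_{cut}|\,n\log(n)/\epsilon)$, and Corollary~\ref{theorem:alice-bob-cut} yields the claim. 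The part I expect to require the most care is exactly this small-$OPT$ branch: (i) checking that Buss kernelization can be carried out jointly across the cut — handled by broadcasting $E_{cut}$ and then $H$ — and (ii) verifying that this branch is entered only when $M$ itself is $O(OPT+|E_{cut}|)$, so the $O(M^2)$ kernel shipment stays inside the target budget; the large-$OPT$ branch and the $\epsilon$-rescaling needed for a clean separation of $M$ from $(1+\epsilon)M$ are routine.
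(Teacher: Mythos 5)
Your proposal is correct and follows essentially the same route as the paper's proof: a case split on whether $|E_{cut}|$ is small relative to $\epsilon\cdot OPT$, with the easy case handled by taking optimal covers of each side plus all cut vertices, and the hard case handled by forcing high-degree vertices into the cover and shipping the resulting $O(k^2)$-edge kernel across the cut. The only cosmetic differences are that the paper obtains its estimate of $OPT$ via the $3/2$-approximation protocol of the preceding claim and kernelizes with threshold $\approx OPT$ rather than with the predicate threshold $M$, and your treatment of the gap semantics of $P$ is somewhat more explicit than the paper's.
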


\begin{proof}
First, Alice and Bob run the protocol from Claim~\ref{claim:limit-general-wMVC3/2} and get an estimate $OPT \leq k \leq \frac{3}{2}OPT$ on the size of $OPT$.
If $|E_{cut}|<\frac{\epsilon k}{3}$, each of Alice and Bob covers optimally the edges of $V_A,V_B$, respectively, and they add also all the vertices touching $E_{cut}$. This gives a solution of size at most $OPT+2|E_{cut}|\leq OPT + \frac{2 \epsilon k}{3} \leq (1+\epsilon)OPT$.

Otherwise, $|E_{cut}|\geq\frac{\epsilon k}{3}$. Alice and Bob take each vertex of degree greater than $k$ into the cover, which requires no communication. Note that each of these vertices must be in any optimal solution, as otherwise there is a vertex with at least $k+1$ neighbors not in the cover and so the optimal solution must consist of all of these neighbors, which would contradict $OPT \leq k$. Then, Alice and Bob tell each other which cut vertices in $V_A,V_B$ are taken to the cover. They remove from the graph all the edges that are covered by vertices taken into the solution. In the remaining graph there are at most $k^2$ edges: the degree of each vertex is at most $k$, and there is an optimal cover of size at most $k$ that covers all the edges. Since $|E_{cut}|\geq\frac{\epsilon k}{3}$, Alice and Bob can learn the whole remaining graph and solve the problem optimally. The number of bits required for learning this remaining graph is $k^2\log(n) \leq k \cdot k \cdot \log(n) \leq \frac{3|E_{cut}|}{\epsilon}\cdot \frac{3OPT}{2} \cdot \log(n) = O(|E_{cut}|\log(n)OPT/\epsilon)$ bits. The proof is completed by bounding $OPT$ with $n$.
\end{proof}

\paragraph{MDS and MaxIS:}
Finally, simple arguments show that Alice and Bob can always get a 2-approximation for weighted MDS and a $\frac{1}{2}$-approximation for weighted MaxIS in general graphs. These approximations are much better than the currently best approximations known in the \cgst{} model.

\begin{claim}
\label{claim:limit-general-MDS}
Let $P$ be a predicate that implies a $2$-approximation for weighted MDS. If $\{G_{x,y}\}$ is a family of lower bound graphs with respect to a function $f$ and the predicate $P$, then  Theorem~\ref{generallowerboundtheorem} cannot give a lower bound for a distributed algorithm for deciding $P$ that is larger than $\Omega(1)$.
\end{claim}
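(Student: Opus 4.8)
The plan is to apply Corollary~\ref{theorem:alice-bob-cut}: it is enough to show that for \emph{every} family $\{G_{x,y}\}$ of lower bound graphs with respect to $f$ and $P$ there is a protocol by which Alice and Bob decide $P$ on $G_{x,y}$ using only $O(|E_{cut}|\log n)$ bits, which then forces any lower bound obtained through Theorem~\ref{generallowerboundtheorem} to be at most $\Omega(1)$. Since $P$ implies a $2$-approximation for weighted MDS, it suffices for the players to jointly compute a single number $A$ with $\mathrm{OPT}(G_{x,y})\le A\le 2\,\mathrm{OPT}(G_{x,y})$, from which the value of $P$ can be read off. So the whole proof reduces to designing a cheap $2$-party protocol that outputs such an $A$.

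The protocol I would use is the ``each player solves its own side and the union is a $2$-approximation'' idea, made to work by a tiny exchange about the cut. Let $V_A^{c}\subseteq V_A$ and $V_B^{c}\subseteq V_B$ be the endpoints of the cut edges; because $E_{cut}$ is fixed across the family, both players know these sets, and $|V_A^{c}|,|V_B^{c}|=O(|E_{cut}|)$. First, Alice sends Bob the weights of the vertices in $V_A^{c}$ and Bob sends Alice the weights of the vertices in $V_B^{c}$ — this is $O(|E_{cut}|\log n)$ bits. Next, Alice computes (with unbounded local computation) a minimum-weight set $D_A\subseteq V_A\cup V_B^{c}$ that dominates every vertex of $V_A$ in $G_{x,y}$; symmetrically Bob computes a minimum-weight $D_B\subseteq V_B\cup V_A^{c}$ dominating every vertex of $V_B$. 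Finally they exchange $w(D_A)$ and $w(D_B)$ and set $A=w(D_A)+w(D_B)$. The total communication is $O(|E_{cut}|\log n)$ bits, and as this protocol is deterministic it bounds $CC^{R}$ as well, so the conclusion covers randomized algorithms too.

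The correctness argument I would give has two halves. For $\mathrm{OPT}\le A$: $D_A\cup D_B$ is an (actual) dominating set of $G_{x,y}$, since every vertex of $V_A$ is dominated by $D_A$ and every vertex of $V_B$ by $D_B$, and $V=V_A\cup V_B$; hence $\mathrm{OPT}\le w(D_A\cup D_B)\le w(D_A)+w(D_B)=A$. For $A\le 2\,\mathrm{OPT}$: take an optimal dominating set $D^{*}$ and observe that $D^{*}\cap(V_A\cup V_B^{c})$ still dominates all of $V_A$ — indeed, any $v\in V_A$ is dominated by some $u\in D^{*}$, and if $u\notin V_A$ then $u\in V_B$ and $\{u,v\}$ is a cut edge, so $u\in V_B^{c}$. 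Consequently $w(D_A)\le w(D^{*}\cap(V_A\cup V_B^{c}))\le\mathrm{OPT}$, and symmetrically $w(D_B)\le\mathrm{OPT}$, giving $A\le 2\,\mathrm{OPT}$. Plugging $CC^{\{G_{x,y}\}}(P)=O(|E_{cut}|\log n)$ into Corollary~\ref{theorem:alice-bob-cut} finishes the claim.

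The main point to get right — and the one I would be most careful about — is that Alice's local optimization problem is fully determined by the information she has after the $O(|E_{cut}|\log n)$-bit exchange together with her own side $G[V_A]$: this is exactly why $D_A$ is allowed to range over $V_A\cup V_B^{c}$ rather than over all of $V_A\cup V_B$, because domination of a vertex in $V_A$ from outside $V_A$ can only occur through a cut edge, so the internal edges and weights of $V_B\setminus V_B^{c}$ are irrelevant to her problem (and, by construction, the cut edges are public). Everything else — the bit counting, and reading off $P$ from $A$ using the approximation threshold — is routine.
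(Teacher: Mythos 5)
Your proposal is correct and follows essentially the same route as the paper: each player optimally dominates its own side using its own vertices plus the cut vertices of the other side, the union is a dominating set of weight at most $2\,\mathrm{OPT}$, and the exchange costs $O(|E_{cut}|\log n)$ bits, so Corollary~\ref{theorem:alice-bob-cut} gives the $\Omega(1)$ limitation. Your explicit exchange of the weights of the cut vertices (so that each player's local optimization is well defined) is a detail the paper's one-line proof glosses over, but it does not change the approach or the communication bound.
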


\begin{proof}
To get a 2-approximation for MDS, each of Alice and Bob covers optimally all the vertices in $V_A,V_B$, respectively, by using possibly vertices in the cut. Each of these solutions is of cost at most $OPT$, hence their union gives a 2-approximation. Alice and Bob tell each other which vertices on the other side they take into the solution, which requires at most $O(|E_{cut}|\log(n))$ bits.
\end{proof}

\begin{claim}
\label{claim:limit-general-MaxIS}
Let $P$ be a predicate that implies a $\frac{1}{2}$-approximation for weighted MaxIS. If $\{G_{x,y}\}$ is a family of lower bound graphs with respect to a function $f$ and the predicate $P$, then 
Theorem~\ref{generallowerboundtheorem} cannot give a lower bound for a distributed algorithm for deciding $P$ that is larger than $\Omega(1)$.
\end{claim}

\begin{proof}
For MaxIS, each of Alice and Bob finds an optimal solution for $V_A,V_B$, respectively, and they take to the solution the maximum of them, which requires at most $O(\log(n))$ bits to compute. At least one of their solutions is of size at least $\frac{1}{2}OPT$, which shows a $\frac{1}{2}$-approximation.
\end{proof}

\subsection{Limitations using non-determinism}
\label{subsec:limit-nondet}

In this section, we prove limitations of the framework in providing lower bounds for several central problems in $P$ as maximum cardinality matching (MCM), max flow, min $s$-$t$ cut and weighted $(s,t)$-distance. We also consider several verification problems. The main tools we introduce are connections to \emph{non-deterministic} complexity, and \emph{proof labeling schemes} (PLS).

Corollary~\ref{theorem:alice-bob-cut} allows us to obtain a wide range of limitations, by providing protocols for Alice and Bob that decide $P$ on the given families of lower bound graphs, and hence provide upper bounds for $CC^{G_{x,y}}(P)$ for the different predicates $P$ in question. Another way to obtain such an upper bound is to resort to non-determinism. We prove that having an efficient nondeterministic protocol that allows Alice and Bob to verify a graph property
and an efficient nondeterministic protocol for the complementary property, proves a limitation of this framework in giving certain lower bounds on the communication complexity of deterministically (or randomly) deciding this property.

In a nondeterministic protocol, Alice and Bob use auxiliary bit strings, which we think of as nondeterministic strings. These strings are assumed to be optimal with respect to the protocol and input, in the sense that they minimize the number of bits exchanged when $f(x,y)=\true$. For the case of $f(x,y)=\false$, instead of discussing the co-nondeterministic communication complexity, we discuss the nondeterministic communication complexity of the complement function, $\notf$, defined as $\notf(x,y)=\true$ iff $f(x,y)=\false$. The \emph{nondeterministic communication complexity}, $CC^N(f)$, of computing $f$ is defined analogously to the deterministic/randomized complexities: it is the minimum $CC(\pi)$, taken over all non-deterministic protocols $\pi$ for computing $f$.

A central theorem in communication complexity asserts that for any Boolean $f$, it holds that $CC(f)=O(CC^N(f)\cdot CC^N(\notf))$ (See, e.g.,~\cite[Theorem 2.11]{KushilevitzN:book96} and the references therein). This means that we can bound $CC(f)$ by bounding $CC^N(f)$ and $CC^N(\notf)$, and thus if Theorem~\ref{generallowerboundtheorem} is used with any family of lower bound graphs $G_{x,y}$ then it cannot obtain a lower bound that is larger than $\Omega(CC^{N}(f)\cdot CC^{N}(\notf) /|E_{\cut}|\log{n})$. However, we can sometimes limit the hardness results even more.

For a function $f: \{0,1\}^K\times \{0,1\}^K \rightarrow \{\true,\false\}$, we denote $$\Gamma(f)= CC(f)/\max\{CC^{N}(f), CC^{N}(\notf)\}.$$ Since $CC(f)=O(CC^{N}(f)\cdot CC^{N}(\notf))$, we can always say that $\Gamma(f)=O(\sqrt{CC(f)})$ and in particular $\Gamma(f)=O(\sqrt{K})$. While there are functions for which $\Gamma(f)=\Theta(\sqrt{CC(f)})$, and in particular $\Gamma(f)=\Theta(\sqrt{K})$, for some functions $\Gamma(f)$ is known to be much smaller. One example is the widely used (also in this paper) $f=\disj_K$, for which $CC^N(\disj_K)=\Theta(K)$ (this follows, e.g., from ~\cite[Example 1.23 and Definition 2.3]{KushilevitzN:book96}), which implies $\Gamma(\disj_K)=O(1)$. The nondeterministic communication complexity of $\lnot\disj_K$ is only $O(\log K)$. Another example is the equality function over $K$ bits, $\eq_K$, whose output is $\true$ if and only if $x=y$. It is known that $CC(\eq_K)=\Theta(K)$ and that $CC^R(\eq_K)=O(\log K)$. Moreover, $CC^N(\eq_K)=\Theta(K)$, and $CC^N(\lnot\eq_K)=O(\log K)$.

\begin{claim}
	\label{theorem:alice-bob-max}
	Fix a graph predicate $P$.
	No Boolean function $f$
	and a family $\{G_{x,y}\}$ of lower bound graphs with respect to $f$ and $P$
	can give a lower bound for distributed algorithms for $P$ that is larger than $\Omega(\max\{CC^{N}(f), CC^{N}(\notf)\}\Gamma(f) /|E_{\cut}|\log{n})$
	using Theorem~\ref{generallowerboundtheorem}.
\end{claim}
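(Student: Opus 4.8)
The claim follows by unpacking the definitions and chaining two standard facts about communication complexity. The plan is to show that, given any family $\{G_{x,y}\}$ of lower bound graphs with respect to $f$ and $P$, the lower bound that Theorem~\ref{generallowerboundtheorem} can extract is $\Omega(CC(f)/|E_{\cut}|\log n)$ (for randomized algorithms, $\Omega(CC^R(f)/|E_{\cut}|\log n)$, and since $CC^R(f)\le CC(f)$ it suffices to bound the deterministic quantity), so the best possible bound over all such families is controlled by $CC(f)$. I would then invoke the classical theorem cited in the excerpt, namely that for every Boolean $f$ we have $CC(f)=O(CC^N(f)\cdot CC^N(\notf))$ (see~\cite[Theorem 2.11]{KushilevitzN:book96}). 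The key arithmetic observation is that, by the very definition $\Gamma(f)=CC(f)/\max\{CC^N(f),CC^N(\notf)\}$, we may rewrite
\[
CC(f)=\Gamma(f)\cdot\max\{CC^N(f),CC^N(\notf)\},
\]
and this identity holds unconditionally — it requires no inequality at all, it is just how $\Gamma(f)$ was defined.

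With that in hand, the argument is essentially a one-line substitution. First I would recall, as established around Corollary~\ref{theorem:alice-bob-cut}, that if $\{G_{x,y}\}$ is a family of lower bound graphs with respect to $f$ and $P$, then the answer to $P$ on $G_{x,y}$ determines $f(x,y)$, so $CC^{\{G_{x,y}\}}(P)\ge CC(f)$; consequently the bound produced by Theorem~\ref{generallowerboundtheorem}, namely $\Omega(CC(f)/|E_{\cut}|\log n)$, is the largest this machinery can ever yield for this particular $f$ and family. Then I would substitute the identity above to conclude that this is exactly
\[
\Omega\!\left(\frac{\max\{CC^{N}(f),CC^{N}(\notf)\}\cdot\Gamma(f)}{|E_{\cut}|\log n}\right),
\]
which is the stated bound. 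Since this holds for every choice of $f$ and every family of lower bound graphs, taking the supremum over all such choices gives the claim as phrased ("no Boolean function $f$ and family \dots can give a lower bound larger than \dots").

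There is essentially no obstacle here beyond bookkeeping: the "theorem" being proved is a reformulation of Corollary~\ref{theorem:alice-bob-cut} (itself a restatement of the standard bridging lemma $CC(f)\le CC^{\{G_{x,y}\}}(P)$) combined with the definitional rewriting of $CC(f)$ via $\Gamma(f)$. The only place one must be slightly careful is the direction of the inequality: Theorem~\ref{generallowerboundtheorem} gives a lower bound proportional to $CC(f)$ (a lower bound on $CC(f)$ itself, such as the $\Omega(K)$ bound for $\disj_K$, is what makes the framework useful), whereas here we are observing that $CC(f)$ is also an \emph{upper} bound on what the framework can deliver, because any family of lower bound graphs yields a protocol of cost $CC^{\{G_{x,y}\}}(P)\ge CC(f)$ and the resulting round lower bound is at most $CC^{\{G_{x,y}\}}(P)/|E_{\cut}|\log n$ only when $CC^{\{G_{x,y}\}}(P)$ is \emph{small}. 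So the genuinely load-bearing step is the inequality $CC^{\{G_{x,y}\}}(P)\le CC(f)\cdot(\text{something small})$ — but for $\disj_K$ and $\eq_K$ this "something small" is exactly the gap between $CC$ and $CC^N$, which is what $\Gamma$ isolates; for a general $f$ one uses $CC(f)=O(CC^N(f)\cdot CC^N(\notf))\le \Gamma(f)\cdot\max\{CC^N(f),CC^N(\notf)\}$ followed by the nondeterministic protocols (if they exist) to certify $P$ and $\notP$ within $CC^N(f)$ and $CC^N(\notf)$ bits respectively. I would write the proof to make explicit that the quantity $\max\{CC^N(f),CC^N(\notf)\}\Gamma(f)$ is literally $CC(f)$, so that the statement reduces to the already-observed fact that Theorem~\ref{generallowerboundtheorem} cannot beat $\Omega(CC(f)/|E_{\cut}|\log n)$.
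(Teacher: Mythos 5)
Your proof is correct and takes essentially the same route as the paper: the bound Theorem~\ref{generallowerboundtheorem} yields is $\Omega(CC(f)/|E_{\cut}|\log n)$, and by the definition of $\Gamma(f)$ this is literally $\Omega(\max\{CC^{N}(f),CC^{N}(\notf)\}\Gamma(f)/|E_{\cut}|\log n)$. The extra discussion of $CC(f)=O(CC^N(f)\cdot CC^N(\notf))$ and of explicit nondeterministic protocols is not needed for this particular claim (it becomes relevant only when one later \emph{bounds} $\Gamma(f)$ or the nondeterministic complexities), but it does no harm.
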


\begin{proof}
	Fix a function $f: \{0,1\}^K\times \{0,1\}^K \rightarrow \{\true,\false\}$ and a predicate $P$. Let $\{G_{x,y}\}$ be a family of lower bound graphs with respect to $f$ and $P$. Denote by $t=\alpha CC(f)/|E_{\cut}|\log n$ the lower bound obtained by applying Theorem~\ref{generallowerboundtheorem} to $\{G_{x,y}\}$, where $\alpha$ is the constant that hides in the asymptotic notation.
	
	We can write
	\begin{align*}
	t &=	\alpha\frac{CC(f)}{\size{E_{\cut}}\log n}\\
	&=\alpha\frac{\max\{CC^{N}(f), CC^{N}(\notf)\}}{\size{E_{\cut}}\log n}\cdot\frac{CC(f)}{\max\set{CC^N(f),CC^N(\notf)}}\\
	&=\alpha\frac{\max\{CC^{N}(f), CC^{N}(\notf)\}}{\size{E_{\cut}}\log n}\cdot\Gamma(f),
	\end{align*}
	as needed.
\end{proof}

When using Claim~\ref{theorem:alice-bob-max}, we will mainly bound $CC^{N}(f)$ and $CC^{N}(\notf)$ by bounding $CC^{\{G_{x,y}\},N}(P)$ and $CC^{\{G_{x,y}\},N}(\notP)$, respectively, which we define similarly to their deterministic counterpart. Fix a function $f: \{0,1\}^K\times \{0,1\}^K \rightarrow \{\true,\false\}$ and a predicate $P$. If $\{G_{x,y}\}$ is a family of lower bound graphs with respect to $f$ and $P$, then $CC^{\{G_{x,y}\},N}(P)$ is the non-deterministic complexity for deciding $P$ on graphs in $\{G_{x,y}\}$. As in the deterministic/randomized cases, since the answer for $P$ on $\{G_{x,y}\}$ determines the output of $f$ on $(x, y)$, this implies that $CC^N(f) \leq CC^{\{G_{x,y}\},N}(P)$. Similarly, $CC^N(\notf) \leq CC^{\{G_{x,y}\},N}(\notP)$. This directly gives the following (which in general may be coarser than Claim~\ref{theorem:alice-bob-max}).

\begin{corollary}
	\label{theorem:alice-bob-cut-nondet}
	Fix a graph predicate $P$.
	No function $f: \{0,1\}^K\times \{0,1\}^K \rightarrow \{\true,\false\}$
	and a family $\{G_{x,y}\}$ of lower bound graphs with respect to $f$ and $P$
	can give a lower bound for distributed algorithms for $P$ that is larger than $\Omega(\max\{CC^{\{G_{x,y}\},N}(P), CC^{\{G_{x,y}\},N}(\notP)\}\Gamma(f) /|E_{\cut}|\log{n})$
	using Theorem~\ref{generallowerboundtheorem}.
\end{corollary}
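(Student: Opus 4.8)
The plan is to derive Corollary~\ref{theorem:alice-bob-cut-nondet} as a direct consequence of Claim~\ref{theorem:alice-bob-max} together with the two elementary inequalities relating nondeterministic communication complexity of $f$ and $\notf$ to the nondeterministic communication complexity of deciding $P$ on the family of lower bound graphs. Concretely, I would first observe that by Definition~\ref{def: family of lb graphs}, the answer to the predicate $P$ on $G_{x,y}$ equals the value of $f(x,y)$; hence any nondeterministic protocol in which Alice and Bob build their respective parts of $G_{x,y}$ and then run a nondeterministic protocol for deciding $P$ on graphs from $\{G_{x,y}\}$ is itself a nondeterministic protocol computing $f$. This yields $CC^N(f) \le CC^{\{G_{x,y}\},N}(P)$. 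Applying the same reasoning to the complement predicate $\notP$ (whose answer equals $\notf(x,y)$) gives $CC^N(\notf) \le CC^{\{G_{x,y}\},N}(\notP)$.

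Next I would plug these two bounds into Claim~\ref{theorem:alice-bob-max}. That claim states that no $f$ and no family of lower bound graphs with respect to $f$ and $P$ can yield, via Theorem~\ref{generallowerboundtheorem}, a lower bound larger than $\Omega\bigl(\max\{CC^N(f), CC^N(\notf)\}\,\Gamma(f)/|E_{\cut}|\log n\bigr)$. Since $\max\{CC^N(f),CC^N(\notf)\} \le \max\{CC^{\{G_{x,y}\},N}(P), CC^{\{G_{x,y}\},N}(\notP)\}$ by monotonicity of the max together with the two inequalities just established, the bound in Claim~\ref{theorem:alice-bob-max} is itself at most $\Omega\bigl(\max\{CC^{\{G_{x,y}\},N}(P), CC^{\{G_{x,y}\},N}(\notP)\}\,\Gamma(f)/|E_{\cut}|\log n\bigr)$, which is exactly the statement of the corollary.

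I do not anticipate a genuine obstacle here; the only subtlety worth spelling out is that the definitions of $CC^{\{G_{x,y}\},N}(P)$ and $CC^{\{G_{x,y}\},N}(\notP)$ must be stated so that Alice and Bob are permitted to use their knowledge of the fixed parts of the construction (the vertex partition $V_A, V_B$ and the fixed edge set $E_{\cut}$) without charging for it — exactly as in the deterministic case $CC^{\{G_{x,y}\}}(P)$ used in Corollary~\ref{theorem:alice-bob-cut}. Once that parallel is in place, the reduction from a nondeterministic protocol for $P$ on $\{G_{x,y}\}$ to a nondeterministic protocol for $f$ is immediate: the nondeterministic guess strings simply carry over, and the accept/reject condition is inherited because $G_{x,y}$ satisfies $P$ iff $f(x,y)=\true$.

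Finally, I would remark, as the corollary's statement already hints, that this bound may be coarser than the one obtainable directly from Claim~\ref{theorem:alice-bob-max}: the quantities $CC^{\{G_{x,y}\},N}(P)$ and $CC^{\{G_{x,y}\},N}(\notP)$ are only upper bounds on $CC^N(f)$ and $CC^N(\notf)$, and in principle there could be a more economical nondeterministic protocol for $f$ itself that does not factor through the lower bound graph construction. The corollary is nonetheless the convenient form in applications, since it lets one reason purely about protocols that operate on the graphs $\{G_{x,y}\}$, and it is this form that I would use in the subsequent limitation results for maximum matching, max flow, min $s$-$t$ cut, and the verification problems.
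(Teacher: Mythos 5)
Your proposal is correct and follows exactly the paper's route: establish $CC^N(f)\le CC^{\{G_{x,y}\},N}(P)$ and $CC^N(\notf)\le CC^{\{G_{x,y}\},N}(\notP)$ from the fact that $P$ on $G_{x,y}$ determines $f(x,y)$, then substitute into Claim~\ref{theorem:alice-bob-max} via monotonicity of the maximum. Your closing remark about the bound being possibly coarser also matches the paper's own parenthetical observation.
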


\subsubsection{Impossibility using direct non-deterministic protocols}
We begin by applying Claim~\ref{theorem:alice-bob-max} for showing that the framework of Theorem~\ref{generallowerboundtheorem} cannot give any super-constant lower bound for Max $(s,t)$-flow.
\paragraph{Max $(s,t)$-flow and Min $(s,t)$-cut:}
\begin{claim}
	\label{claim:limit-nondet-MaxFlow}
	Let $k$ be some parameter, and let $P$ be a predicate that says that the Max $(s,t)$-flow solution is of size $k$.
Using a Boolean function $f$, Theorem~\ref{generallowerboundtheorem} cannot give a lower bound for a distributed algorithm for deciding $P$ that is larger than $\Omega(\Gamma(f))$.
\end{claim}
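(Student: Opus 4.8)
The plan is to reduce everything to Corollary~\ref{theorem:alice-bob-cut-nondet}: it suffices to show that for an \emph{arbitrary} family $\{G_{x,y}\}$ of lower bound graphs with respect to $f$ and $P$ one has $CC^{\{G_{x,y}\},N}(P)=O(|E_{\cut}|\log n)$ and $CC^{\{G_{x,y}\},N}(\notP)=O(|E_{\cut}|\log n)$. Indeed, feeding these bounds into Corollary~\ref{theorem:alice-bob-cut-nondet} immediately gives that the lower bound obtainable from Theorem~\ref{generallowerboundtheorem} is at most $\Omega(|E_{\cut}|\log n\cdot \Gamma(f)/(|E_{\cut}|\log n))=\Omega(\Gamma(f))$. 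Throughout we use the standard \cgst{} assumption that all capacities are bounded by $\poly(n)$, so flow values and cut capacities fit in $O(\log n)$ bits, and we recall from Definition~\ref{def: family of lb graphs} that in any such family Alice knows $G[V_A]$, Bob knows $G[V_B]$, both know the fixed edge set $E_{\cut}$ together with its (fixed) capacities, and the designated terminals $s,t$ are fixed vertices of $V$; say $s\in V_A$ (the other case is symmetric, and if $t\in V_A$ the argument only simplifies).

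The key structural fact is max-flow/min-cut duality: the maximum $(s,t)$-flow equals $k$ if and only if there exist both a feasible flow of value $k$ and an $(s,t)$-cut of capacity $k$. (By weak duality any such flow and cut pin the value to $k$; strong duality gives the converse.) Both objects are nondeterministically verifiable with little cross-cut communication. For a \emph{flow certificate}: Alice's nondeterministic string encodes the flow values on all edges internal to $V_A$ and on $E_{\cut}$, Bob's encodes the values on edges internal to $V_B$ and on $E_{\cut}$; the players exchange the $O(|E_{\cut}|\log n)$ claimed values on $E_{\cut}$ and reject unless these agree. Then Alice checks the capacity constraints on $V_A$-internal edges and on $E_{\cut}$, flow conservation at every vertex of $V_A\setminus\{s,t\}$ (treating the agreed $E_{\cut}$ values as external in/out-flow), and that the net flow out of $s$ equals $k$ (she sees every edge incident to $s$); Bob symmetrically checks capacity and conservation on his side; one confirmation bit each suffices. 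For a \emph{cut certificate}: the nondeterministic strings encode, for each vertex, which side of the cut it lies on; the players exchange the side-assignment of the $O(|E_{\cut}|)$ vertices incident to $E_{\cut}$, each computes the total capacity of the crossing edges internal to its own side, Alice sends hers ($O(\log n)$ bits), they add the (commonly computable) capacity of the crossing $E_{\cut}$ edges, and accept iff this sum equals $k$, $s$ is on the $S$-side and $t$ on the $\bar S$-side. Running both sub-protocols yields a nondeterministic protocol that accepts iff $P$ holds, with total communication $O(|E_{\cut}|\log n)$, so $CC^{\{G_{x,y}\},N}(P)=O(|E_{\cut}|\log n)$.

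For $\notP$ we use that the maximum $(s,t)$-flow differs from $k$ iff there is a flow of value at least $k+1$ or an $(s,t)$-cut of capacity at most $k-1$. The protocol nondeterministically guesses which alternative holds and runs the corresponding sub-protocol above, now checking ``net outflow from $s$ equals $k+1$'' in the first case and ``total crossing capacity equals some value $\le k-1$'' in the second; when the maximum flow equals $k$ neither certificate can exist, so the protocol rejects, and otherwise exactly one alternative holds, so it accepts. Hence $CC^{\{G_{x,y}\},N}(\notP)=O(|E_{\cut}|\log n)$ as well, and Corollary~\ref{theorem:alice-bob-cut-nondet} concludes the proof; the identical argument, with the roles of flows and cuts interchanged, also handles the predicate ``the Min $(s,t)$-cut has value $k$''. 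The only point requiring care is the bookkeeping that the capacity, conservation and value checks for a flow (respectively, the crossing-capacity check for a cut) split additively across $V_A$ and $V_B$, so that the \emph{only} information that must cross the partition is the flow restricted to $E_{\cut}$ (respectively, the side-assignment of the endpoints of $E_{\cut}$) — that is what makes these nondeterministic complexities $O(|E_{\cut}|\log n)$ regardless of the input-dependent parts of $\{G_{x,y}\}$, which is exactly what the reduction to Corollary~\ref{theorem:alice-bob-cut-nondet} needs.
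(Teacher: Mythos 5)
Your proposal is correct and follows essentially the same route as the paper: exhibit a nondeterministic flow certificate (exchanging only the flow values on $E_{\cut}$) and a nondeterministic cut certificate (exchanging only the side-assignment of the endpoints of $E_{\cut}$), each costing $O(|E_{\cut}|\log n)$ bits, and then invoke Corollary~\ref{theorem:alice-bob-cut-nondet}. The only (immaterial) difference is that you read the predicate as exact equality $MF=k$ and therefore combine both certificates for $P$ and a disjunction for $\lnot P$, whereas the paper treats the threshold version ($MF\geq k$ versus $MF<k$) with one certificate on each side; both yield the same bound.
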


As the size of the Max $(s,t)$-flow is the same as the size of the
Min $(s,t)$-cut, the same limitations obviously apply for Min $(s,t)$-cut.

\begin{proof}
Given a Boolean function $f$, let $G_{x,y}$ be a family of lower bound graphs with respect to $f$ and $P$.

	Let $MF=MF(G)$ denote the size of a maximum $(s,t)$-flow in a graph $G$. We show a nondeterministic protocol for the predicate $MF\geq k$ and a similar protocol for $MF<k$, each communicating $O(|E_{\cut}|\log n)$ bits on graphs in $\{G_{x,y}\}$.
	
	For $MF\geq k$, Alice gets a nondeterministic string describing the flow on the edges touching $V_A$, and Bob gets the flow on the edges touching $V_B$. Alice sends to Bob the flow on each of the edges in $E_{\cut}$, and then they both locally verify that the flow is preserved for each vertex excluding $s$ and $t$, and that the total $(s,t)$-flow is at least $k$. Note that $s$ and $t$ may belong to the same player or to different players.
	
	For $MF<k$, the players receive nondeterministic strings containing an encoding of an $(s,t)$-cut $E_C=E(C,\bar{C})$, where $s \in C$: This consists of one bit for each vertex, indicating whether it is in $C$ or not. Alice sends to Bob all the vertices in $V_A\cap C$ that touch $E_{\cut}$, and also the total weight of the $(s,t)$-cut edges within $V_A$. Bob sums this weight with the weight of $(s,t)$-cut edges within $V_B$ and in $E_{\cut}$, and verifies that it is smaller than $k$.
	
	The communication complexities of both protocols are $O(|E_{\cut}|\log n)$ bits. Corollary~\ref{theorem:alice-bob-cut-nondet} now implies that Theorem~\ref{generallowerboundtheorem} cannot give a lower bound that is larger than $\Omega(\Gamma(f))$ for a distributed algorithm for deciding $P$.
\end{proof}

In particular, this means that for Max $(s,t)$-flow, Theorem~\ref{generallowerboundtheorem} cannot give any lower bound that is super-constant using $\disj_K$ or $\eq_K$. For any other function $f$, the lower bound cannot be super-linear in $\sqrt{K}$, so for example, if $K=\Theta(n^2)$ then it cannot be super-linear in $n$, and if $K=\Theta(n)$ then it cannot be greater than $\Omega(\sqrt{n})$.

\subsubsection{Impossibility using proof labeling schemes}
\label{subsub:limit-pls}
Rather than directly giving protocols for bounding $CC^{\{G_{x,y}\},N}(P)$ and $CC^{\{G_{x,y}\},N}(\notP)$, we can also leverage the concept of \emph{proof labeling schemes} (PLS)~\cite{KormanKP10}.
A proof labeling scheme for a graph predicate $\calP$ is a distributed mechanism for verifying that $P$ holds on the communication graph. Each vertex is given a label and in the distributed verification procedure, each vertex executes an algorithm which receives as input its state ($ID$, list of neighbors and node/edge weights), its label, and the labels of its neighbors, and outputs accept or reject. A correct PLS must have the following property: if $\calP$ holds then there exists a label assignment such that all vertices accept, while if $\calP$ does not hold, in every label assignment at least one vertex rejects. The proof-size of a PLS is the maximal size of a label assigned in it.

\begin{theorem}
	\label{theorem:PLSlowerboundtheorem}
	Fix a function $f:\{0,1\}^K \times \{0,1\}^K \to \{\true,\false\}$ and a predicate $P$. If there exists a family of lower bound graphs $\{G_{x,y} \}$ w.r.t $f$ and $P$ then
    $CC^{\{G_{x,y}\},N}(P)=O(\plssize(P)|E_{\cut}|)$.	
\end{theorem}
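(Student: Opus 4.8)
The plan is to convert an optimal proof labeling scheme for $P$ into a short non-deterministic two-party protocol that decides $P$ on every graph of the family $\{G_{x,y}\}$, and then invoke the inequality $CC^N(f)\le CC^{\{G_{x,y}\},N}(P)$ established in the discussion preceding Corollary~\ref{theorem:alice-bob-cut-nondet}. Concretely, fix a proof labeling scheme for $P$ of proof size $s=\plssize(P)$. Recall that it consists of a labeling $L$ assigning to each vertex $v$ a string of $O(s)$ bits, together with a local verifier which at $v$ reads $v$'s state (its $ID$, its incident edges and their endpoints' $ID$s, the relevant node/edge weights), its own label $L(v)$, and the labels $L(u)$ of the neighbors $u$ of $v$, and outputs accept or reject; correctness means that $\calP$ holds if and only if there is a labeling under which every vertex accepts.

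The non-deterministic protocol is as follows. The non-deterministic certificate is such a labeling $L$ of $G_{x,y}$: Alice receives the restriction $\{L(v)\mid v\in V_A\}$ and Bob receives $\{L(v)\mid v\in V_B\}$ (since $V_A$ and $V_B$ are disjoint, any pair of strings is the restriction of some labeling, so there is no consistency issue). Alice then sends Bob the labels of the $V_A$-endpoints of the cut edges, and Bob sends Alice the labels of the $V_B$-endpoints of the cut edges; as $E_{\cut}$ is the same for all graphs in the family, this costs $O(|E_{\cut}|\cdot s)$ bits in each direction. After this exchange, Alice runs the PLS verifier at every vertex of $V_A$, Bob runs it at every vertex of $V_B$, and each player accepts iff all of its local executions accept; the overall output is accept iff both players accept.

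Correctness is inherited directly from the PLS. If $f(x,y)=\true$ then $P$ holds on $G_{x,y}$, so there is a labeling under which every vertex of $V=V_A\cup V_B$ accepts; using it as the certificate makes both players accept. If $f(x,y)=\false$ then for every labeling some vertex $v$ rejects; $v$ lies in $V_A$ or in $V_B$, so the corresponding player's local execution fails and the protocol rejects no matter what certificate is supplied. Hence the protocol is a valid non-deterministic protocol for $P$ on $\{G_{x,y}\}$ exchanging $O(\plssize(P)\,|E_{\cut}|)$ bits, giving $CC^{\{G_{x,y}\},N}(P)=O(\plssize(P)\,|E_{\cut}|)$, as claimed. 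Combined with Claim~\ref{theorem:alice-bob-max} (applied with the analogous bound for $\notP$) this is what yields the limitation statements later on.

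The only point requiring real care, which I would spell out fully, is that after the label exchange each player indeed has the complete local view the verifier needs at each of its own vertices. This rests on the structure guaranteed by Definition~\ref{def: family of lb graphs}: the partition $V_A,V_B$ and the cut set $E_{\cut}$ are fixed and known to both players, and the subgraph induced on $V_A$ (with its input-dependent edges/weights) is known to Alice while that on $V_B$ is known to Bob. Thus for $v\in V_A$ Alice knows $v$'s entire incident-edge structure ($V_A$-neighbors from $G[V_A]$, $V_B$-neighbors from $E_{\cut}$), she knows $L(v)$ and the labels of $v$'s $V_A$-neighbors from her own certificate, and she knows the labels of $v$'s $V_B$-neighbors — which are necessarily cut endpoints — from Bob's message; the situation for Bob is symmetric. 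This is precisely where the $1$-hop nature of PLS verifiers is used: a verifier with a larger radius would force communicating more than the $O(|E_{\cut}|\,s)$ bits of boundary labels, so the bound is tight to the locality of the scheme. No communication beyond this label exchange is needed, which completes the plan.
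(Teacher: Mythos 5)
Your proposal is correct and follows essentially the same route as the paper's proof: interpret the nondeterministic strings as the PLS labels, exchange only the labels of cut-edge endpoints at a cost of $O(\plssize(P)\,|E_{\cut}|)$ bits, and have each player locally simulate the verifier on its own vertices, with correctness inherited directly from the PLS guarantee. Your added remark on why each player has the full $1$-hop view needed by the verifier is a nice explicit justification of a step the paper leaves implicit.
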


We note that the above directly implies that lower bounds can be carried over from Theorem~\ref{generallowerboundtheorem} to lower bounds on the proof-size of distributed predicates, which generalizes several previous lower bounds for PLS~\cite{GoosS16,Censor-HillelPP17}.

\begin{proof}
	We show how Alice and Bob can use a given PLS to devise a nondeterministic protocol for the $\true$ instances of $P$ in which they communicate $O(\plssize(P)|E_{\cut}|)$ bits. Let $V_{\cut}=\cup E_{\cut}$ be the set of vertices appearing in edges that are in $E_{\cut}$.
	
	Given two inputs $x,y\in \{0,1\}^K$ and two nondeterministic strings to Alice and Bob, they simulate their corresponding vertices of the graph $G_{x,y}$: Alice simulates $V_A$ and Bob simulates $V_B$. They interpret their nondeterministic strings as the labels of the PLS. In order to simulate the verification process, Alice sends to Bob the labels of the vertices in $V_A\cap V_{\cut}$ in a predefined order, and Bob similarly sends to Alice the labels of $V_B\cap V_{\cut}$. Both players simulate the local verification process of each node, without any further communication, and send a $\false$ bit to the other player if there exists a vertex that rejects in the PLS. If at least one of the players detected a rejecting node, they output $\false$, and otherwise they output $\true$.
	
	The correctness follows from the definitions of a family of lower bound graph and of PLS, as follows. If $G_{x,y}$ satisfies $P$, then the PLS implies that there exists an assignment of labels to vertices that causes all of them accept. Thus, there exist nondeterministic strings for Alice and Bob that causes all the vertices to accept and thus Alice and Bob correctly answer $\true$. If $G_{x,y}$ does not satisfy $P$, any assignment of labels causes at least one vertex to reject, and thus any assignment of nondeterministic strings to Alice and Bob results in at least one rejecting vertex in the simulation, and Alice and Bob correctly answering $\false$.
	
	For the communication complexity, note that $\size{V_A\cap V_{\cut}}\leq \size{E_{\cut}}$, and similarly $\size{V_B\cap V_{\cut}}\leq \size{E_{\cut}}$. Alice and Bob thus communicate at most $2\size{E_{\cut}}$ labels, of $O(\plssize(P))$ bits each, and thus they correctly answer on the $\true$ instances for $P$ while communicating $O(\plssize(P)|E_{\cut}|)$ bits. 	
%
%
\end{proof}

We can now plug Theorem~\ref{theorem:PLSlowerboundtheorem} into Corollary~\ref{theorem:alice-bob-cut-nondet} and obtain the following.
\begin{corollary}
	\label{theorem:pls-to-alice-bob}
	Fix a predicate $P$.
	Using a Boolean function $f$,
	Theorem~\ref{generallowerboundtheorem} cannot give a lower bound for a distributed algorithm that is larger than $\Omega(\max\{\plssize(P), \plssize(\notP)\}\Gamma(f) /\log{n})$.
\end{corollary}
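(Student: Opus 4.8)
The plan is to combine Theorem~\ref{theorem:PLSlowerboundtheorem} with Corollary~\ref{theorem:alice-bob-cut-nondet} in the obvious way. First I would fix an arbitrary Boolean function $f:\{0,1\}^K\times\{0,1\}^K\to\{\true,\false\}$ and an arbitrary family $\{G_{x,y}\}$ of lower bound graphs with respect to $f$ and $P$, together with its partition $V_A,V_B$ and cut $E_{\cut}$. Recall that the lower bound produced by Theorem~\ref{generallowerboundtheorem} on this family is of order $CC(f)/(|E_{\cut}|\log n)$ (or $CC^R(f)/(|E_{\cut}|\log n)$ for randomized algorithms); the goal is to upper bound this quantity in terms of $\plssize(P)$, $\plssize(\notP)$, and $\Gamma(f)$.

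The one observation that requires a moment's care is that the very same family $\{G_{x,y}\}$, with the same vertex partition and the same edge set $E_{\cut}$, is also a family of lower bound graphs with respect to $\notf$ and $\notP$: conditions~1--3 of Definition~\ref{def: family of lb graphs} make no reference to $f$ or $P$ and hence are unchanged, while condition~4 holds because $G_{x,y}$ satisfies $\notP$ exactly when $f(x,y)=\false$, i.e.\ exactly when $\notf(x,y)=\true$. Consequently Theorem~\ref{theorem:PLSlowerboundtheorem} applies twice --- once to the pair $(f,P)$ and once to the pair $(\notf,\notP)$ --- yielding $CC^{\{G_{x,y}\},N}(P)=O(\plssize(P)\,|E_{\cut}|)$ and $CC^{\{G_{x,y}\},N}(\notP)=O(\plssize(\notP)\,|E_{\cut}|)$. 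Taking the maximum and pulling out the common factor $|E_{\cut}|$ gives $\max\{CC^{\{G_{x,y}\},N}(P),\,CC^{\{G_{x,y}\},N}(\notP)\}=O\!\big(\max\{\plssize(P),\plssize(\notP)\}\,|E_{\cut}|\big)$.

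Finally I would substitute this estimate into Corollary~\ref{theorem:alice-bob-cut-nondet}, which asserts that no function $f$ and family of lower bound graphs for $(f,P)$ can yield, via Theorem~\ref{generallowerboundtheorem}, a lower bound larger than $\Omega\!\big(\max\{CC^{\{G_{x,y}\},N}(P),CC^{\{G_{x,y}\},N}(\notP)\}\,\Gamma(f)/(|E_{\cut}|\log n)\big)$. The factor $|E_{\cut}|$ in the numerator cancels the one in the denominator, leaving a bound of $\Omega\!\big(\max\{\plssize(P),\plssize(\notP)\}\,\Gamma(f)/\log n\big)$. Since $\plssize(P)$, $\plssize(\notP)$, and $\Gamma(f)$ depend only on $P$ and $f$ and not on the particular family chosen, this bound is uniform over all admissible $f$ and all families of lower bound graphs for $(f,P)$, which is exactly the statement of the corollary; the randomized case is identical, using $CC^R$ and $CC^{\{G_{x,y}\},R}$ throughout. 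There is essentially no obstacle: all the work is already encapsulated in Theorem~\ref{theorem:PLSlowerboundtheorem} and Corollary~\ref{theorem:alice-bob-cut-nondet}, and the only non-formulaic point is the symmetry argument promoting a lower-bound-graph family for $(f,P)$ to one for $(\notf,\notP)$ so that the PLS bound can be invoked for the complementary predicate.
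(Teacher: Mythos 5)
Your proposal is correct and follows exactly the route the paper intends: the paper's entire "proof" of this corollary is the one-line remark that one plugs Theorem~\ref{theorem:PLSlowerboundtheorem} into Corollary~\ref{theorem:alice-bob-cut-nondet}, which is precisely what you do, with the $|E_{\cut}|$ factors cancelling as you describe. Your explicit observation that the same family $\{G_{x,y}\}$ is also a family of lower bound graphs for $(\notf,\notP)$ --- needed to invoke the PLS theorem for the complementary predicate --- is a small detail the paper leaves implicit, and it is correctly handled.
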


Corollary~\ref{theorem:pls-to-alice-bob}, along with known PLS constructions or new ones, imply the limitations of Theorem~\ref{generallowerboundtheorem} for many fundamental problems, which we overview next.

\paragraph{Maximum matching:} We consider the maximum cardinality matching (MCM) problem, and a predicate $P$ which asserts that $\size{MCM}\geq k$, for which we show that no super-constant lower bound can be proven using Theorem~\ref{generallowerboundtheorem} with $\disj$ or $\eq$, and the corresponding generalization for $f$ and $K$ as captured by $\Gamma(f)$.

\begin{claim}
	\label{claim:limit-mcm}
	Let $k$ be some parameter, and let $P$ be a predicate that says that $\size{MCM}\geq k$.
	Using a Boolean function $f$,
	Theorem~\ref{generallowerboundtheorem} cannot give a lower bound for a distributed algorithm for deciding $P$ that is larger than $\Omega(\Gamma(f))$.
\end{claim}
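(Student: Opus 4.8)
The plan is to use Corollary~\ref{theorem:pls-to-alice-bob}, so it suffices to exhibit proof labeling schemes of small proof-size for both the predicate ``$|MCM|\geq k$'' and its complement ``$|MCM|<k$'', and then observe that $\plssize(P)$ and $\plssize(\notP)$ are both $O(\log n)$, which yields the claimed $\Omega(\Gamma(f))$ bound. First I would handle the $\true$ side: if the graph has a matching of size at least $k$, a prover can supply each vertex with a label consisting of (i) one bit indicating whether the vertex is matched, (ii) the $ID$ of its matched partner if matched, and (iii) a small counter certifying a lower bound of $k$ on the number of matched edges. The counter can be implemented by orienting a spanning structure (e.g.\ a BFS-tree rooted at a fixed vertex, whose existence and correctness is itself a standard $O(\log n)$-proof-size PLS) and letting each vertex's label record the partial count of matched vertices in its subtree; each vertex locally verifies consistency of the partner pointers (the named partner names it back), consistency of the subtree counters with those of its children, and the root verifies that half the total count is at least $k$. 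All labels are $O(\log n)$ bits, so $\plssize(P)=O(\log n)$.

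For the $\false$ side I would certify ``$|MCM|<k$'' via a König/Gallai-type witness: by the Tutte--Berge formula, $|MCM|<k$ iff there exists a set $U\subseteq V$ such that $|U| + \sum_{\text{components } C \text{ of } G-U} \lfloor |C|/2\rfloor < k$ (equivalently, the deficiency version $n - 2|MCM| = \max_U (\text{odd components of }G-U) - |U|$). The prover marks which vertices are in $U$, gives each vertex of $G-U$ a component-identifier together with the parity and (via another spanning-tree-with-counters gadget inside each component) a certificate of $\lfloor |C|/2\rfloor$, and finally aggregates $|U| + \sum_C \lfloor|C|/2\rfloor$ up a global spanning tree so that the root can check the total is $< k$. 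Each component's size and parity are verifiable with $O(\log n)$-bit labels using the same subtree-counting trick, and the membership-in-$U$ bits and component IDs are $O(\log n)$ bits; hence $\plssize(\notP)=O(\log n)$ as well. Then Corollary~\ref{theorem:pls-to-alice-bob} gives that Theorem~\ref{generallowerboundtheorem} cannot prove a lower bound larger than $\Omega(\max\{\plssize(P),\plssize(\notP)\}\Gamma(f)/\log n)=\Omega(\Gamma(f))$.

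I expect the main obstacle to be getting the $\false$-side certificate exactly right: one must pick a min-max characterization of matching number that is genuinely \emph{locally checkable} with logarithmic-size labels, and the Tutte--Berge/Gallai witness requires certifying, for many vertex-disjoint subgraphs simultaneously, both that the claimed component partition is correct (no stray edges leave a component into $G\setminus U$ with the wrong label) and that the claimed floor-of-half-size counts are accurate; doing this while keeping everything disjoint and the aggregation well-defined needs a careful layering of spanning-tree gadgets (one global, one per component). An alternative route, which I would fall back on if the Tutte--Berge encoding gets unwieldy, is to certify a fractional matching/vertex-cover LP pair: by the half-integrality of the matching polytope on general graphs, $|MCM|<k$ can be witnessed by an odd-set-free fractional vertex cover of value $<k$ with values in $\{0,\tfrac12,1\}$ plus certificates for the tight odd sets — again $O(\log n)$ bits per label and locally checkable. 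Either way, the quantitative conclusion is identical, and the bipartite special case (pure König cover) serves as a sanity check that the logarithmic proof-size is tight enough for the statement.
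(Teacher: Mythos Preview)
Your proposal is correct and follows essentially the same approach as the paper: both invoke Corollary~\ref{theorem:pls-to-alice-bob} and supply $O(\log n)$-size proof labeling schemes for $|MCM|\geq k$ (via a marked matching plus a spanning-tree counter) and for $|MCM|<k$ (via a Tutte--Berge witness). The only difference is that the paper does not spell out the $|MCM|<k$ scheme but simply cites the existing construction in~\cite{Censor-HillelPP17}, whereas you sketch it directly; your formulation $|U|+\sum_C\lfloor|C|/2\rfloor<k$ is indeed equivalent to the Tutte--Berge deficiency condition, so the sketch is sound.
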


\begin{proof}
	By Corollary~\ref{theorem:pls-to-alice-bob}, it is enough to show a PLS for $\size{MCM}\geq k$ and a PLS for $\size{MCM}<k$, both of size $O(\log n)$.
	
	A PLS for $\size{MCM}\geq k$ can be designed using standard techniques: the labels mark the edges of a matching of the desired size, and are also used to count the number of matched vertices over a spanning tree. In the verification procedure, each vertex verifies that it is matched at most once, and that the count of matched vertices in its label equals the sum of counters of its tree children, or equals this sum plus one if it is also matched. The root also verifies that its counter is at least twice the desired value, $k$ (the two factor comes from that fact that each edge is counted by both its endpoints).
	
	A PLS for $\size{MCM}<k$ with logarithmic labels was presented in~\cite{Censor-HillelPP17}, using a duality theorem of Tutte and Berge.
\end{proof}

\paragraph{Weighted $(s,t)$-distance:} Consider the weighted $(s,t)$-distance problem, for which a predicate $P$ says that $\wdist(s,t)\geq k$, for two given vertices $s$ and $t$ and an integer $k$. We show that no super-constant lower bound can be proven using Theorem~\ref{generallowerboundtheorem} with $\disj$ or $\eq$, and the corresponding generalization for $f$ and $K$ as captured by $\Gamma(f)$.

\begin{claim}
	\label{claim:limit-wDist}
	Let $k$ be some parameter, and let $P$ be a predicate that says that $\wdist(s,t)\geq k$.
	Using a Boolean function $f$,
	Theorem~\ref{generallowerboundtheorem} cannot give a lower bound for a distributed algorithm for deciding $P$ that is larger than $\Omega(\Gamma(f))$.
\end{claim}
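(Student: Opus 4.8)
The plan is to invoke Corollary~\ref{theorem:pls-to-alice-bob}, so it suffices to exhibit, for the weighted $(s,t)$-distance problem, a proof labeling scheme of size $O(\log n)$ for the predicate $\wdist(s,t)\geq k$ and another PLS of size $O(\log n)$ for the complementary predicate $\wdist(s,t)<k$. Combined with the fact that $\Gamma(f)=O(\sqrt{K})$ in general (and $\Gamma(f)=O(1)$ for $f\in\{\disj_K,\eq_K\}$), this immediately yields the claimed bound $\Omega(\Gamma(f))$ on any lower bound provable via Theorem~\ref{generallowerboundtheorem}, since the cut sizes $|E_{\cut}|$ and the $\log n$ factors cancel in Corollary~\ref{theorem:pls-to-alice-bob} up to the stated form. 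I assume here that edge weights are polynomially bounded, so that a distance value fits in $O(\log n)$ bits; this is the standard convention in this line of work and is consistent with the \cgst{} bandwidth assumption.

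The PLS for $\wdist(s,t)<k$ is the easy direction and uses the standard ``distance labeling'' idea: assign to each vertex $v$ the label $\ell(v)=\dist(s,v)$, the weighted shortest-path distance from $s$. In the verification procedure each vertex $v\neq s$ checks that $\ell(v)\geq 0$, that $\ell(v)\leq \ell(u)+w(u,v)$ for every neighbor $u$ (so labels are a lower bound on the true distance — this is enforced everywhere), and that $\ell(v)=\ell(u)+w(u,v)$ for at least one neighbor $u$ (so the label is attained); $s$ checks $\ell(s)=0$. Finally $t$ checks $\ell(t)<k$. If $\wdist(s,t)<k$, the honest distance labels pass all checks. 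Conversely, if all vertices accept, an easy induction along a shortest-path-from-$s$ tree (whose existence is witnessed by the ``attained'' check) shows every label equals the true distance, hence $t$'s check forces $\wdist(s,t)<k$. Labels are single distance values, so the proof size is $O(\log n)$.

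The PLS for $\wdist(s,t)\geq k$ is the more delicate direction, because certifying a \emph{lower} bound on a shortest-path distance cannot be done by simply exhibiting the distances. The standard approach is to use LP-duality / a feasible potential function: certify a function $\phi:V\to\mathbb{Z}$ with $\phi(s)=0$, $\phi(t)\geq k$, and $\phi(v)-\phi(u)\leq w(u,v)$ for every directed use of an edge $(u,v)$ (in the undirected case, $|\phi(v)-\phi(u)|\leq w(u,v)$). Such a $\phi$ exists iff $\wdist(s,t)\geq k$ (take $\phi(v)=\min(\dist(s,v),k)$ for the forward direction; for the converse, summing the inequalities along any $s$-$t$ path gives length $\geq\phi(t)-\phi(s)\geq k$). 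Each vertex stores $\phi(v)$, an $O(\log n)$-bit value, and locally checks the edge inequalities for all incident edges plus, if it is $s$ or $t$, the boundary condition. The main obstacle — and the one step that needs care — is making sure this is genuinely a PLS and not merely a nondeterministic protocol: the feasibility constraints are purely local (one per edge), and the two global constraints $\phi(s)=0$, $\phi(t)\geq k$ are each checked by a single named vertex, so no spanning-tree aggregation is even needed here, unlike in the matching PLS. Once both PLSs are in hand, $\plssize(P)=\plssize(\notP)=O(\log n)$, and plugging into Corollary~\ref{theorem:pls-to-alice-bob} completes the proof.

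I should also note the minor subtlety that $s$ and $t$ may lie on the same side of the cut or on opposite sides; this is irrelevant to the PLS argument (which is oblivious to the $V_A/V_B$ partition) and is handled automatically by the reduction encoded in Theorem~\ref{theorem:PLSlowerboundtheorem}, exactly as in the Max $(s,t)$-flow case of Claim~\ref{claim:limit-nondet-MaxFlow}.
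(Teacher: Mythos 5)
Your proof is correct and follows the same overall skeleton as the paper (exhibit $O(\log n)$-size proof labeling schemes for both $\wdist(s,t)\geq k$ and $\wdist(s,t)<k$, then invoke Corollary~\ref{theorem:pls-to-alice-bob}), but you handle the $\wdist(s,t)\geq k$ direction differently. The paper uses a \emph{single} PLS for both predicates: label every vertex with its exact weighted distance from $s$, have each $v\neq s$ verify that its label equals $\min_u\{\ell(u)+w(u,v)\}$ over its neighbors and $s$ verify $\ell(s)=0$, and then let $t$ accept according to whether its label is $\geq k$ or $<k$. This works because the Bellman--Ford fixed-point check forces the labels to equal the true distances from both sides (the $\leq$ from the inequality over all neighbors, the $\geq$ from following the attaining edges back to $s$), so exhibiting the distances \emph{does} certify the lower bound on $\wdist(s,t)$ — your remark that this "cannot be done by simply exhibiting the distances" is an overstatement. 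Your alternative for the $\geq k$ direction, a feasible-potential/dual certificate $\phi$ with $\phi(s)=0$, $\phi(t)\geq k$, and $|\phi(v)-\phi(u)|\leq w(u,v)$, is nonetheless a valid $O(\log n)$-bit PLS (the truncation $\phi(v)=\min(\dist(s,v),k)$ handles unreachable vertices, and it sidesteps the attained-edge argument entirely), so it buys a certificate whose soundness is a one-line summation along any $s$-$t$ path rather than an induction on witness edges; the cost is that you end up maintaining two distinct schemes where one suffices. Either way the conclusion via Corollary~\ref{theorem:pls-to-alice-bob} is the same.
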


\begin{proof}
	By Corollary~\ref{theorem:pls-to-alice-bob}, it is enough to show a PLS for $\wdist(s,t)\geq k$ and a PLS for $\wdist(s,t)<k$, both of size $O(\log n)$. A simple PLS works for both problems: give each vertex a label containing its weighted distance from $s$. In both cases, each vertex $v\neq s$ verifies that its distance is the minimum, among all its neighbors $u$, of the label of $u$ plus the weight of the edge $(u,v)$, and $s$ verifies that its label is $0$. For $\wdist(s,t)\geq k$, the vertex $t$ accepts if and only if its label is at least $k$, and for $\wdist(s,t)<k$ it accepts if and only if its label is smaller than $k$.
\end{proof}

\subsubsection{Limitations of proving lower bounds for verification problems}
In this section, we discuss several subgraph-verification problems. For all these problems, a tight bound of $\tilde{\Theta}(\sqrt{n}+D)$ rounds was proven~\cite{Dassarmaetal12}, and we show that the method presented in Theorem~\ref{generallowerboundtheorem} is incapable of proving even a super-constant lower bound for them. Roughly speaking, one can consider the method of~\cite{Dassarmaetal12} as a generalization of Theorem~\ref{generallowerboundtheorem}, which is allowed to use a family of lower bound graphs in which rather than having a fixed partition of $V$ into $V_A$ and $V_B$ that are simulated by Alice and Bob, the players simulate subsets of $V$ that can vary from one simulated round to the other. Moreover, the subsets simulated for a certain round do not have to be a partition: there can be vertices that are simulated by both players, and vertices whose simulation stops before they produce an output. We refer the reader to~\cite{Dassarmaetal12} for a thorough exposition. In this sense, the framework that is given in our paper is a special case of the framework of~\cite{Dassarmaetal12}, and the implication of the limitations that we give below is a \emph{separation} between the abilities of the special \emph{fixed-simulation} case and the more general framework.

In all problems below, the graph $G=(V,E_G)$ has a marked subgraph $H=(V,E_H)$ on the same set of vertices, and  sometimes also marked vertices $s$ and $t$ or an edge $e$. Each vertex knows which of its edges is in $H$, if one of its edges is $e$, and if it is $s$ or $t$. The goal is to decide if $H,e,s,t$ have the given property. We use $G\setminus H$ as a shorthand for $(V,E_G\setminus E_H)$, and $H\setminus \set{e}$ for $(V,E_H\setminus \set{e})$.

\begin{lemma}
	\label{lemma:limit-verification}
	Let $P$ be a predicate for any of the problems listed below.
Using a Boolean function $f$,
	Theorem~\ref{generallowerboundtheorem} cannot give a lower bound for a distributed algorithm for deciding $P$ that is larger than $\Omega(\Gamma(f))$.
	\begin{enumerate}
		\item
		Connected spanning subgraph: $H$ is connected, and each vertex in $G$ has non-zero degree in $H$.
		\item
		Cycle containment: $H$ contains a cycle.
		\item
		$e$-cycle containment: $H$ contains a cycle through $e$.
		\item
		Bipartiteness: $H$ is bipartite.
		\item
		$(s,t)$-connectivity: $s$ and $t$ are connected by a path in $H$.
		\item
		Connectivity: $H$ is connected.
		\item
		Cut: $H$ is a cut of $G$, i.e., the graph $G\setminus H$ is not connected
		\item
		Edge on all paths: $s$ and $t$ are not in the same connected component of $H\setminus \set{e}$.
		\item
		$(s,t)$-cut: $s$ and $t$ are not in the same connected component of $G\setminus H$.
		\item
		Hamiltonian cycle: $H$ is a Hamiltonian cycle in $G$.
		\item
		Spanning tree: $H$ is a spanning tree in $G$.
		\item
		Simple path: $H$ is a simple path.
	\end{enumerate}
\end{lemma}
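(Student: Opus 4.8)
The plan is to reduce the entire statement to proof labeling schemes. By Corollary~\ref{theorem:pls-to-alice-bob}, once we exhibit, for each of the twelve predicates $P$ in the list, a PLS for $P$ and a PLS for $\notP$ of proof-size $O(\log n)$ each, we immediately get that no Boolean function $f$ and family of lower bound graphs can be combined with Theorem~\ref{generallowerboundtheorem} to obtain a bound larger than $\Omega(\max\{\plssize(P),\plssize(\notP)\}\Gamma(f)/\log n)=\Omega(\Gamma(f))$. So all the work goes into the PLS constructions, and these will all be assembled from a short fixed toolkit of standard logarithmic certificates, so we never have to design a scheme from scratch more than once.

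First I would collect the building blocks, each of proof-size $O(\log n)$: (i) a \emph{spanning-forest gadget}, where every vertex stores a component identifier, a parent pointer within its component, and its hop-distance to the component root, which lets a vertex locally check that it sits in a consistent rooted spanning tree of its component, and hence certifies both connectivity facts and, when combined with a check that no relevant edge crosses a component boundary, disconnectivity facts; (ii) a \emph{marked-path gadget} between two designated vertices, optionally forbidden to use a prescribed edge $e$, carrying positions along the path, certifying connectivity of those two vertices in the relevant subgraph; (iii) a \emph{$2$-coloring gadget} certifying bipartiteness; (iv) an \emph{odd-cycle gadget}---a marked closed walk with labels $1,\dots,2\ell+1$ whose wrap-around parity clash is detected by the start vertex---certifying non-bipartiteness (this is the one place we import the known G\"o\"os--Suomela construction, see~\cite{GoosS16,Censor-HillelPP17}); and (v) purely \emph{local checks}, which need no labels, for facts such as $E_H\subseteq E_G$, $e\in E_H$, being $s$ or $t$, and bounds on $\deg_H(v)$.

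With these in hand, each of the twenty-four required schemes is a short case analysis. For the ``positive'' directions: connectivity, $(s,t)$-connectivity, connected spanning subgraph, spanning tree and ``cut'' (i.e.\ $G\setminus H$ disconnected) all use the spanning-forest gadget (i) applied to $H$ or to $G\setminus H$; $e$-cycle containment and edge-on-all-paths use the marked-path gadget (ii) in $H\setminus\{e\}$; cycle containment uses a rooted spanning forest of $H$ together with one extra marked non-tree edge closing a cycle; bipartiteness uses (iii); and Hamiltonian cycle and simple path use a degree check ($\deg_H\equiv 2$, resp.\ $\deg_H\le 2$ with exactly two degree-$1$ endpoints) \emph{together with} a position counter $1,\dots,n$ (resp.\ $1,\dots,m$) that forces the marked edge set to be a single cycle/path rather than a disjoint union. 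The ``complementary'' directions mirror this: $\notP$ for connectivity, $(s,t)$-connectivity, cut, $(s,t)$-cut and connected spanning subgraph is a spanning forest exhibiting two nonempty parts with no crossing edge (or, for connected spanning subgraph, a vertex of $H$-degree $0$); $\notP$ for bipartiteness is the odd-cycle gadget (iv); $\notP$ for cycle containment is a rooted spanning forest of $H$ in which every $H$-edge is a tree edge (certifying that $H$ is a forest); $\notP$ for $e$-cycle containment, edge-on-all-paths and $(s,t)$-cut are marked-path gadgets; and $\notP$ for Hamiltonian cycle, spanning tree and simple path is a disjunction over the few failure modes---a bad-degree vertex, a disconnection certificate, or a cycle certificate---selected by a single flag bit in the labels. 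In every case one verifies the three PLS requirements: the labels have size $O(\log n)$, a valid labeling exists on yes-instances, and on no-instances at least one vertex necessarily rejects.

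I expect the only genuinely delicate points to be the two ``single structure vs.\ disjoint union'' arguments for Hamiltonian cycle and simple path---one must check, in addition to the local degree constraints, that a unique vertex carries label $1$ and that moving to the (oriented) successor increments the label modulo $n$, so that the marked edges really form one cycle rather than several---and the odd-cycle scheme underlying non-bipartiteness, which is the single nontrivial construction we rely on rather than rederive. Everything else is bookkeeping. Plugging $\plssize(P),\plssize(\notP)=O(\log n)$ into Corollary~\ref{theorem:pls-to-alice-bob} then yields the claimed $\Omega(\Gamma(f))$ ceiling uniformly for all twelve problems, completing the proof.
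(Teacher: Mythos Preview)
Your proposal is correct and follows essentially the same approach as the paper: invoke Corollary~\ref{theorem:pls-to-alice-bob} and then exhibit $O(\log n)$-size proof labeling schemes for each $P$ and $\notP$, assembled from a standard toolkit of spanning-tree/distance labels, component markings, $2$-colorings, and path/cycle enumerations. One small slip to fix in the case analysis: the \emph{positive} direction of ``edge on all paths'' asserts that $s$ and $t$ are \emph{disconnected} in $H\setminus\{e\}$, so it needs your spanning-forest/component gadget (i), not the marked-path gadget (ii); the path gadget is what you correctly use for its negation.
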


We emphasis that these are all verification problems. For example, in Hamiltonian cycle verification, the algorithm is given a subgraph as an input, and needs to verify this subgraph is a Hamiltonian cycle, as opposed to the computation problem discussed in Section~\ref{subsec:Hamiltonian}, where the algorithm needs to decide if such a cycle exists.

\begin{proof}
	The proof follows the lines of the previous proofs: By Corollary~\ref{theorem:pls-to-alice-bob}, it is enough to show a PLS for $P$ and a PLS for $\notP$, both of size $O(\log n)$. For this, we employ known schemes (see, e.g.,~\cite{KormanKP10,BaruchFP15,Censor-HillelPP17}) for acyclicity, spanning tree, and for pointing to a desired vertex or subgraph, without reproving them. All schemes have proof size $O(\log n)$.
	
	\begin{enumerate}
		\item
		A PLS for a connected spanning subgraph is obtained by having the labels form a spanning tree of $H$, i.e., an id of the root and the distance from it.
		The vertices compare their claimed root id for equality
		(in $G$, so they all must have the same root id),
		and each vertex verifies all its neighbors have distance differing from its distance by at most 1, and at least one neighbor with distance 1 less.
		The root verifies its distance is 0.
		
		A PLS for proving that $H$ is not a connected spanning subgraph, it is enough to prove it is not connected, because if it is not spanning then at least one vertex will detect this, without any labels.
		To show it is not connected, give the vertices of one connected component of$h$ a $0$ label, and a $1$ label to the rest, and describe two spanning trees in $G$ pointing to vertices of different labels.
		The vertices verify the two spanning trees as before, which guarantees the existence of a $0$ labels and of a $1$ label.
		In addition, they make sure no edge in $H$ connects a $0$ label and a $1$ label.
		
		\item
		A PLS for cycle containment gives each vertex of $G$ its distance from the cycle. This promises that a cycle exists, and the vertices with a $0$ label verify they have exactly two neighbors in $H$ that have a $0$ label as well.
		A PLS for the non existence of a cycle uses a standard scheme for acyclicity~\cite{BaruchFP15} on $H$ alone.
		
		\item
		A PLS for $e$-cycle containment in $H$ uses the previous scheme works, where the vertices of $e$ also verify they have distance $0$.
		For a PLS for the negation of the above, note that if the predicate does not hold, then $H\setminus\set{e}$ is disconnected. Mark vertices in one side by $0$, the other by $1$, and the vertices verify that the only edge in $H$ that connects both labels is $e$.
		
		\item
		A PLS for bipartiteness provides the vertices with a $2$-coloring.
		A PLS for non-bipartiteness point to a cycle as before, and enumerates the cycle vertices consecutively. A cycle vertex (distance $0$ form the cycle) enumerated $i$ verifies it has two cycle neighbors enumerated $i-1$ and $i+1$, except for two cycle vertices: the cycle vertex enumerated $1$ which has neighbors enumerated $2$ and $x$, for some odd $x$, and the cycle vertex enumerated $x$
		which has neighbors enumerated $x-1$ and $1$.
		
		\item
		A PLS for $(s,t)$-connectivity gives each vertex its distance from $s$ in $H$.
		A PLS for non-$(s,t)$-connectivity marks all the vertices in the connected component of $s$ in $H$ by $0$, and marks the rest by $1$.
		
		\item
		A PLS for connectivity gives all the vertices their distance in $H$ from some vertex $s$, and its id.
		A PLS for non-connectivity marks the vertices of one connected component in $H$ by $0$ and the rest by $1$, and creates two spanning trees rooted at vertices of opposite marks.
		The vertices verify the spanning trees, and that no edge in $H$ has both $0$ and $1$ endpoints.
		
		\item
		A PLS for cut verification marks the sides by $0$ and $1$, and the vertices verify that all non-$H$ edges are monochromatic. A PLS for its negation shows a spanning tree of $G\setminus H$.
		
		\item
		A PLS for the predicate of having the edge $e$ on all paths between $s$ and $t$ marks the vertices of the connected component of $s$ in $H\setminus\set{e}$ by $0$ and the rest by $1$. All vertices verify that edges in $E_H\setminus\set{e}$ are monochromatic, and $s$ and $t$ verify they have labels $0$ and $1$ respectively. A PLS for its negation marks an $(s,t)$-path in $H\setminus\set{e}$ by giving each vertex its distance from the path (in $G$). Each path ($0$-distance) vertex verifies it has exactly $2$ neighboring path vertices, except for $s$ and $t$ who have only one.
		
		\item
		A PLS for $(s,t)$-cut marks the connected components in $G\setminus H$ as in the scheme for cut, and has $s$ and $t$ verify they have labels $0$ and $1$ respectively. For a PLS for its negation, give each vertex its distance from $s$ in $G\setminus H$ (which could be $\infty$ for some vertices, but not for $t$).
		
		\item
		A PLS for Hamiltonian cycle enumerates the cycle vertices consecutively.
		For a PLS proving that $H$ is not a Hamiltonian cycle, either there is a vertex of degree different than $2$in $H$, or $H$ is split into several cycles.
		For the first case, have a spanning tree rooted at a vertex with a degree different than $2$ in $H$.
		For the second, have all vertices point to a single cycle in $H$, and enumerate the vertices of this cycle; the vertex with distance $0$ from the cycle and number $1$ in it should have neighbors $2$ and $x$, for some $x<n$,
		proving that the cycle is smaller than $n$.
		
		\item
		A PLS for a spanning tree uses the known spanning-tree scheme with the edges of $H$. A PLS for $H$ not being a spanning tree, gives all the vertices a category (1),(2), or (3), and according to it add the following:
		(1) If there is a vertex not spanned by $H$, give all vertices their distances from it.
		(2) If there is a cycle in $H$, give all vertices their distances from the cycle.
		(3) If $H$ is not connected, use the scheme for non-connectivity described above.
		
		\item
		A PLS for $H$ being a simple path enumerates the path vertices consecutively, and gives all the graph vertices the id of the vertex marked $1$.
		If $H$ is nonempty, every vertex in $H$ either has two neighbors in $H$ with $\pm1$ labels, one neighbor enumerated one less (for the endpoints not enumerated $1$),
		or label $1$ and the id given to graph all vertices.
		For a PLS for the negation use categories, as before: (1) If $H$ contains a cycle, give all vertices their distance from it. (2) If $H$ is not connected, use the scheme for non-connectivity described above.\qedhere
	\end{enumerate}
\end{proof}

\section{Discussion}

In this paper, we provide many near-quadratic lower bounds for exact computation, as well as some hardness of approximation results. 
However, in Section \ref{section:limitations(b)}, we also show the limitations of these lower bound techniques when it comes to approximations and to exact computation of some central problems in P. For example, while we show a near-quadratic lower bound for a $(7/8 +\epsilon)$ approximation to MaxIS, we show that the same approach cannot say anything about a $1/2$-approximation, although this problem may be much more difficult to approximate. 
When it comes to exact computation of problems in P, such as maximum matching and max flow, we show that the lower bounds techniques are limited, at least if we use reductions from standard problems in communication complexity, such as disjointness or equality.
It would be interesting to study whether different techniques or reductions from non-standard communication complexity problems can provide such lower bounds.

It is intriguing
to note that apart from the $(1-\epsilon)$-approximation for max-cut,
the other optimization problems addressed in this paper do not
admit any non-trivial algorithms that obtain an approximation
factor that is NP-hard to obtain in the sequential setting. There is
currently no common ground to blame for the big gaps in terms
of the approximation factors that one can obtain in the \cgst{} 
model within any sub-quadratic complexity. Understanding this
hardness is a major open research direction in this model.

\paragraph{Acknowledgments:} We would like to thank Seri Khoury and Eyal Kushilevitz for many valuable discussions. This project has received funding from the European Union’s Horizon 2020 Research And Innovation Program under grant agreement no.755839. Supported in part by the Binational Science Foundation (grant 2015803). Supported
in part by the Israel Science Foundation (grant 1696/14).
Ami Paz is supported by the Fondation Sciences Mathématiques
de Paris (FSMP).

\bibliographystyle{abbrv}
\bibliography{References}


\end{document}